\title{Topological Twists of Supersymmetric Algebras of Observables}
\author{Chris Elliott \and Pavel Safronov}
\renewcommand{\O}{\mathrm{O}}
\newcommand{\C}{\mathrm{C}}
\newcommand{\ialg}{\mathcal{A}\mathrm{lg}}
\newcommand{\cC}{\mathcal{C}}
\newcommand{\cD}{\mathcal{D}}
\newcommand{\cE}{\mathcal{E}}
\newcommand{\cF}{\mathcal{F}}
\newcommand{\cO}{\mathcal{O}}
\newcommand{\cP}{\mathcal{P}}
\newcommand{\cS}{\mathcal{S}}
\newcommand{\BD}{\mathbb{BD}}
\newcommand{\Cat}{\mathrm{Cat}}
\newcommand{\Conf}{\mathrm{Conf}}
\newcommand{\Fun}{\mathrm{Fun}}
\newcommand{\ML}{\mathrm{ML}}
\begin{document}

\maketitle
\begin{abstract}
We explain how to perform topological twisting of supersymmetric field theories in the language of factorization algebras. Namely, given a supersymmetric factorization algebra with a choice of a topological supercharge we construct an algebra over the operad of little disks. We also explain the role of the twisting homomorphism allowing us to construct an algebra over the operad of framed little disks. Finally, we give a complete classification of topological supercharges and twisting homomorphisms in dimensions $1$ through $10$.
\end{abstract}

\section{Introduction}

Quantum field theories contain a large amount of information, part of which has been formalized mathematically. For instance, given a quantum field theory on a manifold $M$, we may consider the set of observables $\obs(U)$ supported on an open subset $U\sub M$. Given a disjoint collection of open subsets $U_1,\dots, U_n\sub V\sub M$ we may combine observables supported on each open set to get an operator product map $\obs(U_1)\otimes \dots \otimes \obs(U_n)\rightarrow \obs(V)$. In this way, Costello and Gwilliam gave a formalization of the structure of observables in perturbative quantum field theories as a \emph{factorization algebra}, see \cite{Book1,Book2}.

Topological field theories do not depend on the precise shape of the manifold, so we expect that the maps $\obs(U)\rightarrow \obs(V)$ are equivalences for an embedding of open subsets $U\sub V$ which is a homotopy equivalence. In other words, observables in topological field theories are described by locally-constant factorization algebras. According to \cite[Theorem 5.4.5.9]{HA}, in the case where $M=\RR^n$ locally-constant factorization algebras are the same as $\bb E_n$-algebras, i.e. algebras over the operad of little $n$-disks.

Most quantum field theories are not topological, since they explicitly depend on a metric on $M$. To obtain a topological quantum field theory from a more general quantum field theory, and as a first example, to obtain a topological field theoretic description for Donaldson invariants of 4-manifolds, Witten \cite{WittenTQFT} introduced the idea of a topological twisting. We may start with a non-topological quantum field theory and consider its deformations to a topological quantum field theory (note that local constancy is an open condition). Deformations of factorization algebras are parametrized by their derivations of degree 1 and a large source of such derivations is given by quantum field theories which in addition carry an action of a super Poincar\'{e} algebra.

More explicitly, consider a translation-invariant factorization algebra $\obs$ on $\RR^n$ together with a degree 1 derivation $Q$ such that the generators of translations are $Q$-exact. Then we expect that the twisted factorization algebra $\obs^Q$, i.e. $\obs$ equipped with the differential $\d + Q$, corresponds to a topological field theory. In particular, one should be able to extract an $\bb E_n$-algebra from it.

In this paper we will formalize this idea, explaining in what sense we can recover $\bb E_n$-algebras and more refined structures from topologically twisted factorization algebras.  We begin with an explanatory outline of our results, followed by a more detailed summary in which we will make more precise statements.

\begin{itemize}
 \item We can construct $\bb E_n$-algebras from topologically twisted factorization algebras subject to a single condition: that the map $\obs(B_r(0)) \to \obs(B_R(0))$ on concentric balls of radii $R > r$ is an equivalence.
 \item This condition is automatically satisfied in many natural examples, including those coming from superconformal field theories.
 \item In many of these natural examples these $\bb E_n$-algebras are additionally compatible with the action of $\SO(n)$.  This means the topological field theories can be defined on any oriented $n$-manifold.
\end{itemize}

\begin{remark}
We do not know whether the twisted factorization algebra $\obs^Q$ is locally-constant, so our arguments will not use Lurie's result comparing locally-constant factorization algebras and $\bb E_n$-algebras.  Instead we will equip the local observables of the topologically twisted factorization algebra directly with an action of a specific model for the $\bb E_n$-operad.
\end{remark}

In addition to understanding twisted factorization algebras abstractly we will give a complete classification of the possible twists coming from super Poincar\'e algebra actions in dimensions 1 to 10.

\subsection{Topological twists}
Let us summarize our results in more precise terms.  Suppose $\obs$ is a translation-invariant factorization algebra on $\RR^n$ where translations are homotopically trivial (we say $\obs$ is a \emph{de Rham translation-invariant factorization algebra}: we define this precisely in Definition \ref{dR_translation_invariance_def}). There is a coloured operad $\disk_n^{\col}$ with the set of colours $\RR_{>0}$ which is defined in the same way as the operad of little $n$-disks, but where the colours correspond to the radii of the disks. Let $B_r(0)\sub \RR^n$ be the standard open disk of radius $r$ centered at the origin.

\begin{theorem}[See Theorem \ref{dR_translation_invariance_thm}]
Let $\obs$ be a de Rham translation-invariant factorization algebra on $\RR^n$. Then the collection $\{\obs(B_r(0))\}_{r>0}$ becomes an algebra over $\disk_n^{\col}$.
\end{theorem}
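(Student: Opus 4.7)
The plan is to exhibit the factorization product maps attached to disjoint disk configurations in $B_R(0)$ as the structure maps of a $\disk_n^{\col}$-algebra on $\{\obs(B_r(0))\}_{r>0}$, using translation-invariance to shift disks to the origin and de Rham translation-invariance to encode the smooth dependence on the configuration. The argument breaks into three stages.

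First, for a fixed configuration $\vec x = (x_1,\ldots,x_k)$ of disjoint disks $B_{r_i}(x_i) \sub B_R(0)$, the factorization algebra structure supplies a multilinear map
\[ \obs(B_{r_1}(x_1)) \otimes \cdots \otimes \obs(B_{r_k}(x_k)) \longrightarrow \obs(B_R(0)). \]
Precomposing with the strict translation isomorphisms $\obs(B_{r_i}(0)) \to \obs(B_{r_i}(x_i))$ furnished by the $\RR^n$-action gives a $\vec x$-dependent operation with the desired source.

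Second, to promote this pointwise assignment to a genuine operad action I would encode the smooth dependence on $\vec x$ as a map out of a chain-level model of $\disk_n^{\col}(r_1,\ldots,r_k;R)$. This is precisely what de Rham translation-invariance buys: it supplies degree $-1$ operators $\eta_i$ with $[\d,\eta_i]$ equal to the infinitesimal translation $\partial/\partial x_i$, so that a path in configuration space induces a chain homotopy between the endpoint operations, and more generally a de Rham $p$-form on the configuration space contributes a degree $-p$ component of the structure map via contraction with the $\eta_i$.

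Third, one checks the axioms of a coloured operad algebra. Unitality, equivariance under permutations and compatibility with identity operations are immediate from the corresponding features of the factorization product. The key identity is compatibility with operadic composition, which reduces, after unwinding the combined effect of the translation primitives, to the associativity of the factorization product for nested disks together with the graded commutativity of the translation action.

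The principal obstacle is the bookkeeping in the second stage: one must fix a specific chain-level model for $\disk_n^{\col}$ (plausibly built from de Rham forms on the configuration spaces) in which the operators $\eta_i$ act globally and compatibly with operadic insertions, and must check that the higher homotopies glue into a well-defined map of dg operads rather than merely a collection of individually homotopic maps. Once that framework is in place, the remaining verifications should be formal consequences of the factorization axioms and the definition of de Rham translation-invariance.
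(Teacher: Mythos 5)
Your outline is essentially the paper's own argument: fix the configuration-dependence via Proposition \ref{translation_invariance_prop} (translation invariance gives $\C^\infty(\disk^{\man}_n)$-valued structure maps), then use the null-homotopies $\eta$ to upgrade these to de Rham--form--valued structure maps, and finally pass to chains. The one step you defer --- choosing a chain-level model in which the $\eta_i$ act globally and the higher homotopies assemble coherently --- is exactly what the paper settles with Lemma \ref{lm:homotopysection}: since $\disk^{\man}_n(r_1,\dots,r_k\,|\,R)$ is an open subset of the translation group $(\RR^n)^k$, its de Rham complex is identified with the Chevalley--Eilenberg complex $\C^\bullet(\widetilde{\gg}^{\oplus k}, \C^\infty(\disk^{\man}_n(r_1,\dots,r_k\,|\,R)))$ via invariant forms, and your tower of ``degree $-p$ components obtained by contraction with the $\eta_i$'' is packaged in the single closed formula $v\mapsto \exp\bigl(\sum_i e^i\,\eta(e_i)\bigr)v$, which is a chain map precisely because of the three identities $\d\eta(v)=\partial/\partial v$, equivariance, and $[\eta(v),\eta(w)]=0$. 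Composing this with the $\C^\infty$-level maps yields an algebra over the cooperad $\Omega^\bullet(\disk^{\man}_n)$ in differentiable chain complexes, the operadic identities following formally from those at the $\C^\infty$ level (as you anticipate), and one then obtains the $\C_\bullet(\disk^{\col}_n)$-algebra structure by integrating forms against singular chains. So your plan is correct and follows the same route; what it lacks is only the explicit exponential formula and the Chevalley--Eilenberg identification that make the ``gluing of higher homotopies'' automatic rather than a separate verification.
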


It is then natural to ask when a $\disk_n^{\col}$-algebra $\{A(r)\}_{r>0}$ is an $\bb E_n$-algebra, i.e. when it is possible to forget the parameterization by radii of disks. The unary operations in $\disk_n^{\col}$ give maps $A(r)\rightarrow A(R)$ for $R\geq r$. Moreover, we have a natural forgetful functor from $\bb E_n$-algebras to $\disk_n^{\col}$-algebras where the family $\{A(r)\}$ is constant.

\begin{theorem}[See Theorem \ref{E_n_theorem}]
The forgetful functor of $\infty$-categories $\ialg_{\bb E_n}\rightarrow \ialg_{\disk_n^{\col}}$ is fully faithful. Its essential image is given by those families $\{A(r)\}$ where the map $A(r)\rightarrow A(R)$ for $R\geq r$ is an equivalence.
\end{theorem}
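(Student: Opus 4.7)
The plan is to realize the forgetful functor as pullback along an explicit map of coloured $\infty$-operads $\pi \colon \disk_n^{\col} \to \bb E_n$, and then identify $\pi$ as an operadic localization. On colours, $\pi$ collapses $\RR_{>0}$ to the singleton colour of $\bb E_n$. On a $k$-ary operation in $\disk_n^{\col}(r_1, \ldots, r_k; R)$, represented by a rectilinear configuration $\bigsqcup_i B_{r_i} \hookrightarrow B_R$, $\pi$ returns the rescaled configuration $\bigsqcup_i B_{r_i/R} \hookrightarrow B_1$, which is an element of $\bb E_n(k)$. Since rescaling commutes with operadic composition, this is a genuine map of operads, and pullback along $\pi$ produces the constant family $\{A\}_{r > 0}$ with the stated operadic action --- this is the forgetful functor of the theorem.

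The structural observation I would then use is that the $\infty$-subcategory of colours and unary operations in $\disk_n^{\col}$ is equivalent to the poset $I = (\RR_{>0}, \leq)$, since $\disk_n^{\col}(r; R)$ is empty for $r > R$ and the contractible open ball of translations for $r \leq R$. The poset $I$ is filtered, hence has contractible classifying space $|I|$. For any $\infty$-category $\cC$, the constant-diagram functor $\cC \to \Fun(I, \cC)$ is then fully faithful with essential image those diagrams carrying every morphism of $I$ to an equivalence; this is the classical statement that $\cC$ realizes the localization of $\Fun(I, \cC)$ at the class of all arrows of $I$ whenever $|I|$ is contractible. This already handles both claims at the level of the underlying colour-indexed diagrams in $\cC$.

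To upgrade to the full operadic statement I would work with the adjunction $\pi_! \dashv \pi^*$ between operadic left Kan extension and pullback. Fully faithfulness of $\pi^*$ is equivalent to the counit $\pi_! \pi^* A \to A$ being an equivalence for every $\bb E_n$-algebra $A$, which can be verified on $k$-ary operations by combining the colour-level statement above with the observation that the total rescaling map $\bigsqcup_{r_1, \ldots, r_k, R} \disk_n^{\col}(r_1, \ldots, r_k; R) \to \bb E_n(k)$ is surjective with contractible $\RR_{>0}$-fibres, corresponding to the free action by overall rescaling. The essential image then follows formally: $\pi^*$ is the inclusion of the full subcategory of $\disk_n^{\col}$-algebras on which every unary operation acts as an equivalence.

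Explicitly for the converse direction, given $\{A(r)\}$ with all unary maps equivalences, the underlying $\bb E_n$-algebra is $A(r_0) \simeq \mathrm{colim}_{r \in I} A(r)$, with structure maps given by lifting each $\bb E_n(k)$-operation to a $\disk_n^{\col}(r_1, \ldots, r_k; R)$-operation for some choice of colours and composing with inverses of unary equivalences. The main obstacle I anticipate is the coherence of this construction in the $\infty$-categorical setting --- independence of the chosen lifts and compatibility with operadic composition must hold up to all higher homotopies. All such coherences reduce to contractibility statements about diagrams built from $I$, so the proof ultimately hinges on the contractibility of $|I|$ together with compatibility of rescaling with operadic composition.
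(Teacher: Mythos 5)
Your overall strategy---viewing the theorem as the statement that the rescaling map $\pi\colon \disk^{\col}_n\to\disk_n$ exhibits $\bb E_n$ as the localization of $\disk^{\col}_n$ at its unary operations, together with the observation that the unary part of $\disk^{\col}_n$ is the contractible poset $(\RR_{>0},\leq)$---has the right shape, and is essentially dual to the paper's argument, which instead expresses $\disk_n$ as the homotopy pushout $\disk^{\col}_n\sqcup^h_{\RR_{>0}}1$ of simplicial coloured operads (Proposition \ref{prop:diskcoCartesian}) and then deduces Theorem \ref{E_n_theorem} from the resulting Cartesian square of algebra $\infty$-categories together with the analysis of the constant-diagram functor $\cC\to\Fun(\RR_{>0},\cC)$. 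The problem is that your proposal never actually proves the operadic-level statement, and that is where the real content lies. Full faithfulness of $\pi^*$ via the counit $\pi_!\pi^*\to\id$ requires computing the operadic left Kan extension $\pi_!$, which is an operadic (bar-type) colimit over the category of operations of $\disk^{\col}_n$; the fact that the total rescaling map $\bigsqcup_{r_1,\dots,r_k,R}\disk^{\col}_n(r_1,\dots,r_k|R)\to\disk_n(k)$ is surjective with contractible $\RR_{>0}$-fibres does not by itself identify that colimit. What is actually needed is that, after inverting the unary operations, the $k$-ary operation spaces of $\disk^{\col}_n$ become equivalent to $\disk_n(k)\simeq\Conf_k(\RR^n)$; in the paper this is exactly the filtered-colimit computation inside Proposition \ref{prop:diskcoCartesian} (using stability of weak equivalences under filtered colimits and the comparison of both sides with configuration spaces), and it is precisely the step your appeal to contractible fibres elides.

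Relatedly, your claim that ``the essential image then follows formally'' is only half correct. Full faithfulness of $\pi^*$ gives the easy containment: since $\disk_n(1)$ is contractible, unary operations act by equivalences on any pulled-back algebra. The converse---that every family $\{A(r)\}$ inverting the unary operations lies in the image---is not formal; it is exactly the coherence problem you acknowledge in your final paragraph (choosing lifts of $\bb E_n$-operations and composing with inverses of unary equivalences) and then defer to ``contractibility statements about diagrams built from $I$''. Those infinitely many higher coherences do not reduce to contractibility of $|I|$ alone; resolving them is equivalent to the identification of $\bb E_n$ with the localized (equivalently, pushout) operad, which the paper handles at the model-category level via the Cisinski--Moerdijk structure, where the cofibrant replacement $I$ of $\RR_{>0}\to 1$ and the strict pushout package the coherences automatically, and then transfers to algebras by the formal lemma on Cartesian squares (Proposition \ref{prop:Cartesiansquarecategories}). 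As written, both halves of your argument rest on this same unproven identification, so the proof is incomplete.
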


Supersymmetric field theories provide a natural family of examples of de Rham translation-invariant factorization algebras.  
We review the definitions of supersymmetry algebras and supersymmetric field theories in Section \ref{twist_section}. In particular, a supertranslation algebra $\mf{A}$ is a super Lie algebra with even part $V=\CC^n$, the Lie algebra of translations. The group of \emph{R-symmetries} is a subgroup of the group of automorphisms of $\mf{A}$ which preserve the even part. A \emph{topological supercharge} is a square-zero odd element $Q\in\mf{A}$ such that the bracket $[Q, -]$ is surjective onto the even part of $\mf{A}$. We then have the following statement.

\begin{theorem}[See Proposition \ref{topologically_twisted_theories_are_dR_translation_invt_prop}]
Let $\obs$ be an $\mf{A}$-supersymmetric factorization algebra on $\RR^n$ and $Q\in\mf{A}$ a topological supercharge with an abelian section. Moreover, suppose $\obs^Q(B_r(0))\rightarrow \obs^Q(B_R(0))$ is a quasi-isomorphism for $R\geq r$. Then the collection $\{\obs^Q(B_r(0))\}$ forms an $\bb E_n$-algebra.
\end{theorem}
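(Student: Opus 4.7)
The plan is to deduce the result by combining the two previous theorems in the paper with a single intermediate fact: that the twisted factorization algebra $\obs^Q$ is de Rham translation-invariant in the sense of Definition \ref{dR_translation_invariance_def}. Once this is established, Theorem \ref{dR_translation_invariance_thm} immediately gives a $\disk_n^{\col}$-algebra structure on $\{\obs^Q(B_r(0))\}_{r>0}$, and the hypothesis that $\obs^Q(B_r(0))\to\obs^Q(B_R(0))$ is a quasi-isomorphism for $R\geq r$ is exactly the criterion in Theorem \ref{E_n_theorem} identifying this $\disk_n^{\col}$-algebra with an $\bb E_n$-algebra.

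The core of the proof is therefore producing the de Rham translation-invariance of $\obs^Q$. I would argue as follows. The supersymmetric structure on $\obs$ gives an action of $\mf{A}$ by derivations: the translations $P_1,\dots,P_n$ act in degree $0$ and the supercharge $Q$ acts in degree $1$, contributing to the twisted differential $\d+Q$. Since $Q$ is a topological supercharge, the bracket $[Q,-]\colon \mf{A}_{\mathrm{odd}}\to V$ is surjective, so there exist odd elements $\eta_1,\dots,\eta_n\in\mf{A}$ with $[Q,\eta_i]=P_i$, and the abelian section hypothesis further lets us arrange $[\eta_i,\eta_j]=0$. On $\obs$ each $\eta_i$ then acts as a degree $-1$ derivation whose anticommutator with $\d+Q$ recovers $P_i$, while the $\eta_i$ themselves anticommute strictly. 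Assembled together, $(\eta_i,P_i)$ give an action of the abelian dg Lie algebra $\bigl(V[1]\oplus V,\ \d\eta_i=P_i\bigr)$, which is a free contractible resolution of $0$ over $V$; this is precisely the null-homotopy of the translation action required by Definition \ref{dR_translation_invariance_def}.

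With de Rham translation-invariance established, Theorem \ref{dR_translation_invariance_thm} produces the desired $\disk_n^{\col}$-algebra on $\{\obs^Q(B_r(0))\}_{r>0}$, and Theorem \ref{E_n_theorem} upgrades it to an $\bb E_n$-algebra using the quasi-isomorphism hypothesis, completing the argument.

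The main obstacle I anticipate is verifying that the abelian section really does produce a strictly abelian action at the level of the full factorization algebra, and not merely on individual local sections: the $\eta_i$ must act by derivations compatible with the operadic composition maps of $\obs$, and their commutation relations must be inherited from $\mf{A}$ without higher corrections. This is exactly what the adjective ``abelian'' in ``abelian section'' is designed to guarantee; without it one would land in the weaker world of $L_\infty$ actions, and promoting such a weak homotopy trivialization to the strict data needed to apply Theorem \ref{dR_translation_invariance_thm} would require an additional rectification argument.
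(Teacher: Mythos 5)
Your proposal is correct and follows essentially the same route as the paper: the abelian section is used to define $\eta\colon V\to\der(\obs^Q)[-1]$ (the paper takes $\eta$ to be $\nu$ restricted to $\mf a$ composed with the inverse of $[Q,-]\colon\mf a\to V$), the required identities $\d_Q\eta(v)=\partial/\partial v$, $[\eta(v),\eta(w)]=0$, $[\eta(v),\partial/\partial w]=0$ follow strictly because $\nu$ is a map of Lie algebras and the supertranslation bracket vanishes on $\mf a\otimes\mf a$ and $\mf a\otimes V$, and the conclusion is then exactly the combination of Theorem \ref{dR_translation_invariance_thm} and Theorem \ref{E_n_theorem} (packaged in the paper as Corollary \ref{E_n_summary_cor}). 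Your closing concern about strictness versus $L_\infty$ corrections is resolved just as you suspect: the supersymmetry action is by definition a strict morphism $\mf A\to\der(\obs)$, so no rectification is needed.
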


In particular this justifies the term ``topological supercharge'': by twisting a supersymmetric factorization algebra by a topological supercharge we produce a topological field theory.

Note that in the previous claim we had to choose an abelian section for $Q$, i.e. an abelian subalgebra of $\mf{A}_{\mr{odd}}$ which maps under $[Q, -]$ isomorphically onto the even part. We show in Proposition \ref{commuting_potential_prop} that in all examples of interest such a choice can be made.

Let us also remark that the factorization algebra $\obs$ we start with has a $\ZZ\times \ZZ/2$-grading corresponding to the ghost number and the fermionic charge. If we can extend a topological supercharge $Q$ to a twisting datum (meaning we choose a circle action on $\obs$ compatible with $Q$, see Definition \ref{def:twistingdatum}), then the twisted factorization algebra $\obs^Q$ will be $\ZZ$-graded. Without this choice $\obs^Q$ is merely $\ZZ/2$-graded. When we classify square-zero supercharges in Section \ref{twisted_theories_section}, we will specify for each supercharge whether we can extend it to a twisting datum.

We may also consider square-zero supercharges which are not topological, i.e. when the image of $[Q, -]$ is a proper subspace of $V$. We show in Proposition \ref{at_least_holo_prop} that the image is at least half-dimensional. We call square-zero supercharges with a half-dimensional image \emph{holomorphic}. For instance, in dimension $2$ factorization algebras twisted by holomorphic supercharges give rise to vertex algebras as explained in \cite[Chapter 5, Section 2]{Book1}. There are also intermediate cases between holomorphic and topological supercharges which we call \emph{holomorphic-topological} supercharges.  One would like to think of an $n$-dimensional field theory twisted by a holomorphic-topological supercharge whose image has codimension $k$ as a field theory on $\CC^k \times \RR^{n-2k}$ which is holomorphic along $\CC^k$ and topological along $\RR^{n-2k}$.  Such theories were originally considered by Kapustin \cite{KapustinHolo}, and the term ``holomorphic-topological'' first appeared in \cite{KapustinSaulina2}.

It is natural to ask if there are ways to ensure that $\obs^Q(B_r(0))\rightarrow \obs^Q(B_R(0))$ is a quasi-isomorphism. If $\obs^Q$ admits a homotopically trivial dilation action, the map $\obs^Q(B_r(0))\rightarrow \obs^Q(B_R(0))$ is a quasi-isomorphism by Proposition \ref{dR_dilation_prop}. In particular, we show in Lemma \ref{dilation_potential_lemma} that such a homotopically trivial dilation action exists in many \emph{superconformal} field theories. Moreover, it is enough to have this action only on the classical level (see Remark \ref{remark:classicaldilationaction}).

\subsection{Oriented field theories}

If $M$ is a framed $n$-manifold and $A$ is an $\bb E_n$-algebra, we may consider its factorization homology $\int_M A$ (see e.g. \cite{AyalaFrancis}) which is a cochain complex. In particular, this allows us to define observables on any framed manifold. If we want to define factorization homology for a manifold $M$ which is only oriented, we need to extend $A$ to a \emph{framed $\bb E_n$-algebra}, i.e. an algebra over the operad of framed little $n$-disks. In other words, $\bb E_n$ is an operad in $\SO(n)$-spaces and a framed $\bb E_n$-algebra is an $\bb E_n$-algebra equipped with a compatible action of the chains $\C_\bullet(\SO(n))$ (see e.g. \cite{SalvatoreWahl}). More generally, for a Lie group $G$ with a homomorphism $G\rightarrow \O(n)$ we may define the factorization homology $\int_M A$ over $M$ when $M$ carries a $G$-reduction of the frame bundle and $A$ is an $\bb E_n^G$-algebra. We recover the oriented case for $G=\SO(n)$ and the framed case for $G=\pt$.

We may ask when it is possible to extend the twisted factorization algebra $\obs^Q$ described above to a framed $\bb E_n$-algebra. In general, we consider supersymmetric quantum field theories, so the untwisted factorization algebra $\obs$ carries a compatible action of rotations. However, the supercharge $Q$ is never preserved by this action, so this action of rotations does not survive the twist. To correct this problem, Witten \cite{WittenTQFT} modifies the action of rotations by means of a \emph{twisting homomorphism}. A twisting homomorphism is a homomorphism $\phi\colon \SO(n)\rightarrow G_R$ from the group of rotations to the group of R-symmetries which allows us to define a new action of rotations on the factorization algebra $\obs$ using the embedding $(\id, \phi)\colon \SO(n)\rightarrow \SO(n)\times G_R$. If the supercharge $Q$ is scalar under the $\phi$-twisted $\SO(n)$-action, we say that $Q$ is compatible with the twisting homomorphism $\phi$. When we classify square-zero supercharges in Section \ref{twisted_theories_section}, we will also classify possible twisting homomorphisms and say which supercharges are compatible. Note that if $Q$ is any square-zero supercharge (not necessarily topological) in dimension $n\geq 3$ compatible with a twisting homomorphism $\phi$, then $Q$ is automatically a topological supercharge by Theorem \ref{compatible_twisting_hom_thm} (this is usually taken as the definition of a topological supercharge).

To summarize, given a topological supercharge $Q$ compatible with a twisting homomorphism $\phi$ we get an $\SO(n)$-action on $\obs^Q$. However, to define a framed $\bb E_n$-algebra we need to have an action of the chains $\C_\bullet(\SO(n))$ on $\obs^Q$. Explicitly, we are looking for $\SO(n)$-actions in the de Rham sense (see Definition \ref{dR_action_def}), i.e. where the Lie algebra action is homotopically trivial. Then we get the following statement (see Theorem \ref{twisted_theory_with_twisting_hom_thm}).

\begin{theorem}
Let $\obs$ be an $\mf{A}$-supersymmetric factorization algebra on $\RR^n$, $Q$ a topological supercharge compatible with a twisting homomorphism $\phi\colon \SO(n)\rightarrow G_R$. Suppose that the following two conditions hold.
\begin{itemize}
\item The factorization map $\obs^Q(B_r(0))\rightarrow \obs^Q(B_R(0))$ is a quasi-isomorphism.
\item The $\phi$-twisted $\SO(n)$-action on $\obs^Q$ extends to a de Rham $\SO(n)$-action.
\end{itemize}
Then the collection $\{\obs^Q(B_r(0))\}$ forms a framed $\bb E_n$-algebra.
\end{theorem}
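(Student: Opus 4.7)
The strategy is to build on Proposition \ref{topologically_twisted_theories_are_dR_translation_invt_prop} and enhance the $\bb E_n$-algebra structure produced there by a compatible chain-level $\SO(n)$-action. Since, by the description recalled in the introduction, a framed $\bb E_n$-algebra is the same datum as an $\bb E_n$-algebra equipped with a compatible $\C_\bullet(\SO(n))$-action, the proof breaks into three steps: construct the underlying $\bb E_n$-structure, construct the chain-level $\SO(n)$-action, and verify their compatibility.

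The first step is direct: the first hypothesis, combined with the fact that $Q$ is topological and admits an abelian section (which exists in all examples of interest by Proposition \ref{commuting_potential_prop}), is precisely the input of Proposition \ref{topologically_twisted_theories_are_dR_translation_invt_prop}, producing the $\bb E_n$-structure on $\{\obs^Q(B_r(0))\}$. For the second step, compatibility of $Q$ with $\phi$ means the $\phi$-twisted $\SO(n)$-action on $\obs$ commutes with $\d+Q$ and hence descends to a strict $\SO(n)$-action on $\obs^Q$. By the second hypothesis this strict action extends to a de Rham action (Definition \ref{dR_action_def}), which by the standard equivalence between de Rham group actions and chain-level group actions yields a $\C_\bullet(\SO(n))$-action on each $\obs^Q(B_r(0))$; this is formally the same mechanism by which de Rham translation-invariance of $\obs^Q$ produced the $\disk_n^{\col}$-structure in the first place.

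For the third step, the $\phi$-twisted $\SO(n)$-action rotates $\RR^n$ in the usual way while acting internally by R-symmetries on each $\obs^Q(U)$, and the $\disk_n^{\col}$-operations only depend on the geometric placement of the disks. Hence the chain-level $\SO(n)$-action intertwines the $\disk_n^{\col}$-operations with the rotation action on disk configurations, which is exactly the compatibility required of a framed $\bb E_n$-algebra. The first hypothesis ensures we sit in the essential image of the framed analogue of the forgetful functor of Theorem \ref{E_n_theorem}, so the coloring by $r>0$ may be forgotten to produce an honest framed $\bb E_n$-algebra.

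The main obstacle I anticipate is coherence: one must check that the de Rham trivialization of the $\mf{so}(n)$-action interacts correctly with the trivialization of the translation action produced by the abelian section of $Q$, so that the resulting $\C_\bullet(\SO(n))$-action truly acts on the $\disk_n^{\col}$-algebra rather than merely on its underlying complex. This should follow formally from the semidirect structure of the super Poincar\'e algebra, namely from the fact that R-symmetries preserve the even part of $\mf{A}$ and hence the image of $[Q,-]$, so that the rotation and translation homotopies can be chosen compatibly; nevertheless it is the step where careful handling of higher homotopies is required, and the place where the twisting homomorphism data enters in an essential way rather than just as an ingredient for preserving $Q$.
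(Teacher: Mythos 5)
Your proposal is correct and follows essentially the same route as the paper: the paper's proof likewise combines Proposition \ref{topologically_twisted_theories_are_dR_translation_invt_prop} and Proposition \ref{commuting_potential_prop} to get de Rham translation invariance and then invokes the $G$-structured operadic machinery, citing Corollary \ref{Efr_n_summary_cor} (i.e.\ Theorems \ref{G-structured_operad_algebra_thm} and \ref{G-structured_E_n_theorem}) rather than re-deriving the chain-level $\SO(n)$-action and the colour-forgetting step as you sketch. The coherence issue you flag at the end is exactly what the paper's notion of an $\SO(n)_{\dR}$-structure (a single $(\SO(n)\ltimes\RR^n)_{\dR}$-action, with the trivializing homotopies for rotations and translations required to commute and be equivariant) is designed to package, so it is absorbed into the hypotheses rather than checked separately.
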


Note that if we did not have an extension of the $\phi$-twisted $\SO(n)$-action to a de Rham $\SO(n)$-action, we would get the structure of a $\bb E_n^{\SO(n)_\delta}$-algebra, where $\SO(n)_\delta$ is the Lie group $\SO(n)$ considered with the discrete topology. To such an algebra we can apply factorization homology over an oriented manifold $M$ if the manifold carries a flat connection on its tangent bundle (for instance, it is an affine manifold). There are also versions of this theorem for any Lie group $G$ with a homomorphism $G\rightarrow \O(n)$.

One might then wonder how one can construct an extension of the $\phi$-twisted $\SO(n)$-action to a de Rham $\SO(n)$-action. There are two natural mechanisms for that:
\begin{itemize}
\item In the case of superconformal theories in dimension $2, 3$ or $4$, the homotopy trivialization may be obtained from the generators of the superconformal algebra as shown in Propositions \ref{2d_superconformal_potential}, \ref{3d_superconformal_potential} and \ref{4d_superconformal_potential} (in the case of dimension 4, this works only for certain twisting homomorphisms and topological supercharges).

\item If $\obs$ is an $\hbar$-family of factorization algebras obtained by deformation quantization of a classical field theory and the $\phi$-twisted $\SO(n)$-action on $\obs^Q$ is inner, then the $\phi$-twisted $\SO(n)$-action on the quantum observables $\obs^Q[\hbar^{-1}]$ is automatically homotopically trivial (See Remark \ref{remark:inneractionhomotopytrivial}).
\end{itemize}

One expects that a topological field theory should assign diffeomorphism invariants to manifolds. For observables this means that the complex of observables on $\RR^n$ carries a $\mr{Diff}(\RR^n)$-action in the de Rham sense. Given a supersymmetric factorization algebra $\obs$, it usually only carries a Poincar\'{e} action (i.e. an action of Killing vector fields) since the action of the theory involves the choice of a metric. The obstruction to extending this action to an action of all vector fields is given by the gravitational stress-energy tensor (see Section \ref{stressenergy_section}). If $Q$ is a topological supercharge compatible with a twisting homomorphism $\phi$, then one still has an action of Killing vector fields on $\obs^Q$. Moreover, in many supersymmetric field theories the gravitational stress-energy tensor is $Q$-exact which shows that there is an action of all vector fields on $\obs^Q$ (Lemma \ref{exact_SE_tensor_lemma}). Note, however, that we do not expect this to be a de Rham $\mr{Diff}(\RR^n)$-action.

\subsection{Structure of the paper}

Section \ref{Factorization_section} takes place in the world of homological and homotopical algebra -- the abstract mathematical machinery that we will use is concentrated entirely in this section.  We remind the reader the basics of factorization algebras (in Section \ref{perturbative_section}) and group actions thereon (in Section \ref{group_action_section}). In particular, we explain what it means to have a group action in the de Rham sense. We study factorization algebras with an action of the group of translations (Section \ref{translation_invariant_section}), and investigate how this condition can be reformulated operadically.  We then study de Rham translation-invariant factorization algebras (Section \ref{E_n_section}) and show how to obtain the data $\disk_n^{\col}$-algebras from them.  Analyzing the natural pushout of coloured operads, we show that $\bb E_n$-algebras sit fully faithfully in the $\infty$-category of $\disk_n^{\col}$-algebras (Theorem \ref{E_n_theorem}). Finally, in Section \ref{Gstructure_section} we study analogues of these results for theories with a $G$-structure.

Section \ref{twist_section} is devoted to supersymmetric field theories.  This section is much less high-tech than the previous section: other than the definition of the twist we will perform calculations relying only on ordinary Lie theory. We begin in Section \ref{susy_algebra_section} with a recollection on (complex) supersymmetry algebras in various dimensions and define what it means for a factorization algebra on $\RR^n$ to be supersymmetric. We then define topological twists of supersymmetric factorization algebras by topological supercharges (Section \ref{twist_def_section}) and explain when we get a de Rham translation-invariant theory (Proposition \ref{topologically_twisted_theories_are_dR_translation_invt_prop}).  When we additionally have the data of a twisting homomorphism from a group $G$, we explain how we obtain de Rham translation-invariant theories with $G$-structure (Section \ref{twisting_hom_section}).  In Section \ref{pure_spinor_section} we provide a recollection on pure spinors which will be later used in the classification of twists in dimensions 7 through 10. In Section \ref{stressenergy_section} we explain what the gravitational stress-energy tensor is and why supersymmetric field theories carry an action of all vector fields (which is possibly not homotopically trivial). In Section \ref{dilation_section} we explain how the additional data of an $\bb E_n$-algebra arises from a twisted theory where we additionally have a compatible action of the group of dilations.  Finally, Section \ref{superconformal_section} is devoted to superconformal algebras which are certain extensions of supersymmetry algebras. We explain there how superconformal symmetry can be used to construct nullhomotopies for rotations and to show that the factorization maps $\obs^Q(B_r(0))\rightarrow \obs^Q(B_R(0))$ are quasi-isomorphisms.

In Section \ref{twisted_theories_section} we provide a classification of possible twists of supersymmetric theories.  This section engages most closely with the techniques used in the physics literature on topological twists. We give a classification of twists in dimensions 1 through 10:
\begin{itemize}
\item In each dimension we classify orbits of square-zero supercharges under the complexified Lorentz group, R-symmetry and scaling. For each class of square-zero supercharges we specify the dimension of its image, i.e. we explain whether it is topological, holomorphic, or one of the intermediate holomorphic-topological possibilities.

\item We also classify possible twisting homomorphisms from $\SO(n)$ and their compatible topological supercharges. It turns out that these exist only in dimensions $\leq 5$ and in higher dimensions we list some twisting homomorphisms from subgroups of $\SO(n)$ which might be of interest. We also explain whether a square-zero supercharge can be promoted to a twisting datum and whether this can be done compatibly with the twisting homomorphism.

\item Finally, we provide references where twisted supersymmetric theories for a given square-zero supercharge have been studied. Let us note that there are several cases (in particular, in dimensions above 5) which have not been analyzed from either the mathematical or physical perspective to our best knowledge.
\end{itemize}

Let us mention a recent article of Eager, Saberi and Walcher \cite{EagerSaberiWalcher} appearing subsequent to the first version of this paper that also studies the possible square zero supercharges in supersymmetric field theories, and includes discussion of the geometry of the varieties of such supercharges.

\subsection*{Acknowledgements}
We would like to thank Kevin Costello, Owen Gwilliam, Vasily Pestun and Brian Williams for helpful discussions and the anonymous referees for many comments and suggestions. This research was supported in part by Perimeter Institute for Theoretical Physics. Research at Perimeter Institute is supported by the Government of Canada through Industry Canada and by the Province of Ontario through the Ministry of Economic Development \& Innovation. CE acknowledges the support of IH\'ES.  The research of CE on this project has received funding from the European Research Council (ERC) under the European Union's Horizon 2020 research and innovation programme (QUASIFT grant agreement 677368). PS was supported by the NCCR SwissMAP grant of the Swiss National Science Foundation.

\section{Factorization Algebras and $\bb E_n$-Algebras} \label{Factorization_section}

Throughout the paper we work over the ground field $\CC$ of complex numbers. Thus, all complexes, Lie algebras and so on are over $\CC$. The chain complexes of observables will be differentiable chain complexes, see \cite[Appendices B, C]{Book1}.

\subsection{Models for Perturbative Field Theory} \label{perturbative_section}

We begin by outlining the theory of factorization algebras as a model for classical and quantum field theory on $\RR^n$.  One can model a classical perturbative field theory and its quantization starting from either of two ``dual'' points of view.
\begin{enumerate}
 \item The BV--BRST complex of \emph{fields}: a sheaf of $L_\infty$-algebras on $\RR^n$ that gives a derived model for the space of gauge-equivalence classes of solutions to the equations of motion.
 \item The factorization algebra of \emph{observables}: a multiplicative version of a cosheaf of cochain complexes on $\RR^n$ built from the BV--BRST complex by taking Chevalley-Eilenberg cochains.
\end{enumerate}

Our focus for this article will be the second point of view, that of factorization algebras, but we will conclude this section with some remarks on the BV--BRST complex.

Let us begin by explaining what a factorization algebra is.  The idea is that we can model observables in a field theory on each open subset of the spacetime (i.e. of $\RR^n$), along with the data of how to extend observables from a smaller open set to a larger one.  That is, we will describe a kind of precosheaf. Fix a symmetric monoidal category $\cC$, for instance the category of vector spaces.

\begin{definition}
A \emph{prefactorization algebra} on $\RR^n$ is a ``multiplicative'' precosheaf $\obs$ with values in $\cC$ on $\RR^n$.  That is, an assignment of $\obs(U)\in\cC$ to each open set, along with structure maps $\bigotimes_{i=1}^n \obs(U_i) \to \obs(V)$ for each pairwise disjoint collection of open subsets $U_1, \ldots, U_n \sub V$ satisfying the natural compatibility conditions.  
\end{definition}

\begin{definition}
A \emph{factorization algebra} on $\RR^n$ is a prefactorization algebra which satisfies descent for covers $\{U_i\}$ of open sets $U$ satisfying the condition that for any sequence of points $x_1, \ldots, x_n$ in $U$ there is an element $U_i$ of the cover so that $\{x_1, \ldots, x_n\} \sub U_i$.  Such covers are known as \emph{Weiss covers} (see \cite[Chapter 6, Section 1]{Book1} for a precise definition).
\end{definition}

The idea here is that the local sections $\obs(U)$ on an open set $U$ model the (classical or quantum) $\CC$-valued observables of a theory that depend only on measurements inside of $U$, so in particular there is a canonical extension (our structure maps) $\obs(U) \to \obs(V)$ for every pair of open sets $U \sub V$.  The sheaf condition is intended to convey the idea that all observables are determined by their values on arbitrarily small neighbourhoods of finite sets of points.

The observables in a classical field theory carry an additional structure: a degree 1 Poisson bracket.

\begin{definition}
A \emph{$\bb P_0$-algebra} is a commutative dg algebra $A$ together with a Poisson bracket of cohomological degree $1$: that is, a Lie bracket $\{-,-\}\colon A\otimes A\rightarrow A[1]$ which is a derivation for the product.
\end{definition}

Let $\alg_{\bb P_0}$ denote the symmetric monoidal category of $\bb P_0$-algebras with tensor product given by the tensor product of the underlying complexes. 

Let us briefly comment on the passage from classical to quantum observables which is given by BV quantization introduced in \cite{BatalinVilkovisky}. We will follow the modern account given in \cite[Chapter 2.4]{Book2}.

\begin{definition}
A \emph{$\BD_0$-algebra} is a complex of flat $\CC[[\hbar]]$-modules together with a commutative multiplication and a Poisson bracket of degree $1$ satisfying the relation
\[\d(ab) = \d(a) b + (-1)^{|a|} a \d(b) + \hbar\{a, b\}.\]
\end{definition}

Note that if $\tilde{A}$ is a $\BD_0$-algebra, $\tilde{A}/\hbar$ becomes a $\bb P_0$-algebra. Conversely, given a $\bb P_0$-algebra $A$, a \emph{deformation quantization} is a $\BD_0$-algebra $\tilde{A}$ together with an isomorphism $\tilde{A}/\hbar\cong A$ of $\bb P_0$-algebras. Similarly, given a prefactorization algebra $\obs$ representing the classical observables in a classical field theory, a quantization is a prefactorization algebra $\obs^q$ valued in $\BD_0$-algebras together with an isomorphism $\obs^q/\hbar\cong \obs$.  There is a comprehensive theory governing the process of quantizing factorization algebras of classical observables via renormalization \cite{Book2}.

Throughout the paper our results will be valid in all of the following settings:
\begin{itemize}
\item (Classical). We consider prefactorization algebras valued in $\bb P_0$-algebras.

\item (Quantum). We consider prefactorization algebras valued in complexes.

\item (Deformation quantization). We consider prefactorization algebras valued in $\BD_0$-algebras.
\end{itemize}

We will conclude this section with a comment on the dual point of view, where we model a field theory starting from its space of classical fields and action functional.  We represent classical field theory from this point of view using the Batalin-Vilkovisky formalism -- we refer to Costello \cite[Chapter 5]{CostelloBook} and Costello and Gwilliam \cite{Book1} for a more detailed account.

\begin{definition} \label{classical_field_theory_def}
A \emph{classical field theory} on $\RR^n$ is a sheaf $L$ of $L_\infty$-algebras on $\RR^n$ along with an invariant antisymmetric pairing of degree $-3$ (sometimes called the \emph{antibracket}):
\[\langle -,-\rangle \colon L \otimes L \to \dens_{\RR^n}[-3].\]  
In order to match up with the physics terminology we will refer to this sheaf of Lie algebras as the \emph{BV--BRST complex} of the theory.  
\end{definition}

These BV--BRST complexes can be constructed algorithmically from the more traditional physical data of a space of fields with a gauge action and an action functional; it models the tangent complex to the derived critical locus of the action functional.  In brief, one models the space of infinitesimal gauge transformations, with its action on the space of physical fields, by a sheaf $\Phi$ of $L_\infty$-algebras (the ``BRST complex'').  The BV-BRST complex is then constructed as the sum $\Phi \oplus \Phi^![-3]$ -- where $\Phi^!$ is a topological dual space to $\Phi$ tensored by the sheaf of densities on $\RR^n$ -- with $L_\infty$ structure deformed using the classical action functional.  For details, see \cite[Section 5.3]{Book2}.

Recall that the Chevalley-Eilenberg cochain complex of an $L_\infty$ algebra $\gg$ is the cochain complex $C^\bullet(\gg) = \sym(\gg[1])^*$ with differential coming from the $L_\infty$ structure on $\gg$. An invariant symmetric pairing of degree $-3$ on $\gg$ induces a $\bb P_0$-algebra structure on $C^\bullet(\gg)$.

\begin{definition}
Let $L$ be an $L_\infty$ algebra describing a classical field theory on $\RR^n$. Its algebra of \emph{classical observables} is the prefactorization algebra $\obs$ valued in $\alg_{\bb P_0}$ that assigns to an open set $V$ the $\bb P_0$-algebra
\[\obs_L(V) := C^\bullet(L(V)).\]
\end{definition}

In fact, in reasonable circumstances this prefactorization algebra is a factorization algebra (see \cite[Chapter 6, Theorem 6.0.1]{Book1} for the precise statement).

\subsection{Smooth Group Actions} \label{group_action_section}
There are several senses in which a field theory can be thought of as equivariant with respect to the action of a Lie group $G$.  We will mostly consider group actions in a strong sense, where we have a smooth action of the Lie group relating observables on different open sets along with a compatible infinitesimal action of its Lie algebra by derivations on each open set.

We will consider derivations of algebras as follows:
\begin{itemize}
\item If $A$ is a $\bb P_0$- or a $\BD_0$-algebra, a derivation $F\colon A\rightarrow A$ is an endomorphism which is a derivation of the multiplication and the bracket.

\item If $A$ is a complex, a derivation is merely an endomorphism.
\end{itemize}

\begin{definition}[{\cite[Chapter 4, Definition 8.1.2]{Book1}}]
A degree $k$ \emph{derivation} of a prefactorization algebra $\obs$ on $\RR^n$ is a cohomological degree $k$ derivation $F_U \colon \obs(U) \to \obs(U)$ for each open set $U$ in $\RR^n$ that collectively satisfy a Leibniz rule.  That is, if $U_1, \ldots, U_m \sub V$ are disjoint open subsets of an open set, and $m_{U_1, \ldots, U_m}^V$ is the associated factorization map, we require that
\begin{align*}
F_V \circ &m_{U_1, \ldots, U_m}^V(\OO_1, \ldots, \OO_m) \\
&= \sum_{i=1}^m (-1)^{k(|\OO_1| + \cdots + |\OO_{i-1}|)} m_{U_1, \ldots, U_m}^V(\OO_1, \ldots, F_{U_i}(\OO_i), \ldots, \OO_m).
\end{align*}
\end{definition}

The complex $\der(\obs)$ of derivations of arbitrary degree is naturally a dg Lie algebra, with the bracket defined on each open set.

\begin{definition} \label{action_of_lie_alg_def}
If $\gg$ is a dg Lie algebra and $\obs$ is a prefactorization algebra on $\RR^n$, an \emph{action} of $\gg$ on $\obs$ is a morphism $\gg \to \der(\obs)$ of dg Lie algebras.
\end{definition}

We may also consider \emph{inner} actions of dg Lie algebras on the observables:
\begin{itemize}
\item If $\obs$ is a prefactorization algebra valued in $\bb P_0$ or $\BD_0$-algebras, then there is a morphism of dg Lie algebras $\obs(\RR^n)[-1] \to \der^\bullet(\obs)$ defined by $\OO \mapsto \{\OO, -\}$.

\item If $\obs$ is a prefactorization algebra valued in complexes, we equip $\obs(\RR^n)[-1]$ with the trivial bracket and let $\obs(\RR^n)[-1]\to \der^\bullet(\obs)$ be the zero map.
\end{itemize}

\begin{definition}
A $\gg$-action $\gg \to \der^\bullet(\obs)$ is \emph{inner} if it factors through a morphism $\gg \to \obs(\RR^n)[-1]$ of dg Lie algebras.
\end{definition}

\begin{remark}
Note that by definition an inner action on a prefactorization algebra valued in complexes is trivial. Similarly, an inner action on a prefactorization algebra $\obs$ valued in $\BD_0$-algebras is homotopically trivial on $\obs[\hbar^{-1}]$ (see e.g. \cite[Lemma 4.15]{KatzarkovKontsevichPantev}).
\label{remark:inneractionhomotopytrivial}
\end{remark}

The following definition generalises the definition of a smooth translation action in \cite[Chapter 4 Definition 8.1.3]{Book1}.

\begin{definition}
Let $G$ be a Lie group with an action $\rho\colon G\rightarrow \mr{Diff}(\RR^n)$ on $\RR^n$ by diffeomorphisms.  We say $G$ acts \emph{smoothly} on a prefactorization algebra $\obs$ on $\RR^n$ if we have an action of the complexified Lie algebra $\gg$ on $\obs$ by derivations, which we denote by $v \mapsto \d / \d v$, along with isomorphisms $\alpha_g \colon \obs(U) \to \obs(\rho(g)(U))$ for each $g \in G$ and each open subset $U \sub \RR^n$ which satisfy the following conditions.
\begin{itemize}
 \item For each $g_1, g_2\in G$, $\alpha_{g_1} \circ \alpha_{g_2} = \alpha_{g_1 g_2}$.
 \item If $U_1,U_2\sub V$ are disjoint open subsets such that $\rho(g)(U_1)$ and $\rho(g)(U_2)$ are disjoint for $g\in G$, then the square
 \[\xymatrix{
  \obs(U_1) \otimes \obs(U_2) \ar[r]^-{\alpha_g} \ar[d] &\obs(\rho(g)(U_1)) \otimes \obs(\rho(g)(U_2)) \ar[d] \\
  \obs(V) \ar[r]^{\alpha_g} &\obs(\rho(g)(V))
 }\]
 commutes.
 \item For each collection $U_1, \ldots, U_k\sub V$ of disjoint open subsets, the composite map 
\begin{align*}
m_{g_1, \ldots, g_k} \colon \obs(U_1) \otimes \cdots \otimes \obs(U_k) &\to \obs(\rho(g_1)(U_1)) \otimes \cdots \otimes \obs(\rho(g_k)(U_k)) \\
&\to \obs(V)
\end{align*}
 varies smoothly in the space
 \[\{(g_1, \ldots, g_k) \in G^k: \text{the } \rho(g_k)(U_k) \text{ are disjoint subsets of } V\}.\]
 \item For each $v \in \gg$ we have the following compatibility between the Lie group and Lie algebra actions:
 \[ \left( \frac \partial {\partial v} \right)_i m_{g_1, \ldots, g_k}(\OO_1, \ldots, \OO_k) = m_{g_1, \ldots, g_k}(\OO_1, \ldots, \frac \partial {\partial v} \OO_i, \ldots, \OO_k)\]
 for every $i$, where on the left we are taking the directional derivative with respect to the element
 \[(0, \ldots, 0, L_{g_i}(v), 0, \ldots, 0) \in T_{g_1, \ldots, g_k}G^k\]
 with $v$ in the $i^{\text{th}}$ slot.
\end{itemize}
\end{definition}

\subsection{Translation-invariant Prefactorization Algebras} \label{translation_invariant_section}
Consider the action of $\RR^n$ on itself by translations.

\begin{definition}
A \emph{translation-invariant} prefactorization algebra is a prefactorization algebra $\obs$ on $\RR^n$ equipped with a smooth action of the translation group $\RR^n$.
\end{definition}

Let us give an operadic reformulation of this structure. Let $B_r(x)\sub \RR^n$ be the open $n$-disk of radius $r$ with center at $x\in\RR^n$ and let $P_r(x)\sub \CC^n$ be the open $n$-polydisk of radius $r$ with center at $x\in\CC^n$. Denote by $\overline{B}_r(x)$ and $\overline{P}_r(x)$ their closures. We consider the following four operads:
\begin{itemize}
\item Let $\disk^{\man}_n$ be the $\RR_{>0}$-coloured operad in smooth manifolds defined as follows.  Define
\[\disk^{\man}_n(r_1, \dots, r_k | R)\sub \RR^{nk}\]
to be the subset of vectors $x_1, \dots, x_k\in\RR^n$ such that the map $\overline{B}_{r_1}(x_1)\sqcup \dots \sqcup \overline{B}_{r_k}(x_k)\rightarrow \overline{B}_R(0)$ is injective. The composition in $\disk^{\man}_n$ is defined by embedding of disks.

\item Let $\pdisk^{\man}_n$ be the $\RR_{>0}$-coloured operad in complex manifolds where
\[\pdisk^{\man}_n(r_1, \dots, r_k | R)\sub \CC^{nk}\]
is the subset of vectors $x_1, \dots, x_k\in \CC^n$ such that the map $\overline{P}_{r_1}(x_1)\sqcup \dots \sqcup \overline{P}_{r_k}(x_k)\rightarrow \overline{P}_R(0)$ is injective.

\item Let $\disk^{\col}_n$ be the $\RR_{>0}$-coloured operad in simplicial sets given by taking the singular set of $\disk^{\man}_n$.

\item Let $\disk_n$ be the operad in simplicial sets defined as follows. $\disk_n(k)$ has a map to $\RR_{>0}^k$ whose fiber at $(r_1, \dots, r_k)\in\RR_{>0}^k$ is $\disk^{\col}_n(r_1, \dots, r_k | 1)$. The composition is given by rescaling and embedding the disks. $\disk_n$ is a model for the $\bb E_n$-operad of little disks (see e.g. \cite[Chapter 4.1]{MarklShniderStasheff}) and $\disk_n$-algebras are $\bb E_n$-algebras.
\end{itemize}

We have a morphism of coloured operads $\disk^{\col}_n\rightarrow \disk_n$ given by collapsing all colours so that the map
\[\disk^{\col}_n(r_1, \dots, r_k | R)\longrightarrow \disk_n(k)\]
is given by sending an embedding
\[B_{r_1}(x_1)\sqcup \dots \sqcup B_{r_k}(x_k)\rightarrow B_R(0)\]
to the rescaled embedding
\[B_{r_1/R}(x_1/R)\sqcup \dots \sqcup B_{r_k/R}(x_k/R)\rightarrow B_1(0).\]

\begin{remark}
The difference between the coloured operad $\disk^{\col}_n$ and the operad $\disk_n$ is that in the former we retain the information about the radii of the disks. As will be explained in Theorem \ref{E_n_theorem}, $\disk_n$-algebras are $\disk^{\col}_n$-algebras which are locally-constant in the radial direction.
\end{remark}

Let $\cO$ be an operad in manifolds. The functor $\C^\infty$ from manifolds to convenient vector spaces is symmetric monoidal, so $\C^\infty(\cO)$ forms a cooperad in convenient vector spaces. Similarly, $\Omega^\bullet(\cO)$ forms a cooperad in complexes of convenient vector spaces (see \cite[Chapter 5, Section 1.3]{Book1}). We may then talk about algebras over $\C^\infty(\cO)$ or $\Omega^\bullet(\cO)$ in differentiable chain complexes.

The singular chain complexes $\C_\bullet(\cO)$ form a dg operad and we may talk about algebras over $\C_\bullet(\cO)$ in chain complexes. For a manifold $M$ we have an integration map $\C_\bullet(M)\otimes \Omega^\bullet(M)\rightarrow \RR$, so an algebra in differentiable chain complexes over the cooperad $\Omega^\bullet(\cO)$ gives rise to an algebra in chain complexes over the operad $\C_\bullet(\cO)$.

For a prefactorization algebra $\obs$ on $\RR^n$ we denote by $\cF_r = \obs(B_r(0))$ the observables on the standard disk of radius $r$. The following statement is shown in \cite[Chapter 4, Section 8.2]{Book1}.

\begin{prop} \label{translation_invariance_prop}
A translation-invariant prefactorization algebra $\obs$ on $\RR^n$ gives rise to an algebra $\{\cF_r\}_{r\in\RR_{>0}}$ over the coloured cooperad $\C^\infty(\disk^{\man}_n)$ in the category of differentiable chain complexes.
\end{prop}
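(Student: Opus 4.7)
The plan is to construct the coaction by combining the translation action with the factorization structure maps, then to check compatibility with the cooperadic structure.

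\textbf{Step 1 (construction of the coaction).} For each tuple $(r_1, \dots, r_k; R)$, I define the structure map pointwise. Given a configuration $(x_1,\dots,x_k)\in \disk^{\man}_n(r_1,\dots,r_k|R)$, the disks $B_{r_i}(x_i)$ are pairwise disjoint subsets of $B_R(0)$, so we have a factorization structure map $m^{(x_1,\dots,x_k)}\colon \obs(B_{r_1}(x_1))\otimes\cdots\otimes \obs(B_{r_k}(x_k))\to \obs(B_R(0))$. Combining this with the translation isomorphisms supplied by the smooth action, I set
\[
\mu^{(x_1,\dots,x_k)}_{r_1,\dots,r_k;R}\;:=\;m^{(x_1,\dots,x_k)}\circ(\alpha_{x_1}\otimes\cdots\otimes\alpha_{x_k})\colon \cF_{r_1}\otimes\cdots\otimes \cF_{r_k}\longrightarrow \cF_R.
\]
The goal is to reinterpret the collection $\{\mu^{(x_1,\dots,x_k)}\}$ as a single morphism of differentiable chain complexes
\[
\mu_{r_1,\dots,r_k;R}\colon \cF_{r_1}\otimes\cdots\otimes \cF_{r_k}\longrightarrow \C^\infty\bigl(\disk^{\man}_n(r_1,\dots,r_k|R)\bigr)\,\hat\otimes\,\cF_R,
\]
which is exactly a coaction of the coloured cooperad $\C^\infty(\disk^{\man}_n)$.

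\textbf{Step 2 (smoothness).} The reinterpretation in Step 1 rests on showing that for any observables $\mathcal O_i\in \cF_{r_i}$, the assignment $(x_1,\dots,x_k)\mapsto \mu^{(x_1,\dots,x_k)}(\mathcal O_1,\dots,\mathcal O_k)$ is smooth as a map from $\disk^{\man}_n(r_1,\dots,r_k|R)$ to $\cF_R$ (in the differentiable/convenient sense). This is precisely the content of the third axiom in the definition of a smooth $G$-action, applied to the initial disjoint inclusion $B_{r_1}(0)\sqcup\cdots\sqcup B_{r_k}(0)\hookrightarrow B_{R+\epsilon}(0)$: the structure map $m_{x_1,\dots,x_k}$ varies smoothly on the open subset of $\RR^{nk}$ where the translates remain disjoint inside $B_R(0)$, which is by definition $\disk^{\man}_n(r_1,\dots,r_k|R)$. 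Compatibility of the differentials uses the fourth axiom of the smooth action, together with the fact that the translation generators act by prefactorization derivations.

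\textbf{Step 3 (cooperadic compatibility and equivariance).} The operadic composition in $\disk^{\man}_n$ is substitution of configurations: given configurations $(x_1,\dots,x_k)\in \disk^{\man}_n(r_1,\dots,r_k|R)$ and, for each $i$, $(y^{(i)}_1,\dots,y^{(i)}_{\ell_i})\in \disk^{\man}_n(s^{(i)}_1,\dots,s^{(i)}_{\ell_i}|r_i)$, the composite configuration has points $x_i+y^{(i)}_j$. I need to check that applying $\mu$ twice, for the two layers of configurations, agrees with applying $\mu$ once for the composite. This reduces to two facts already built into the smooth action: first, translations strictly compose, so $\alpha_{x_i+y^{(i)}_j}=\alpha_{x_i}\circ\alpha_{y^{(i)}_j}$; second, the factorization structure maps are associative with respect to chains of disjoint inclusions, which is part of the prefactorization axioms. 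Equivariance under the symmetric group on $\{r_1,\dots,r_k\}$ is immediate from the symmetry of the factorization maps.

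\textbf{Main obstacle.} The substantive point is not the diagram chasing in Step 3 but the analytic input of Step 2: reinterpreting a pointwise-defined family of maps as a genuine morphism of differentiable chain complexes with target $\C^\infty(\disk^{\man}_n(\underline r|R))\,\hat\otimes\,\cF_R$. This is exactly why the formalism of differentiable (or convenient) vector spaces is invoked — by the universal property of differentiable chain complexes, a smoothly-varying family of maps $\cF_{r_1}\otimes\cdots\otimes\cF_{r_k}\to \cF_R$ parametrized by the manifold $\disk^{\man}_n(r_1,\dots,r_k|R)$ is the same datum as a single map into $\C^\infty(\disk^{\man}_n(r_1,\dots,r_k|R))\,\hat\otimes\,\cF_R$. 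Once this identification is in place, the verification of the cooperad axioms is formal.
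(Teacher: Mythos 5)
Your construction is correct and is essentially the argument the paper defers to (the proposition is quoted from Costello--Gwilliam, Book 1, Chapter 4, Section 8.2): compose the translation isomorphisms with the factorization structure maps pointwise over a configuration, use the smoothness axiom of the action to package this as a map into $\C^\infty\bigl(\disk^{\man}_n(r_1,\dots,r_k|R)\bigr)\,\hat\otimes\,\cF_R$, and then the cooperad compatibilities are formal. Two cosmetic repairs to Step 2: the disks $B_{r_i}(0)$ are concentric rather than disjoint, so you should invoke the smoothness axiom locally around a disjoint base configuration $B_{r_1}(x_1^0)\sqcup\dots\sqcup B_{r_k}(x_k^0)\sub B_R(0)$ and conjugate by the fixed isomorphisms $\alpha_{x_i^0}$ (smoothness being local on the configuration space), and compatibility with the cohomological differential needs only that each pointwise map is a chain map -- the fourth axiom (relating derivatives of the family to the infinitesimal action) is not needed for the $\C^\infty$ statement, only later for the de Rham version.
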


If we have additional structure on a translation-invariant prefactorization algebra, we may enhance this collection to an algebra structure over either the coloured operad $\pdisk^{\man}_{n/2}$ or the coloured operad $\disk^{\col}_n$.

We begin with the holomorphic case. Recall that if $\obs$ is a translation-invariant prefactorization algebra on $\RR^{2n}\cong \CC^n$ we get derivations $\frac{\partial}{\partial \overline{z}_i}$ of $\obs$ for every $i=1,\dots,n$.

\begin{definition} \label{hol_translation_invariance_def}
A \emph{holomorphic translation-invariant prefactorization algebra} is a translation-invariant prefactorization algebra on $\CC^n$ equipped with the following data: for every $v\in\CC^n$ we have derivations $\eta_i$ of $\obs$ of degree $-1$ satisfying the following equations:
\begin{align*}
\d \eta_i &= \frac{\partial}{\partial \overline{z}_i} \\
\left[\frac{\partial}{\partial \overline{z}_i}, \eta_j\right] &= 0 \\
[\eta_i, \eta_j] &= 0
\end{align*}
for every $i,j=1,\dots,n$.
\end{definition}

The following statement is shown in \cite[Chapter 5, Proposition 1.3.1]{Book1}.

\begin{prop} \label{hol_translation_invariance_prop}
A holomorphic translation-invariant prefactorization algebra $\obs$ on $\CC^n$ gives rise to an algebra $\{\obs(P_r(0))\}_{r\in\RR_{>0}}$ over the coloured cooperad $\Omega^{0, \bullet}(\pdisk^{\man}_n)$ in the category of differentiable chain complexes.
\end{prop}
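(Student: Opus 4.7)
The plan is to combine Proposition \ref{translation_invariance_prop} with a contraction construction using the $\eta_i$ operators as Dolbeault-style homotopies. First, the argument of Proposition \ref{translation_invariance_prop} applies essentially verbatim with $\pdisk^{\man}_n$ in place of $\disk^{\man}_{2n}$: smoothness of the $\CC^n = \RR^{2n}$ translation action produces, for each configuration $(x_1, \ldots, x_k)\in \pdisk^{\man}_n(r_1, \ldots, r_k \mid R)$, a factorization map $\obs(P_{r_1}(0))\otimes \cdots \otimes \obs(P_{r_k}(0)) \to \obs(P_R(0))$ varying smoothly in the $x_j$. This equips $\{\obs(P_r(0))\}$ with an algebra structure over the cooperad $\C^\infty(\pdisk^{\man}_n)$ in differentiable chain complexes, so the overall structure map takes the form
\[M\colon \cF_{r_1}\otimes \cdots \otimes \cF_{r_k}\longrightarrow \cF_R \,\widehat{\otimes}\, \C^\infty(\pdisk^{\man}_n(r_1, \ldots, r_k\mid R)).\]

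The next step is to refine $M$ to land in $\cF_R \,\widehat{\otimes}\, \Omega^{0,\bullet}(\pdisk^{\man}_n(r_1, \ldots, r_k \mid R))$. Writing $z^{(j)}_i$ for the $i$-th complex coordinate of the $j$-th center, I would define the $(0,p)$-form component by
\[M^p(\OO_1, \ldots, \OO_k) = \sum_{(i_a, j_a)} \bigl(d\overline{z}^{(j_1)}_{i_1}\wedge \cdots \wedge d\overline{z}^{(j_p)}_{i_p}\bigr)\, M^0(\OO_1, \ldots, \eta_{i_1}\OO_{j_1}, \ldots, \eta_{i_p}\OO_{j_p}, \ldots, \OO_k),\]
with signs dictated by the Koszul convention. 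The relation $[\eta_i, \eta_j]=0$ ensures that this expression is antisymmetric in the index pairs $(i_a, j_a)$ and therefore well-defined as a pairing with the wedge product.

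Finally, I would verify the two compatibility conditions. For the differential, one has $\overline{\partial}$ on $\Omega^{0,\bullet}$ versus the internal differential of $\cF_R$; the required identity follows directly from $d\eta_i = \partial/\partial \overline{z}_i$ of Definition \ref{hol_translation_invariance_def}, since applying the internal differential to one of the inserted $\eta_{i_a}$ replaces it by $\partial/\partial \overline{z}_{i_a}$, which is precisely the action of $\overline{\partial}$ on the coefficient functions of $M^0$. Compatibility with cooperadic composition---nesting one configuration of polydisks into a slot of another---reduces to the fact that the $\eta_j$'s commute with translations, as recorded in $[\partial/\partial \overline{z}_i, \eta_j]=0$ (the holomorphic case being automatic from translation invariance of the derivation $\eta_j$). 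The main obstacle is sign bookkeeping and the careful handling of nested compositions, but the construction is a direct generalization of \cite[Chapter 5, Proposition 1.3.1]{Book1}, which treats the case $n=1$ in detail.
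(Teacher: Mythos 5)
Your construction is correct and is essentially the approach the paper relies on: the paper defers this statement to Costello--Gwilliam \cite{Book1}, and its own proof of the de Rham analogue (Theorem \ref{dR_translation_invariance_thm} via Lemma \ref{lm:homotopysection}) is exactly your map, namely the slot-wise expansion of $\exp\left(\sum_i e^i \eta(e_i)\right)$ composed with the smooth structure maps of Proposition \ref{translation_invariance_prop}. One minor bookkeeping correction: the relation $[\partial/\partial\overline{z}_i,\eta_j]=0$ is already needed in the chain-map verification (when a $\partial/\partial\overline{z}_i$ produced by $\d\eta$ and another inserted $\eta_{i'}$ occupy the same slot, i.e.\ in form degrees $\geq 2$), not only for compatibility with cooperadic composition.
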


\begin{remark}
An algebra over the cooperad $\Omega^{0, \bullet}(\pdisk^{\man}_1)$ is closely related to the notion of a vertex algebra. We refer to \cite{HuangGeomVOA} and \cite[Chapter 5, Section 2]{Book1} for more details.
\end{remark}

\subsection{De Rham Translation Invariance} \label{E_n_section}
In this section we explain how to produce $\bb E_n$ algebras from translation-invariant prefactorization algebras where the translation action is homotopically trivialized.  The arguments in this section (but only in this section and the subsequent Section \ref{Gstructure_section}) will use techniques from homotopical algebra.  At a first pass, in order to understand the results of this section as they will be applied in the rest of the paper it will be sufficient to read the initial definitions, up to Definition \ref{dR_translation_invariance_def} and then the statement of Corollary \ref{E_n_summary_cor}.

We will be interested in actions of the homotopy type of a Lie group on a prefactorization algebra. Such an action can be defined in the following way.

\begin{definition} \label{htpically_trivial_action_def}
Let $G$ be a Lie group which acts on $\RR^n$ with an ideal $\gg_0\sub \gg$ in its complexified Lie algebra which is stable under the $G$-action. Suppose $\obs$ is a prefactorization algebra on $\RR^n$ with a smooth $G$-action. We say the $\gg_0$-action is \emph{homotopically trivial} if it is equipped with a $G$-equivariant map $\eta\colon \gg_0\rightarrow \der(\obs)[-1]$ of graded vector spaces satisfying the following equations:
\begin{align*}
\d\eta(v) &= \frac{\partial}{\partial v} \\
[\eta(v), \eta(w)] &= 0.
\end{align*}
\end{definition}

\begin{remark}
Note that if $G$ is a connected Lie group, we may rephrase $G$-equivariance of $\eta\colon \gg_0\rightarrow \der(\obs)[-1]$ via the equation
\[\eta([v, w]) = \left[\eta(v), \frac{\partial}{\partial w}\right]\]
for any $v\in \gg_0$ and $w\in\gg$.
\end{remark}

\begin{example} \label{holo_translation_invt_example}
Consider $G=\RR^{2n}\cong \CC^n$ acting on $\RR^{2n}$ by translations. We have $\gg=\CC^n\otimes_\RR\CC$ and let
\[\gg_0 = \mathrm{span}\left\{\frac{\partial}{\partial \overline{z}_i}\right\}_i\sub \gg.\]
Then a translation-invariant prefactorization algebra on $\CC^n$ for which the $\gg_0$-action is homotopically trivial is exactly a holomorphic translation-invariant prefactorization algebra (see Definition \ref{hol_translation_invariance_def}).
\end{example}

\begin{definition} \label{dR_action_def}
Let $G$ be a Lie group which acts on $\RR^n$. A \emph{$G_{\dR}$-action} on a prefactorization algebra $\obs$ on $\RR^n$ is a smooth action of $G$ on $\obs$ for which the $\gg$-action is homotopically trivial.
\end{definition}

We can now give the following topological analogue of Definition \ref{hol_translation_invariance_def}.

\begin{definition} \label{dR_translation_invariance_def}
A translation-invariant prefactorization algebra $\obs$ on $\RR^n$ is \emph{de Rham translation-invariant} if the smooth $\RR^n$-action on $\obs$ is extended to a $\RR^n_{\dR}$-action.
\end{definition}

De Rham translation invariance allow us to enhance the action of $\disk^{\man}_n$ to a locally-constant action as follows.

\begin{theorem} \label{dR_translation_invariance_thm}
A de Rham translation-invariant prefactorization algebra $\obs$ on $\RR^n$ gives rise to an algebra $\{\cF_r\}_{r\in\RR_{>0}}$ over the coloured cooperad $\Omega^\bullet(\disk^{\man}_n)$ in the category of differentiable chain complexes.
\end{theorem}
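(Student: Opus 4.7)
The plan is to enhance the $\C^\infty(\disk^{\man}_n)$-coaction from Proposition \ref{translation_invariance_prop} to a coaction of the larger cooperad $\Omega^\bullet(\disk^{\man}_n)$ by twisting the smooth structure maps with the nullhomotopies supplied by the de Rham structure. Fix colours $r_1,\dots,r_k,R$ and set $X=\disk^{\man}_n(r_1,\dots,r_k|R)$, an open subset of $\RR^{nk}$ with coordinates $x_i^a$ ($i=1,\dots,k$, $a=1,\dots,n$). Denote by $\eta_i^a$ the derivation $\eta(\partial/\partial x^a)$ acting on the $i$-th tensor slot, and write
\[m\colon \cF_{r_1}\otimes\cdots\otimes \cF_{r_k} \longrightarrow \C^\infty(X)\,\hat\otimes\,\cF_R\]
for the smooth structure map produced by Proposition \ref{translation_invariance_prop}.

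First, I define the enhanced structure map $\widetilde m\colon \cF_{r_1}\otimes\cdots\otimes \cF_{r_k}\to \Omega^\bullet(X)\,\hat\otimes\,\cF_R$ by
\[\widetilde m(\OO_1,\dots,\OO_k) \;:=\; m\!\left(\prod_{i,a}\bigl(1+dx_i^a\,\eta_i^a\bigr)\cdot(\OO_1\otimes\cdots\otimes\OO_k)\right),\]
where $m$ is extended $\Omega^\bullet(X)$-linearly. Because the $\eta_i^a$ pairwise commute by the relation $[\eta(v),\eta(w)]=0$, and each factor $dx_i^a\,\eta_i^a$ has total degree zero, the product is unambiguous and expands as a finite sum over ordered multi-indices, with signs dictated by the Koszul rule. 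The $0$-form component is exactly $m$.

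Next, I verify that $\widetilde m$ is a chain map for the total differential $d_X + d^{\cF_R}$ on the target. Differentiating a coefficient by $\partial/\partial x_i^a$ matches, via smoothness and the translation-equivariance built into Proposition \ref{translation_invariance_prop}, the insertion of $\partial/\partial x^a$ into the $i$-th slot of $m$. These insertions are precisely what the relation $[d,\eta_i^a]=\partial/\partial x^a|_{\text{slot }i}$ produces when $d^{\cF}$ is commuted past the $\eta_i^a$, so the two contributions cancel. Cross terms between different insertions vanish by $[\eta_i^a,\eta_j^b]=0$. Conceptually, $\widetilde m$ is the twist of $m$ by the Koszul-type nullhomotopy for the $\RR^{nk}$-action on $X$ supplied by the $\eta_i^a$.

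Finally, I check cooperadic compatibility. Composing two configurations (inserting a sub-configuration into the $i$-th disk) induces a smooth embedding $X'\times X''\hookrightarrow X$; on the form side the cooperadic coproduct is pullback of differential forms, and on the $\eta$-side the key point is that $\eta$ is a single universal map $\RR^n\to \der(\obs)[-1]$ whose restriction to each insertion point is $\RR^n$-equivariant, so the Leibniz expansions on both sides agree. The main obstacle in the proof is the bookkeeping: carefully tracking the Koszul signs, and showing that the iterated $\widetilde m$'s agree on the nose (not just up to homotopy) with the action induced via cooperadic composition, in the differentiable chain complex setting where one must verify that the convenient tensor products and smooth-family insertions are compatible with the wedge products of coordinate forms. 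Once this combinatorial setup is in place, compatibility follows formally from the defining relations of a $\RR^n_{\dR}$-action in Definition \ref{htpically_trivial_action_def}, yielding the desired $\Omega^\bullet(\disk^{\man}_n)$-coalgebra structure on $\{\cF_r\}$.
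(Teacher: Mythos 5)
Your proposal is correct and is essentially the same construction the paper uses: your product $\prod_{i,a}(1+dx_i^a\,\eta_i^a)$ is exactly the exponential $\exp\bigl(\sum_{i,a} dx_i^a\,\eta_i^a\bigr)$ from the paper's Lemma \ref{lm:homotopysection}, and postcomposing with the smooth structure map from Proposition \ref{translation_invariance_prop} is the same twist (the paper just packages the target as $\C^\bullet(\widetilde{\gg}^{\oplus k}, C^\infty(X,\cF_R))\cong\Omega^\bullet(X)\,\hat\otimes\,\cF_R$ and isolates the chain-map verification into a separate lemma).
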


Let us first state a preliminary result.

\begin{lemma}
Let $\gg$ be a finite-dimensional dg Lie algebra with a representation $f\colon \gg\rightarrow \eend(V)$ and a map $\eta\colon \gg\rightarrow \eend(V)[-1]$ satisfying the following equations:
\begin{align*}
f(v) &= \d \eta(v) + \eta(\d v) \\
\eta([v, w]) &= [\eta(v), f(w)] \\
[\eta(v), \eta(w)] &= 0
\end{align*}

Then the map $V\rightarrow \C^\bullet(\gg, V)$ given by
\[v\mapsto \exp\left(\sum_i e^i \eta(e_i)\right) v\]
is a chain map, where $\{e_i\}$ is a basis of $\mathfrak{g}$ and $\{e^i\}$ is the dual basis of $\mathfrak{g}^*$.
\label{lm:homotopysection}
\end{lemma}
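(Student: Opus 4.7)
The plan is a direct calculation. Viewing $\Theta := \sum_i e^i\otimes\eta(e_i)$ as a degree-$0$ operator on the completed symmetric algebra $\C^\bullet(\gg,V)=\widehat{\sym}(\gg^*[-1])\otimes V$, I would decompose the Chevalley-Eilenberg differential as $\d_{CE}=\d_V+\d_{\gg^*}+\rho+\d_{\mathrm{br}}$, where $\d_V$ is the internal differential on $V$, $\d_{\gg^*}$ is the extension of the dual of $\d_\gg$, $\rho=\sum_i e^i\otimes f(e_i)$ is the module-structure term, and $\d_{\mathrm{br}}=-\tfrac12\, c^k_{ij}\, e^i e^j\, \partial_{e^k}$ encodes the Lie bracket $[e_i,e_j]=c^k_{ij}e_k$ (with the usual Koszul signs throughout). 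The goal is to show $\d_{CE}\,\exp(\Theta)\,v=\exp(\Theta)\,\d_V v$ for every $v\in V$.

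At linear order in the $e^i$, the first hypothesis $f(v)=\d\eta(v)+\eta(\d v)$ — read with $\d\eta(v)=[\d_V,\eta(v)]$ in the endomorphism complex — is exactly the identity
\[[\d_V+\d_{\gg^*},\,\Theta]\;=\;\rho,\]
so the naive ``section'' $v\mapsto v+\Theta v$ already trades the offending $\rho v$ term for second-order corrections coming from $\Theta\cdot\rho\,v$ and from $[\d_{\mathrm{br}},\Theta]\,v$. To continue, I would prove inductively on $k\geq 0$ the identity
\[\d_{CE}\!\left(\tfrac{\Theta^k}{k!}\,v\right)\;=\;\tfrac{\Theta^k}{k!}\,\d_V v\;-\;\tfrac{\Theta^{k-1}}{(k-1)!}\,\rho\, v\;+\;E_k,\]
where $E_k$ is the contribution produced by $\d_{\mathrm{br}}$ applied to $\tfrac{\Theta^{k+1}}{(k+1)!}v$; telescoping these identities against the $\rho$-terms and summing over $k$ then yields the required equality.

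In that inductive step each hypothesis plays a distinct role. The third equation $[\eta(v),\eta(w)]=0$ ensures that the operators $\eta(e_i)$ pairwise graded-commute, so that the symmetrised products $\Theta^k/k!$ behave exactly like honest powers of a commuting formal variable and the standard exponential manipulations — Leibniz expansion, reordering of indices — go through with only Koszul signs. The second equation $\eta([v,w])=[\eta(v),f(w)]$ is precisely what makes $\d_{\mathrm{br}}$ compatible with $\Theta$: paired with $c^k_{ij}e^ie^j$, the contraction $\partial_{e^k}\Theta^{k+1}$ collapses into an insertion of $\eta$ applied to a Lie bracket, producing the error term $E_k$ in exactly the shape needed for the telescope to close. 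The first equation keeps performing its linear-order task at every higher order, turning $[\d_V+\d_{\gg^*},\Theta^k]$ into a $\rho$-insertion on $\Theta^{k-1}$.

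The main obstacle I expect is the combinatorial bookkeeping in the inductive step: matching the $\tfrac12$ and antisymmetry in $\d_{\mathrm{br}}$ against the multinomial factor of $\Theta^{k+1}/(k+1)!$, and tracking the Koszul signs produced when the odd classes $e^i$ pass through $\eta(e_j)$'s of varying parity. A cleaner conceptual alternative, which I would fall back on if the direct bookkeeping becomes unwieldy, is to regard $(f,\eta)$ as a twisting cochain: $f$ endows $V$ with its $\gg$-module structure, and $\Theta$ is a degree-$0$ element of $\C^\bullet(\gg)\otimes\eend(V)$ whose conjugation action gauge-trivialises $f$ against the internal differential, so that $\exp(\Theta)$ is the standard Maurer-Cartan gauge-equivalence formula and the chain-map identity becomes essentially formal.
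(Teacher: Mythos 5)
The paper never actually writes out a proof of this lemma: it is stated as a preliminary and invoked in the proof of Theorem \ref{dR_translation_invariance_thm}, so the only comparison available is with the direct verification the authors implicitly leave to the reader. Your plan is exactly that verification, and the division of labour you assign to the three hypotheses is correct: hypothesis 1 converts $[\d_V+\d_{\gg^*},\Theta]$ into the action term, hypothesis 2 is what reconciles that with the bracket part of the Chevalley--Eilenberg differential, and hypothesis 3 both lets you treat $\Theta=\sum_i e^i\eta(e_i)$ as a sum of commuting even blocks and kills all higher corrections.

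However, the displayed inductive identity is not right as written, on two counts. Write $P=\sum_i e^i f(e_i)$ for the action term of $\d_{CE}$ (your $\rho$). First, ordering: applying $\d_V+\d_{\gg^*}$ to $\Theta^{k+1}v/(k+1)!$ and using hypothesis 1 produces $\frac{1}{(k+1)!}\sum_{j=0}^{k}\Theta^{j}P\,\Theta^{k-j}v$, with $P$ inserted in each slot, not $\Theta^{k}Pv/k!$ as your formula suggests; the difference matters because $P$ and $\Theta$ do not commute, and it is precisely hypotheses 2 and 3 that give $\Theta^{j}P=P\Theta^{j}+j\,[\Theta,P]\,\Theta^{j-1}$ with $[\Theta,P]=\pm\sum_{a,b}e^{a}e^{b}\,\eta([e_a,e_b])$ and $[\Theta,[\Theta,P]]=0$. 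Second, attribution: after this reordering, the $P\Theta^{k}v/k!$ piece cancels against the action term $\d_3$ applied to $\Theta^{k}v/k!$, and the $[\Theta,P]$ piece cancels against $\d_{\mathrm{br}}$ applied to $\Theta^{k}v/k!$ --- not against $\d_{\mathrm{br}}$ of $\Theta^{k+1}v/(k+1)!$ as in your definition of $E_k$. With these two repairs the weight-by-weight cancellation closes and the argument is complete. In fact your fallback is the cleanest way to package exactly this computation: by hypothesis 3 the iterated commutators with $\Theta$ terminate, so
\[e^{-\Theta}\,\d_{CE}\,e^{\Theta}\;=\;\d_{CE}+[\d_{CE},\Theta]+\frac{1}{2}\,[[\d_{CE},\Theta],\Theta],\]
and evaluating both sides on $v\in V$, where $\d_{\gg^*}$ and $\d_{\mathrm{br}}$ vanish and $\d_{CE}v=\d_V v+Pv$, reduces the lemma to a finite check of the three hypotheses with no multinomial bookkeeping and no Koszul-sign gymnastics beyond the three commutators above.
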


\begin{proof}[Proof of Theorem \ref{dR_translation_invariance_thm}]
We will give a proof in a slightly more general setting so that it will be amenable to generalizations. Let $\widetilde{G} = \RR^n$ be the group of translations and let $\widetilde{\gg}=\RR^n$ be its Lie algebra.

By definition the space $\disk^{\man}_n(r_1, \dots, r_k | R)$ is an open subset of $\widetilde{G}^k$, so we can identify
\[\Omega^\bullet(\disk^{\man}_n(r_1, \dots, r_k | R)) \cong \C^\bullet(\widetilde{\gg}, C^\infty(\disk^{\man}_n(r_1, \dots, r_k | R))).\]

We have maps
\[\mu^0\colon \cF_{r_1}\otimes \dots \otimes \cF_{r_k}\longrightarrow C^\infty(\disk^{\man}_n(r_1, \dots, r_k | R), \cF_R)\]
provided by Proposition \ref{translation_invariance_prop} which are $\widetilde{\gg}^{\oplus k}$-equivariant. We define
\[\mu\in \Omega^\bullet(\disk^{\man}_n(r_1, \dots, r_k | R), \hom(\cF_{r_1}\otimes \dots\otimes \cF_{r_k}, \cF_R))\]
to be the composite
\begin{align*}
\cF_{r_1}\otimes \dots\otimes \cF_{r_k} &\rightarrow \C^\bullet(\widetilde{\gg}^{\oplus k}, \cF_{r_1}\otimes \dots\otimes \cF_{r_k}) \\
&\xrightarrow{\mu^0} \C^\bullet(\widetilde{\gg}^{\oplus k}, C^\infty(\disk^{\man}_n(r_1, \dots, r_k | R), \cF_R))
\end{align*}
where the map in the first line is given by Lemma \ref{lm:homotopysection}. By construction it is a chain map and it is straightforward to check the operadic identities for $\mu$ using those for $\mu^0$.
\end{proof}

Thus,  by the above proposition we obtain an algebra over the coloured operad $\C_\bullet(\disk^{\man}_n)$ in chain complexes. In the rest of this section we explain a precise relationship between such algebras over the coloured operad $\C_\bullet(\disk^{\man}_n)$ and $\disk_n$-algebras (i.e. $\bb E_n$-algebras).

Fix a presentable symmetric monoidal $\infty$-category $\cC$ and let $\ialg_{\disk^{\col}_n}(\cC)$ be the $\infty$-category of algebras over $\disk^{\col}_n$ considered as an $\infty$-operad (see \cite[Definition 2.1.1.23]{HA}).

\begin{remark}
if $\cC$ is the $\infty$-category of chain complexes, by \cite[Theorem 4.1.1]{Hinich} we may identify $\ialg_{\disk^{\col}_n}(\cC)$ as the localization of the category of $\C_\bullet(\disk^{\col}_n)$-algebras in chain complexes with respect to quasi-isomorphisms.
\end{remark}

Given a simplicial category $\cD$ we denote by $\pi_0(\cD)$ the category obtained by applying $\pi_0$ to the Hom-sets.

\begin{definition}
Let $F\colon \cD\rightarrow \cE$ be a functor of simplicial categories. It is a \emph{Dwyer--Kan equivalence} if $F\colon \pi_0(\cD)\rightarrow \pi_0(\cE)$ is essentially surjective and
\[\hom_\cD(x, y)\rightarrow \hom_\cE(F(x), F(y))\]
is a weak equivalence of simplicial sets for every $x,y\in\cD$.
\end{definition}

Given a simplicial coloured operad $\cO$, consdering only operations of arity $1$ we obtain a simplicial category $\cO_{\textbf{1}}$. Recall from \cite{Bergner} that there is a model structure on simplicial categories with weak equivalences given by Dwyer--Kan equivalences. Also recall from \cite{Robertson,CisinskiMoerdijk} that the category of simplicial coloured operads carries a model structure where $F\colon \cO\rightarrow \cP$ is a weak equivalence if $\cO_{\textbf{1}}\rightarrow \cP_{\textbf{1}}$ is a Dwyer--Kan equivalence and for every collection of colours, \[\cO(c_1, \dots, c_k | d)\rightarrow \cP(F(c_1), \dots, F(c_k) | F(d))\] is an equivalence of simplicial sets. We will only consider simplicial coloured operads with $\cO(0)=\ast$.

We may view $\RR_{>0}$ as a poset, hence as a category. In particular, we have the associated $\RR_{>0}$-coloured operad concentrated in arity 1 that we denote by the same symbol.

\begin{prop}
The diagram of simplicial coloured operads
\[
\xymatrix{
\disk_n & \disk^{\col}_n \ar[l] \\
1 \ar[u] & \RR_{>0} \ar[l] \ar[u]
}
\]
is homotopy coCartesian.
\label{prop:diskcoCartesian}
\end{prop}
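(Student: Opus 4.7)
The plan is to compute the homotopy pushout levelwise and match it arity by arity with $\disk_n$. The geometric essence is that collapsing the $\RR_{>0}$-colour structure of $\disk^{\col}_n$ recovers the configuration-space model of the little disks operad. Since the given square commutes only up to a contractible homotopy inside $\disk_n(1)$ (note that the composite $\RR_{>0}\to\disk^{\col}_n\to\disk_n$ sends $r\leq R$ to the nontrivial unary $B_{r/R}(0)\hookrightarrow B_1(0)$, rather than the identity), I would work throughout in the Robertson/Cisinski--Moerdijk model structure on simplicial coloured operads, using a cofibrant replacement to represent the homotopy pushout by a strict pushout.

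I would exhibit $\RR_{>0}\hookrightarrow\disk^{\col}_n$ as a cofibration: it is the identity on the colour set $\RR_{>0}$, embeds $\RR_{>0}(r\mid R) = \{\ast\}$ (for $r\leq R$) into the contractible disk $\disk^{\col}_n(r\mid R)\cong B_{R-r}(0)$ as the origin, and has empty higher-arity part, hence is a normal monomorphism of the underlying dendroidal sets. Granting this, the homotopy pushout is represented by the strict pushout $P := \disk^{\col}_n \sqcup_{\RR_{>0}} 1$.

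Next I would compute $P$ levelwise. It has a single colour, and in arity $k$ the relations forcing $\RR_{>0}$-unaries to equal the identity identify an operation $\mu\in\disk^{\col}_n(r_1,\ldots,r_k\mid R)$ with all its compositions with concentric disk inclusions at any input slot or at the output. Since such inclusions preserve centers, I expect
\[
P(k) \;\cong\; \mathrm{colim}_{(\vec r, R)\in(\RR_{>0}^{\mathrm{op}})^k\times\RR_{>0}} \disk^{\col}_n(\vec r\mid R) \;\cong\; \Conf_k(\RR^n),
\]
because $\disk^{\col}_n(\vec r\mid R)$ is an open subset of $\RR^{nk}$ cut out by disjointness and containment, the transitions are open inclusions, and as $r_i\to 0$ and $R\to\infty$ the union exhausts $\Conf_k(\RR^n)$; filteredness with cofibrant transitions ensures that the strict colimit agrees with the homotopy colimit. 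The natural operad map $P\to\disk_n$ induced by $\disk^{\col}_n\to\disk_n$ is identified arity by arity with the deformation retract $\disk_n(k)\to\Conf_k(\RR^n)$ given by ``forget radii, keep centers'', hence is a levelwise weak equivalence; compatibility with the operadic composition follows by tracing through the definitions.

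The main obstacle is the model-categorical step of verifying the cofibration property, so that the strict pushout actually computes the homotopy pushout in the presence of the non-strict commutativity of the square. A cleaner alternative bypassing this verification is to test the universal property of the homotopy pushout against an arbitrary simplicial coloured operad $\cO$: the relevant homotopy pullback of mapping spaces unpacks to the space of $\disk^{\col}_n$-algebras in $\cO$ whose underlying $\RR_{>0}$-diagram of colours is equivalent to a constant one, and the configuration-space computation above directly identifies these with $\disk_n$-algebras in $\cO$.
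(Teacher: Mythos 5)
Your key model-categorical step does not go through. The characterization of cofibrations as normal monomorphisms belongs to the Cisinski--Moerdijk model structure on \emph{dendroidal sets}, not to the Robertson/Cisinski--Moerdijk model structure on simplicial coloured operads in which the homotopy pushout is actually taken. In the latter, cofibrations are retracts of (transfinite) free cellular extensions, and the inclusion $\RR_{>0}\hookrightarrow\disk^{\col}_n$ is not one: already in arity $1$, every operation of $\disk^{\col}_n$ with non-central configuration factors, for every $m$, as a composite of $m$ operations none of which lies in the image of $\RR_{>0}$ (translations are infinitely divisible), while in any relatively free extension the generator-length of an operation is finite and additive under composition; this rules out such a presentation even up to retract. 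So the strict pushout $P=\disk^{\col}_n\sqcup_{\RR_{>0}}1$ is not known to compute the homotopy pushout. Moreover, your levelwise identification $P(k)\cong\colim\,\disk^{\col}_n(\vec r\mid R)\cong\Conf_k(\RR^n)$ is itself incorrect: once the colours are collapsed, formal composites $g\circ_i f$ in which the outgoing radius of $f$ exceeds the $i$-th incoming radius of $g$ become legitimate operations of $P$. The generating relations (central inclusions become identities, plus the relations already holding in $\disk^{\col}_n$) never move centres, so such a composite cannot in general be rewritten as a single honest operation of $\disk^{\col}_n$ --- one cannot enlarge the $i$-th slot of $g$ or shrink the configuration of $f$ without moving centres --- and its centre configuration can even have collisions. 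Hence $P(k)$ is strictly larger than your filtered colimit, and its homotopy type is precisely what a strict pushout along a non-cofibration fails to control. (Note also that since the square does not commute strictly, the comparison map $P\to\disk_n$ is not literally ``induced by $\disk^{\col}_n\to\disk_n$''; and the fallback via mapping spaces is essentially circular, since identifying the relevant homotopy fibre product of algebra spaces with $\disk_n$-algebras is exactly what this proposition is needed for in the proof of Theorem \ref{E_n_theorem}.)

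The paper's proof avoids both problems by replacing the \emph{other} leg: one factors $\RR_{>0}\to 1$ as $\RR_{>0}\to I\to 1$, where $I$ is concentrated in arity $1$, has colour set $\RR_{>0}$, and has contractible mapping spaces $I(r\mid R)$ for \emph{all} pairs $r,R$ (so the central inclusions acquire homotopy inverses), and then computes the strict pushout $\cP=\disk^{\col}_n\sqcup_{\RR_{>0}}I$, invoking Cisinski--Moerdijk to identify it with the homotopy pushout. Because the mismatch of radii is mediated by the contractible spaces $I(-\mid-)$ rather than by brute quotienting, one gets
\[\cP(r_1,\dots,r_k\mid R)\cong\underset{a_i\rightarrow 0}{\colim}\;\underset{b\rightarrow\infty}{\colim}\;\disk^{\col}_n(a_1,\dots,a_k\mid b)\times I(b\mid R)\times\prod_i I(r_i\mid a_i),\]
which is weakly equivalent to $\Conf_k(\RR^n)\simeq\disk_n(k)$, and the claim follows. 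To salvage your route you would have to either cofibrantly replace $\disk^{\col}_n$ itself over $\RR_{>0}$ or genuinely control the quotient $P$; both are harder than the paper's replacement of $\RR_{>0}\to 1$.
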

\begin{proof}
By \cite[Theorem 8.7]{CisinskiMoerdijk}, the homotopy pushout $\disk^{\col}_n\sqcup^h_{\RR_{>0}} 1$ is equivalent to the strict pushout
\[\cP = \disk^{\col}_n \underset {\RR_{>0}} \sqcup I,\]
where $\RR_{>0}\rightarrow I$ is a cofibrant replacement of $\RR_{>0}\rightarrow 1$.

The inclusion of simplicial categories into simplicial coloured operads is left Quillen, so we may assume that $I$ is concentrated in arity $1$ and has the set of colours $\RR_{>0}$. Moreover, since $I\rightarrow 1$ is an acyclic fibration, $I(r|R)$ is contractible for any $r, R\in\RR_{>0}$.

By construction the set of colours of $\cP$ is $\RR_{>0}$ and we can identify
\[\cP(r_1, \dots, r_k | R)\cong \underset{a_1, \dots, a_k\rightarrow 0}{\colim} \underset{b\rightarrow\infty}{\colim}\ \disk^{\col}_n(a_1, \dots, a_k | b)\times I(b|R)\times \prod_i I(r_i|a_i).\]

The functors $\cP_{\textbf{1}}\rightarrow I_{\textbf{1}}\rightarrow \ast$ are both Dwyer--Kan equivalences, so $\cP_{\textbf{1}}\rightarrow (\disk_n)_{\textbf{1}}\cong \ast$ is a Dwyer--Kan equivalence as well.

Let $\Conf_k(\RR^n)\sub \RR^{nk}$ be the space of configurations of $k$ ordered points on $\RR^n$. Then we have maps
\[\cP(r_1, \dots, r_k | R)\longrightarrow \disk_n(k)\longrightarrow \Conf_k(\RR^n),\]
where $\disk_n(k)\rightarrow \Conf_k(\RR^n)$ is a weak equivalence. Since the simplicial sets $I(-|-)$ are contractible, the map
\[\cP(r_1, \dots, r_k | R)\longrightarrow \underset{a_1, \dots, a_k\rightarrow 0}{\colim} \underset{b\rightarrow\infty}{\colim}\ \disk^{\col}_n(a_1, \dots, a_k | b)\]
is a weak equivalence since weak equivalences in simplicial sets are stable under filtered colimits. Hence the map $\cP(r_1, \dots, r_k | R)\rightarrow \Conf_k(\RR^n)$ is also a weak equivalence. Therefore, the map $\cP(r_1, \dots, r_k | R)\rightarrow \disk_n(k)$ described above is a weak equivalence as well and we conclude that $\cP\rightarrow \disk_n$ is a weak equivalence of simplicial coloured operads.
\end{proof}

\begin{prop}
Let
\[
\xymatrix{
\cD \ar^{\tilde{F}}[r] \ar^{\tilde{G}}[d] & \cD_1 \ar^{G}[d] \\
\cD_2 \ar^{F}[r] & \cD_0
}
\]
be a Cartesian square of $\infty$-categories. Then:

\begin{enumerate}
\item The essential image of $\tilde{F}$ consists of objects $x\in\cD_1$ such that $G(x)$ is in the essential image of $F$.

\item If $F$ is fully faithful, so is $\tilde{F}$.
\end{enumerate}
\label{prop:Cartesiansquarecategories}
\end{prop}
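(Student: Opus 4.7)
The plan is to use two standard consequences of the universal property of a pullback square $\cD \simeq \cD_1\times_{\cD_0}\cD_2$ of $\infty$-categories: an explicit description of its objects, and an explicit description of its mapping spaces. Concretely, an object of $\cD$ amounts to the data of $d_1 \in \cD_1$, $d_2 \in \cD_2$, and an equivalence $\alpha\colon F(d_2)\xrightarrow{\sim} G(d_1)$ in $\cD_0$, with $\tilde{F}$ sending such a triple to $d_1$ and $\tilde{G}$ sending it to $d_2$. Moreover, for any $d, d'\in\cD$ the mapping space is itself a pullback of spaces
\[
\mathrm{Map}_{\cD}(d, d')\simeq \mathrm{Map}_{\cD_1}(\tilde{F}d, \tilde{F}d') \underset{\mathrm{Map}_{\cD_0}(G\tilde{F}d, G\tilde{F}d')}{\times} \mathrm{Map}_{\cD_2}(\tilde{G}d, \tilde{G}d'),
\]
where the identification of the two images in $\cD_0$ uses the commutativity $F\tilde{G}\simeq G\tilde{F}$.

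Part (1) is then immediate from the description of objects: an object $x \in \cD_1$ is equivalent to $\tilde{F}(d)$ for some $d\in\cD$ precisely when we can complete it to a triple $(x, d_2, \alpha)$, i.e.\ when there exists $d_2\in\cD_2$ together with an equivalence $F(d_2)\simeq G(x)$. This is exactly the condition that $G(x)$ lies in the essential image of $F$.

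For part (2), if $F$ is fully faithful then for every $d,d'\in \cD$ the map
\[
\mathrm{Map}_{\cD_2}(\tilde{G}d, \tilde{G}d') \longrightarrow \mathrm{Map}_{\cD_0}(F\tilde{G}d, F\tilde{G}d') \simeq \mathrm{Map}_{\cD_0}(G\tilde{F}d, G\tilde{F}d')
\]
is an equivalence of spaces. Base change along an equivalence of spaces is an equivalence, so the pullback formula above collapses to $\mathrm{Map}_{\cD}(d, d')\simeq \mathrm{Map}_{\cD_1}(\tilde{F}d, \tilde{F}d')$, showing that $\tilde{F}$ is fully faithful. Because each step is a direct application of the pullback's universal property, there is no serious technical obstacle; the only point to emphasize is that ``Cartesian'' must be interpreted in the $\infty$-categorical (homotopy pullback) sense, as otherwise neither of the two descriptions used above would hold.
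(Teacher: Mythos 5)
Your proposal is correct and follows essentially the same route as the paper: both arguments come down to (i) the space of objects of the pullback being the pullback of the spaces of objects, giving the essential-image statement, and (ii) mapping spaces in the pullback being pullbacks of mapping spaces, so that fully faithfulness of $F$ makes the projection to $\hom_{\cD_1}(\tilde F x,\tilde F y)$ an equivalence. The only difference is presentational: you quote these two descriptions as standard consequences of the universal property, whereas the paper derives them, using that the maximal-groupoid functor $(-)^{\sim}$ is a right adjoint and by applying $\Fun(\Delta^1,-)$ and taking fibers over pairs of objects.
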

\begin{proof}
The embedding $\cS\rightarrow \Cat_\infty$ of $\infty$-groupoids into $\infty$-categories has a right adjoint
\[(-)^{\sim}\colon \Cat_\infty\longrightarrow \cS\]
given by taking the maximal $\infty$-groupoid $\cD^\sim$ contained in $\cD\in\Cat_\infty$. Since it is a right adjoint, it preserves fiber products, so we obtain a Cartesian square of $\infty$-groupoids of objects
\[
\xymatrix{
\cD^\sim \ar[r] \ar[d] & \cD_1^\sim \ar[d] \\
\cD_2^\sim \ar[r] & \cD_0^\sim
}
\]
which proves the first claim.

We also have a Cartesian square of $\infty$-categories
\[
\xymatrix{
\Fun(\Delta^1, \cD) \ar[r] \ar[d] & \Fun(\Delta^1, \cD_1) \ar[d] \\
\Fun(\Delta^1, \cD_2) \ar[r] & \Fun(\Delta^1, \cD_0)
}
\]
and hence a Cartesian square of $\infty$-groupoids
\[
\xymatrix{
\Fun(\Delta^1, \cD)^\sim \ar[r] \ar[d] & \Fun(\Delta^1, \cD_1)^\sim \ar[d] \\
\Fun(\Delta^1, \cD_2)^\sim \ar[r] & \Fun(\Delta^1, \cD_0)^\sim
}
\]

For any two objects $x,y\in\cD$ taking the fiber of the natural map $\Fun(\Delta^1, \cD)^\sim\rightarrow \cD^\sim\times \cD^\sim$ in the above diagram we obtain a Cartesian square of $\infty$-groupoids
\[
\xymatrix{
\hom_\cD(x, y) \ar^{\tilde{F}}[r] \ar^{\tilde{G}}[d] & \hom_{\cD_1}(\tilde{F}(x), \tilde{F}(y)) \ar^{G}[d] \\
\hom_{\cD_2}(\tilde{G}(x), \tilde{G}(y)) \ar^{F}[r] & \hom_{\cD_0}(G\tilde{F}(x), G\tilde{F}(y))
}
\]

If $F$ is fully faithful, the bottom map is an equivalence. Since the square is Cartesian, the top map is also an equivalence. In other words, in this case $\tilde{F}\colon\cD\rightarrow \cD_1$ is fully faithful.
\end{proof}

\begin{theorem} \label{E_n_theorem}
The forgetful functor $\ialg_{\disk_n}(\cC)\rightarrow \ialg_{\disk^{\col}_n}(\cC)$ is fully faithful with essential image given by algebras $\{A(r)\}_{r\in\RR_{>0}}$ where the natural map $A(r)\rightarrow A(R)$ for $r \leq R$ is an equivalence.
\end{theorem}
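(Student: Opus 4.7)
The plan is to combine the two preceding propositions: Proposition \ref{prop:diskcoCartesian} identifies $\disk_n$ as a homotopy pushout of simplicial coloured operads, and Proposition \ref{prop:Cartesiansquarecategories} lets us transport full faithfulness across a Cartesian square of $\infty$-categories. The bridge between them is the fact that for a presentable symmetric monoidal $\infty$-category $\cC$, the functor $\cO \mapsto \ialg_\cO(\cC)$ from simplicial coloured operads to $\infty$-categories sends homotopy pushouts to homotopy pullbacks; this follows from the monadic description of operadic algebras together with the fact that limits of $\infty$-categories of algebras are computed at the level of the underlying $\infty$-categories. Applied to Proposition \ref{prop:diskcoCartesian}, this gives a Cartesian square
\[
\xymatrix{
\ialg_{\disk_n}(\cC) \ar[r] \ar[d] & \ialg_{\disk^{\col}_n}(\cC) \ar[d] \\
\ialg_{1}(\cC) \ar[r] & \ialg_{\RR_{>0}}(\cC).
}
\]

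Next I would identify the two categories on the bottom. Since the operad $1$ has a single colour and only the unit operation, $\ialg_{1}(\cC) \simeq \cC$. Since $\RR_{>0}$, viewed as a coloured operad, is concentrated in arity $1$ and coincides with the poset $\RR_{>0}$ as a simplicial category, $\ialg_{\RR_{>0}}(\cC) \simeq \Fun(\RR_{>0}, \cC)$, and the bottom horizontal functor is the constant-diagram functor $X \mapsto (r \mapsto X)$. I would then check that this constant-diagram functor is fully faithful with essential image given by those diagrams sending every morphism of $\RR_{>0}$ to an equivalence in $\cC$. This is a standard fact about localizations: inverting all morphisms of $\RR_{>0}$ yields the terminal $\infty$-groupoid (the nerve of $\RR_{>0}$ is contractible since $\RR_{>0}$ is filtered), so $\Fun(\RR_{>0}, \cC)^{\mathrm{loc}} \simeq \Fun(\ast, \cC) \simeq \cC$ and the constant-diagram functor factors as $\cC \xrightarrow{\sim} \Fun(\RR_{>0}, \cC)^{\mathrm{loc}} \hookrightarrow \Fun(\RR_{>0}, \cC)$.

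Finally, applying Proposition \ref{prop:Cartesiansquarecategories}(2) to the above Cartesian square yields that the forgetful functor $\ialg_{\disk_n}(\cC) \to \ialg_{\disk^{\col}_n}(\cC)$ is fully faithful, and applying part (1) identifies its essential image with those $\disk^{\col}_n$-algebras $\{A(r)\}_{r > 0}$ whose image in $\Fun(\RR_{>0}, \cC)$ lies in the essential image of constant diagrams, i.e.\ those $\{A(r)\}$ for which every structure map $A(r) \to A(R)$ with $r \le R$ is an equivalence, as desired.

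The main obstacle is the first step: justifying that $\ialg_{-}(\cC)$ carries the homotopy pushout of Proposition \ref{prop:diskcoCartesian} to a homotopy pullback. One either has to invoke the (known) fact that the $\infty$-functor $\cO \mapsto \ialg_\cO(\cC)$ is a right adjoint on the appropriate $\infty$-category of $\infty$-operads, or argue directly by verifying the universal property at the level of mapping spaces using \cite[Theorem 4.1.1]{Hinich} (in the chain-complex case) and a Bousfield--Kan-style model for the homotopy pullback. Everything after this preservation statement is essentially formal.
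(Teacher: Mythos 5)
Your argument follows the same skeleton as the paper's proof: use the homotopy pushout of simplicial coloured operads from Proposition~\ref{prop:diskcoCartesian}, pass to $\infty$-categories of algebras to obtain a Cartesian square, analyse the bottom arrow, and then invoke both parts of Proposition~\ref{prop:Cartesiansquarecategories}. The one substantive difference is in the treatment of the bottom functor $\cC \to \ialg_{\RR_{>0}}(\cC)$. The paper exhibits this as the right adjoint of $\colim \colon \ialg_{\RR_{>0}}(\cC) \to \cC$, reads off full faithfulness from the counit $\colim\circ F \to \id$ being an equivalence, and identifies the essential image by noting that the unit $\id \to F\circ\colim$ is an equivalence precisely on diagrams sending all morphisms to equivalences. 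You instead use the contractibility of the classifying space $|\RR_{>0}|$ (since $\RR_{>0}$ is a filtered poset) to identify $\Fun(\RR_{>0},\cC)^{\mathrm{loc}}\simeq \cC$ and factor the constant-diagram functor through this equivalence. These arguments are equivalent in content---both ultimately reduce to filteredness of the poset---but your packaging via localization is a perfectly valid and clean alternative. One caveat on your framing of the preliminary step: the claim that $\ialg_{(-)}(\cC)$ carries the homotopy pushout to a homotopy pullback is not really a consequence of a monadic description of algebras, nor is $\cO \mapsto \ialg_\cO(\cC)$ a right adjoint in the usual sense; the clean statement is that this functor is of the form $\Fun^{\otimes}(-,\cC^\otimes)$, i.e.\ corepresented by $\cC$, so it sends colimits of $\infty$-operads to limits of $\infty$-categories, and one must additionally know that the model-categorical pushout of Proposition~\ref{prop:diskcoCartesian} presents the corresponding colimit of $\infty$-operads. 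Since the paper's own proof asserts this passage without comment, your flagging of it as the main obstacle is if anything a strengthening rather than a gap.
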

\begin{proof}
From Proposition \ref{prop:diskcoCartesian} we obtain a homotopy coCartesian square of $\infty$-operads
\[
\xymatrix{
\disk_n & \disk^{\col}_n \ar[l]\\
1 \ar[u] & \RR_{>0} \ar[l] \ar[u]
}
\]

Therefore, we get a Cartesian square of $\infty$-categories of algebras
\[
\xymatrix{
\ialg_{\disk_n}(\cC) \ar^{\tilde{F}}[r] \ar^{\tilde{G}}[d] & \ialg_{\disk^{\col}_n}(\cC) \ar^{G}[d] \\
\cC \ar^{F}[r] & \ialg_{\RR_{>0}}(\cC)
}
\]

The functor $F\colon \cC\rightarrow \ialg_{\RR_{>0}}(\cC)$ sends an object $V$ to a sequence $\{V\}_{r\in\RR_{>0}}$ with all maps the identities. This functor has a left adjoint $\colim\colon \ialg_{\RR_{>0}}(\cC)\rightarrow \cC$ given by evaluating the colimit of a coloured collection over the poset $\RR_{>0}$. The unit of the adjunction $\colim\circ F\rightarrow \id$ is an equivalence, so $F$ is fully faithful. The essential image is contained in coloured collections $\{V(r)\}_{r\in\RR_{>0}}$ where $V(r)\rightarrow V(R)$ is an equivalence for every pair $r\leq R$. But the unit $\id\rightarrow F\circ \colim$ is an equivalence on such coloured collections, so the essential image consists of precisely such coloured collections.

Therefore, by Proposition \ref{prop:Cartesiansquarecategories} we can conclude that the forgetful functor
\[\tilde{F}\colon \ialg_{\disk_n}(\cC)\longrightarrow \ialg_{\disk^{\col}_n}(\cC)\]
is fully faithful with essential image given by those algebras $\{A(r)\}_{r\in\RR_{>0}}$ where the morphism $A(r)\rightarrow A(R)$ is an equivalence.
\end{proof}

We can summarise the results of this section in the following way: the following is just a restating of the combination of Theorem \ref{dR_translation_invariance_thm} and Theorem \ref{E_n_theorem}.
\begin{corollary} \label{E_n_summary_cor}
If $\obs$ is a de Rham translation-invariant prefactorization algebra on $\RR^n$ such that the factorization map $\obs(B_r(0)) \to \obs(B_R(0))$ is a quasi-isomorphism for all $r \le R$, then $\obs(B_1(0))$ can be canonically equipped with the structure of a $\disk_n$-algebra.
\end{corollary}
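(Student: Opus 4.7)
The plan is to deduce the corollary by directly chaining together the two theorems whose statements precede it. First I would invoke Theorem \ref{dR_translation_invariance_thm}: since $\obs$ is de Rham translation-invariant, the collection $\{\cF_r\}_{r \in \RR_{>0}}$ of observables on concentric standard balls is naturally equipped with the structure of an algebra over the coloured cooperad $\Omega^\bullet(\disk^{\man}_n)$ in differentiable chain complexes. Passing through the integration pairing $\C_\bullet(M) \otimes \Omega^\bullet(M) \to \RR$ noted in the paragraph preceding Proposition \ref{translation_invariance_prop}, this yields an algebra over the singular chain operad $\C_\bullet(\disk^{\man}_n)$ in chain complexes, and hence, via the localization isomorphism of \cite[Theorem 4.1.1]{Hinich} noted in the remark after the definition of $\ialg_{\disk_n^{\col}}(\cC)$, a $\disk_n^{\col}$-algebra in the $\infty$-category $\cC$ of chain complexes.

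Next I would apply Theorem \ref{E_n_theorem}. That theorem identifies $\ialg_{\disk_n}(\cC)$ with a full subcategory of $\ialg_{\disk_n^{\col}}(\cC)$, namely the subcategory of coloured algebras $\{A(r)\}_{r \in \RR_{>0}}$ whose unary structure maps $A(r) \to A(R)$ for $r \leq R$ are equivalences. Under the equivalence of the previous paragraph, the unary operations in $\disk_n^{\col}$ act on $\{\cF_r\}$ precisely by the factorization maps $\obs(B_r(0)) \to \obs(B_R(0))$ associated to the inclusion $B_r(0) \hookrightarrow B_R(0)$ (obtained from the unique element of $\disk_n^{\man}(r|R)$ sitting at the origin). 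The hypothesis of the corollary is exactly that all of these maps are quasi-isomorphisms, i.e.\ equivalences in $\cC$, so $\{\cF_r\}_{r \in \RR_{>0}}$ lies in the essential image of the fully faithful forgetful functor $\tilde F$ of Theorem \ref{E_n_theorem}.

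Consequently there is a canonical (up to contractible choice) lift of $\{\cF_r\}$ to a $\disk_n$-algebra in $\cC$. Evaluating this lift at any positive radius produces a $\disk_n$-algebra, and by fully faithfulness of $\tilde F$ together with the explicit formula for its left adjoint as the colimit over $\RR_{>0}$, this underlying object is canonically equivalent to $\cF_1 = \obs(B_1(0))$. This gives the claimed canonical $\disk_n$-algebra structure.

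There is no substantial obstacle here: the argument is formal once the two theorems are in place, and the only mild subtlety is bookkeeping the translation between the cooperadic statement in Theorem \ref{dR_translation_invariance_thm} and the operadic formulation used in Theorem \ref{E_n_theorem}, which is handled by the integration pairing and the Hinich localization theorem. The ``canonicity'' in the conclusion should be understood as canonicity of the lift through a fully faithful $\infty$-functor, so no further coherence data need be supplied.
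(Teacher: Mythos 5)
Your proof is correct and follows exactly the approach the paper intends: the paper explicitly states that Corollary \ref{E_n_summary_cor} is ``just a restating of the combination of Theorem \ref{dR_translation_invariance_thm} and Theorem \ref{E_n_theorem}.'' You have merely filled in the bookkeeping (the integration pairing passing from $\Omega^\bullet(\disk_n^{\man})$-coalgebra structure to $\C_\bullet(\disk_n^{\col})$-algebra structure, and Hinich's rectification result identifying the resulting $\disk_n^{\col}$-algebra in the $\infty$-category of chain complexes), which the paper leaves implicit.
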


\begin{remark}
Suppose $\obs$ arises from a classical field theory, i.e. $\obs$ is a prefactorization algebra on $\RR^n$ valued in the category of $\bb P_0$-algebras. Then $\obs(B_1(0))$ becomes an $\bb E_n$-algebra in the $\infty$-category of $\bb P_0$-algebras. By \cite[Theorem 2.22]{SafronovAdditivity} we may identify $\bb E_n$-algebras in $\bb P_0$-algebras with $\bb P_n$-algebras, i.e. commutative dg algebras with a Poisson bracket of degree $1-n$.
\end{remark}

\subsection{$G$-Structures} \label{Gstructure_section}
Fix a finite-dimensional Lie group $G$ with a homomorphism $G\rightarrow \O(n)$.

\begin{definition}
A translation-invariant prefactorization algebra $\obs$ on $\RR^n$ has a \emph{$G$-structure} if the smooth action of $\RR^n$ is extended to a smooth action of the group $G \ltimes \RR^n$.
\end{definition}

Let us now explain operadic consequences of a $G$-structure on a prefactorization algebra. Recall from \cite{SalvatoreWahl} that if $\cO$ is a $G$-operad, then we can construct a semidirect product operad $G\ltimes \cO$ such that a $G\ltimes \cO$-algebra in chain complexes is the same as an $\cO$-algebra in chain complexes equipped with a $G$-action.

Observe that the (coloured) operads $\disk^{\man}_n, \disk^{\col}_n, \disk_n$ by construction carry a natural $\O(n)$-action. Thus, we may consider their $G$-equivariant versions:
\begin{itemize}
\item $\disk^{\man, G}_n$ is the $\RR_{>0}$-coloured operad in smooth manifolds $G\ltimes \disk^{\man}_n$.

\item $\disk^{\col, G}_n$ is the $\RR_{>0}$-coloured operad in simplicial sets $G\ltimes \disk^{\man}_n$ where we consider $G$ as a simplicial group by taking its singular complex.

\item $\disk^G_n$ is the operad in simplicial sets $G\ltimes \disk_n$ where we again treat $G$ as a simplicial group.
\end{itemize}

\begin{remark}
The operad $\disk^{\SO(n)}_n$ is the operad of framed little $n$-disks and so $\disk^{\SO(n)}_n$-algebras are framed $\bb E_n$-algebras.
\end{remark}

We have the following variant of Proposition \ref{translation_invariance_prop} with $G$-structures.

\begin{prop} \label{translation_invariance_G_prop}
A translation-invariant prefactorization algebra $\obs$ on $\RR^n$ with a $G$-structure gives rise to an algebra $\{\cF_r\}_{r\in\RR_{>0}}$ over the coloured cooperad $C^\infty(\disk^{\man, G}_n)$.
\end{prop}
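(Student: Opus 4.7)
The plan is to reduce Proposition \ref{translation_invariance_G_prop} to the translation-invariant case already handled in Proposition \ref{translation_invariance_prop}, by using the fact that as a smooth manifold the semidirect product decomposes as $\disk^{\man, G}_n(r_1, \dots, r_k | R) = G^k \times \disk^{\man}_n(r_1, \dots, r_k | R)$, and by observing that since $G \to \O(n)$, the action of any $g \in G$ on $\RR^n$ fixes the origin. In particular, $g \cdot B_r(0) = B_r(0)$, so the isomorphism $\alpha_g \colon \obs(B_r(0)) \to \obs(g \cdot B_r(0))$ from the definition of smooth $G$-action restricts to an automorphism $\alpha_g \colon \cF_r \to \cF_r$, varying smoothly in $g$.

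The structure maps defining an algebra over the coloured cooperad $\C^\infty(\disk^{\man, G}_n)$ amount to giving maps
\[\mu^G \colon \cF_{r_1} \otimes \cdots \otimes \cF_{r_k} \longrightarrow \C^\infty\bigl(G^k \times \disk^{\man}_n(r_1, \dots, r_k | R),\ \cF_R\bigr),\]
and I would define $\mu^G$ pointwise at $((g_1, \dots, g_k), (x_1, \dots, x_k))$ as
\[\mu^0_{(x_1, \dots, x_k)} \circ (\alpha_{g_1} \otimes \cdots \otimes \alpha_{g_k}),\]
where $\mu^0$ is the structure map supplied by Proposition \ref{translation_invariance_prop}. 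Smoothness in the $\disk^{\man}_n$-directions is inherited from $\mu^0$, while smoothness in the $G^k$-directions is precisely the smoothness axiom for the $G$-action on $\obs$ (applied to the disjoint subsets $g_i \cdot B_{r_i}(x_i) \subset B_R(0)$).

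The substantive step is checking compatibility with operadic composition. Recall from \cite{SalvatoreWahl} that composition in $G \ltimes \disk^{\man}_n$ takes the form
\[(g_1, \dots, g_k; x) \circ_i (h_1, \dots, h_l; y) = (g_1, \dots, g_{i-1}, g_i h_1, \dots, g_i h_l, g_{i+1}, \dots, g_k;\ x \circ_i (g_i \cdot y)),\]
with the twist $g_i \cdot y$ coming from the $\O(n)$-action on configurations. The identity I would need is
\[\alpha_{g_i} \circ \mu^0_y = \mu^0_{g_i \cdot y} \circ \alpha_{g_i}^{\otimes l},\]
which is exactly the statement that the $G$-action intertwines factorization maps supported on $\{B_{r_j}(y_j)\}$ with those supported on $\{g_i \cdot B_{r_j}(y_j)\}$; this is a direct consequence of the second and third axioms in the definition of a smooth $G$-action. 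Assembling this twisting identity together with the operadic identities for $\mu^0$ gives the operadic identities for $\mu^G$.

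The main obstacle is purely bookkeeping: making sure the twist appearing in the semidirect-product composition formula matches exactly the twist produced when pushing the $\alpha_{g_i}$ past the inner factorization map $\mu^0_y$. No new analytic input beyond Proposition \ref{translation_invariance_prop} is required, and the verification reduces to rewriting the compatibility axiom between $\alpha_g$ and the factorization maps in operadic language.
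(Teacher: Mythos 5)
The paper states this Proposition without proof, presenting it as a ``variant'' of Proposition~\ref{translation_invariance_prop}, which is in turn delegated to Costello--Gwilliam; so there is no official argument to compare against. Your fleshing-out is correct and is the natural one: since $G\to\O(n)$ stabilizes every $B_r(0)$, each $\alpha_g$ is an endomorphism of $\cF_r$, and your formula $\mu^G_{(g_1,\dots,g_k),(x_1,\dots,x_k)} = \mu^0_{(x_1,\dots,x_k)}\circ(\alpha_{g_1}\otimes\cdots\otimes\alpha_{g_k})$ is exactly the composite $m_{(g_1,x_1),\dots,(g_k,x_k)}$ for elements of $G\ltimes\RR^n$ in the definition of a smooth group action, so smoothness is built in. The Salvatore--Wahl composition rule you quote is correct, and the intertwining identity $\alpha_{g_i}\circ\mu^0_y=\mu^0_{g_i\cdot y}\circ\alpha_{g_i}^{\otimes l}$ together with the group-homomorphism law $\alpha_{g_i}\alpha_{h_j}=\alpha_{g_i h_j}$ then closes the $\circ_i$-compatibility against the operadic identities for $\mu^0$.

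One small inaccuracy: the intertwining identity does not come from the ``second and third axioms'' of a smooth group action as you wrote. The third axiom is the smoothness requirement and plays no role there. What you actually need is the \emph{first} axiom (the homomorphism property, to move $\alpha_{g_i}$ past the translations appearing inside $\mu^0_y$ via $(g_i,0)(e,y_j)=(e,g_i y_j)(g_i,0)$ in $G\ltimes\RR^n$) together with the \emph{second} axiom (compatibility of $\alpha_g$ with the factorization maps $m^V_{U_1,\dots,U_m}$). With that correction, the argument is complete, modulo the routine unit check $\alpha_e=\id$.
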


We may also consider de Rham translation-invariant versions.

\begin{definition}
A prefactorization algebra $\obs$ on $\RR^n$ is \emph{de Rham translation-invariant with a $G$-structure} if it carries a smooth $G\ltimes \RR^n$-action for which the Lie algebra action $\RR^n\sub \gg\ltimes \RR^n$ is homotopically trivial.
\end{definition}

\begin{definition}
A prefactorization algebra $\obs$ on $\RR^n$ is \emph{de Rham translation-invariant with a $G_{\dR}$-structure} if it carries a $(G\ltimes \RR^n)_{\dR}$-action.
\end{definition}

For $G$ a Lie group we denote by $G_\delta$ the Lie group with the discrete topology. Then we have the following operadic interpretations of $G$- and $G_{\dR}$-actions.

\begin{theorem} \label{G-structured_operad_algebra_thm}
$ $
\begin{itemize}
\item A de Rham translation-invariant prefactorization algebra $\obs$ on $\RR^n$ with a $G$-structure gives rise to an algebra $\{\cF_r\}_{r\in\RR_{>0}}$ in differentiable chain complexes over the coloured cooperad $\Omega^\bullet(\disk^{\man, G_\delta}_n)$.

\item A de Rham translation-invariant prefactorization algebra $\obs$ on $\RR^n$ with a $G_{\dR}$-structure gives rise to an algebra $\{\cF_r\}_{r\in\RR_{>0}}$ in differentiable chain complexes over the coloured cooperad $\Omega^\bullet(\disk^{\man, G}_n)$.
\end{itemize}
\end{theorem}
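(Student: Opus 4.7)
The plan is to follow the proof of Theorem \ref{dR_translation_invariance_thm} essentially verbatim, enlarging the group $\widetilde{G}$ used there from the translation group $\RR^n$ to $G\ltimes \RR^n$ in Case 2 and to $G_\delta\ltimes \RR^n$ in Case 1. The starting point in each case is the $G$-structured analogue of Proposition \ref{translation_invariance_prop}, namely Proposition \ref{translation_invariance_G_prop}: a smooth $G\ltimes\RR^n$-action on $\obs$ produces equivariant maps
\[\mu^0\colon \cF_{r_1}\otimes \cdots \otimes \cF_{r_k}\longrightarrow C^\infty(\disk^{\man, G}_n(r_1, \dots, r_k | R), \cF_R),\]
where the spaces $\disk^{\man, G}_n(r_1, \dots, r_k | R)$ are open subsets of $(G\ltimes \RR^n)^k$ and equivariance is with respect to the action of $(\gg \ltimes \RR^n)^{\oplus k}$, acting simultaneously on the input observables and by left translation on $(G\ltimes\RR^n)^k$.

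For Case 2, the standard identification of left-invariant forms on a Lie group gives
\[\Omega^\bullet(\disk^{\man, G}_n(r_1, \dots, r_k | R)) \cong \C^\bullet((\gg\ltimes\RR^n)^{\oplus k}, C^\infty(\disk^{\man, G}_n(r_1, \dots, r_k | R))).\]
Since the $G_{\dR}$-structure provides a $G$-equivariant homotopy trivialization $\eta\colon \gg\ltimes\RR^n\to \der(\obs)[-1]$ of the whole Lie algebra action, Lemma \ref{lm:homotopysection} supplies a chain map $\cF_{r_1}\otimes\cdots\otimes\cF_{r_k}\to \C^\bullet((\gg\ltimes\RR^n)^{\oplus k}, \cF_{r_1}\otimes\cdots\otimes\cF_{r_k})$, which composed with the chain map induced by $\mu^0$ lands in $\Omega^\bullet(\disk^{\man, G}_n(r_1, \dots, r_k | R), \cF_R)$, giving the desired coalgebra structure over $\Omega^\bullet(\disk^{\man, G}_n)$.

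In Case 1 the Lie algebra of $G_\delta\ltimes\RR^n$ is just $\RR^n$, and $\disk^{\man, G_\delta}_n(r_1, \dots, r_k | R)$ is a disjoint union of copies of $\disk^{\man}_n(r_1, \dots, r_k | R)$ indexed by $G^k$. The corresponding identification reads
\[\Omega^\bullet(\disk^{\man, G_\delta}_n(r_1, \dots, r_k | R)) \cong \C^\bullet((\RR^n)^{\oplus k}, C^\infty(\disk^{\man, G_\delta}_n(r_1, \dots, r_k | R))),\]
and one applies Lemma \ref{lm:homotopysection} using only the translation homotopy $\eta\colon \RR^n\to\der(\obs)[-1]$ provided by de Rham translation-invariance. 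This is a $G^k$-fibrewise version of the proof of Theorem \ref{dR_translation_invariance_thm}, which explains why only a $G$-structure (rather than a $G_{\dR}$-structure) is required.

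The main technical obstacle is verifying the operadic identities for the constructed map $\mu$ with respect to the twisted composition in the semidirect product operad $G\ltimes\disk^{\man}_n$, where $G$-rotations are intertwined with embeddings of disks. After unpacking the definitions this reduces to the operadic identities already satisfied by $\mu^0$ from Proposition \ref{translation_invariance_G_prop}, combined with the flatness relations $[\eta(v),\eta(w)]=0$ and the $G$-equivariance of $\eta$ enforced in Definition \ref{htpically_trivial_action_def}. The verification in both cases is straightforward but notationally heavy.
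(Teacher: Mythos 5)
Your proposal is correct and is essentially the paper's own argument: the paper proves the theorem by rerunning the proof of Theorem \ref{dR_translation_invariance_thm} with $\widetilde{G}=G\ltimes\RR^n$ and $\widetilde{\gg}=\RR^n$ in the $G$-structure case, and $\widetilde{\gg}=\gg\ltimes\RR^n$ in the $G_{\dR}$ case, exactly as you do. Your fibrewise-over-$G^k$ phrasing of the first case (working with $G_\delta\ltimes\RR^n$ and applying Lemma \ref{lm:homotopysection} only to the translation homotopy) is just the $G_\delta$ formulation of the same modification, so there is no substantive difference.
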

\begin{proof}
The proof is identical to the proof of Theorem \ref{dR_translation_invariance_thm} where we perform the following modifications:
\begin{itemize}
\item If $\obs$ has a $G$-structure, we let $\widetilde{G} = G\ltimes \RR^n$ and $\widetilde{\gg} = \RR^n$.

\item If $\obs$ has a $G_{\dR}$-structure, we let $\widetilde{G} = G\ltimes \RR^n$ and $\widetilde{\gg} = \gg\ltimes \RR^n$.
\end{itemize}
\end{proof}

As before, an algebra in differentiable chain complexes over $\Omega^\bullet(\disk^{\man, G}_n)$ gives rise to an algebra in chain complexes over $\C_\bullet(\disk^{\col, G}_n)$.

Finally, we have an analogue of Theorem \ref{E_n_theorem} for prefactorization algebras with a $G_{\dR}$-structure.

\begin{theorem} \label{G-structured_E_n_theorem}
The forgetful functor $\ialg_{\disk^G_n}(\cC)\rightarrow \ialg_{\disk^{\col, G}_n}(\cC)$ is fully faithful with essential image given by algebras $\{A(r)\}_{r\in\RR_{>0}}$ where the natural map $A(r)\rightarrow A(R)$ for $r\leq R$ is a quasi-isomorphism.
\end{theorem}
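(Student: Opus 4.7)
The strategy is to carry out the proof of Theorem \ref{E_n_theorem} verbatim in the $G$-equivariant setting. The first and main step is to establish a $G$-equivariant analogue of Proposition \ref{prop:diskcoCartesian}, namely that the diagram of simplicial coloured operads
\[
\xymatrix{
\disk^G_n & \disk^{\col, G}_n \ar[l] \\
G \ar[u] & \RR_{>0} \times G \ar[l] \ar[u]
}
\]
is homotopy coCartesian. Here $G$ denotes the one-coloured simplicial operad concentrated in arity $1$ with $G$ as its space of unary operations, and $\RR_{>0} \times G$ denotes the $\RR_{>0}$-coloured operad concentrated in arity $1$ with $\hom(r, R) = G$ for $r \leq R$. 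One can verify this directly by repeating the argument of Proposition \ref{prop:diskcoCartesian} with a $G$-equivariant cofibrant replacement of $\RR_{>0}\times G \to G$; the required combinatorial inputs, namely contractibility of the resolution and compatibility of weak equivalences with filtered colimits, transfer without essential change upon taking products with $G^k$, since $(G\ltimes \cO)(r_1,\dots,r_k|R) = G^k\times \cO(r_1,\dots,r_k|R)$.

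Once the homotopy coCartesian square is established, applying the algebra functor yields a Cartesian square of $\infty$-categories
\[
\xymatrix{
\ialg_{\disk^G_n}(\cC) \ar[r] \ar[d] & \ialg_{\disk^{\col, G}_n}(\cC) \ar[d] \\
\ialg_G(\cC) \ar[r] & \ialg_{\RR_{>0} \times G}(\cC).
}
\]
The $\infty$-category $\ialg_G(\cC)$ is that of objects of $\cC$ equipped with a $G$-action, and $\ialg_{\RR_{>0} \times G}(\cC)$ is that of $\RR_{>0}$-indexed sequences in $\cC$ with compatible $G$-action. The bottom functor sends $V$ to the constant sequence $\{V\}_{r \in \RR_{>0}}$ with its $G$-action, and admits a left adjoint given by the colimit over $\RR_{>0}$ (which inherits a canonical $G$-action). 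By the argument used in Theorem \ref{E_n_theorem}, this functor is fully faithful with essential image those $G$-equivariant sequences $\{V(r)\}$ for which $V(r) \to V(R)$ is an equivalence for all $r \leq R$. Applying Proposition \ref{prop:Cartesiansquarecategories} then yields the conclusion: the top horizontal functor is fully faithful, with essential image those $G$-equivariant algebras $\{A(r)\}$ for which $A(r) \to A(R)$ is a quasi-isomorphism.

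The main obstacle is Step 1, the verification of the $G$-equivariant homotopy coCartesian property. The combinatorial argument of Proposition \ref{prop:diskcoCartesian} invoked a specific cofibrant replacement $\RR_{>0} \to I \to 1$ with $I$ having contractible arity-$1$ spaces, and one must produce an analogous $G$-equivariant replacement. The cleanest conceptual route is to observe that the square of Proposition \ref{prop:diskcoCartesian} is $G$-equivariant (with $G$ acting on $\disk_n$ and $\disk^{\col}_n$ by rotations and trivially on $1$ and $\RR_{>0}$), and then to appeal to the fact that the semidirect product functor $G \ltimes (-)$, from simplicial operads with $G$-action to simplicial coloured operads, preserves the relevant homotopy pushouts; this reduces the $G$-equivariant statement to the non-equivariant one already proved.
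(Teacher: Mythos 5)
Your proposal is correct and matches the paper's approach: the paper gives no separate proof of this statement, presenting it as the direct analogue of Theorem \ref{E_n_theorem}, and the intended argument is exactly your $G$-equivariant version of Proposition \ref{prop:diskcoCartesian} (with $G\ltimes 1$ and $G\ltimes \RR_{>0}$ in the bottom row) followed by Proposition \ref{prop:Cartesiansquarecategories}. The only step needing a bit more care than ``taking products with $G^k$'' is that the strict pushout along $G\ltimes \RR_{>0}$ also coequalizes the $G$-factors (a balanced product over $G$ rather than a plain product), but this is handled by your second, conceptual route, since $G\ltimes(-)$ admits a right adjoint and preserves weak equivalences.
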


We can summarize Theorem \ref{G-structured_operad_algebra_thm} and Theorem \ref{G-structured_E_n_theorem} in the following way.

\begin{corollary} \label{Efr_n_summary_cor}
If $\obs$ is a de Rham translation-invariant prefactorization algebra on $\RR^n$ with an $\SO(n)_{\dR}$-structure such that the factorization map $\obs(B_r(0)) \to \obs(B_R(0))$ is a quasi-isomorphism for all $r \le R$, then $\obs(B_1(0))$ can be canonically equipped with the structure of a framed $\bb E_n$-algebra.
\end{corollary}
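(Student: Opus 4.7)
The plan is to simply combine the two main structural results of this section in sequence. The corollary is essentially bookkeeping, so I would present it as a transparent specialisation of Theorem \ref{G-structured_operad_algebra_thm} (second bullet) followed by Theorem \ref{G-structured_E_n_theorem}, applied with the group $G = \SO(n)$.

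First, I would invoke the second bullet of Theorem \ref{G-structured_operad_algebra_thm} with $G = \SO(n)$: the assumption that $\obs$ is de Rham translation-invariant with an $\SO(n)_{\dR}$-structure produces an algebra $\{\cF_r\}_{r>0}$, with $\cF_r = \obs(B_r(0))$, over the coloured cooperad $\Omega^\bullet(\disk^{\man, \SO(n)}_n)$ in differentiable chain complexes. Integration against singular chains then turns this into an algebra over the coloured operad $\C_\bullet(\disk^{\col, \SO(n)}_n)$ in chain complexes, hence into an object of $\ialg_{\disk^{\col, \SO(n)}_n}(\cC)$ where $\cC$ is the $\infty$-category of chain complexes (or whichever ambient presentable symmetric monoidal $\infty$-category we work in among the three settings classical/quantum/BD).

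Next, I would apply Theorem \ref{G-structured_E_n_theorem} with $G = \SO(n)$ to identify the essential image of the forgetful functor
\[\ialg_{\disk^{\SO(n)}_n}(\cC)\longrightarrow \ialg_{\disk^{\col, \SO(n)}_n}(\cC).\]
By that theorem, a $\disk^{\col, \SO(n)}_n$-algebra $\{A(r)\}_{r>0}$ lies in the essential image precisely when the unary structure map $A(r)\to A(R)$ is a quasi-isomorphism for every pair $r\leq R$. In our case, unpacking the proof of Theorem \ref{dR_translation_invariance_thm} (and its $G$-equivariant version in Theorem \ref{G-structured_operad_algebra_thm}), the unary structure map $\cF_r\to \cF_R$ is built from the factorization map $\obs(B_r(0))\to \obs(B_R(0))$ corresponding to the concentric inclusion. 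The hypothesis of the corollary is exactly that this factorization map is a quasi-isomorphism, so our $\disk^{\col, \SO(n)}_n$-algebra lies in the essential image.

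Therefore $\{\cF_r\}_{r>0}$ lifts, canonically and essentially uniquely, to a $\disk^{\SO(n)}_n$-algebra, i.e.\ a framed $\bb E_n$-algebra; evaluating this lift at any fixed radius (for instance $r = 1$) gives the desired framed $\bb E_n$-algebra structure on $\obs(B_1(0))$. The only subtle point, which I would verify carefully, is that the unary operations used in the statement of Theorem \ref{G-structured_E_n_theorem} genuinely correspond under the construction of Theorem \ref{G-structured_operad_algebra_thm} to the factorization structure maps of $\obs$ rather than to some twisted version of them; this follows because the homotopy-trivialising map $\eta$ of Definition \ref{htpically_trivial_action_def} acts trivially on the arity-one part of $\disk^{\man, \SO(n)}_n$ (there is no point to translate), so the formula from Lemma \ref{lm:homotopysection} reduces the unary operation to $\mu^0$, which is by Proposition \ref{translation_invariance_G_prop} precisely the factorization map.
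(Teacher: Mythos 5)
Your proof is correct and follows the same route the paper takes: the corollary is stated there precisely as the combination of Theorem \ref{G-structured_operad_algebra_thm} and Theorem \ref{G-structured_E_n_theorem} with $G = \SO(n)$. Your additional check at the end, that the unary operations in the coloured operad really are the factorization maps on concentric balls, is a sensible verification that the paper leaves implicit.
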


\section{Supersymmetry and Topological Twists} \label{twist_section}
At this point we will change gears and begin to apply the results of the previous section to our main focus: the theory of twists of supersymmetric field theories.  We will begin by explaining what supersymmetric theories are in the factorization algebra context, giving the definition of twisting, then apply our results to explain the circumstances under which the observables in topologically twisted supersymmetric field theories admit $\bb E_n$-structures and either $G$- or $G_{\dR}$-structures.

\subsection{Supersymmetry Algebras} \label{susy_algebra_section}

We refer to \cite{DeligneSpinors} for some details on the material presented here.

Let $V_\RR=\RR^n$ endowed with a nondegenerate symmetric bilinear form. Denote by $V=V_\RR\otimes \CC$ its complexification and consider the complex Lie algebra $\so(V)$. Let us recall that if $n=2q+1$, $\so(V)$ has a distinguished fundamental representation $S$ called the \emph{spin representation}. If $n=2q$, $\so(V)$ has a pair of distinguished fundamental representations $S_+$ and $S_-$ called the \emph{semi-spin representations} which are exchanged under the outer automorphism group (recall that it is $\ZZ/2$ for $q\geq 5$).  Elements of $S$ (in odd dimensions) or $S_+ \oplus S_-$ (in even dimensions) are called \emph{Dirac spinors}, and elements of $S_+$ or $S_-$ (in even dimensions) are called \emph{Weyl spinors}.

\begin{definition}
A representation $\Sigma$ of $\so(V)$ is \emph{spinorial} if it splits as a sum of semi-spin (if $n=2q$) or spin (if $n=2q+1$) representations.
\end{definition}

Note that under the standard inclusion $\so(n-1)\sub \so(n)$ a spinorial representation restricts to a spinorial representation.

We will use certain canonical $\so(V)$-equivariant nondegenerate pairings $\Gamma$ from spin representations to $V$. The precise form of the pairing depends on $\dim(V)$ modulo 8:
\begin{itemize}
\item If $\dim(V)\equiv 0, 4\pmod 8$, we have a pairing $\Gamma\colon S_+\otimes S_-\rightarrow V$.

\item If $\dim(V)\equiv 1, 3\pmod 8$, we have a pairing $\Gamma\colon \sym^2(S)\rightarrow V$.

\item If $\dim(V)\equiv 2\pmod 8$, we have pairings $\Gamma_{\pm}\colon \sym^2(S_{\pm})\rightarrow V$.

\item If $\dim(V)\equiv 5, 7\pmod 8$, we have a pairing $\Gamma\colon \wedge^2(S)\rightarrow V$.

\item If $\dim(V)\equiv 6\pmod 8$, we have pairings $\Gamma_{\pm}\colon \wedge^2(S_{\pm})\rightarrow V$.
\end{itemize}

We then have the following proposition characterizing possible spinorial representations.

\begin{prop} \label{prop:spinorialreps}
Suppose $\Sigma$ is a spinorial representation of $\so(V)$ equipped with a nondegenerate $\so(V)$-equivariant pairing $\sym^2(\Sigma)\rightarrow V$. Then $\Sigma$ has the following form.
\begin{itemize}
\item Suppose $\dim(V)\equiv 0, 4\pmod 8$. There is a vector space $W$ such that
\[\Sigma \cong S_+\otimes W\oplus S_-\otimes W^*.\]

\item Suppose $\dim(V)\equiv 1,3\pmod 8$. There is a vector space $W$ equipped with a nondegenerate symmetric bilinear pairing such that
\[\Sigma \cong S\otimes W.\]

\item Suppose $\dim(V)\equiv 2 \pmod 8$. There is a pair of vector spaces $W_+$ and $W_-$ equipped with nondegenerate symmetric bilinear pairings such that
\[\Sigma \cong S_+\otimes W_+\oplus S_-\otimes W_-.\]

\item Suppose $\dim(V)\equiv 5, 7\pmod 8$. There is a symplectic vector space $W$ such that
\[\Sigma \cong S\otimes W.\]

\item Suppose $\dim(V)\equiv 6\pmod 8$. There is a pair of symplectic vector spaces $W_+$ and $W_-$ such that
\[\Sigma \cong S_+\otimes W_+\oplus S_-\otimes W_-.\]
\end{itemize}
\end{prop}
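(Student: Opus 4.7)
The plan is to first use Schur's lemma to decompose $\Sigma$ into isotypical components for the irreducible spinor representations, and then use the classification of $\so(V)$-equivariant pairings between spinor representations recorded in the list of canonical $\Gamma$-pairings given immediately before the proposition.

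Since $\Sigma$ is spinorial, its simple summands are copies of $S$ (in odd dimensions) or copies of $S_+$ and $S_-$ (in even dimensions). Schur's lemma then provides a canonical decomposition $\Sigma \cong S \otimes W$ with $W := \mathrm{Hom}_{\so(V)}(S, \Sigma)$ in odd dimensions, and $\Sigma \cong S_+ \otimes W_+ \oplus S_- \otimes W_-$ with $W_\pm := \mathrm{Hom}_{\so(V)}(S_\pm, \Sigma)$ in even dimensions. The remaining task is to transfer the pairing on $\Sigma$ to the appropriate type of bilinear form on the multiplicity spaces.

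For this I would use the standard fact that $V$ occurs with multiplicity one in precisely one of $\sym^2 S$, $\wedge^2 S$, $\sym^2 S_\pm$, $\wedge^2 S_\pm$, or $S_+ \otimes S_-$, and does not occur in any of the others, with the incidence determined by $\dim(V) \pmod 8$ exactly as recorded in the preceding list. Consequently $\mathrm{Hom}_{\so(V)}(S_a \otimes S_b, V)$ is at most one-dimensional and, when nonzero, spanned by $\Gamma$. Decomposing $\sym^2 \Sigma$ using the isotypical decomposition and projecting to $V$, the pairing $\sym^2 \Sigma \to V$ must therefore be of the form $\Gamma$ tensored with a bilinear form on the multiplicity spaces. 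In $\dim(V) \equiv 0, 4 \pmod 8$, the vanishing of $\sym^2 S_\pm \to V$ and $\wedge^2 S_\pm \to V$ forces the pairing to live entirely in the mixed block, where $\Gamma \colon S_+ \otimes S_- \to V$ combines with a bilinear pairing $W_+ \otimes W_- \to \CC$; nondegeneracy yields $W_- \cong W_+^*$. In $\dim(V) \equiv 1, 2, 3 \pmod 8$, the pairing $\Gamma$ lies in $\sym^2$ of the spin representation, so the requirement that the map land in the symmetric square of $\Sigma$ forces a symmetric nondegenerate pairing on $W$, or on each $W_\pm$ individually in even dimension (since the mixed term $S_+ \otimes S_- \to V$ vanishes). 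In $\dim(V) \equiv 5, 6, 7 \pmod 8$, the pairing $\Gamma$ instead lies in $\wedge^2$ of the spin representation, and the same symmetry constraint forces an antisymmetric nondegenerate, i.e.\ symplectic, pairing on $W$ or on each $W_\pm$.

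The main obstacle is nothing conceptually deep but rather the case analysis verifying the symmetry type of $\Gamma$ and the vanishing of $\mathrm{Hom}_{\so(V)}(S_a \otimes S_b, V)$ for the remaining pairs $(a,b)$ in each residue class modulo $8$. This is precisely the standard classification of $\so(V)$-invariant bilinear pairings on spinor representations, for which one can appeal directly to \cite{DeligneSpinors}.
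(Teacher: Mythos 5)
Your proposal is correct, and it is precisely the standard argument: decompose $\Sigma$ into isotypic pieces with multiplicity spaces $W=\hom_{\so(V)}(S,\Sigma)$ (resp.\ $W_\pm$), use that $V$ occurs with multiplicity one in exactly one of $\sym^2 S_{(\pm)}$, $\wedge^2 S_{(\pm)}$, $S_+\otimes S_-$ according to $\dim(V)\bmod 8$, so the pairing factors as $\Gamma$ tensored with a bilinear form on the multiplicity spaces whose symmetry type and nondegeneracy are then forced. The paper does not actually write out a proof of this proposition -- it is stated as standard material with the reference \cite{DeligneSpinors} given at the start of the section -- and your argument is exactly the one that reference supplies, so there is nothing to compare beyond noting that your route fills in the omitted details correctly.
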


Using Proposition \ref{prop:spinorialreps} we can define spinorial representations using a single number or a pair of numbers as follows:
\begin{itemize}
\item If $\dim(V)\equiv 0, 1, 3, 4\pmod 8$, we let $\mc{N} = \dim(W)$.

\item If $\dim(V)\equiv 2 \pmod 8$, we let $\mc{N}_{\pm}=\dim(W_{\pm})$.

\item If $\dim(V)\equiv 5, 7\pmod 8$, we let $2\mc{N} = \dim(W)$.

\item If $\dim(V)\equiv 6\pmod 8$, we let $2\mc{N}_{\pm} = \dim(W_{\pm})$.
\end{itemize}

We are now ready to define supertranslation Lie algebras associated to spinorial representations $\Sigma$ equipped with a nondegenerate pairing $\Gamma \colon \sym^2(\Sigma)\rightarrow V$.

\begin{definition}
The \emph{supertranslation Lie algebra} associated to $\Sigma$ and $\Gamma$ is the $\so(V)$-equivariant super Lie algebra $T = V\oplus \Pi \Sigma$ with the only nontrivial bracket given by the pairing $\Gamma \colon \sym^2(\Sigma)\rightarrow V$.
\end{definition}

We can also define super Poincar\'e algebra as follows. Fix a nondegenerate symmetric bilinear pairing on $V_\RR$. Recall that the \emph{Poincar\'e group} is $\ISO(V_\RR) = \Spin(V_\RR)\ltimes V_\RR$, so that its complexified Lie algebra, the \emph{Poincar\'{e} algebra}, is $\mf{iso}(V) = \so(V)\ltimes V$.

\begin{definition}
The \emph{super Poincar\'e algebra} associated to $\Sigma$ and $\Gamma$ is the super Lie algebra $\so(V)\ltimes T$ where $T$ is the supertranslation Lie algebra.
\end{definition}

Note that the super Poincar\'e algebra carries a large outer automorphism group corresponding to automorphisms of the auxiliary space $W$. Many field theories of interest carry an action of a subgroup of this outer automorphism group. So, we will define a supersymmetry algebra to incorporate those symmetries as well.

Fix the following data:
\begin{enumerate}
 \item A spinorial representation $\Sigma$ of $\so(V)$.

 \item An $\so(V)$-equivariant nondegenerate pairing $\Gamma\colon \sym^2(\Sigma)\to V$.

 \item A complex Lie group $G_R$, the group of \emph{R-symmetries}, which is a subgroup of the group of outer automorphisms of the super Poincar\'e algebra that act trivially on the even part.
\end{enumerate}

\begin{definition}
The \emph{supersymmetry algebra} associated to a choice of $\Sigma, \Gamma$ and $G_R$ is the super Lie algebra
\[\mf A = (\mf{iso}(V) \oplus \gg_R) \ltimes \Pi \Sigma.\]
\end{definition}

With this discussion in hand, we can define what it means for a prefactorization algebra to be supersymmetric. From now on all prefactorization algebras are $\ZZ/2$-graded in addition to the original cohomological grading. In the symmetric monoidal structure on $\ZZ/2$-graded complexes that we consider, there is a Koszul sign coming from both gradings.

\begin{definition}
We say a prefactorization algebra $\obs$ on $\RR^n$ is \emph{$\mf A$-supersymmetric} for the supersymmetry algebra $\mf A$ if it admits a smooth $\ISO(p,q) \times G_R$-action, where $\ISO(p,q)$ acts on $\RR^n$ by isometries and $G_R$ acts trivially, so that the action of $\mf{iso}(n) \oplus \gg_R$ extends to an action $\nu\colon \mf A\to\der(\obs)$ of $\mf A$.
\end{definition}

\subsection{Topological Twists of Supersymmetric Theories} \label{twist_def_section}

We now proceed to give a definition of the topological twist of an $\mf A$-supersymmetric factorization algebra $\obs$ on $\RR^n$ with respect to an odd element $Q \in \mf A$. We refer to \cite[Section 15]{CostelloSUSY} and \cite{ElliottYoo} for more details on these ideas.

\begin{definition}
A \emph{square-zero supercharge} is an odd element $Q\in \mf A$ satisfying $[Q,Q]=0$.  We say $Q$ is \emph{topological} if the map $[Q,-] \colon \Sigma \to \RR^n$ is surjective.  We say $Q$ is \emph{holomorphic} if $n$ is even and the image of the map $[Q,-]$ has dimension $n/2$.
\end{definition}

Note that all supercharges are at least holomorphic, i.e. the image of $[Q,-]$ has dimension at least $n/2$ (see Proposition \ref{at_least_holo_prop}).

\begin{definition}
A \emph{twisting datum} is a pair $(\alpha, Q)$ where $Q$ is a square-zero supercharge and $\alpha\colon U(1)\rightarrow G_R$ is a homomorphism so that $Q$ has $\alpha$-weight 1.
\label{def:twistingdatum}
\end{definition}

Given a twisting datum, we can construct the twisted theory in the following way.

\begin{definition}
A \emph{graded mixed complex} is a cochain complex $M$ equipped with an additional $(\ZZ \times \ZZ/2)$-grading and a square-zero endomorphism $Q$ of $(\ZZ \times \ZZ/2)$-degree $(1, 1)$ and cohomological degree $0$.
\end{definition}

Given a graded mixed complex $M$, we construct a $\ZZ/2$-graded complex $M^Q$ as
\[M^Q = \underset{m\rightarrow\infty}{\colim}\prod_{n\geq -m} (\Pi^n M(n)[-n])\]
with the differential $\d_Q = \d + Q$, where $\d$ is the original differential on $M$ and $M(n)$ is the weight $n$ (i.e. degree $n$ with respect to the auxiliary $\ZZ$-grading) part of $M$. The shifts have been arranged so that $\d_Q$ has cohomological degree $1$ and even $\ZZ/2$-grading. This defines a lax symmetric monoidal functor from graded mixed complexes to $\ZZ/2$-complexes. This functor is analogous to the Tate realization functor introduced in \cite[Section 1.5]{CPTVV}.

Given a twisting datum $(\alpha, Q)$, $\obs(U)$ becomes a graded mixed complex for any open subset $U\sub \RR^n$ where the extra grading is given by $\alpha$ and the mixed structure by $\nu(Q)$. Thus, we may construct the twisted complex $\obs(U)^Q$. Since the functor $(-)^Q$ is lax symmetric monoidal, it will send prefactorization algebras to prefactorization algebras and will preserve $\bb P_0$- or $\BD_0$-structures. We call $\obs^Q$ the \emph{twisted prefactorization algebra}.

\begin{remark}
Suppose we did not choose the homomorphism $\alpha$, so that $\obs$ did not have an extra grading. Then we may still construct $\obs^Q$, but the cohomological $\ZZ\times \ZZ/2$-grading will collapse to a cohomological $\ZZ/2$-grading.
\label{rmk:twistedgrading}
\end{remark}

\begin{remark}[Twisting the BV--BRST Complex]
If the factorization algebra $\obs$ is constructed as the Chevalley-Eilenberg complex of a classical field theory $L$ as in Definition \ref{classical_field_theory_def} then we could instead define the twisted factorization algebra by twisting the sheaf $L$ of $L_\infty$-algebras with respect to the twisting datum $(\alpha, Q)$, and then taking the Chevalley-Eilenberg complex of the result.  This is the point of view taken in \cite{CostelloSUSY} and \cite{ElliottYoo}.  
\end{remark}

\begin{prop} \label{topologically_twisted_theories_are_dR_translation_invt_prop}
Suppose a prefactorization algebra $\obs$ on $\RR^n$ is $\mf{A}$-supersymmetric Let $(\alpha, Q)$ be a twisting datum where $Q$ is topological and fix an abelian section $\mf{a}\sub \mf A$, i.e. an odd abelian subalgebra such that $[Q, -]\colon \mf{a}\rightarrow V$ is an isomorphism. Then $\obs^Q$ is de Rham translation-invariant.  In particular, if $\obs^Q(B_r(0))\rightarrow \obs^Q(B_R(0))$ is a quasi-isomorphism for $R\geq r$ then the collection $\{\obs^Q(B_r(0))\}$ forms an $\bb E_n$-algebra.
\end{prop}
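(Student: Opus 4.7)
The plan is to construct, directly from the supersymmetry action $\nu$ and the abelian section $\mf{a}$, the data making $\obs^Q$ a de Rham translation-invariant prefactorization algebra as in Definition \ref{dR_translation_invariance_def}, and then to invoke Corollary \ref{E_n_summary_cor}. Since $\obs$ is $\mf A$-supersymmetric, it already carries a smooth $\ISO(V_\RR)$-action, and the twisting construction $(-)^Q$ is lax symmetric monoidal, so $\obs^Q$ inherits a smooth translation action. It remains to produce a nullhomotopy $\eta\colon V\rightarrow \der(\obs^Q)[-1]$ satisfying the equations of Definition \ref{htpically_trivial_action_def}.

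First, because $[Q,-]\colon \mf{a}\rightarrow V$ is an isomorphism, each $v\in V$ lifts uniquely to some $\tilde v\in \mf{a}$ with $[Q,\tilde v]=v$. The natural candidate is $\eta(v):=\nu(\tilde v)$, viewed as a derivation of $\obs^Q$. The core computation is
\[
[\d_Q,\nu(\tilde v)] \;=\; [\d,\nu(\tilde v)] + [\nu(Q),\nu(\tilde v)] \;=\; \nu([Q,\tilde v]) \;=\; \nu(v) \;=\; \frac{\partial}{\partial v},
\]
using that $\nu$ is a morphism of dg Lie algebras so $[\d,\nu(\tilde v)]=0$, and that $\nu$ preserves brackets. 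The second equation $[\eta(v),\eta(w)]=0$ is exactly $\nu([\tilde v,\tilde w])=0$, which is the whole point of requiring $\mf{a}$ to be an abelian subalgebra: without this assumption one only gets a nullhomotopy of translations, not one with trivialized self-brackets. The translation equivariance of $\eta$ (in the form of the remark after Definition \ref{htpically_trivial_action_def}) reduces to $[\tilde v,w]=0$ for $v,w\in V$, which holds because translations commute with all odd elements of $\mf A$.

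The one subtlety to check, and the main place one needs to be careful, is that the degrees work out. The supercharge $Q$ has $\alpha$-weight $1$ and odd fermionic parity, while $V$ carries trivial $\alpha$-weight; consequently $\tilde v\in\mf{a}$ has $\alpha$-weight $-1$ and odd parity. Tracing through the shifts $\Pi^n M(n)[-n]$ in the definition of $(-)^Q$, a derivation of $\obs$ of cohomological degree $0$, fermionic parity odd, and $\alpha$-weight $-1$ becomes a derivation of $\obs^Q$ of cohomological degree $-1$ and even parity, exactly the degree required by Definition \ref{htpically_trivial_action_def}. The same bookkeeping shows $\d_Q$ has total cohomological degree $+1$, so the equation $\d_Q\eta(v)=\partial/\partial v$ is an identity of cohomological degree $0$ derivations, as it should be.

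With these data in hand, $\obs^Q$ is a de Rham translation-invariant prefactorization algebra on $\RR^n$. The final assertion about the $\bb E_n$-structure is then an immediate application of Corollary \ref{E_n_summary_cor} under the quasi-isomorphism hypothesis $\obs^Q(B_r(0))\rightarrow \obs^Q(B_R(0))$. I do not expect any single step here to be genuinely hard; the substantive content lies entirely in identifying $\eta(v)=\nu(\tilde v)$ as the correct homotopy and in verifying that the abelian section hypothesis is precisely what enforces $[\eta(v),\eta(w)]=0$.
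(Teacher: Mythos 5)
Your proposal is correct and follows essentially the same route as the paper's proof: define $\eta(v)=\nu(\tilde v)$ via the inverse of $[Q,-]\colon\mf a\to V$, use that $\nu$ is a map of Lie algebras together with abelianness of $\mf a$ to get $[\eta(v),\eta(w)]=0$ and equivariance, verify $\d_Q\eta(v)=\partial/\partial v$ from $[\nu(Q),\nu(\tilde v)]=\nu([Q,\tilde v])$, note the degree shift coming from the $\alpha$-weight $-1$ of $\mf a$, and conclude with Corollary \ref{E_n_summary_cor}. The extra bookkeeping you include (tracing the shifts in $(-)^Q$ and checking $[\tilde v,w]=0$ since $[\Sigma,V]=0$ in the supertranslation algebra) is exactly the content the paper's proof leaves implicit.
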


\begin{proof}
Recall from Definition \ref{htpically_trivial_action_def} that in order to extend the $V$-action on $\obs^Q$ to a $V_\dR$-action we need to specify a linear map $\eta \colon V \to \der(\obs)[-1]$ so that $[\eta(v), \eta(w)] = 0$, $[\nu(v), \eta(w)]=0$ for all $v, w \in V$ and $\d_Q \eta(v) = \frac {\partial}{\partial v}$.  

We construct $\eta$ using the action of $\mf{a}\sub \mf A$.  Since $Q$ has $\alpha$-weight 1 and $V$ has $\alpha$-weight 0, the subalgebra $\mf a$ has $\alpha$-weight $-1$, therefore the action $\nu \colon \mf a \to \der(\obs)$ becomes a map $\mf a \to \der(\obs^Q)[-1]$ of the desired degree.  We define $\eta \colon V \to \der(\obs^Q)[-1]$ by composing this map with the inverse to the isomorphism $[Q, -]\colon \mf{a}\rightarrow V$.  The fact that $\nu$ was a Lie map implies that $[\eta(v), \eta(w)] = 0$ and $[\nu(v), \eta(w)] = 0$ for all $v, w \in V$. Moreover, we have
\begin{align*}
\d_Q \eta(v) &= \d \eta(v) + [\nu(Q), \eta(v)] \\
&= 0 + \nu(v) = \frac{\partial}{\partial v}
\end{align*}
since $\eta(v)$ is a cocycle in $\der(\obs)$.  The final sentence of the statement is a direct application of Corollary \ref{E_n_summary_cor}.
\end{proof}

We can make an analogous statement in the case where $Q$ is a holomorphic supercharge, where instead of a de Rham translation invariant factorization algebra the twist becomes holomorphic translation invariant as in Example \ref{holo_translation_invt_example}.
\begin{prop}
Suppose a prefactorization algebra $\obs$ on $\RR^{2n}$ is $\mf{A}$-supersymmetric. Let $(\alpha, Q)$ be a twisting datum where $Q$ isholomorphic, and let $V_{\mr{hol}} \sub V$ be the $n$-dimensional image of the map $[Q,-]$.  Fix $\mf{a}\sub \mf A$, an odd abelian subalgebra such that $[Q, -]\colon \mf{a}\rightarrow V_{\mr{hol}}$ is an isomorphism. Then $\obs^Q$ is holomorphic translation-invariant.
\end{prop}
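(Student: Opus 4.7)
The plan is to follow exactly the strategy used in the proof of Proposition \ref{topologically_twisted_theories_are_dR_translation_invt_prop}, with $V$ replaced throughout by the $n$-dimensional subspace $V_{\mr{hol}} \sub V$. Before constructing anything I would fix a complex structure on $V_\RR = \RR^{2n}$ identifying it with $\CC^n$ in such a way that $V_{\mr{hol}}$ coincides with the antiholomorphic tangent space $\mathrm{span}\{\partial/\partial\overline{z}_i\} \sub V$; in view of Example \ref{holo_translation_invt_example}, this is the subspace that Definition \ref{hol_translation_invariance_def} demands be homotopically trivialised. Abstractly this amounts to requiring that $V_{\mr{hol}}$ is a complement to its complex conjugate inside $V$, which is what makes the adjective ``holomorphic'' meaningful in this context.

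Next I would define
\[\eta\colon V_{\mr{hol}} \longrightarrow \der(\obs^Q)[-1]\]
by composing the inverse of the isomorphism $[Q,-]\colon \mf{a} \xrightarrow{\sim} V_{\mr{hol}}$ with the supersymmetry action $\nu\colon \mf{a}\to \der(\obs)$. The degree bookkeeping is the same as in the topological case: since $Q$ has $\alpha$-weight $1$ and $V$ has $\alpha$-weight $0$, the section $\mf{a}$ sits in $\alpha$-weight $-1$, so after applying the twist functor $(-)^Q$ the image of $\nu|_{\mf{a}}$ lands in cohomological degree $-1$, as required.

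Finally I would verify the three equations of Definition \ref{hol_translation_invariance_def}. For $\d_Q\eta(v) = \partial/\partial v$: because $\nu(v)$ is a $\d$-cocycle we have $\d_Q \eta(v) = [\nu(Q), \eta(v)] = \nu([Q,[Q,-]^{-1}(v)]) = \nu(v)$. For $[\eta(v),\eta(w)]=0$: this is an immediate consequence of $\nu$ being a Lie algebra map and the hypothesis that $\mf{a}$ is abelian. For $[\partial/\partial\overline{z}_i, \eta(w)]=0$: this follows because $V$ is central in the supertranslation algebra $T = V\oplus \Pi\Sigma$, so $[\nu(v),\nu(\xi)] = \nu([v,\xi]) = 0$ for $v\in V$ and $\xi\in \mf{a}\sub \Sigma$. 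I do not expect any genuine obstacle here: the holomorphicity of $Q$ enters only through the dimension of $V_{\mr{hol}}$ and the existence of the abelian section mapping isomorphically onto it, and the argument is otherwise a direct parallel of the topological case. The only point deserving care is the identification of the complex structure described in the first paragraph, which makes the output of the construction match the shape of Definition \ref{hol_translation_invariance_def}.
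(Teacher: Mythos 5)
Your proposal is correct and is exactly the paper's approach: the paper simply states that the proof is identical to that of Proposition \ref{topologically_twisted_theories_are_dR_translation_invt_prop}, with $V$ replaced by $V_{\mr{hol}}$ and the target structure being Definition \ref{hol_translation_invariance_def}. Your extra remark about choosing the complex structure so that $V_{\mr{hol}}$ becomes the span of the $\partial/\partial\overline{z}_i$ is a sensible point of care, but it does not change the argument.
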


The proof is identical to that of Proposition \ref{topologically_twisted_theories_are_dR_translation_invt_prop}.

Let us now show that we can find the abelian section $\mf{a}\sub \mf{A}$ in the examples of interest to us.

\begin{prop} \label{commuting_potential_prop}
Let $Q\in\mf{A}$ be a square-zero supercharge and let $S_Q \sub \CC^n$ be the image of the bracket $[Q,-] \colon \Sigma \to \CC^n$.  If either
\begin{enumerate}
 \item $n \equiv 5, 6, 7 \pmod 8$,
 \item $n \equiv 0, 4 \pmod 8$ and $Q$ lies in $S_+\otimes W\sub \Sigma$ or $S_-\otimes W^*\sub \Sigma$, 
 \item $n \equiv 1, 3 \pmod 8$ and $\mc N \geq 2$, or
 \item $n \equiv 2 \pmod 8$ and $\mc N_+$ or $\mc N_- \geq 2$,
\end{enumerate}
then there is a section $\sigma\colon S_Q\rightarrow \Sigma$ of $[Q, -]\colon \Sigma\rightarrow S_Q$ with an abelian image.
\end{prop}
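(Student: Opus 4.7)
The plan is to exhibit, in each listed case, an abelian subspace $\mf{a}\sub \Sigma$ on which $[Q,-]$ restricts to a linear isomorphism onto $S_Q$; the section $\sigma$ is then recovered as its inverse. The arguments exploit the tensor product description $\Sigma\cong S_\bullet\otimes W_\bullet$ furnished by Proposition \ref{prop:spinorialreps}, under which the bracket takes the form $[s\otimes w, s'\otimes w'] = \Gamma(s,s')\,B(w,w')$ for the symmetric or symplectic pairing $B$ on $W_\bullet$.

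Case (ii) is essentially immediate: since for $n\equiv 0, 4\pmod 8$ the spinor pairing only couples $S_+$ with $S_-$, both summands $S_+\otimes W$ and $S_-\otimes W^*$ of $\Sigma$ are themselves abelian subalgebras. If $Q\in S_+\otimes W$, then $[Q, S_+\otimes W]=0$, so $S_Q = [Q, S_-\otimes W^*]$, and any linear complement of $\ker[Q,-]|_{S_-\otimes W^*}$ inside $S_-\otimes W^*$ provides $\mf{a}$ (and similarly for $Q \in S_-\otimes W^*$ by symmetry).

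For cases (i), (iii) and (iv) the common construction runs as follows. Given a $B$-isotropic subspace $L\sub W_\bullet$, the subspace $S_\bullet\otimes L\sub \Sigma$ is automatically abelian, since every bracket contains a factor $B(l, l')=0$. Using $B$ to identify $W_\bullet$ with its dual, rewrite $Q$ as a linear map $\hat Q\colon W_\bullet\to S_\bullet$; a direct computation then gives $[Q, s\otimes w] = \Gamma(\hat Q(w), s)$, so that $S_Q = \Gamma(\hat Q(W_\bullet), S_\bullet)$ while the image of $[Q, -]$ restricted to $S_\bullet\otimes L$ is $\Gamma(\hat Q(L), S_\bullet)$. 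It therefore suffices to find an isotropic $L$ for which $\Gamma(\hat Q(L), S_\bullet) = S_Q$; restricting to any linear complement of the kernel inside $S_\bullet\otimes L$ then yields the abelian $\mf{a}$. In case (i), the symplectic structure on $W_\bullet$ provides Lagrangian subspaces of dimension $\mathcal{N}$, and one uses the constraint $[Q,Q]=0$ to locate a Lagrangian realising the required property. In case (iii), the hypothesis $\mathcal{N}\geq 2$ guarantees the existence of nonzero $B$-isotropic vectors in $W$, which suffice by the same argument. Case (iv) splits by chirality: the decomposition $Q = Q_+ + Q_-$, combined with the vanishing of cross-brackets between the $S_+$ and $S_-$ sectors, reduces the problem to two separate applications of the case (iii) argument, handling each chirality sector using whichever of $\mathcal{N}_\pm\geq 2$ is in force.

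The main technical obstacle, shared by cases (i), (iii) and (iv), is verifying that the chosen isotropic $L$ actually realises $\Gamma(\hat Q(L), S_\bullet) = S_Q$ rather than only a strictly smaller subspace. This is a case-by-case linear algebra exercise: one uses the square-zero condition $[Q, Q]=0$ to constrain the image $\hat Q(W_\bullet)\sub S_\bullet$ via the compatibility of $\Gamma$ and $B$, together with standard dimension counts on Lagrangian and isotropic Grassmannians to produce the required $L$. The hypotheses on $\mathcal{N}$ serve precisely to guarantee that enough isotropic room exists in $W_\bullet$ for this procedure to succeed.
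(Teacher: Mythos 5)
Your reduction is fine as far as it goes --- an abelian subspace on which $[Q,-]$ surjects onto $S_Q$ does yield the section, and your case (ii) is exactly the paper's argument --- but the construction you propose for the remaining cases, namely subspaces of $S_\bullet\otimes L$ for a $B$-isotropic $L\sub W_\bullet$, provably cannot succeed in general, so the deferred ``case-by-case linear algebra exercise'' is not merely unfinished: it has no solution inside your framework. Take $n=5$, $\mc N=2$ and the rank $4$ topological supercharge (the one compatible with the full $\Spin(5)$ twisting homomorphism). Square-zero there means precisely $Q_*\pi_{W^*}=\lambda\pi_S$ with $\lambda\neq 0$, so $\hat Q$ carries Lagrangian subspaces of $W$ to Lagrangian subspaces of $S$ and isotropic lines to lines. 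But for a Lagrangian $U=\hat Q(L)\sub S$ the invariant bivector $\pi_S$ lies in $U\wedge S$ (in a Darboux basis with $U=\langle e_1,e_2\rangle$ one has $\pi_S=e_1\wedge f_1+e_2\wedge f_2\in U\wedge S$), so $\Gamma(\hat Q(L),S)$, the image of the $5$-dimensional space $U\wedge S$ under the projection $\wedge^2 S\to V$ killing $\CC\pi_S$, is only $4$-dimensional, whereas $S_Q=V$ is $5$-dimensional. Hence no isotropic $L$ realises $\Gamma(\hat Q(L),S)=S_Q$: here the constraint $[Q,Q]=0$ works against you rather than for you. Smaller problems of the same kind occur elsewhere: in case (iii) a single isotropic vector does not suffice for rank $2$ supercharges (one needs an isotropic plane transverse to $W_Q$), and in case (iv) with, say, $\mc N_-=1$ and $Q_-\neq 0$ the minus sector contains no isotropic vectors at all, so ``two applications of the case (iii) argument'' cannot produce the part of the section hitting $S_{Q_-}$.

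The paper's proof avoids all of this by a different mechanism: it proves the statement for arbitrary supercharges (not only square-zero ones, since $\Gamma_n(Q,Q)$ need not vanish once one passes to a larger dimension) and climbs dimensions, using that the $\so(n-1)$-equivariant pairing is the composite of $\Gamma_n$ with the projection $\CC^n\to\CC^{n-1}$, until it reaches $n\equiv 0,4\pmod 8$, where $\Sigma\cong S_+\otimes W\oplus S_-\otimes W^*$ and the entire opposite-chirality summand is abelian; the abelian subspaces obtained by restricting back down are not of the form $S\otimes L$ --- the ``isotropy'' is carried by the chirality decomposition, not by the auxiliary space $W$ alone. To repair your approach you would have to enlarge the supply of abelian subspaces accordingly (for instance allowing $\Gamma$-isotropy conditions in the spinor factor), at which point you are essentially reconstructing the paper's dimension-shifting reduction.
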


\begin{proof}
Let $\Sigma$ be a spinorial representation of $\so(n)$, and let $\Gamma_n \colon \sym^2(\Sigma) \to \CC^n$ be a nondegenerate $\so(n)$-equivariant pairing. If we consider the subalgebra $\so(n-1)\sub \so(n)$, we obtain a nondegenerate $\so(n-1)$-equivariant pairing
\[\Gamma_{n-1}\colon \sym^2(\Sigma)\xrightarrow{\Gamma_n} \CC^n\rightarrow \CC^{n-1}.\]
In fact, every such $\so(n-1)$-equivariant $\CC^{n-1}$-valued pairing on $\Sigma$ arises in this way. As such, if we choose $Q$ in $\Sigma$ and an abelian section $\sigma\colon S_Q\rightarrow \Sigma$, in dimension $n$, we obtain an abelian section in dimension $n-1$. In general, $\Gamma_n(Q, Q)\neq 0$, so we will prove the statement for all supercharges which will not necessarily square to zero. We can now build abelian subalgebras on a case-by-case basis.

\begin{itemize}
\item Suppose $n \equiv 0, 4 \pmod 8$, so $\Sigma \iso S_+\otimes W \oplus S_-\otimes W^*$.  Suppose $Q \in S_+\otimes W$. Then any section $\sigma\colon S_Q\rightarrow S_-\otimes W^*$ has abelian image.

\item Suppose $n \equiv 7 \pmod 8$. Given a spinorial representation $\Sigma$ of $\so(n)$, we can write it as a restriction of the spinorial representation $S_+\otimes W\oplus S_-\otimes W^*$ of $\so(n+1)$ where both $S_+$ and $S_-$ restrict to the spin representation of $\so(n)$. We may always arrange $Q\in \Sigma$ to lie in $S_+\otimes W$, so by the previous point we get an abelian section.

The same analysis applies to the case $n\equiv 3\pmod 8$ with $\mc{N} \geq 2$.

\item Suppose $n \equiv 2, 6 \pmod 8$. In this case $\Sigma = S_+\otimes W_+\oplus S_-\otimes W_-$. Suppose $\dim(W_+)\geq \dim(W_-)$ and consider the $\so(n)$-equivariant pairing $\Gamma_n$ on $S_+\otimes W_+\oplus S_-\otimes W_+$. It arises by restriction from an $\so(n+1)$-equivariant pairing $\Gamma_{n+1}$ on $(S_+\oplus S_-)\otimes W_+$, where $S_+\oplus S_-$ is the spin representation of $\so(n+1)$. We can choose a surjective map $W_+\rightarrow W_-$ compatible with the pairings and lift $Q$ to an element $\widetilde{Q}\in (S_+\oplus S_-)\otimes W_+$. If $n\equiv 6 \pmod 8$ or $n\equiv 2\pmod 8$ and $\mc{N}_+\geq 2$, then by the previous results we may find an abelian section $S_{\widetilde{Q}}\rightarrow (S_+\oplus S_-)\otimes W_+$. Post-composing it with the projection $W_+\rightarrow W_-$, we obtain an abelian section for $Q$.

\item Suppose $n \equiv 1, 5 \pmod 8$. In this case $\Sigma = S\otimes W$ which arises as a restriction of the spinorial representation $S_+\otimes W$ with its pairing in dimension $n+1$. Thus, if $n\equiv 1\pmod 8$ and $\mc{N}\geq 2$ or $n\equiv 5\pmod 8$, we obtain an abelian section.
\end{itemize}
\end{proof}

\begin{remark}
This argument includes all the square-zero supercharges in dimensions less than 8 with the exception of supercharges in dimension 4 which contain components in both $S_+$ and $S_-$. In the latter case it is easy to construct an abelian section by hand. All the topological supercharges we will discuss in Section \ref{twisted_theories_section} are included in this result. The only square-zero supercharges that are not included are non-topological supercharges in dimensions 8, 9 and 10.  We expect that the hypotheses in our statement can be weakened; we are not aware of any counter-examples to the existence of an abelian section to $[Q,-]$ in any dimension.
\end{remark}

\subsection{Twisting Homomorphisms} \label{twisting_hom_section}

Supersymmetric theories in particular include an action of the spin group $\Spin(V_\RR)$.  However, this spin action does not survive the twisting procedure.  An action of a Lie group $G$ on the observables of a supersymmetric field theory is only well-defined after twisting by $Q$ only if it preserves $Q$, i.e. we have $[\nu(Q), \nu(X)] = 0$ for every $X \in \gg$. We can see the following.

\begin{prop} \label{twisting_hom_gives_G_action_prop}
If $\obs$ is an $\mf A$-supersymmetric factorization algebra on $\RR^n$ and $Q$ is a square-zero supercharge, the twisted factorization algebra $\obs^Q$ admits a smooth action of $\stab(Q) \sub \Spin(V_\RR) \times G_R$.
\end{prop}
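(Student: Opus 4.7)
The plan is to obtain the $\stab(Q)$-action on $\obs^Q$ by restricting the smooth $\Spin(V_\RR) \times G_R$-action already present on $\obs$ and verifying that this restricted action is compatible with the twisted differential. Since $\obs$ is $\mf{A}$-supersymmetric, the defining data include a smooth action of $\ISO(V_\RR) \times G_R$ on $\obs$, and this contains $\Spin(V_\RR) \times G_R \supset \stab(Q)$. All the axioms of a smooth group action (compatibility with factorization maps, smooth variation of the structure maps $m_{g_1,\ldots,g_k}$, Leibniz compatibility with the infinitesimal action) pass to any closed subgroup, so the only substantive point after restriction is compatibility with $d + \nu(Q)$ in place of $d$.

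The key step is to check that each $g \in \stab(Q)$ commutes with $\nu(Q)$ as an endomorphism of $\obs$. This follows from the equivariance built into the definition of an $\mf{A}$-supersymmetric structure: the last axiom of a smooth $G$-action, together with the requirement that $\nu$ extend the already-existing $\mf{iso}(V) \oplus \gg_R$-action, forces the map $\nu \colon \mf{A}\to \der(\obs)$ to intertwine the adjoint action of $\Spin(V_\RR) \times G_R$ on $\mf{A}$ with the conjugation action on $\der(\obs)$. Consequently, for every $g \in \stab(Q)$,
\[ g \cdot \nu(Q) \cdot g^{-1} \;=\; \nu(\Ad_g Q) \;=\; \nu(Q). \]
Combined with the invariance of the original differential under the smooth action, this yields invariance of $d + \nu(Q)$, so the $\stab(Q)$-action descends term by term along the construction of $\obs^Q$ and commutes with the factorization structure maps.

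The main obstacle I expect is bookkeeping around the auxiliary $\ZZ$-grading used in the construction $M \mapsto M^Q$ of Section \ref{twist_def_section}: this functor is defined on graded mixed complexes, so strictly one needs the $\stab(Q)$-action to preserve both the fermion $\ZZ/2$-grading and the $\alpha$-induced $\ZZ$-grading. The $\ZZ/2$-grading is preserved by the full $\Spin(V_\RR) \times G_R$-action on $\obs$. The $\ZZ$-grading is automatically preserved by $\Spin(V_\RR)$, since $\Spin(V_\RR)$ and $G_R$ commute as subgroups of $\Spin(V_\RR)\times G_R$; for the $R$-symmetry part one either restricts further to the subgroup of $\stab(Q)$ that centralizes $\alpha(U(1))$, or simply works with the ungraded version of the twist indicated in Remark \ref{rmk:twistedgrading}. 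Once this is handled, smoothness of the action and compatibility with all factorization data are inherited directly from the smooth $\Spin(V_\RR) \times G_R$-action on $\obs$, so no further argument is needed.
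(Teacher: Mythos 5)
The paper states this proposition without proof, treating it as immediate from the definitions, and your argument is precisely that implicit reasoning: restrict the smooth $\Spin(V_\RR)\times G_R$-action to $\stab(Q)$, note that $g\cdot\nu(Q)\cdot g^{-1}=\nu(\mathrm{Ad}_g Q)=\nu(Q)$ for $g\in\stab(Q)$ (which follows from the Lie-algebra map property of $\nu$ together with the compatibility axioms of a smooth action), so the restricted action commutes with the twisted differential $\d+\nu(Q)$ and descends to $\obs^Q$. Your extra care about the auxiliary grading (working $\ZZ/2$-graded as in Remark \ref{rmk:twistedgrading}, or restricting to the centralizer of $\alpha$ when a twisting datum is fixed) is a sensible refinement, and the proof is correct and essentially the same as the paper's intended argument.
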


Since the representation $\Sigma$ does not contain a trivial subrepresentation, a non-zero $Q$ is never preserved under $\Spin(V_\RR)$. Nevertheless, one can define a $\Spin(V_\RR)$-action on $\obs^Q$ by means of a \emph{twisting homomorphism}. Consider a Lie group $G$ equipped with a homomorphism $G\rightarrow \Spin(V_\RR)$. In particular, $G$ acts on supersymmetric field theories on $\RR^n$.

\begin{definition}
Let $\obs$ be an $\mf A$-supersymmetric factorization algebra on $\RR^n$. A \emph{twisting homomorphism} is a homomorphism $\phi\colon G\to G_R$. A square-zero supercharge $Q$ is \emph{compatible} with the twisting homomorphism $\phi$ if the action of the subgroup $(\id,\phi) \colon G \to G \times G_R$ on $Q$ is trivial.
\end{definition}

If a supercharge $Q$ is compatible with a twisting homomorphism, then $\phi$ defines a smooth $G$-action on the twisted factorization algebra $\obs^Q$.

To put this in context, in the physics literature a topological twist is often defined to be a \emph{pair} consisting of a twisting homomorphism $\phi$ and a compatible topological supercharge $Q$ (as in Witten's original construction \cite{WittenTQFT}).  In fact, this is sometimes taken as the definition of a topological supercharge. This is justified by the following result. 

\begin{theorem} \label{compatible_twisting_hom_thm}
Let $Q$ be a square-zero supercharge in a supersymmetry algebra in dimension $n\geq 3$ which is compatible with a twisting homomorphism $\phi\colon \Spin(V_\RR)\to G_R$. Then $Q$ is a topological supercharge.
\end{theorem}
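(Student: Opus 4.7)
The plan is to show that the image $S_Q := [Q,\Sigma] \subset V$ is a $\Spin(V_\RR)$-invariant subspace under the standard vector representation, and then conclude by irreducibility of that representation. The core input is the definition of a twisting homomorphism together with the fact that $G_R$ acts trivially on the even part $V$.

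First I would observe that the Lie bracket on $\mf{A}$ is equivariant under the full group $\Spin(V_\RR) \times G_R$, hence also under the subgroup image of $(\id,\phi)\colon \Spin(V_\RR) \to \Spin(V_\RR) \times G_R$. Since $G_R$ is by definition a group of automorphisms acting trivially on $V$, the twisted action on $V$ coincides with the ordinary vector action, while the twisted action on $\Sigma$ is the standard $\Spin$-action composed with the $G_R$-action through $\phi$. Compatibility of $Q$ with $\phi$ says that $Q$ is fixed by this twisted action on $\Sigma$. Thus for any $g \in \Spin(V_\RR)$ and $\sigma \in \Sigma$,
\[ g \cdot [Q,\sigma] \;=\; [(g,\phi(g)) \cdot Q,\, (g,\phi(g)) \cdot \sigma] \;=\; [Q,\, (g,\phi(g)) \cdot \sigma], \]
where on the left $g$ acts by the vector representation on $V$. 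Hence $S_Q \subset V$ is preserved by the vector $\Spin(V_\RR)$-action.

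The conclusion now follows from irreducibility: for $n \geq 3$ the standard vector representation of $\Spin(n)$ on $\CC^n$ is irreducible, so $S_Q$ is either $0$ or all of $V$. To rule out the first possibility, note that since the pairing $\Gamma\colon \sym^2\Sigma \to V$ is nondegenerate, any nonzero $Q \in \Sigma$ has some $\sigma \in \Sigma$ with $\Gamma(Q,\sigma) \neq 0$, so $S_Q \neq 0$; alternatively one may invoke Proposition \ref{at_least_holo_prop} which asserts $\dim S_Q \geq n/2$. Either way $S_Q = V$, which is precisely the statement that $Q$ is a topological supercharge.

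There is no serious obstacle — the argument is essentially a one-line application of Schur's lemma — but it is worth noting that the hypothesis $n \geq 3$ is genuinely needed: for $n=2$, $\Spin(2) = U(1)$ acts on $V \cong \CC^2$ as a sum of two characters, so proper invariant subspaces exist and the implication fails. This is consistent with the classification in Section \ref{twisted_theories_section}, where in dimension $2$ one meets square-zero supercharges compatible with a twisting homomorphism that are only holomorphic.
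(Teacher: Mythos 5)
Your proof is correct and follows essentially the same route as the paper's: both show that the image $S_Q$ is a subrepresentation of $V$ for the twisted $\Spin$-action (which agrees with the untwisted one on $V$), observe it is nonzero by nondegeneracy of $\Gamma$, and conclude by irreducibility of the vector representation for $n \geq 3$. The only cosmetic difference is that you phrase the invariance at the group level while the paper works with the twisted Lie subalgebra $\so(n)' \subset \so(n) \oplus \gg_R$.
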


\begin{proof}
By construction $\Sigma$ carries an action of $\so(n)\oplus \gg_R$. In particular, we may consider a modified $\so(n)$-action on $\Sigma$ via the twisting homomorphism $(\id, \phi)\colon \so(n)\to\so(n)\oplus \gg_R$. We denote this subalgebra by $\so(n)'$.

By assumption $\CC Q\sub \Sigma$ is a trivial $\so(n)'$-subrepresentation. Therefore, $(\CC Q)\otimes \Sigma\sub \Sigma\otimes \Sigma$ is also a subrepresentation. The map $\Gamma\colon\Sigma\otimes \Sigma\to \CC^n$ is $\so(n)'$-equivariant since $G_R$ acts by automorphisms of the Lie structure. Therefore, the image of $[Q, -]\colon \Sigma\rightarrow \CC^n$ is an $\so(n)$-subrepresentation. By nondegeneracy of the pairing $\Gamma$, the image of $[Q, -]$ is nonzero. But for $n > 2$ the representation $\CC^n$ of $\so(n)$ is irreducible, so $[Q, -]$ must be surjective.
\end{proof}

\begin{remark}
The theorem is false in dimension 2.  For instance, choose the spinorial representation $\Sigma = S_+^{\oplus 2}$ and the R-symmetry group $G_R = \SO(2)$.  If one takes the twisting homomorphism $\phi \colon \Spin(2) \to \SO(2)$ to be the standard cover, then there is a compatible supercharge.  However in this supersymmetry algebra no supercharges are topological.  We will discuss this supersymmetry algebra more completely in Section \ref{twisted_theories_section}.
\end{remark}

The formalism of twisting homomorphisms allows us to apply Corollary \ref{Efr_n_summary_cor} in the context of topologically twisted theories.
\begin{theorem} \label{twisted_theory_with_twisting_hom_thm}
Let $\obs$ be an $\mf{A}$-supersymmetric factorization algebra on $\RR^n$, $Q$ a square-zero supercharge compatible with a twisting homomorphism $\phi\colon \SO(n)\rightarrow G_R$. Suppose that the following two conditions hold.
\begin{itemize}
\item The factorization map $\obs^Q(B_r(0))\rightarrow \obs^Q(B_R(0))$ is a quasi-isomorphism.
\item The $\phi$-twisted $\SO(n)$-action on $\obs^Q$ extends to a de Rham $\SO(n)$-action.
\end{itemize}
Then the collection $\{\obs^Q(B_r(0))\}$ forms a framed $\bb E_n$-algebra.
\end{theorem}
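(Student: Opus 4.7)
The plan is to verify the hypotheses of Corollary \ref{Efr_n_summary_cor} for the group $G=\SO(n)$ applied to the twisted prefactorization algebra $\obs^Q$. Condition (i) of that corollary, that $\obs^Q(B_r(0))\rightarrow\obs^Q(B_R(0))$ is a quasi-isomorphism, is directly hypothesized. What remains is to equip $\obs^Q$ with the structure of a de Rham translation-invariant prefactorization algebra with an $\SO(n)_{\dR}$-structure, i.e. a de Rham $(\SO(n)\ltimes\RR^n)$-action in the sense of Section \ref{Gstructure_section}.

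First I would note that since $Q$ is compatible with the twisting homomorphism $\phi\colon \SO(n)\to G_R$ and $n\geq 3$, Theorem \ref{compatible_twisting_hom_thm} implies that $Q$ is a topological supercharge. Fix an abelian section $\mf{a}\sub \mf{A}_{\mr{odd}}$ of $[Q,-]\colon \mf{A}_{\mr{odd}}\rightarrow V$, whose existence is guaranteed in the cases of interest by Proposition \ref{commuting_potential_prop}. Applying Proposition \ref{topologically_twisted_theories_are_dR_translation_invt_prop} then produces a de Rham structure on the translation action of $\obs^Q$ via the nullhomotopy $\eta(v)=\nu([Q,-]^{-1}(v))$. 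On the other hand, the $\phi$-twisted smooth $\SO(n)$-action on $\obs^Q$ provided by Proposition \ref{twisting_hom_gives_G_action_prop} extends to a de Rham $\SO(n)$-action by the second hypothesis of the theorem, supplying a nullhomotopy $\eta_{\so(n)}\colon \so(n)\to\der(\obs^Q)[-1]$.

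The only remaining point is to check that $\eta$ and $\eta_{\so(n)}$ together assemble into a full $(\SO(n)\ltimes\RR^n)_{\dR}$-action on $\obs^Q$, where $\SO(n)$ acts through the composition $(\id,\phi)$. Unpacking Definition \ref{htpically_trivial_action_def}, the content of this condition is that $\eta$ is $\SO(n)$-equivariant with respect to the $\phi$-twisted action on $V$. Since $Q$ is fixed by the twisted $\SO(n)'\sub \SO(n)\times G_R$-action and the Lie bracket on $\mf{A}$ is $\SO(n)'$-equivariant, the map $[Q,-]\colon \mf{A}_{\mr{odd}}\to V$ is itself $\SO(n)'$-equivariant; by complete reducibility of finite-dimensional $\SO(n)$-representations one may choose $\mf{a}$ to be an $\SO(n)'$-subrepresentation, and inspecting the case-by-case construction in the proof of Proposition \ref{commuting_potential_prop} shows that the sections constructed there (e.g. into $S_-\otimes W^*$) can in fact be taken to be $\SO(n)'$-invariant. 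With such a choice $\eta$ is automatically $\SO(n)'$-equivariant. I expect this compatibility check to be the main subtlety of the argument, the only step not reducing to a direct invocation of the results of Section \ref{Factorization_section}. Once it is in place, Corollary \ref{Efr_n_summary_cor} applied to $G=\SO(n)$ yields the desired framed $\bb E_n$-algebra structure on $\{\obs^Q(B_r(0))\}$.
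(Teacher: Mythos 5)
Your proof takes essentially the same route as the paper's: conclude from Theorem \ref{compatible_twisting_hom_thm} that $Q$ is topological, use Proposition \ref{commuting_potential_prop} together with Proposition \ref{topologically_twisted_theories_are_dR_translation_invt_prop} to get de Rham translation invariance, and feed this and the two hypotheses into Corollary \ref{Efr_n_summary_cor}. The additional compatibility discussion you include (choosing the abelian section $\SO(n)'$-equivariantly) goes beyond the paper's own, terser proof, which in effect folds these compatibilities into the second hypothesis (and, in the superconformal examples, into the notion of a potential for affine transformations, Definition \ref{def:affinepotential}); be aware that for a strict $(\SO(n)\ltimes\RR^n)_{\dR}$-structure one would also need the mixed relations between the rotational nullhomotopy and the translational one (e.g. $[\eta_{\so(n)}(r),\eta(v)]=0$ and equivariance of $\eta_{\so(n)}$ under translations), which your sketch asserts only partially and the paper does not verify either.
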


\begin{proof}
This is a direct application of Corollary \ref{Efr_n_summary_cor}.  By Theorem \ref{compatible_twisting_hom_thm} $Q$ is a topological supercharge.  Then by Proposition \ref{topologically_twisted_theories_are_dR_translation_invt_prop} and Proposition \ref{commuting_potential_prop} the $Q$-twisted factorization algebra is de Rham translation invariant.  The hypotheses in the statement of the theorem ensure that we can apply Corollary \ref{Efr_n_summary_cor}.
\end{proof}

\subsection{Pure Spinors} \label{pure_spinor_section}
The subspace of square-zero supercharges in any dimension is closely related to the space of \emph{pure spinors} introduced by Cartan and Chevalley.  We will make reference to this theory when we discuss $\mc N=1$ or $\mc N=(1,0)$ square-zero supercharges in high dimensions.  We refer to \cite[Chapter III]{Chevalley}, \cite{BudinichTrautman} and \cite{Meinrenken} for a reference on pure spinors.

Let $S$ be an irreducible representation of the Clifford algebra $\mr{Cl}(V)$. Here $S$ coincides with the Dirac spinor representation of $\so(V)$: that is, in odd dimensions $S$ is the spin representation of $\so(V)$ and in even dimensions $S\cong S_+\oplus S_-$ is the sum of the semi-spin representations of $\so(V)$.

Let $L\sub V$ be a maximal isotropic subspace and let $\SO(V)_L\sub \SO(V)$ be the subgroup preserving $L\sub V$. We can identify $\SO(V)_L\cong \GL(L)\ltimes G_L$ where $G_L$ has the following description:
\begin{itemize}
\item If $n$ is even, $G_L=\wedge^2 L$.
\item If $n$ is odd, $G_L$ is a central extension $0\rightarrow \wedge^2 L\rightarrow G_L\rightarrow L\rightarrow 0$. Explicitly, $G_L$ is the group of pairs $(v, \omega)\in (L\oplus \wedge^2 L)$ with the multiplication rule
\[(v_1, \omega_1)(v_2, \omega_2) = (v_1 + v_2, \omega_1 + \omega_2 + v_1\wedge v_2).\]
\end{itemize}

The preimage $\Spin(V)_L\sub \Spin(V)$ of $\SO(V)_L\sub \SO(V)$ under $\Spin(V)\rightarrow \SO(V)$ is identified with $\ML(L)\ltimes G_L$, where $\ML(L)$ is the metalinear group. Then we can identify the restriction of the spinor representation to the subgroup $\Spin(V)_L$ as
\begin{equation}
S\cong \wedge^\bullet L\otimes \det(L)^{-1/2}
\label{eq:spinormodule}
\end{equation}
where $v\in L$ acts by multiplication by $1+v$ and $\omega\in\wedge^2 L$ acts by multiplication by $e^\omega$.

Pick a nondegenerate pairing $\langle -, -\rangle\colon S\otimes S\rightarrow \CC$ such that $\langle \rho(v)Q_1, \rho(v)Q_2 \rangle = \langle Q_1, Q_2 \rangle$ for $v$ a unit vector, which is unique up to rescaling.  Here we write $\rho(v)$ to denote the Clifford multiplication by the vector $v \in V$. Then we define a nondegenerate pairing $\Gamma\colon S\otimes S\rightarrow V$ by
\[(v, \Gamma(Q_1, Q_2)) = \langle \rho(v) Q_1, Q_2\rangle\]
for $v\in V$ and $Q_1,Q_2\in S$, where $(-, -)$ denotes the inner product on $V$.  These pairings coincide with the symmetric and anti-symmetric pairings we described at the beginning of Section \ref{susy_algebra_section}.

\begin{definition}
For a spinor $Q \in S$, its \emph{nullspace} $T_Q\sub V$ is the space of vectors $v \in V$ such that $\rho(v) Q = 0$.
\end{definition}

It is easy to see that for any nonzero spinor $Q\in S$, the nullspace $T_Q\sub V$ is isotropic. This motivates the following definition.

\begin{definition}
The spinor is called \emph{pure} if $T_Q$ is a maximal isotropic subspace, i.e. $\dim T_Q = \lfloor \frac n2 \rfloor$.
\end{definition}

From the description \eqref{eq:spinormodule} of the spinor module $S$ as a $\Spin(V)_L$-representation we get the following statement (see also \cite[Lemma 1]{Igusa}).

\begin{prop} \label{prop:purespinorstabilizer}
Let $L\sub V$ be a maximal isotropic subspace.
\begin{enumerate}
\item The subspace $\{Q\in S\colon \rho(L)Q=0\}\sub S$ is one-dimensional and isomorphic to $\det(L)^{1/2}$ as a $\Spin(V)_L$-representation.

\item Suppose $Q\in S$ is a pure spinor such that $\rho(L) Q = 0$. Then its stabilizer in $\Spin(V)$ is isomorphic to $(\SL(L)\times \ZZ/2\ZZ)\ltimes G_L$.
\end{enumerate}
\end{prop}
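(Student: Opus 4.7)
My approach is to exploit the explicit description \eqref{eq:spinormodule} of the spinor module as a $\Spin(V)_L$-representation and analyze both the subspace annihilated by $\rho(L)$ and the action on it.

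For part (1), under the identification $S \cong \wedge^\bullet L \otimes \det(L)^{-1/2}$, the Clifford action of $v \in L$ is by exterior multiplication $v\wedge(-)$; this is the infinitesimal version of the group-level formula $1+v$ describing the action of $v \in L \subset G_L$. I would then identify $\{Q \in S : \rho(L)Q=0\}$ with the top exterior power $\wedge^{\dim L} L \otimes \det(L)^{-1/2} \cong \det(L)^{1/2}$, which is manifestly one-dimensional. Equivariance of \eqref{eq:spinormodule}, combined with the fact that $G_L$ acts trivially on the top-degree piece (both $1+v$ and $e^{\omega}$ act as the identity there because any further exterior product with an element of $L$ vanishes in top degree), shows that the $\Spin(V)_L$-representation on this subspace factors through $\ML(L)$ and is visibly the half-determinant character valued in $\det(L)^{1/2}$.

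For part (2), I would first argue that the stabilizer of $Q$ lies in $\Spin(V)_L$. For any $g \in \Spin(V)$ covering $\pi(g) \in \SO(V)$, the twisting relation $\rho(\pi(g)\cdot v)\circ g = g\circ \rho(v)$ inside $\mr{Cl}(V)$ shows that if $gQ=Q$ then $\pi(g)$ preserves the nullspace $T_Q=L$, so $g \in \Spin(V)_L \cong \ML(L)\ltimes G_L$. I would then analyze the stabilizer factor by factor. All of $G_L$ lies in the stabilizer, by the same computation as in part (1) that $1+v$ and $e^{\omega}$ act trivially on any top-degree element. For the $\ML(L)$ factor, the action on the one-dimensional subspace $\det(L)^{1/2}$ is through a character $\chi\colon \ML(L) \to \CC^\times$ whose square equals the pullback of $\det$ from $\GL(L)$. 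I would compute the stabilizer of the generator $Q$ of $\det(L)^{1/2}$ as a subgroup of the preimage of $\SL(L)\subset \GL(L)$ in $\ML(L)$, using the fact that the metalinear $\ZZ/2\ZZ$-cover trivializes upon restriction to $\SL(L)$ (since $\det$ is trivial there) to identify this preimage as $\SL(L) \times \ZZ/2\ZZ$. Putting the pieces together, and observing that $\SL(L)$ acts naturally on $G_L=\wedge^2 L$ (or on its Heisenberg-type extension in odd dimensions), yields the semidirect product $(\SL(L)\times \ZZ/2\ZZ)\ltimes G_L$.

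The main obstacle will be a careful treatment of the metalinear cover: pinning down the character $\chi$ precisely, verifying the splitting of the preimage of $\SL(L)$ as $\SL(L)\times \ZZ/2\ZZ$, and matching this structure with the action on the chosen generator $Q$ of $\det(L)^{1/2}$. The parity of $\dim L$ and the location of the various central elements of $\Spin(V)$ inside $\ML(L)\ltimes G_L$ will enter when checking that all relevant discrete symmetries are correctly accounted for in both factors.
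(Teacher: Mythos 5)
Part (1) of your proposal is correct and is exactly the route the paper has in mind: the paper gives no argument beyond pointing at the model \eqref{eq:spinormodule} (plus the citation to Igusa), and your identification of the $\rho(L)$-annihilated subspace with the top piece $\wedge^{\dim L}L\otimes\det(L)^{-1/2}\cong\det(L)^{1/2}$, on which $G_L$ acts trivially and $\ML(L)$ acts through the half-determinant character, is the whole content. The first two steps of your part (2) are also fine: the conjugation relation $g\rho(v)g^{-1}=\rho(\pi(g)v)$ together with $T_Q=L$ (purity) forces the stabilizer into $\Spin(V)_L$, and $G_L$ visibly fixes the top line.

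The gap is in your last step, and it is precisely the point you deferred (``matching this structure with the action on the chosen generator $Q$''). In your own setup the $\ML(L)$-part of the stabilizer of the \emph{vector} $Q$ is $\ker\chi$, where $\chi$ is the character on the top line with $\chi^2=\det\circ\pi$. Knowing that $\ker\chi$ sits inside the preimage of $\SL(L)$, and that this preimage is $\SL(L)\times\ZZ/2\ZZ$, does not let you conclude that the stabilizer is the whole preimage: you still have to evaluate $\chi$ on the generator of that $\ZZ/2\ZZ$, which is the central element $-1\in\Spin(V)$. Since $-1$ is the Clifford scalar $-1$, it acts on all of $S$, in particular on $Q$, by $-\id$, so $\chi(-1)=-1$; hence over each $g\in\SL(L)$ exactly one of the two lifts lies in $\ker\chi$, and $\ker\chi\cong\SL(L)$ (this is independent of the parity of $\dim L$, so the parity issues you worry about do not enter here). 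Your construction therefore yields $\SL(L)\ltimes G_L$ as the stabilizer of $Q$ itself, whereas $(\SL(L)\times\ZZ/2\ZZ)\ltimes G_L$ is the subgroup of elements with $gQ=\pm Q$, i.e.\ the full preimage in $\Spin(V)$ of the $\SO(V)$-stabilizer. (A quick check in dimension $8$: by triality the stabilizer of a pure spinor agrees with the stabilizer of a null vector, which is the connected group $\Spin(6)\ltimes\CC^6\cong\SL(4)\ltimes\wedge^2\CC^4$.) So as written the proposal asserts the extra $\ZZ/2\ZZ$ that its own computation excludes; to make the argument honest you must either prove the statement in the ``up to sign'' interpretation (stabilizer of the pair $\{\pm Q\}$, equivalently of the image of $Q$ under $\Spin(V)\to\SO(V)$) and say so explicitly, or record that the stabilizer of the vector is the index-two subgroup $\SL(L)\ltimes G_L$ and flag the discrepancy with the statement as printed.
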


Now, let us investigate the image of the $\Gamma$ pairing with a pure spinor, i.e. the subspace
\[S_Q = \mr{Im}(\Gamma(Q,-)) \sub V.\]

\begin{prop} \label{purespinorimage}
Let $Q$ be a spinor. Then $S_Q = T_Q^{\perp}$. In particular, if $Q$ is a pure spinor,
\begin{itemize}
 \item If $n$ is even, then $S_Q = T_Q$.
 \item If $n$ is odd, then $T_Q \sub S_Q = T_Q^\perp$ is a codimension 1 subspace.
\end{itemize}
\end{prop}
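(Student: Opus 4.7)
The plan is to prove the general identity $S_Q = T_Q^\perp$ by a direct duality argument using the defining formula for $\Gamma$, and then to read off the two special cases for pure spinors from the fact that a maximal isotropic subspace has known dimension.

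First I would compute $S_Q^\perp$. A vector $v \in V$ lies in $S_Q^\perp$ if and only if $(v, \Gamma(Q, Q')) = 0$ for every $Q' \in S$. By the defining relation $(v, \Gamma(Q_1, Q_2)) = \langle \rho(v) Q_1, Q_2\rangle$, this happens if and only if $\langle \rho(v)Q, Q'\rangle = 0$ for every $Q' \in S$. Since the pairing $\langle -, -\rangle$ on $S$ is nondegenerate, this is equivalent to $\rho(v)Q = 0$, i.e. $v \in T_Q$. Thus $S_Q^\perp = T_Q$, and since the inner product on $V$ is nondegenerate we conclude $S_Q = T_Q^\perp$ by double orthogonal complement.

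For the two special cases it suffices to do the dimension count. If $Q$ is pure and $n = 2q$ is even, then $T_Q$ is maximal isotropic of dimension $q = n/2$, so $T_Q^\perp$ is also $q$-dimensional, and being isotropic and maximal, $T_Q \subseteq T_Q^\perp$ forces $T_Q = T_Q^\perp = S_Q$. If $n = 2q+1$ is odd, $T_Q$ is maximal isotropic of dimension $q$, so $T_Q^\perp$ has dimension $n - q = q+1$, and again $T_Q \subseteq T_Q^\perp$ since $T_Q$ is isotropic, yielding a codimension one inclusion $T_Q \subset S_Q = T_Q^\perp$.

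There is really no obstacle here: the only non-formal input is the nondegeneracy of $\langle -, -\rangle$ on $S$ (which is built into the setup) and the standard fact about the dimension of a maximal isotropic subspace in a nondegenerate quadratic space. The heart of the proposition is the tautological adjunction between $\rho$ and $\Gamma$ encoded in the definition of $\Gamma$.
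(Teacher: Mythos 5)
Your proof is correct and takes essentially the same approach as the paper: compute $S_Q^\perp = T_Q$ via the adjunction $(v, \Gamma(Q,Q')) = \langle \rho(v)Q, Q'\rangle$, then pass to orthogonal complements and finish with the dimension count for a maximal isotropic subspace. You are slightly more explicit about which nondegeneracies are being invoked, but the argument is the same.
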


\begin{proof}
Let $v\in S_Q^\perp$, i.e. $(v, w) = 0$ for all $w \in S_Q$. That is for every $Q'\in S$ we have
\[
0 = (v, \Gamma(Q, Q')) = \langle\rho(v) Q, Q'\rangle.
\]

Since this is true for every $Q'$, we get that $\rho(v) Q = 0$, i.e. $S_Q^\perp=T_Q$. Taking the orthogonal complement again, we get $S_Q = T_Q^\perp$.

If $Q$ is a pure spinor, $T_Q$ is maximal isotropic, so in even dimensions $S_Q = T_Q$ and in odd dimensions $T_Q\sub S_Q$ has codimension 1.
\end{proof}

Let us return to prove the statement we claimed in Section \ref{twist_section}: that all supercharges are at least holomorphic.

\begin{prop} \label{at_least_holo_prop}
Let $Q \in \Sigma$ be an element of an \emph{arbitrary} spinorial representation in dimension $n$ equipped with a non-degenerate pairing.  Then the image space $S_Q$ has dimension at least $n/2$.
\end{prop}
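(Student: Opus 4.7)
The plan is to reduce to the case of a single Weyl or Dirac spinor component of $Q$, and then apply Proposition \ref{purespinorimage} together with the Clifford-isotropy of nullspaces.

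First, I would invoke the classification in Proposition \ref{prop:spinorialreps} to present $\Sigma$ in one of its five standard forms, each a sum of basic (semi-)spinor modules tensored with auxiliary vector spaces $W$ or $W_\pm$ carrying symmetric or symplectic pairings. In each case I would choose a suitable basis of the auxiliary factor --- orthonormal when the auxiliary pairing is symmetric, a Darboux basis when it is symplectic, or a pair of mutually dual bases in the $W \oplus W^*$ case --- and decompose $Q$ accordingly as $\sum_i s_i \otimes e_i + \cdots$, split between semi-spin parts if necessary. The multiplicativity of $\Gamma$ relative to this presentation allows us to cancel the auxiliary factor: pairing $Q$ with test elements of the form $s' \otimes e_i^\vee$, where $e_i^\vee$ is the dual basis element and $s'$ varies over the appropriate (semi-)spin module, picks out $\mathrm{Im}(\Gamma_S(s_i, -)) =: S_{s_i}$. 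Therefore $S_Q \supseteq S_{s_i}$ for each index $i$.

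Since $Q \neq 0$, at least one spinor component $s_i$ is nonzero, and for such an $s_i$ the nullspace $T_{s_i} = \{v \in V : \rho(v) s_i = 0\}$ is isotropic: the Clifford relation $\rho(v)\rho(w) + \rho(w)\rho(v) = 2(v,w)\,\mathrm{id}$ applied to $s_i$ forces $2(v,w) s_i = 0$ for $v,w \in T_{s_i}$, hence $(v,w) = 0$. In particular $\dim T_{s_i} \leq \lfloor n/2 \rfloor$. Proposition \ref{purespinorimage} (whose proof uses only the Clifford-pairing definition of $S_Q$ and applies equally to a Weyl spinor regarded as a Dirac spinor with one semi-spin component zero) then gives $S_{s_i} = T_{s_i}^\perp$, so
\[
\dim S_Q \;\geq\; \dim S_{s_i} \;=\; n - \dim T_{s_i} \;\geq\; n - \lfloor n/2 \rfloor \;=\; \lceil n/2 \rceil \;\geq\; n/2.
\]

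The main potential obstacle is the case-by-case nature of the reduction, since the precise formula expressing $\Gamma$ in terms of the chosen basis depends on $\dim V \pmod 8$ and on whether the auxiliary pairing is symmetric or symplectic. However, in each of the five cases the argument is essentially identical: the tensor product structure of the pairing permits the auxiliary factor to be stripped off by varying the test vector over the dual basis, reducing the question to that of a single nonzero spinor in $S$ or $S_\pm$, to which Proposition \ref{purespinorimage} applies directly.
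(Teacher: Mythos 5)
Your proposal is correct and follows essentially the same route as the paper's proof: decompose $Q$ over a basis of the auxiliary space to see that $S_Q$ contains $S_{s_i}$ for each nonzero spinor component, then apply Proposition \ref{purespinorimage} together with the isotropy of the nullspace $T_{s_i}$ to get $\dim S_{s_i} \geq n/2$. The only cosmetic difference is that you spell out the dual-basis stripping argument and the Weyl-versus-Dirac compatibility in more detail than the paper does.
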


\begin{proof}
First, if $Q\in S$, then by Proposition \ref{purespinorimage} $S_Q$ has dimension $\ge n/2$, since $T_Q$ has dimension $\le n/2$.  If $\Sigma = S \otimes W$, then if we choose a basis $\{w_1, \ldots, w_k\}$ for $W$, we can write $Q = Q_1 \otimes w_1 + \cdots + Q_k \otimes w_k$, and then $S_Q = \mr{span}(S_{Q_1},\ldots, S_{Q_k})$ which then also has dimension $\ge n/2$.  

More generally, suppose $\Sigma = S_+ \otimes W_+ \oplus S_- \oplus W_-$ where there is a strict inclusion $W_- \subsetneq W_+$. Then $S_+$ and $S_-$ are paired with themselves. For an element $Q_1 \otimes w_1 \in S_+ \otimes W_+$ we can identify the image of $\Gamma(Q_1 \otimes w_1, -)$ in the spaces $S_+ \otimes W_+ \oplus S_- \oplus W_-$ and in $(S_+ \oplus S_-) \otimes W_+$.  So we still have that $S_{Q_1} \sub S_Q$, so $S_Q$ has dimension $\ge n/2$.
\end{proof}

Now, what is the relationship between pure spinors and those satisfying $\Gamma(Q,Q)=0$ (i.e. that square to zero in the supersymmetry algebra)?  To answer this question, we will use an interesting decomposition.  Every spinor $Q$ induces a matrix element
\[Q \otimes Q \in S \otimes S\cong S\otimes S^*  \iso \eend(S)\] using the invariant pairing on $S$.  Since $S$ is irreducible, we can identify endomorphisms with elements of the Clifford algebra, and thus decompose the spinor bilinear as
\[Q \otimes Q = \sum_{p=0}^n F_p \in \eend(S)\cong \mr{Cl}(V)\cong \wedge^\bullet(V),\]
where $F_p \in \wedge^p V$.  We can explicitly identify these elements $F_p$ as, in index notation, the Clifford algebra element $Q_i \gamma^{a_1 \cdots a_p}_{ij} Q_j$.  There is a nice characterisation of pure spinors in this language.

\begin{theorem}[Chevalley] \label{ChevalleyCriterion}
A non-zero spinor $Q$ is pure if and only if it is Weyl (in even dimensions), and $F_p = 0$ for all $p < \lfloor n/2 \rfloor$. 
\end{theorem}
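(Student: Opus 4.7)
The plan is to identify $F_p$ as the $\wedge^p V$-component of the image of $Q \otimes Q \in S \otimes S$ under the composite
\[S \otimes S \iso \eend(S) \iso \mr{Cl}(V) \iso \wedge^\bullet V,\]
where the first map uses the invariant pairing on $S$, the second is the spin representation, and the third is the symbol (antisymmetrization) map. Chevalley's criterion then becomes a question about how $Q \otimes Q$ decomposes in the exterior algebra.

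For the forward direction, suppose $Q$ is pure and set $L = T_Q$. I would invoke Proposition \ref{prop:purespinorstabilizer}(1): the line $\CC Q \sub S$ is $\Spin(V)_L$-equivariantly isomorphic to $\det(L)^{1/2}$. The Weyl property in even dimensions is then automatic, because the one-dimensional subspace $\{Q' : \rho(L) Q' = 0\}$ must be a chirality eigenspace (chirality generates the center of $\mr{Cl}(V)$ and commutes with $\rho(L)$). Consequently $Q \otimes Q$ transforms under $\Spin(V)_L$ by the character $\det(L)$. Choosing a complementary isotropic $L^\vee$ (plus a non-isotropic vector in odd dimensions) yields the bi-graded decomposition $\wedge^p V = \bigoplus_{i+j=p} \wedge^i L \otimes \wedge^j L^\vee$ (with the obvious modification in odd dimensions). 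A central character computation for $\GL(L) \sub \Spin(V)_L$ shows that the $\det(L)$-semi-invariant subspace of $\wedge^p V$ is concentrated in $p = \lfloor n/2 \rfloor$ (and possibly $p = \lfloor n/2 \rfloor + 1$ in odd dim). In particular the low-degree $F_p$ are forced to vanish.

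For the converse I would analyze the $\Spin(V)$-invariant projective variety $X \sub \bb P(S_+) \sqcup \bb P(S_-)$ in even dimensions (or $X \sub \bb P(S)$ in odd dimensions) cut out by the equations $F_p = 0$ for $p < \lfloor n/2 \rfloor$ (together with the Weyl condition). The forward direction shows that the pure spinor variety -- a single $\Spin(V)$-orbit, namely the isotropic Grassmannian $\Spin(V)/\Spin(V)_L$ -- is contained in $X$. To conclude equality I would either (i) perform a dimension count, matching $\dim \bb P(S_\pm) - \dim \Spin(V)/\Spin(V)_L$ with the codimension cut out by the independent $F_p = 0$ equations; or (ii) argue more directly that the low-degree vanishing constrains $Q \otimes Q \in \mr{Cl}(V)$ to be annihilated on both sides by a subspace of $V$ of dimension at least $\lfloor n/2 \rfloor$, which by the Clifford relations must be contained in $T_Q$ and is therefore a maximal isotropic by Proposition \ref{purespinorimage}.

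The hard step is the converse. The equations $F_p = 0$ interact nontrivially with the built-in symmetry properties of the pairings $\Gamma^p \colon S \otimes S \to \wedge^p V$, which are alternating or symmetric depending on $n \bmod 8$ and $p \bmod 4$, so many $F_p$ vanish automatically for chirality or symmetry reasons and cannot be used. Sorting out which equations carry genuine information, and then either completing the dimension count or producing the required annihilator for $Q \otimes Q$ in $\mr{Cl}(V)$, is where the real work lies.
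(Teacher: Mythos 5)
First, a point of comparison: the paper does not prove Theorem \ref{ChevalleyCriterion} at all --- it is quoted as a classical result with a citation to Chevalley's book --- so your proposal can only be judged on its own terms, and on those terms it has a genuine gap in the converse direction.

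Your forward direction is essentially sound. The equivariance of $S\otimes S\cong \eend(S)\cong \mr{Cl}(V)\cong \wedge^\bullet V$, Proposition \ref{prop:purespinorstabilizer}(1), and the central character of $\CC^\times\sub\GL(L)\sub \Spin(V)_L$ (weight $i-j$ on $\wedge^iL\otimes\wedge^jL^\vee$ versus weight $\lfloor n/2\rfloor$ on $\det(L)$) do force $F_p=0$ for $p<\lfloor n/2\rfloor$. One small correction: in even dimensions the chirality element is \emph{not} central in $\mr{Cl}(V)$ and \emph{anticommutes} with $\rho(v)$ for $v\in V$; the line $\{Q':\rho(L)Q'=0\}$ is still chirality-stable because anticommutation also preserves the kernel of each $\rho(v)$, so the Weyl conclusion survives, but your stated justification should be repaired.

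The converse, which you yourself flag as the hard step, is not proved, and neither of your two proposed routes is currently a proof. Route (i), the dimension count, cannot work as described: imposing $N$ equations only bounds the codimension of the zero locus $X$ \emph{above} by $N$, i.e. it bounds $\dim X$ from below, whereas to conclude that $X$ coincides with the pure-spinor orbit you need an upper bound on $\dim X$ together with irreducibility (or the absence of extra components of impure square-zero spinors). That is exactly the content of the theorem --- indeed the paper's own discussion of dimension $8$ shows that impure spinors satisfying the single low-degree constraint available there would have to be excluded by an argument, not by counting equations, and Lemma \ref{WeylCriterion} further thins out which $F_p$ are genuinely nonzero, so "independence" of the equations is not something you may assume. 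Route (ii) is the right idea (it is close to Chevalley's actual argument: one shows that the vanishing of the low-degree components forces $Q\otimes Q$, as an element of $\mr{Cl}(V)$, to be proportional to a product of $\lfloor n/2\rfloor$ pairwise orthogonal isotropic vectors, whose span is then contained in $T_Q$, and Proposition \ref{purespinorimage} finishes), but you do not carry out the key step: producing, from the hypotheses $F_p=0$ for $p<\lfloor n/2\rfloor$, the $\lfloor n/2\rfloor$-dimensional subspace annihilating $Q\otimes Q$. Until that implication is established the proposal proves only the easy direction of the theorem.
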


For instance, if $n\geq 4$ and $Q$ is a pure spinor, we have $F_1 = \Gamma(Q, Q) = 0$. In other words, any pure spinor in these dimensions is square-zero.

We have the following observation (see \cite{BudinichTrautman}).

\begin{lemma}\label{WeylCriterion}
If $Q$ is a Weyl spinor in dimension $n=2m$, the tensor $F_p$ is zero unless $p\equiv m\pmod 4$.
\end{lemma}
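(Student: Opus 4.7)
The plan is to combine two mod-$2$ constraints on the form degrees $p$ appearing in the Clifford-form decomposition $Q \otimes Q = \sum_p F_p$: a chirality constraint forcing $p \equiv m \pmod 2$, and a symmetry constraint refining this to $p \equiv m \pmod 4$.

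For the chirality part, I would use that the volume element $\omega = e_1 \cdots e_n$ satisfies $\omega^2 = (-1)^m$ and anticommutes with each $e_i$ in even dimensions. A short computation using the invariance $\langle \rho(v) Q_1, \rho(v) Q_2\rangle = \langle Q_1, Q_2\rangle$ shows that $\omega$ is self-adjoint with respect to $\langle -, -\rangle$ when $m$ is even and anti-self-adjoint when $m$ is odd. In the first case $\langle -, -\rangle$ restricts non-trivially to each $S_\pm$ separately, and in the second it vanishes on $S_\pm \otimes S_\pm$ and pairs $S_+$ only with $S_-$. Consequently $Q \otimes Q$, regarded as an operator on $S$ via $v \mapsto \langle Q, v\rangle Q$, lies in $\eend(S_+) \subset \mr{Cl}^0(V) \cong \wedge^{\mathrm{even}} V$ when $m$ is even and in $\hom(S_-, S_+) \subset \mr{Cl}^1(V) \cong \wedge^{\mathrm{odd}} V$ when $m$ is odd. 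Either way $F_p = 0$ unless $p \equiv m \pmod 2$.

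For the symmetry refinement I would observe that $Q \otimes Q$ is manifestly symmetric in the two copies of $Q$, so its transpose with respect to $\langle -, -\rangle$ equals $\sigma \cdot (Q \otimes Q)$, where $\sigma \in \{\pm 1\}$ is the symmetry of the pairing. Under the identification $\eend(S) \cong \mr{Cl}(V)$ this transpose corresponds to the main antiinvolution $\tau(e_{i_1} \cdots e_{i_p}) = e_{i_p} \cdots e_{i_1}$, which acts on $\wedge^p V$ by the sign $(-1)^{p(p-1)/2}$. Hence $F_p = 0$ unless $(-1)^{p(p-1)/2} = \sigma$. Writing $p = m + 2k$ and invoking the classical formula $\sigma = (-1)^{m(m-1)/2}$ for the symmetry of the $\tau$-compatible pairing on $S$ (cf.\ \cite{BudinichTrautman}), the condition reduces to $(-1)^k = 1$, i.e., $p \equiv m \pmod 4$. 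The main technical obstacle in this approach is verifying the formula for $\sigma$; this is a classical case analysis by the residue of $n$ modulo $8$, using the action of $\omega$ and $\tau$ on the Clifford algebra.
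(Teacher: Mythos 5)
Your argument is correct. Note that the paper does not actually prove this lemma: it records it as an observation and defers to \cite{BudinichTrautman}, so what you have done is supply the standard chirality-plus-symmetry argument from that literature rather than follow or diverge from a proof in the text. Both halves check out. Invariance of the pairing under Clifford multiplication by unit vectors makes each $\rho(e_i)$ self-adjoint, whence $\omega^* = (-1)^{n(n-1)/2}\omega = (-1)^m \omega$; this forces $S_+$ and $S_-$ to be orthogonal when $m$ is even and forces the pairing to vanish on $S_\pm \otimes S_\pm$ when $m$ is odd (the orthogonality statement is what your chirality step really uses, and it does follow from your computation, so it is worth stating explicitly rather than just ``restricts non-trivially''), giving $F_p = 0$ unless $p \equiv m \pmod 2$. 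The same self-adjointness of the $\rho(e_i)$ is exactly what identifies the transpose on $\eend(S)$ with the reversal antiautomorphism $\tau$, and your arithmetic $p(p-1)/2 - m(m-1)/2 \equiv k \pmod 2$ for $p = m+2k$ is right, so the two constraints combine to $p \equiv m \pmod 4$. The one imported ingredient, $\sigma = (-1)^{m(m-1)/2}$ for the $\tau$-compatible pairing, is the correct classical formula; as a sanity check, in the cases $n \equiv 2, 6 \pmod 8$ it is forced by the paper's own conventions in Section \ref{susy_algebra_section}, since by your symmetry step the $\wedge^1$-component of the spinor bilinear on $S_\pm$ has symmetry $\sigma$ and the paper declares that component to be a pairing on $\sym^2(S_\pm)$ resp.\ $\wedge^2(S_\pm)$, and similarly the symmetric norm pairing used in Section \ref{dim8_section} is consistent with $\sigma = +1$ for $m \equiv 0 \pmod 4$.
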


Finally, we can characterize pure spinors in odd dimensions in terms of those in even dimensions. Indeed, suppose $n$ is odd and $S$ is the spinor module. If we denote by $S_+$ the semi-spin representation of $\Spin(V\oplus \C)$, then we can identify $S\cong S_+$ as $\Spin(V)$-representations.

 \begin{prop} [{\cite[Chapter III.8]{Chevalley}}]\label{prop:oddimensionpurespinor}
 A spinor $Q\in S$ in an odd dimension $n$ is pure if and only if it is pure as a Weyl spinor $Q\in S_+$ in dimension $n+1$.
 \end{prop}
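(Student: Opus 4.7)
The plan is to compare the nullspaces $T_Q^V \sub V$ and $T_Q^{V'} \sub V'$ of the spinor $Q$ under the identification $S \cong S_+$ of $\Spin(V)$-representations, where $V' = V\oplus \CC e_0$ is the even-dimensional extension of $V$ by an orthogonal vector $e_0$ with $(e_0, e_0) = 1$. This identification arises from the Clifford algebra isomorphism $\mr{Cl}(V) \cong \mr{Cl}(V')_{\mr{even}}$ sending $v \mapsto v\cdot e_0$ for $v\in V$.

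The first step is to establish the identity $T_Q^V = T_Q^{V'} \cap V$. Under this identification, the $\mr{Cl}(V)$-action of $v\in V$ on $S = S_+$ corresponds to the operator $\rho(v)\rho(e_0)$, where $\rho$ denotes the $\mr{Cl}(V')$-action on $S_+ \oplus S_-$. Since $\rho(e_0)\colon S_+ \to S_-$ is invertible and anticommutes with $\rho(v)$ for $v\perp e_0$, vanishing of $\rho(v\cdot e_0)Q$ is equivalent to vanishing of $\rho(v)Q$, which yields the claim.

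The forward direction is then a quick dimension count: writing $n = 2m+1$, if $\dim T_Q^{V'} = m+1$, then intersecting with the codimension-one subspace $V\sub V'$ gives $\dim T_Q^V \geq m$, while the isotropy bound in $V$ forces $\dim T_Q^V \leq m$, so equality holds and $Q$ is pure in $V$.

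The backward direction is the main obstacle. Assuming $L := T_Q^V$ is maximal isotropic of dimension $m$ in $V$, we need to exhibit a vector $w\in T_Q^{V'} \setminus V$ in order to raise $\dim T_Q^{V'}$ from $m$ to $m+1$. The strategy is to pick $v_0 \in L^\perp \sub V$ lifting a generator of the one-dimensional quotient $L^\perp/L$, and to take $w = v_0 + ce_0$ with $c\in\CC$ chosen so that $w$ is isotropic in $V'$; this ensures $L + \CC w \sub V'$ is maximal isotropic. The remaining step is to arrange that $\rho(w)Q = 0$, equivalently that $\rho(v_0)Q \in S_-$ is a nonzero scalar multiple of $\rho(e_0)Q \in S_-$. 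I expect this proportionality to follow from a Schur-type argument built on Proposition \ref{prop:purespinorstabilizer}(1) applied to the putative maximal isotropic $L + \CC w$ in $V'$: the line of spinors annihilating $L+\CC w$ is the one-dimensional determinant representation $\det(L+\CC w)^{1/2}$ of the stabilizer $\Spin(V')_{L+\CC w}$, and equivariance of the evaluation map $V' \to S_-$, $v \mapsto \rho(v)Q$, under the subgroup $\Spin(V)_L \sub \Spin(V')_{L+\CC w}$ should pin down $\rho(v_0)Q$ up to scalar, permitting a suitable choice of $c$.
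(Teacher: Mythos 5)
Your setup and the forward direction are correct: the identification of the $\mr{Cl}(V)$-action of $v$ with $\rho(v)\rho(e_0)$ under $\mr{Cl}(V)\cong\mr{Cl}(V')_{\mr{even}}$, together with the anticommutativity $\rho(v)\rho(e_0)=-\rho(e_0)\rho(v)$ and invertibility of $\rho(e_0)$, does give $T_Q^V = T_Q^{V'}\cap V$, and the dimension count then handles one implication cleanly. (Minor quibble: over $\RR$ one needs $(e_0,e_0)=-1$ for $v\mapsto ve_0$ to respect the quadratic forms, but over $\CC$ this is immaterial.)

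The backward direction as written has a genuine gap. Your ``Schur-type argument'' appeals to the line of spinors annihilating the maximal isotropic $L+\CC w$, but $w$ is precisely what you are trying to construct, so the argument is circular. Moreover, the logic of ``choose $c$ for isotropy, then arrange $\rho(w)Q=0$'' has the causality reversed: once one knows $\rho(v_0)Q = \lambda\,\rho(e_0)Q$, the scalar $c=-\lambda$ is forced, and isotropy of $w$ is then \emph{automatic} because the nullspace $T_Q^{V'}$ of any spinor is isotropic -- it is not a separate constraint you get to impose. The missing ingredient that closes the argument is the elementary dimension count that
\[
\{s\in S_-\colon \rho(L)s=0\}
\]
is exactly one-dimensional. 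This follows from the explicit model: pick any maximal isotropic $\tilde L\supset L$ in $V'$, identify $S_+\oplus S_-\cong\wedge^\bullet\tilde L$ (up to the $\det^{1/2}$ twist), and observe that the annihilator of $L$ is $L_\wedge\wedge\wedge^\bullet(\tilde L/L)$, a two-dimensional space with its two basis vectors in adjacent degrees, hence one in $S_+$ and one in $S_-$. Since both $\rho(v_0)Q$ and $\rho(e_0)Q$ lie in the one-dimensional space $\{s\in S_-:\rho(L)s=0\}$ (they are killed by $L$ because $v_0,e_0\in L^\perp$), they are proportional, and $\rho(e_0)Q\neq 0$ by invertibility of $\rho(e_0)$, so $\lambda$ exists. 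One also needs $\lambda\neq 0$, which holds because $\rho(v_0)Q=0$ with $v_0\in V$ would place $v_0\in T_Q^{V'}\cap V = L$, contradicting the choice of $v_0$. With this lemma in place your construction of $w$ goes through. (An alternative, and Chevalley's own route, is to invoke the parity constraint $\dim T_Q^{V'}\equiv (n+1)/2 \pmod 2$ for Weyl spinors: together with $m\le\dim T_Q^{V'}\le m+1$ this immediately forces $\dim T_Q^{V'}=m+1$.)
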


Now, let us use these facts to investigate pure spinors in various dimensions.
\begin{itemize}
 \item If $n=2, 4, 6$ every non-zero Weyl spinor is pure.
 \item If $n=1, 3, 5$ every non-zero spinor is pure.
 \item If $n=8$ we have a single constraint for a Weyl spinor to be pure, that $F_0 = 0$. That is, the spinor satisfies $\langle Q,Q \rangle = 0$.  All Weyl spinors satisfy $\Gamma(Q,Q) = 0$ however, so in this dimension squaring to zero is strictly weaker than being pure.  As we will see in Section \ref{dim8_section} all impure square-zero spinors in dimension 8 are in fact topological.
 \item If $n=7$ we can use Proposition \ref{prop:oddimensionpurespinor} to conclude that a spinor $Q$ is pure if and only if $\langle Q, Q\rangle_8=0$, where $\langle -, -\rangle_8$ is the scalar-valued spinor pairing in 8 dimensions.
 \item If $n=10$ the constraint for a Weyl spinor to be pure is that $F_1 = 0$. That is, a Weyl spinor $Q$ in dimension 10 is pure if and only if $\Gamma(Q,Q) = 0$.
 \item If $n=9$ we can again use Proposition \ref{prop:oddimensionpurespinor} to conclude that a spinor $Q$ is pure if and only if $\Gamma_{10}(Q, Q) = 0$.
 \end{itemize}
 
To summarise, there are non-trivial pure spinor constraints only in dimensions at least 7.  In the classification of twists we will refer to purity only in those large dimensions where it is a non-trivial condition.

\subsection{The Stress-Energy Tensor} \label{stressenergy_section}

Often topologically twisted theories carry a natural action of all vector fields extending the action of the Poincar\'{e} algebra. In this section we explain how this action can be constructed. The results in this section will not be used in the rest of the paper, but instead help to motivate the concept of the topological twist by means of the following structure: topologically twisted field theories are often automatically \emph{generally covariant}, meaning they carry an action of the group of diffeomorphisms of spacetime.

\begin{definition}
Let $g_0$ be the standard metric on $\RR^n$ of signature $(p,q)$ with constant coefficients.  The local Lie algebra of \emph{Killing vector fields} is the sheaf of dg Lie algebras given by
\[L_{\mr{Kill}} = \big(T_{\RR^n} \to \sym^2 T_{\RR^n}\big)\]
placed in degrees 0 and 1, where the differential sends a vector field $X$ to the Lie derivative $\LL_X(g_0)$, and where the Lie bracket is likewise given by the Lie derivative.
\end{definition}

Note that the Poincar\'{e} algebra $\mf{iso}(n)$ acts by Killing vector fields on $\RR^n$, so we have a natural morphism $\mf{iso}(n) \to L_{\mr{Kill}}(U)$ for each open set $U\sub \RR^n$.

\begin{definition}
A prefactorization algebra on $\RR^n$ is \emph{isometry-invariant} if carries a smooth $\ISO(p, q)$-action and the infinitesimal Poincar\'e algebra action extends to an inner action $\nu\colon L_{\mr{Kill}}\rightarrow \obs(\RR^n)[-1]$ of $L_{\mr{Kill}}$.  The \emph{gravitational stress-energy tensor} $T_{\mr{grav}}$ of an isometry-invariant factorization algebra $\obs$ is the composite
\[T_{\mr{grav}}\colon \sym^2 T_{\RR_n}\longrightarrow L_{\mr{Kill}}[1]\longrightarrow \obs(\RR^n).\]
\end{definition}

\begin{remark}
The gravitational stress-energy tensor of a field theory models infinitesimal deformations of the background metric $g_0$. Note that this is not the same as the ``canonical'' stress-energy tensor modelling the conserved currents of the global translation action.  These two tensors are related by a correction term involving the conserved currents for rotations: this is known as the \emph{Belinfante-Rosenfeld formula} (see for instance \cite{FernandezGarciaRodrigo} for a rigorous account of this relationship).
\end{remark}

\begin{remark}
Given a quantum field theory arising from a classical Lagrangian description where the fields are tensors, i.e. sections of $T_{\RR^n}^{\otimes a}\otimes  (T^*_{\RR^n})^{\otimes b}$, we have a natural action of the Lie algebra of vector fields on $\RR^n$ on the space of fields. This gives a isometry invariant structure on the factorization algebra of observables $\obs$ where the gravitational stress-energy tensor is
\[T_{\mr{grav}}(\delta g) = \int \left.\frac{\delta S}{\delta (g^{\mu\nu}(x))}\right|_{g=g_0} \delta g^{\mu\nu}(x),\]
the variation of the action with respect to the metric.

Typical supersymmetric field theories involve fields which are sections of spinor bundles, so they are not locally isometry invariant in this sense. However, consider a twisting homomorphism $\phi\colon \SO(p, q)\rightarrow G_R$ so that under the induced $\SO(p, q)$-action the fields have integer spins (i.e. are tensors). Then we obtain that the prefactorization algebra of observables $\obs$ is locally isometry invariant with respect to the twisted $\SO(p, q)$ action. If we moreover assume that the supercharge $Q$ is compatible with the twisting homomorphism $\phi$, the twisted prefactorization algebra $\obs^Q$ is also locally isometry invariant.
\end{remark}

Let us fix a supersymmetric factorization algebra $\obs$ on $\RR^n$ with a twisting homomorphism $\phi\colon \SO(p, q)\rightarrow G_R$ and a compatible supercharge $Q$. Moreover, suppose that the twisted factorization algebra $\obs^Q$ is locally isometry invariant.

\begin{lemma} \label{exact_SE_tensor_lemma}
Suppose $\obs$ is equipped with a map
\[\Lambda\colon \Gamma(\RR^n, \sym^2 T_{\RR^n})\rightarrow \obs(\RR^n)[-1]\]
satisfying the following equations:
\begin{enumerate}
\item $\d_Q \Lambda(\delta g) = T_{\mr{grav}}(\delta g)$.

\item $\Lambda(\LL_{[v, w]} g_0) = [\nu(v), \Lambda(\LL_w(g_0))] + [\Lambda(\LL_v(g_0)), \nu(w)] - [\Lambda(\LL_v g_0), \Lambda(\LL_w g_0)]$.
\end{enumerate}

Then the Poincar\'e action on $\obs^Q$ extends to an infinitesimal action of the tangent sheaf $T_{\RR^n}$ as a sheaf of Lie algebras.
\end{lemma}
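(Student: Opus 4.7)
The strategy is to modify the existing map $\nu|_{T_{\RR^n}}$ by the correction term $\Lambda(\LL_{(-)} g_0)$, which cancels the obstruction preventing $\nu(X)$ from being $\d_Q$-closed when $X$ fails to be Killing. Concretely, for a vector field $X \in T_{\RR^n}$ I would define
\[\tilde\nu(X) := \nu(X) - \Lambda(\LL_X g_0) \in \obs(\RR^n)[-1].\]
Since $\LL_X g_0 = 0$ for a Killing vector field $X$, this tautologically agrees with $\nu$ on the Poincar\'{e} subalgebra $\mf{iso}(n) \subset T_{\RR^n}$, so $\tilde\nu$ indeed extends the existing action.

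The first thing to verify is that $\tilde\nu(X)$ is a cocycle (so that it defines an inner derivation of the twisted factorization algebra $\obs^Q$). Since $\nu \colon L_{\mr{Kill}} \to \obs(\RR^n)[-1]$ is a dg Lie map and the differential in $L_{\mr{Kill}}$ sends $X$ to $\LL_X g_0$, we have $\d_Q \nu(X) = \nu(\LL_X g_0) = T_{\mr{grav}}(\LL_X g_0)$. Combining this with condition (1), $\d_Q \Lambda(\LL_X g_0) = T_{\mr{grav}}(\LL_X g_0)$, gives $\d_Q \tilde\nu(X) = 0$ immediately.

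The main step is to verify that $\tilde\nu$ is a morphism of sheaves of Lie algebras, i.e.\ that $\tilde\nu([X,Y]) = [\tilde\nu(X), \tilde\nu(Y)]$ in the Lie bracket on $\obs(\RR^n)[-1]$. Expanding bilinearly gives
\[[\tilde\nu(X), \tilde\nu(Y)] = [\nu(X), \nu(Y)] - [\nu(X), \Lambda(\LL_Y g_0)] - [\Lambda(\LL_X g_0), \nu(Y)] + [\Lambda(\LL_X g_0), \Lambda(\LL_Y g_0)].\]
Since $\nu$ is a Lie map on $L_{\mr{Kill}}$, the first term is $\nu([X, Y])$. Matching with $\tilde\nu([X, Y]) = \nu([X, Y]) - \Lambda(\LL_{[X, Y]} g_0)$ then reduces exactly to the identity stated as condition (2). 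The Jacobi identity for $\tilde\nu$ is inherited from the one on $\obs(\RR^n)[-1]$, and compatibility with restriction is automatic since $\nu$, $\Lambda$, and the Lie derivative $X \mapsto \LL_X g_0$ are all morphisms of sheaves.

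The lemma is essentially designed so that no real obstruction arises: conditions (1) and (2) are precisely what is needed to make $\tilde\nu$ closed and Lie, respectively. The most delicate point in a fully rigorous write-up would be verifying the sign conventions for the degree $+1$ Poisson bracket on $\obs$ when translated to the Lie bracket on $\obs(\RR^n)[-1]$, but since all the elements $\nu(X)$ and $\Lambda(\LL_X g_0)$ lie in the same degree (namely $(\obs[-1])^0$), the bracket is ordinary antisymmetric, and no subtle Koszul signs enter the key computation.
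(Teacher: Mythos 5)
Your proposal is correct and follows essentially the same route as the paper: the paper's proof likewise defines $\nu_Q(v) = \nu(v) - \Lambda(\LL_v(g_0))$ and asserts that the dg Lie algebra conditions follow from the two equations on $\Lambda$. You have simply spelled out the cocycle check (via condition (1)) and the bracket check (via condition (2)) that the paper leaves implicit.
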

\begin{proof}
Define a map
\[\nu_Q\colon \Gamma(\RR^n, T_{\RR^n})\rightarrow \obs^Q(\RR^n)[-1]\]
by
\[\nu_Q(v) = \nu(v) - \Lambda(\LL_v(g_0)).\]
The fact that it is a morphism of dg Lie algebras follows from the equations on $\Lambda$.
\end{proof}

\begin{remark}
The Lie algebra of vector fields on $\RR^n$ is a Lie algebra of the group $\mr{Diff}(\RR^n)$ of diffeomorphisms, so we have shown that topologically twisted theories with null-homotopic stress-energy tensor are \emph{generally covariant} (at least infinitesimally -- the infinitesimal $T_{\RR^n}$-action need not necessarily exponentiate).  Witten discusses topological twisting from this perspective in his original paper on topological quantum field theories \cite[Section 6]{WittenTQFT}.
\end{remark}

\begin{remark}
In physical examples where one has a Lagrangian description for a classical field theory with an explicit metric dependence one can check that the action functional consists of a $Q$-exact term plus a metric independent term.  This implies that the variation of the action with respect to the metric, the gravitational stress-energy tensor, is $Q$-exact.
\end{remark}

\subsection{Dilation Actions} \label{dilation_section}
When can we apply the result of Section \ref{E_n_section} to topologically twisted field theories?  Since topological twists are automatically de Rham translation-invariant, according to Corollary \ref{E_n_summary_cor} the factorization algebra $\obs^Q$ of twisted observables admits a canonical $\bb E_n$-algebra structure whenever the factorization map $\mr{fact}_{r,R} \colon \obs^Q(B_r(0)) \to \obs^Q(B_R(0))$ associated to concentric balls is a quasi-isomorphism.  When we have a direct description of the local twisted observables, we can check this condition by hand.  Alternatively, we can abstractly describe quasi-inverses to the factorization maps on concentric balls by means of a \emph{dilation action}.

\begin{definition}
A prefactorization algebra $\obs$ on $\RR^n$ is \emph{dilation-invariant} if it admits a smooth action of the group $\RR_{>0}$ acting on $\RR^n$ by dilations.
\end{definition}

Such a smooth dilation action defines a rescaling map $\alpha_{R,r} \colon \obs(B_R(0)) \to \obs(B_r(0))$ for any pair of radii $r, R$. This rescaling map makes a de Rham translation-invariant factorization algebra into an $\bb E_n$-algebra as long as the dilation action is \emph{itself} homotopically trivial.

\begin{definition} \label{dR_dilation_def}
A dilation-invariant prefactorization algebra $\obs$ on $\RR^n$ is \emph{de Rham dilation-invariant} if the smooth $\RR_{>0}$-action extends to an action of $(\RR_{>0})_{\dR}$.
\end{definition}

\begin{prop} \label{dR_dilation_prop}
Let $\obs$ be a de Rham dilation-invariant prefactorization algebra on $\RR^n$. The rescaling map $\alpha_{R,r}$ provides a quasi-inverse to the factorization map $\mr{fact}_{r,R}$.
\end{prop}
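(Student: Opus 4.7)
The plan is to show that both compositions of $\alpha_{R,r}$ and $\mr{fact}_{r,R}$ are chain-homotopic to the identity, with the homotopies built from the de Rham null-homotopy $\eta(v)$ of the dilation Lie algebra action. I would begin by using the equivariance of the smooth $\RR_{>0}$-action with the prefactorization structure to rewrite each composition as the value at $\lambda = r/R$ of a smooth one-parameter family of chain endomorphisms. For $\lambda \in (0, 1]$, define
\[\Phi_\lambda := \mr{fact}_{B_{\lambda R}(0), B_R(0)} \circ \alpha_\lambda \colon \obs(B_R(0)) \to \obs(B_R(0)),\]
and analogously $\Psi_\lambda$ on $\obs(B_r(0))$. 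The equivariance identity $\alpha_\lambda \circ \mr{fact}_{U,V} = \mr{fact}_{\rho_\lambda U, \rho_\lambda V} \circ \alpha_\lambda|_U$ identifies
\[\mr{fact}_{r,R} \circ \alpha_{R,r} = \Phi_{r/R}, \qquad \alpha_{R,r} \circ \mr{fact}_{r,R} = \Psi_{r/R},\]
and also yields $\Phi_1 = \mr{id}$ together with the semigroup law $\Phi_{\mu\mu'} = \Phi_\mu \circ \Phi_{\mu'}$ for $\mu, \mu' \in (0, 1]$ (with the corresponding facts for $\Psi$).

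The main mechanism is then to differentiate this semigroup and use the null-homotopy. Letting $v$ generate the Lie algebra of dilations, smoothness of the action together with the semigroup property gives $\partial_s \Phi_{e^s} = \Phi_{e^s} \circ \nu(v)$ for $s \le 0$, where $\nu(v)$ is the infinitesimal dilation derivation. The de Rham hypothesis supplies a degree $-1$ derivation $\eta(v)$ with $[\d, \eta(v)] = \nu(v)$, and since each $\Phi_{e^s}$ is a chain map one finds $\Phi_{e^s} \circ \nu(v) = [\d, \Phi_{e^s} \circ \eta(v)]$. Integrating from $s = 0$ down to $s = \log(r/R)$ then produces
\[\Phi_{r/R} - \mr{id} = \left[\d, \int_0^{\log(r/R)} \Phi_{e^s} \circ \eta(v) \, ds \right],\]
exhibiting $\Phi_{r/R}$ as chain-homotopic to the identity. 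The identical argument, with $R$ replaced by $r$, handles $\Psi_{r/R}$ and therefore the second composition.

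The main obstacle is ensuring that the integration of a smooth one-parameter family of chain endomorphisms is well-defined in the category of differentiable chain complexes. This is a standard construction, essentially the same device that Lemma \ref{lm:homotopysection} and the proof of Theorem \ref{dR_translation_invariance_thm} already employ to promote a Lie-algebra null-homotopy to chain-level data for the corresponding group action; the compactness of the integration interval and the smoothness of $s \mapsto \Phi_{e^s}$ make it unproblematic here. In principle one could also sidestep the integration altogether by invoking the closed-form exponential from Lemma \ref{lm:homotopysection} applied to the one-dimensional dilation Lie algebra.
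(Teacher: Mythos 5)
Your argument is correct and is essentially the paper's own proof: both consider the smooth one-parameter family obtained by composing rescaling with the factorization map, identify its derivative in the scaling parameter with postcomposition by the infinitesimal dilation via the smooth-action axiom, and use the de Rham homotopy $\eta$ to exhibit that derivative as exact, hence the family as chain-homotopic to the identity. You are merely more explicit than the paper in two places: in using equivariance to rewrite the second composite $\alpha_{R,r}\circ\mathrm{fact}_{r,R}$ as a family of the same shape on the smaller ball, and in writing out the integrated homotopy rather than just noting that the derivative is null-homotopic.
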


\begin{proof}
Fix $r>0$ and consider the family of endomorphisms $m_s = \mr{fact}_{sr,r} \circ \alpha_{r,sr}$ of the local observables $\obs(B_r(0))$ parameterised by rescalings $0<s\le1$. By the definition of a smooth action, $m_s$ is a smooth family of endomorphisms of $\obs(B_r(0))$ which is the identity at $s=1$. Thus, to show that it is homotopic to the identity, it will be enough to show that $\frac{\d}{\d s} m_s$ is null-homotopic for any $s$.

If we denote by
\[\epsilon = \d \eta + \eta \d\]
the derivation generating the dilation action, then we have
\begin{align*}
\frac{\d}{\d s} m_s(\OO) &= m_s(\epsilon(\OO)) \\
&= \d m_s(\eta(\OO)) + m_s(\eta(\d\OO)).
\end{align*}

Thus, $m_s\circ\eta$ provides a nullhomotopy for $\frac{\d}{\d s} m_s$.
\end{proof}

\begin{remark}
Suppose $\obs$ is a prefactorization algebra on $\RR^n$ valued in $\BD_0$-algebras. Then the factorization maps $\mr{fact}_{r, R}$ are quasi-isomorphisms if and only if they are quasi-isomorphisms at $\hbar = 0$. This can be proved by a spectral sequence argument similar to \cite[Proposition 9.6.1.1]{Book2}. There are de Rham dilation-invariant field theories on the classical level where the dilation action is anomalous on the quantum level; this argument tells us that $\mr{fact}_{r, R}$ is nevertheless still a quasi-isomorphism on the quantum level.
\label{remark:classicaldilationaction}
\end{remark}

Combining Proposition \ref{dR_dilation_prop} and Corollary \ref{E_n_summary_cor} we get the following.

\begin{corollary}
Let $\obs$ be a de Rham dilation-invariant prefactorization algebra on $\RR^n$. Then $\obs(B_1(0))$ can be canonically equipped with the structure of a $\disk_n$-algebra.
\end{corollary}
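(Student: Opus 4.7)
The plan is to obtain the result as a direct consequence of combining Proposition \ref{dR_dilation_prop} with Corollary \ref{E_n_summary_cor}, with the understanding that a de Rham dilation-invariant prefactorization algebra on $\RR^n$ in the sense relevant here also carries a compatible de Rham translation-invariant structure (since $\RR_{>0}$ acts on $\RR^n$ and the setup naturally pairs dilation invariance with the ambient translation invariance one gets from working with prefactorization algebras on $\RR^n$ equipped with the full action of the conformal group $\RR_{>0}\ltimes \RR^n$).

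Given that setup, the first step is to invoke Proposition \ref{dR_dilation_prop}: the rescaling maps $\alpha_{R,r}\colon \obs(B_R(0))\to \obs(B_r(0))$ provide quasi-inverses to the factorization maps $\mr{fact}_{r,R}\colon \obs(B_r(0))\to \obs(B_R(0))$ on concentric balls. In particular, the factorization maps between concentric balls are quasi-isomorphisms for every pair $r\leq R$.

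The second step is to apply Corollary \ref{E_n_summary_cor}. Its hypothesis requires a de Rham translation-invariant prefactorization algebra on $\RR^n$ whose factorization maps $\obs(B_r(0))\to \obs(B_R(0))$ on concentric balls are quasi-isomorphisms; both conditions have been established. The conclusion is that $\obs(B_1(0))$ can be canonically equipped with the structure of a $\disk_n$-algebra, which is precisely the statement of the corollary.

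There is no substantive obstacle once the two preceding results are in hand; the entire content of the proof is assembling them correctly. The only conceptual subtlety worth highlighting is why the canonical structure does not depend on the choice of the radius $1$: this is guaranteed by Theorem \ref{E_n_theorem}, which identifies $\disk_n$-algebras with those $\disk^{\col}_n$-algebras $\{A(r)\}_{r>0}$ for which every unary map $A(r)\to A(R)$ is an equivalence. Under this identification, any radius yields an equivalent $\disk_n$-algebra, so the choice $r=1$ is merely a convenient normalization.
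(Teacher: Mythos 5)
Your proof is correct and follows exactly the paper's route: the paper derives this corollary precisely by combining Proposition \ref{dR_dilation_prop} (the rescaling maps give quasi-inverses, so the factorization maps on concentric balls are quasi-isomorphisms) with Corollary \ref{E_n_summary_cor}. You were also right to flag that de Rham translation-invariance is an implicit standing hypothesis here (as in the context of twisted theories in which this corollary is used), since Corollary \ref{E_n_summary_cor} requires it and the corollary's stated hypothesis mentions only the dilation action.
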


\subsection{Superconformal Theories} \label{superconformal_section}

We can build twisted theories which are de Rham dilation invariant as well as de Rham translation-invariant starting from not just a supersymmetry action, but a \emph{superconformal} action.  We refer to Kac's survey article \cite{KacSurvey} for a comprehensive account of the theory of Lie superalgebras and superconformal algebras.

\begin{definition}
Fix an inner product on $\RR^n$ of signature $(p,q)$. The \emph{conformal group} $\mr{CO}(p,q)$ of signature $(p,q)$ is the group of oriented conformal transformations of $\RR^n$.
\end{definition}

\begin{remark}
We always have an inclusion $\SO(p+1, q+1)\sub \mr{CO}(p, q)$. If $(p, q) = (1, 1), (1, 0)$ or $(0, 1)$, the conformal group is infinite-dimensional. In the other cases this inclusion is an isomorphism.
\end{remark}

Let us denote by $\mf{co}(p, q)$ the complexified Lie algebra of $\mr{CO}(p, q)$.

\begin{definition}
A \emph{superconformal algebra} is a simple super Lie algebra whose bosonic part has form $\mf{co}(p,q) \oplus \gg_R$ for some Lie algebra $\gg_R$, and where the action of $\mf{co}(p,q)$ (if $n\leq 2$, the action of its subalgebra isomorphic to $\so(p+1,q+1)$) on the fermionic part defines a spinorial representation.
\end{definition}

The possible superconformal algebras in dimensions $\ge 3$ were classified by Nahm \cite{Nahm} and Shnider \cite{Shnider}, who showed that they only exist in dimensions up to 6 \footnote{Shnider's definition of a superconformal algebra is slightly different to the one presented here: he only assumes that the bosonic part contains $\mf{co}(p,q)$, but he then requires the existence of an infinitesimal action on super Minkowski space.  However \cite[Lemma 2]{Shnider} ensures that the two definitions coincide.}. 

\begin{definition}
A factorization algebra on $\RR^n$ is \emph{superconformal} if it admits a smooth action of $\mr{CO}(p, q)\times G_R$, so that the action of $\mf{co}(p, q)\oplus \gg_R$ extends to an action of a superconformal algebra $\mf{A}$.
\end{definition}

We can explicitly describe the possible superconformal algebras starting from dimension 6 and working down using Kac's classification of finite-dimensional semisimple super Lie algebras \cite{Kac}. That is, identifying those super Lie algebras where the bosonic part is equivalent to $\so(n+2) \oplus \gg_R$ and where the action on the fermionic part is spinorial.

Let $\mf{A}$ be a superconformal algebra in dimension $n$. Then its even part is
\[\mf{A}_{\mr{even}} = \so(n)\oplus V\oplus V\oplus \CC\cdot D\oplus \gg_R,\]
where the two copies of $V$ correspond to translations and special conformal transformations respectively and where $D$ is the generator of dilations. Its odd part is
\[\mf{A}_{\mr{odd}} = \Sigma\oplus \Sigma^*\]
where $\Sigma$ is a spinorial representation of $\so(n)$. The superconformal algebra $\mf{A}$ contains a subalgebra spanned by $\Sigma$ and translations; it also contains an isomorphic subalgebra spanned by $\Sigma^*$ and special conformal transformations. We also have a bracket
\[\Sigma\otimes \Sigma^*\longrightarrow \so(n)\oplus \CC\cdot D\oplus \gg_R\]
whose $D$ component corresponds to the natural pairing.  This bracket has the property that if $\Sigma = S \otimes W$ the projection $\Sigma \otimes \Sigma^* \to \so(n) \oplus \CC \cdot D$ factors through the natural evaluation pairing on $W$:
\[\Sigma\otimes \Sigma^* \to S \otimes S^* \to \so(n)\oplus \CC\cdot D.\]
Likewise if $\Sigma = S_+ \otimes W_+ \oplus S_- \otimes W_-$ it factors through 
\[\Sigma\otimes \Sigma^* \to S_+ \otimes S_+^* \oplus S_- \otimes S_-^* \to \so(n)\oplus \CC\cdot D.\]
We refer to \cite{Minwalla} for explicit formulas for brackets.

\begin{itemize}
 \item In dimension 6 there are superconformal algebras with $\mc N=(k,0)$ supersymmetry for all $k$ isomorphic to $\osp(8|k)$.
 \item In dimension 5 there is a unique superconformal algebra: the exceptional super Lie algebra $\mf f(4)$ with bosonic part $\so(7) \oplus \sl(2)$. This superconformal algebra contains an $\mc N=1$ supersymmetry algebra.
 \item In dimension 4 there is a superconformal algebra for all $\mc N$ given by $\sl(4|\mc N)$ unless $\mc N=4$ in which case we should quotient by the center to get $\mf{psl}(4|4)$.
 \item In dimension 3 there is again a superconformal algebra for all $\mc N$ isomorphic to $\osp(\mc N|4)$. 
 \item In dimension 2 the superconformal algebras are infinite-dimensional.  For a classification of the possible superconformal algebras see Fattori-Kac \cite{FattoriKac}.  The most notable 2-dimensional superconformal algebras are the supersymmetric extensions of the Virasoro algebra such as the Ramond, Neveu--Schwarz and $\mc N=2$ superconformal algebras. 
 \item In dimension 1 there are many possible superconformal algebras, where we restrict attention to supersymmetric extensions of the small conformal algebra $\so(3) \iso \sl(2)$. We refer to \cite[Section 3.1.3]{Okazaki} for more details.
\end{itemize}

Since superconformal factorization algebras are in particular supersymmetric it makes sense to form their twist by a square-zero supercharge $Q$.  These twisted theories admit a smooth action of the subgroup of the conformal group stabilizing $Q$ just like we discussed in Section \ref{twisting_hom_section}.  As such, we can make a $Q$-twisted factorization algebra $\obs^Q$ dilation invariant whenever we have a twisting homomorphism for the dilation action of the form
\[\psi \colon \RR_{>0} \to G_R\]
compatible with $Q$. Note that since $\RR_{>0}$ is contractible, it is enough to specify the twisting homomorphism on the level of Lie algebras. In fact, this is always possible.

\begin{lemma} \label{dilation_potential_lemma}
If $Q$ is a square-zero supercharge in a superconformal algebra $\mf A$ as follows:
\begin{itemize}
 \item The $\mc N=(2,2)$ superconformal algebra in dimension 2 with $Q$ a topological supercharge.
 \item The $\mc N=k$ superconformal algebra in dimension 3 with $Q$ of rank 2.
 \item The $\mc N=k$ superconformal algebra in dimension 4 with $Q$ of rank at least 2.
 \item The $\mc N=1$ superconformal algebra in dimension 5 with $Q$ of rank 2.
 \item The $\mc N=(k,0)$ superconformal algebra in dimension 6 with $Q$ of rank 4.
\end{itemize}
Then there exists a twisting homomorphism $\psi$ for the dilation group so that $Q$ is invariant for the twisted dilation action.  Furthermore, this twisted dilation action is de Rham.
\end{lemma}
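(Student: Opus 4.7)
The construction of both the twisting homomorphism $\psi$ and the de Rham nullhomotopy will be extracted directly from the superconformal algebra $\mf{A}$. The dilation generator $D \in \mf{A}$ acts on the spinorial part $\Sigma$ with weight $\tfrac12$, so $[D, Q] = \tfrac12 Q$. Since the twisting datum $\alpha\colon U(1) \to G_R$ arranges $Q$ to have $\alpha$-weight $1$, we set $\psi_* := -\tfrac12 \alpha_* \colon \mr{Lie}(\RR_{>0}) \to \gg_R$; because $\RR_{>0}$ is simply connected this integrates to a Lie group homomorphism $\psi\colon \RR_{>0} \to G_R$. The twisted dilation generator $\tilde D := D + \psi_*(1)$ then satisfies $[\tilde D, Q] = 0$, proving the first half of the statement.

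To upgrade the resulting smooth $\RR_{>0}$-action on $\obs^Q$ to a de Rham action in the sense of Definition \ref{htpically_trivial_action_def}, I would exhibit an odd derivation $\eta$ satisfying $\d_Q \eta = \tilde D$, $[\eta, \eta] = 0$ and $[\tilde D, \eta] = 0$. The natural candidate is $\eta = \nu(S)$ for a special conformal supercharge $S \in \Sigma^* \sub \mf{A}$; then $\d_Q \eta = \nu([Q, S])$ and $[\eta, \eta] = \nu([S, S])$, where $[S, S] \in V^*$ lies in the space of special conformal translations. We therefore need $S$ such that (i) $[Q, S] = \tilde D$, i.e.\ the $\so(n)$-component of $[Q, S]$ vanishes while its $\CC\cdot D$ and $\gg_R$ components combine to give $\tilde D$; (ii) $[S, S] = 0$, so that $S$ lies in an abelian section on the special-conformal side; and (iii) $[\tilde D, S] = -\tfrac12 S$, which is automatic once $\psi_*(1)$ acts on $S$ with weight opposite to its action on $Q$.

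The existence of such an $S$ is then checked in each of the five listed cases using the explicit structure of the relevant superconformal algebra: an $\mc{N}=2$ super-Virasoro extension in dimension 2, $\mf{osp}(\mc{N}|4)$ in dimension 3, $\mf{sl}(4|\mc{N})$ (or $\mf{psl}(4|4)$) in dimension 4, $\mf{f}(4)$ in dimension 5, and $\mf{osp}(8|\mc{N})$ in dimension 6. In each case $\Sigma^*$ admits a decomposition $S_\bullet \otimes W^*$ matching the one provided by Proposition \ref{prop:spinorialreps}, and the rank hypothesis on $Q$ guarantees the existence of an abelian subspace $\mf{a}^* \sub \Sigma^*$ containing an element $S$ whose spinorial factors are paired with those of $Q$ while the $W$-factors are chosen so that their evaluation produces the required combination $D + \psi_*(1)$. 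The construction of $S$ is essentially dual to the construction of the abelian section in Proposition \ref{commuting_potential_prop}, carried out on the $\Sigma^*$ side.

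\textbf{Main obstacle.} The heart of the argument is condition (i): ensuring that the $\so(n)$-component of $[Q, S]$ vanishes. The rank hypotheses are precisely what force this to be possible --- for low rank one would in general get a nonzero Lorentz component, but for a topological $Q$ (or, in dimension 4, one of rank at least $2$) there is enough room in $\Sigma^*$ to arrange the spinorial traces in $[Q, S]$ to cancel. The verification relies on an explicit index computation using the bracket formulas of Minwalla \cite{Minwalla} together with the spinorial pairings recalled in Section \ref{pure_spinor_section}, and is carried out case by case; the $\mf{f}(4)$ case in dimension 5 and the $\mf{osp}(8|\mc{N})$ case in dimension 6 require the most care because the R-symmetry algebra is smaller than a full $\gl$-factor, so the choice of $\psi_*(1)$ has to be made compatibly with the specific $\mc{N}$-structure.
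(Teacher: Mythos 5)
Your overall strategy is the right one and matches the paper's: find an odd element $S \in \Sigma^*$ (the paper writes $Z$) whose bracket with $Q$ has vanishing $\so(n)$-component and nonzero $D$-component, so that $\nu(S)$ nullhomotopes the twisted dilation derivation. However there are three concrete issues with the proposal.

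First, you over-constrain the problem by pre-committing to $\psi_* = -\tfrac12\alpha_*$ using a twisting datum $\alpha$. The lemma only asserts that \emph{some} twisting homomorphism $\psi$ exists with the required property, and the paper's proof exploits this freedom: it first finds $Z$ with $[Z,Q] = J - D$ where the $\so(n)$-component vanishes, and only then \emph{defines} $\psi$ via the resulting $J$. (Note $[J-D,Q]=0$ is then automatic because $\ad(Q)^2 = 0$.) By fixing $\psi$ in advance you are asking for an $S$ whose $\gg_R$-bracket component equals $\psi_*(1)$ exactly. Nothing guarantees the $S$ you can find has precisely that $\gg_R$-component, since in general there are many $J' \in \gg_R$ with $[J',Q]$ proportional to $Q$. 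You should let $S$ determine $\psi$, not the other way around. (This also removes the unnecessary assumption that a twisting datum exists.)

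Second, your condition (iii) is misstated. Per Definition \ref{htpically_trivial_action_def} and the remark following it, $G$-equivariance of $\eta$ for the abelian group $\RR_{>0}$ requires $[\eta(1),\tfrac{\partial}{\partial 1}]=\eta([1,1])=0$, i.e. $[\tilde D, S]=0$, not $[\tilde D, S] = -\tfrac12 S$. The arithmetic you sketch ($\psi_*(1)$ of opposite weight on $S$ compared to $Q$) actually produces $[\tilde D, S] = -\tfrac12 S + \tfrac12 S = 0$, which is the correct condition — so this may be a slip rather than a conceptual error, but as written it would misdirect the verification.

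Third, and most substantively, the case-by-case existence verification is where all the actual content lies, and you only describe the shape of the argument without carrying it out. The paper's proof is precisely that verification: in dimensions 3, 4, 5, 6 it uses the decomposition of $\Sigma \otimes \Sigma^*$ into $\so(n)$-irreducibles (exploiting the fact that the $\so(n)\oplus\CC\cdot D$ part of the bracket factors through the evaluation pairing on $W$), and then for each dimension picks an explicit $Z$ — e.g. $Z = s_2\otimes f_1 - s_1\otimes f_2$ in dimensions 3 and 4, $Z = Q$ under the symplectic identification $W\cong W^*$ in dimension 5, a rank-4 element with trace-like image in dimension 6, and the explicit Neveu--Schwarz generator $G^+_{1/2}+\ol G^+_{1/2}$ in dimension 2 — and checks that the $\so(n)$-component of $Q\otimes Z$ vanishes while the $D$-component does not. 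You correctly identify this as the ``main obstacle'' and correctly note that the rank hypotheses are what make the cancellation possible, but the proposal does not establish that the required $S$ exists in any of the listed cases, so the proof is incomplete. (Your condition (ii), $[S,S]=0$, is a genuine requirement that the paper leaves implicit; it's worth checking for the explicit $Z$'s above.)
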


\begin{proof}
We can characterize the data of a dilation twisting homomorphism as an element $J$ of the R-symmetry algebra $\mf g_R$ so that $[J, Q] = [D, Q]$ in the superconformal algebra, where $D$ is the infinitesimal dilation. This twisted dilation action actually makes the twisted theory de Rham dilation invariant if $J - D$ is actually $Q$-exact, i.e. if there exists an odd element $Z$ of the superconformal algebra so that $[Z, Q] = J - D$.  That is, it suffices to find $Z$ so that the $\so(n)$ component of $[Z,Q]$ vanishes but the $D$ component of $[Z,Q]$ does not vanish.  We can check that such elements $Z$ and therefore $J$ exist by considering the classification of superconformal algebras.  That is, we'll consider $Q \otimes Z$ as an element of $\Sigma \otimes \Sigma^*$, and choose $Z$ so that $Q \otimes Z$ vanishes when we project onto the irreducible summands of the $\so(n)$-module $\Sigma \otimes \Sigma^*$ isomorphic to the adjoint representation, but does not vanish when we project those summands corresponding to the trivial representation.

Decompose $Q = \sum s_i \otimes w_i$ as a sum of pure tensors in $\Sigma = S\otimes W$ or $S_+ \otimes W_+ \oplus S_- \otimes W_-$.  Recall that the odd part of the superconformal algebra splits as $\Sigma \oplus \Sigma^*$, and recall that the $\so(n) \oplus \CC \cdot D$ part of the bracket between $\Sigma$ and $\Sigma^*$ factors through the evaluation pairing on $W$, or the separate evaluation pairings on $W_+$ and $W_-$.  We will construct $Z \in \Sigma^*$ in each dimension in turn.
\begin{itemize}
 \item In dimension 2 the $\mc{N}=2$ Neveu--Schwarz algebra is spanned by the Virasoro generators $\{L_n\}_{n\in\ZZ}$, odd elements $\{G^+_i,G^-_i\}_{i\in \frac 12+\ZZ}$, $R$-symmetry generators $\{J_n\}_{n\in\ZZ}$ and the central element $\textbf{1}$ with brackets of the form
\[[G^+_i, G^-_j] = L_{i+j} + \frac{1}{2} (i-j) J_{i+j} + \delta_{i,-j} \left(\frac{i^2}6 - \frac 1{24}\right) c \textbf{1},\]
where $c$ is the central charge.

We consider the $\mc{N}=(2, 2)$ superconformal algebra which is given by the sum of two $\mc{N}=2$ Neveu--Schwarz algebras with generators we denote by $\{G^+_i,G^-_i,L_n,J_n, \textbf{1}\}$ and $\{\overline{G}^+_i, \overline{G}^-_i, \overline{L}_n, \overline{J}_n, \overline{\textbf{1}}\}$.

In particular, for the natural $\mc N=(2,2)$ topological supercharge $Q = G^-_{-1/2} + \overline{G}^-_{-1/2}$ we have
\[[Q, G^+_{1/2} + \overline{G}^+_{1/2}] = L_0 + \overline{L}_0 + \frac{1}{2} (J_0 + \overline{J}_0),\]
where $L_0 + \overline{L}_0$ is the generator of dilations. In particular, there is a twisting homomorphism making $Q$ dilation-invariant.
 \item In dimension 3 $S \iso S^*$ and the tensor product of representations decomposes as $S \otimes S^* \iso \sym^2(S) \oplus \wedge^2(S) \iso \so(3) \oplus \CC$.  Since the bracket between $\Sigma$ and $\Sigma^*$ is $\so(3)$-equivariant, the $\so(3) \oplus \CC \cdot D$ part of the bracket in the superconformal algebra is uniquely determined up to rescaling.  Choose $f_1, f_2 \in W^*$ so that $f_i(w_j) = \delta_{ij}$ for $j=1,\cdots,4$.  Then if $Z = s_2 \otimes f_1 - s_1 \otimes f_2 \in S \otimes W^*$ then, since $Q$ had rank at least 2, the image of $Q \otimes Z$ in $\sym^2(S)$ vanishes, and so $[Q,Z] = D + J$ for some $J$ as required.
 \item In dimension 4 $S_\pm \iso S^*_\pm$ and the tensor product of representations decomposes as $S_+ \otimes S_+^* \oplus S_- \otimes S_-^* \iso \sym^2(S_+) \oplus \sym^2(S_-) \oplus \wedge^2(S_+) \oplus \wedge^2(S_-) \iso \so(4) \oplus \CC \oplus \CC$.  Therefore as in dimension 3 $\so(4)$-equivariant of the bracket, along with the $\CC \cdot D$ term being symmetrical in $+$ and $-$, means the $\so(4) \oplus \CC \cdot D$ part of the bracket in the superconformal algebra is uniquely determined up to rescaling.  Choose $f_1, f_2 \in W_+^* \oplus W_-^*$ so that $f_i(w_j) = \delta_{ij}$ for $j=1,\cdots,4$.  Then if $Z = s_2 \otimes f_1 - s_1 \otimes f_2 \in \Sigma^*$ then, since $Q$ had rank at least 2, the image of $Q \otimes Z$ in $\sym^2(S_+) \oplus \sym^2(S_-)$ vanishes, and so $[Q,Z] = D + J$ for some $J$.
 \item In dimension 5 $S \iso S^*$ and the tensor product of representations decomposes as $S \otimes S^* \iso \sym^2(S) \oplus \wedge^2(S) \iso \so(5) \oplus (V \oplus \CC)$.  Therefore as above the $\so(5) \oplus \CC \cdot D$ part of the bracket in the superconformal algebra is uniquely determined up to rescaling.  Here $W$ has a symplectic structure which induces an isomorphism $W \otimes W^*$, so let $Z = Q = s_1 \otimes w_1 + s_2 \otimes w_2$.  Then, since $Q$ had rank at least 2, $[Z,Q] = \omega(w_1,w_2)(s_1 \otimes s_2 - s_2\otimes s_1)$, which vanishes when projected to $\sym^2(S)$.  It also vanishes when projected to $V$ because $Q$ squares to zero in the supersymmetry algebra.  However since $s_1 \ne s_2$ it does not vanish when projected to $\CC \cdot D$, so $[Q,Z] = \lambda D + J$ for some $J$ and some non zero $\lambda \in \CC$.
 \item In dimension 6 $S_+ \otimes S_+^*$ decomposes into irreducible summands as $\so(6) \oplus \CC$, where we view $S$ as the defining representation of $\sl(4)$, so $S_+ \otimes S_+^* \iso \gl(4)$ and the projection onto $\CC$ is given by the trace.  Again this uniquely determined the $\so(6) \oplus \CC \cdot D$ part of the bracket in the superconformal algebra is uniquely determined up to rescaling.  Choose $Z \in \Sigma^*$ of rank 4 so that the image of $Q \otimes Z$ under the evaluation pairing $\Sigma \otimes \Sigma \to S_+ \otimes S_+^*$ is represented by the matrix $\lambda \cdot \mr{id}_4$ for $\lambda \ne 0$.  This vanishes under the projection to $\so(6) \iso \sl(4)$, but not under the projection to $\CC \cdot D$.  Therefore $[Q,Z] = \lambda D + J$ for some $J$ and some non zero $\lambda \in \CC$.
\end{itemize}
\end{proof}

\begin{remark}
In dimension 3 and larger this result is best possible: if $Q$ has lower rank then there is never a twisting homomorphism for the dilation group coming from the superconformal algebra.  However as we will see explicitly in Section \ref{twisted_theories_section} all topological supercharges have rank at least 2, and in dimension 6 $\mc N=(k,0)$ all topological supercharges have rank at least 4.  The Lemma therefore includes all topological twists of superconformal theories.
\end{remark}

Altogether, this analysis along with our main Theorem \ref{E_n_theorem} tells us the following.
\begin{theorem} \label{superconformal_theorem}
Let $\obs$ be a superconformal prefactorization algebra where $\mf A$ is one of the superconformal algebras occuring in Lemma \ref{dilation_potential_lemma}.  Let $Q$ be a topological supercharge acting on $\obs$. Then the complex of twisted observables $\obs^Q(B_1(0))$ has the structure of a $\disk_n$-algebra.
\end{theorem}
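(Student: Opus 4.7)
The plan is to assemble the theorem from three ingredients already developed in the paper: de Rham translation invariance of the twisted theory, de Rham dilation invariance via the superconformal structure, and the passage from de Rham dilation invariance to the $\bb E_n$-structure. The overall strategy is to verify that the hypotheses of Corollary \ref{E_n_summary_cor} hold for $\obs^Q$.

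First I would observe that $Q$, being a topological supercharge, admits an abelian section $\mf{a}\sub\mf{A}_{\mr{odd}}$ with $[Q,-]\colon \mf{a}\to V$ an isomorphism. Indeed, the listed superconformal algebras live in dimensions $1$ through $6$, and topological supercharges in these dimensions satisfy the hypotheses of Proposition \ref{commuting_potential_prop} (the dimension $4$ case where $Q$ has components in both $S_+$ and $S_-$ is handled by the remark following that proposition). A twisting datum $(\alpha,Q)$ is obtained from the embedding of $\so(2)$ generated by the R-charge that scales $Q$. By Proposition \ref{topologically_twisted_theories_are_dR_translation_invt_prop} the twisted prefactorization algebra $\obs^Q$ is then de Rham translation-invariant.

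Second, I would invoke Lemma \ref{dilation_potential_lemma}. The remark following that lemma states that every topological supercharge in the superconformal algebras listed there has rank at least $2$ (and at least $4$ in the dimension $6$, $\mc N=(k,0)$ case), so topological $Q$'s automatically satisfy the rank hypotheses. Consequently we obtain a twisting homomorphism $\psi\colon \RR_{>0}\to G_R$ such that $Q$ is invariant under the $\psi$-twisted dilation action, and moreover this twisted dilation action is de Rham in the sense of Definition \ref{dR_dilation_def}. Thus $\obs^Q$ is de Rham dilation-invariant.

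Third, Proposition \ref{dR_dilation_prop} then ensures that the factorization maps $\mr{fact}_{r,R}\colon \obs^Q(B_r(0))\to \obs^Q(B_R(0))$ are quasi-isomorphisms for all $r\leq R$, with quasi-inverses provided by the rescaling maps. Combining de Rham translation invariance with this quasi-isomorphism condition, Corollary \ref{E_n_summary_cor} (i.e.\ the combination of Theorem \ref{dR_translation_invariance_thm} and Theorem \ref{E_n_theorem}) upgrades the collection $\{\obs^Q(B_r(0))\}_{r>0}$ from a $\disk_n^{\col}$-algebra to an object in the essential image of $\ialg_{\disk_n}(\cC)\hookrightarrow \ialg_{\disk_n^{\col}}(\cC)$, endowing $\obs^Q(B_1(0))$ with a canonical $\disk_n$-algebra structure. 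The only real subtlety, and the step that requires the most care, is matching the rank assumptions in Lemma \ref{dilation_potential_lemma} with the topological condition on $Q$; once this matching is verified dimension by dimension using the classification in Section \ref{twisted_theories_section}, the rest of the argument is a direct application of the machinery already set up.
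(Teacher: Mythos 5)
Your proposal is correct and follows essentially the same route as the paper's own proof: de Rham translation invariance via Proposition \ref{commuting_potential_prop} and Proposition \ref{topologically_twisted_theories_are_dR_translation_invt_prop}, de Rham dilation invariance via Lemma \ref{dilation_potential_lemma} (with the rank hypotheses supplied by the classification, as in the remark following that lemma), and then Proposition \ref{dR_dilation_prop} together with Corollary \ref{E_n_summary_cor}. The only extra claim you add --- that a compatible twisting datum $(\alpha,Q)$ can always be produced --- is not actually needed, since the twist and the de Rham translation-invariance argument go through in the merely $\ZZ/2$-graded setting (cf.\ Remark \ref{rmk:twistedgrading}); otherwise your argument matches the paper's.
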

\begin{proof}
By our analysis above, since $\obs$ is superconformal there is a twisted dilation action on $\obs$ that descends to make $\obs^Q$ de Rham dilation-invariant.  According to Proposition \ref{dR_dilation_prop} that means the factorization map associated to concentric balls is a quasi-isomorphism.

By the classification of topological supercharges in these dimensions performed in Section \ref{twisted_theories_section} and Proposition \ref{commuting_potential_prop}, we may find an odd abelian subalgebra $\mf{a}\sub \mf{A}$ such that $[Q, -]\colon \mf{a}\rightarrow \CC^n$ is an isomorphism. Therefore, by Proposition \ref{topologically_twisted_theories_are_dR_translation_invt_prop} $\obs^Q$ is de Rham translation-invariant.

Combining these two statements, by Corollary \ref{E_n_summary_cor} we get that $\obs^Q(B_1(0))$ has the structure of a $\disk_n$-algebra.
\end{proof}

This is not the only thing we will use superconformal symmetry for.  In several 2, 3 and 4-dimensional examples in Section \ref{twisted_theories_section} we will verify that using the action of the superconformal algebra it is possible to find potentials for the infinitesimal $\so(n)$-action that endow the twisted factorization algebra with an $\SO(n)_{\dR}$-structure. Let us write out explicitly the conditions for an $(\SO(n)\ltimes \RR^n)_{\dR}$-structure.

\begin{definition}
Let $\mf{A}$ be a superconformal algebra and $Q\in\mf{A}$ a topological supercharge compatible with the twisting homomorphism $\phi\colon \so(n)\rightarrow \gg_R$. A \emph{potential for affine transformations} is a linear map $\psi\colon \so(n)\oplus V\rightarrow \mf{A}_{\mr{odd}}$ with an abelian image which satisfies the following equations:
\begin{align*}
[Q, \psi(v)] &= v,& &\forall v\in V \\
[Q, \psi(r)] &= r + \phi(r),& &\forall r\in\so(n) \\
[\psi(r), v] &= \psi([r, v]),& &\forall r\in \so(n), v\in V \\
[\psi(v), r + \phi(r)] &= -\psi([r, v]),& &\forall r\in \so(n), v\in V \\
[\psi(v), w] &= 0,& &\forall v,w \in V \\
[\psi(r_1), r_2 + \phi(r_2)] &= \psi([r_1, r_2]),& &\forall r_1,r_2\in \so(n).
\end{align*}
\label{def:affinepotential}
\end{definition}

Combining Proposition \ref{dR_dilation_prop} and Corollary \ref{Efr_n_summary_cor} we get the following statement.

\begin{prop}
Suppose $\obs$ is a superconformal prefactorization algebra where $\mf A$ is one of the superconformal algebras occuring in Lemma \ref{dilation_potential_lemma}. Let $Q$ be a topological supercharge compatible with a twisting homomorphism $\phi\colon \so(n)\rightarrow \gg_R$ and choose a potential for affine transformations. Then $\obs^Q(B_1(0))$ can be equipped with a structure of a framed $\bb E_n$-algebra.
\end{prop}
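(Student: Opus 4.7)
The plan is to verify the hypotheses of Corollary \ref{Efr_n_summary_cor} (applied with $G = \SO(n)$): namely, that $\obs^Q$ is de Rham translation-invariant with an $\SO(n)_{\dR}$-structure in the sense of the Gstructure section, and that the factorization maps $\obs^Q(B_r(0))\rightarrow \obs^Q(B_R(0))$ on concentric balls are quasi-isomorphisms. The main work goes into producing the homotopy trivialization of the full affine action $\so(n)\ltimes V$; the factorization condition is immediate from dilation invariance.

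First I would dispose of the factorization map condition. Since $\mf{A}$ is one of the superconformal algebras listed in Lemma \ref{dilation_potential_lemma} and $Q$ is (topological, hence of sufficient rank), that lemma provides a twisting homomorphism $\psi_D$ for the dilation group such that the $\psi_D$-twisted dilation action on $\obs^Q$ is de Rham. Applying Proposition \ref{dR_dilation_prop} to this de Rham dilation-invariant structure, the rescaling maps yield quasi-inverses to the factorization maps on concentric balls.

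The heart of the argument is to use the potential $\psi\colon \so(n)\oplus V\rightarrow \mf{A}_{\mr{odd}}$ to produce the required $(\SO(n)\ltimes\RR^n)_{\dR}$-structure. Following Definition \ref{htpically_trivial_action_def} applied to $\widetilde{G}=\SO(n)\ltimes\RR^n$ and $\widetilde{\gg}_0=\widetilde{\gg}=\so(n)\ltimes V$ (where $\so(n)$ acts via the $\phi$-twisted action), I would define
\[\eta\colon \so(n)\ltimes V\longrightarrow \der(\obs^Q)[-1],\qquad \eta(x) := \nu(\psi(x)).\]
Since $\psi$ has $\alpha$-weight $-1$ (as $Q$ has weight $1$ and $\so(n)\oplus V$ has weight $0$), the derivation $\nu(\psi(x))$ lands in degree $-1$ as required. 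The first two equations in Definition \ref{def:affinepotential} give $[d_Q,\eta(x)] = \nu([Q,\psi(x)])$ equal to $\nu(v)=\partial/\partial v$ for $x=v\in V$ and to $\nu(r+\phi(r))$ for $x=r\in\so(n)$, which is precisely the $\phi$-twisted infinitesimal action. The abelian image condition gives $[\eta(x),\eta(y)] = \nu([\psi(x),\psi(y)])=0$. The remaining four bracket identities in Definition \ref{def:affinepotential} are exactly the conditions required to check the $\widetilde{G}$-equivariance of $\eta$ in the form displayed in the remark after Definition \ref{htpically_trivial_action_def}, split according to whether the two inputs lie in $V$ or in $\so(n)$: for instance, $[\psi(r),v] = \psi([r,v])$ translates to $\eta([r,v]) = [\eta(r),\partial/\partial v]$, and $[\psi(v),r+\phi(r)] = -\psi([r,v])$ translates to $\eta([v,r]) = [\eta(v),\partial/\partial r]$ under the twisted rotation action.

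With these verifications in hand, $\obs^Q$ becomes de Rham translation-invariant with a $(\SO(n))_{\dR}$-structure, and the factorization maps on concentric balls are quasi-isomorphisms. Applying Corollary \ref{Efr_n_summary_cor} (with $G=\SO(n)$) then endows $\obs^Q(B_1(0))$ with the structure of a framed $\bb E_n$-algebra. The only real subtlety in the argument is the bookkeeping to match the six defining conditions on $\psi$ against the homotopical and equivariance axioms for a $(\SO(n)\ltimes\RR^n)_{\dR}$-action; once one sets up the dictionary between brackets in $\mf A$ and derivations on $\obs^Q$, each identity is a direct translation.
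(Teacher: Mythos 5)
Your proof is correct and takes essentially the same route as the paper, which simply states that the result follows by combining Proposition \ref{dR_dilation_prop} and Corollary \ref{Efr_n_summary_cor} (with Lemma \ref{dilation_potential_lemma} supplying the de Rham dilation action). You have spelled out the one nontrivial step the paper leaves implicit — that the six identities in Definition \ref{def:affinepotential} are precisely the closedness, square-zero, and $\widetilde G$-equivariance conditions of Definition \ref{htpically_trivial_action_def} for $\eta = \nu\circ\psi$ on $\so(n)\ltimes V$ — and the bookkeeping checks out.
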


\section{A Catalogue of Twisted Theories in Various Dimensions} \label{twisted_theories_section}

\setcounter{subsection}{-1}
\subsection{Conventions}
In this section we will describe the possible twists for each supersymmetry algebra in dimensions 1 to 10, and give references to discussions in the literature of the corresponding twisted field theories.  In general we will restrict attention to theories where the fields have spins at most one; in terms of supersymmetry algebras this means we only consider algebras with at most 16 supercharges (this is a condition on the representation theory of super Poincar\'e algebras, see e.g. the original classification of Nahm \cite{Nahm}).  In low dimensions however (below dimension five) our analysis will include all supersymmetry algebras without needing to impose this restriction.  

In each dimension we will first describe the Lorentz group and supertranslation Lie algebra.  By the ``Lorentz'' group we will actually mean the spin group (which we will write in Euclidean signature). We will also mention what happens to the supertranslation algebra when we compactify it to one dimension lower.

We will then classify square-zero supercharges, giving a complete classification of orbits of square-zero supercharges under the action of the complexified Lorentz group, R-symmetry and scaling.  We also explain ``how topological'' they are, i.e. the dimension of their image in the Lie algebra $V$ of translations.  Recall that if the image consists of all translations, we say the supercharge is \emph{topological}. If it is half-dimensional, we say it is \emph{holomorphic}. In the intermediate cases we say $Q$ has \emph{$n$ invariant directions} if the dimension of the image of $[Q, -]$ in $V$ is $n$.  We will often refer to the following invariant.

\begin{definition} \label{rank_definition}
Given a supercharge $Q\in S\otimes W$ in odd dimensions, it defines a morphism $S^*\rightarrow W$.  The \emph{rank} $\rk(Q)$ of $Q$ is the dimension of the image of this map.  We will denote this image by $W_Q$. Similarly, in even dimensions $Q_+ + Q_-\in S_+\otimes W_+\oplus S_-\otimes W_-$ which defines two maps $S_+^*\rightarrow W_+$ and $S_-^*\rightarrow W_-$.  The \emph{rank} $\rk(Q)$ of $Q$ is a pair of integers given by the dimensions of the images $W_{Q_+}, W_{Q_-}$ of these two maps.
\end{definition}

We continue by describing the twisting homomorphisms from the Lorentz group to the group of R-symmetries, and the space of topological supercharges compatible with each twisting homomorphism (i.e. twisted Lorentz-invariant). In some cases we will only describe a twisting homomorphism from a subgroup $G$ of the Lorentz group in which case we get a $G$-structured theory. In many cases we can additionally find a $\mr U(1)$ subgroup in the R-symmetry group making a correspondingly twisted theory $\ZZ$-graded rather than just $\ZZ/2$-graded (see Remark \ref{rmk:twistedgrading}).  Recall that such a $\mr U(1)$ subgroup chosen so that $Q$ has weight one is called a choice of \emph{twisting datum}.

\begin{remark}
In some cases we can find both a twisting homomorphism $\phi \colon G \to G_R$ from a subgroup $G$ of the Lorentz group \emph{and} a $\mr U(1)$ subgroup $\alpha \colon \mr U(1) \to G_R$ in the R-symmetry group making the twisted theory $\ZZ$-graded.  However these two structures are not always compatible: sometimes $\alpha$ cannot be chosen to commute with $\phi$.  In this case the $\ZZ/2$-graded theory is $G$-structured but the $\ZZ$-graded theory is only framed.  Concretely, the twisting homomorphism $\phi$ induces a decomposition of the auxiliary space $W$ into irreducible $G$-representations, and one must check whether there exists a $\mr U(1)$-subgroup of $G_R$ that acts by a scalar on each summand.  
\end{remark}

\subsection{Dimension 1}
The Lorentz group is $\Spin(1) \iso \ZZ/2$.  The Dirac spinor representation is the sign representation $S = \CC$ with trivial pairing $S \otimes S \to \CC$.

Pick a vector space $W$ equipped with a symmetric non-degenerate pairing. The supertranslation algebra is
\[\CC \oplus \Pi(S \otimes W)\]
with $\dim W = \mc N$.  The R-symmetry group is $\mr O(\mc N)$.

A supercharge $Q \in W \otimes S$ squares to zero if the image $W_Q$ of $S^*\rightarrow W$ -- as in Definition \ref{rank_definition} -- is isotropic. These square-zero supercharges are always topological. The group of R-symmetries $\O(W)$ acts transitively on the space of isotropic lines $W_Q\sub W$, so the orbits of square-zero supercharges are classified by the rank.

This supercharge can be promoted to twisting data, i.e. we can find a copy of $\mr U(1) \sub \mr O(\mc N)$ acting on $Q$ with weight one.

\begin{example}[Topological quantum mechanics]
The theory of $\mc N=2$ supersymmetric quantum mechanics with target space $X$, a Riemannian spin manifold, admits a topological twist equivalent to topological quantum mechanics on $X$.  This theory has a $Q$-exact stress-energy tensor, and has partition function computing the $\hat A$-genus of $X$ \cite{WittenMorse,AlvarezGaume, GradyGwilliam}.
\end{example}

\subsection{Dimension 2}
The Lorentz group is $\Spin(2) \iso \mr{U}(1)$. The semi-spin representations are $S_+=\CC$ and $S_-=\CC$ with weights $\frac{1}{2}$ and $-\frac{1}{2}$ respectively. The vector representation is $V=\CC_1\oplus \CC_{-1}$. We have identifications
\[\sym^2(S_+)\cong S_+^{\otimes 2}\cong \CC_1\]
and
\[\sym^2(S_-)\cong S_-^{\otimes 2}\cong \CC_{-1}.\]

We denote the bases of $S_+$ and $S_-$ as $s_+$ and $s_-$ respectively.

The subrepresentations $\CC_1,\CC_{-1}\sub V$ are only real in the split signature in which case they correspond to the light-cone coordinates $\partial_+ = \partial_x - \partial_t$ and $\partial_- = \partial_x + \partial_t$.

Pick complex vector spaces $W_+, W_-$ equipped with symmetric nondegenerate pairings. The supertranslation algebra is
\[V\oplus \Pi(S_+\otimes W_+\oplus S_-\otimes W_-),\]
where $\dim W_+ = \mc{N}_+$ and $\dim W_- = \mc{N}_-$. The R-symmetry group is $\O(\mc{N}_+)\times \O(\mc{N}_-)$.

A supercharge $Q=Q_++Q_-$ defines two subspaces $W_{Q_+}\sub W_+$ and $W_{Q_-}\sub W_-$ as in Definition \ref{rank_definition}.

\begin{lemma}
A supercharge $Q$ is square-zero if and only if the subspaces $W_{Q_{\pm}}\subset W_\pm$ are totally isotropic.
\end{lemma}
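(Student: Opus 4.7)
The plan is to exploit the chirality decomposition of the pairing $\Gamma$ in dimension $2$ and then use the fact that $S_+$ and $S_-$ are one-dimensional. First I would invoke Proposition \ref{prop:spinorialreps} in the case $\dim V\equiv 2\pmod 8$: it gives only same-chirality pairings $\Gamma_\pm\colon \sym^2(S_\pm)\to V$, with no $\so(V)$-equivariant map $S_+\otimes S_-\to V$. (Equivariantly this is already visible from $\mr U(1)$-weights: $S_+\otimes S_-$ sits in weight $0$, whereas $V=\CC_1\oplus\CC_{-1}$ carries no weight-zero subspace.) Writing $Q=Q_++Q_-$ with $Q_\pm\in S_\pm\otimes W_\pm$, this forces
\[
[Q,Q] \;=\; [Q_+,Q_+]+[Q_-,Q_-], \qquad [Q_\pm,Q_\pm]\in\CC_{\pm 1}\subset V.
\]

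Since the two summands live in distinct $\mr U(1)$-weight subspaces of $V$, the condition $[Q,Q]=0$ is equivalent to the pair of conditions $[Q_+,Q_+]=0$ and $[Q_-,Q_-]=0$. The problem thus reduces, chirality by chirality, to showing that $[Q_\pm,Q_\pm]=0$ if and only if $W_{Q_\pm}\subset W_\pm$ is totally isotropic.

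Next, since $S_\pm$ is one-dimensional with basis $s_\pm$, every $Q_\pm\in S_\pm\otimes W_\pm$ has the form $Q_\pm=s_\pm\otimes v_\pm$ for a unique $v_\pm\in W_\pm$. The associated linear map $S_\pm^*\to W_\pm$ sends $\xi\mapsto \xi(s_\pm)v_\pm$, so its image is the line $W_{Q_\pm}=\CC\cdot v_\pm$. Expanding the bracket on the supertranslation algebra in terms of $\Gamma_\pm$ and the symmetric bilinear form $\langle -,-\rangle_{W_\pm}$ on $W_\pm$ gives
\[
[Q_\pm,Q_\pm]\;=\;\Gamma_\pm(s_\pm,s_\pm)\cdot\langle v_\pm,v_\pm\rangle_{W_\pm}.
\]
Because $\Gamma_\pm(s_\pm,s_\pm)$ is a nonzero generator of $\CC_{\pm 1}\subset V$, vanishing of this expression is equivalent to $\langle v_\pm,v_\pm\rangle_{W_\pm}=0$, which is precisely the condition that the line $W_{Q_\pm}=\CC v_\pm$ be totally isotropic.

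There is no real obstacle here: once Proposition \ref{prop:spinorialreps} is used to rule out cross-chirality contributions to $[Q,Q]$, the rest is a short linear-algebra computation exploiting $\dim S_\pm=1$. The only point to double-check is the explicit normalization of $\Gamma_\pm$ on the symmetrised tensor $s_\pm\otimes v_\pm$, but any nonzero normalization suffices for the stated equivalence.
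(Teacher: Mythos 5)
Your proof is correct and takes essentially the same approach as the paper's: the paper's proof also reduces to showing $[Q,Q]=0$ iff $[Q_+,Q_+]=0$ and $[Q_-,Q_-]=0$, and then identifies each of these with the isotropy of $W_{Q_\pm}$. You simply spell out in more detail the absence of cross-chirality terms (via $\mathrm U(1)$-weights) and the explicit factorization $[Q_\pm,Q_\pm]=\Gamma_\pm(s_\pm,s_\pm)\langle v_\pm,v_\pm\rangle_{W_\pm}$, which the paper leaves implicit.
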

\begin{proof}
$Q^2 = 0$ if and only if $Q_+^2 = 0$ and $Q_-^2 = 0$. Each of these conditions is equivalent to the condition that the span of $Q_{\pm}$ in $W_{\pm}$ is isotropic.
\end{proof}

The group $G_R$ of R-symmetries $\O(W_+)\times \O(W_-)$ acts transitively on the set of totally isotropic subspaces, so the orbits of square-zero supercharges are classified by their rank.

The maximal supersymmetry possible for theories with fields of spin $\leq 1$ is $\mc N=(8, 8)$.

\paragraph{$\underline{\boldsymbol{\mc N=(1, 0)$, $\mc N=(1, 1)}}$:}

There are no square-zero supercharges.

\paragraph{$\underline{\boldsymbol{\mc N=(k, 0)}}$:}

The supercharge $Q_+=a\otimes s_+$ squares to zero if and only if $|a|=0$. Assume $a\neq 0$. Since the bilinear pairing is non-degenerate, there exists a $b\in W_+$ such that $(a, b)\neq 0$. Therefore, the image of $[Q, -]$ consists of $\partial_+$, i.e. this is a holomorphic supercharge.

These holomorphic supercharges can always be promoted to twisting data.

\begin{example}
Witten discussed the holomorphic twist of the $\mc N=(2,0)$ supersymmetric sigma model in \cite{Wittenmirror}: this theory is defined on a general Riemann surface and is sometimes called the ``half-twist'' (since it admits further twists to the A- and B-twists).  Kapustin described this theory in terms of the chiral de Rham complex \cite{KapustinChiraldR}, and more recently the factorization algebra modelling this twisted theory on a general Riemann surface and with general complex target was computed to be the sheaf of chiral differential operators on the target \cite{GorbounovGwilliamWilliams}.
\end{example}

\paragraph{$\underline{\boldsymbol{\mc N=(k, \ell)}}$:}

The non-zero R-symmetry orbits in the space of square-zero supercharges up to rescaling are classified as follows.

\textit{Square-zero supercharges}:
\begin{enumerate}
\item Rank (0, 1): we recover the case of $\mc N=(0, \ell)$ supersymmetry.  These supercharges are holomorphic.
\item Rank (1, 0): we recover the case of $\mc N=(k, 0)$ supersymmetry.  These supercharges are likewise holomorphic.
\item Rank (1, 1): the supercharge is topological.
\end{enumerate}

\textit{Twisting homomorphisms}:
Twisting homomorphisms in dimension 2 are atypical.  In general -- according to Theorem \ref{compatible_twisting_hom_thm} -- supercharges that are compatible with a twisting homomorphism are automatically topological.  In dimension 2 however this is false because the complex vector representation of $\SO(2)$ is reducible.  We can nevertheless investigate twisting homomorphisms and their compatible supercharges and check when they are topological.

\begin{enumerate}
 \item If $\mc N = (2,0)$ twisting homomorphisms $\phi_i \colon \SO(2) \to \SO(2)$ are indexed by winding numbers, i.e. by integers.  There is a compatible holomorphic supercharge if and only if $\phi$ is the identity $\phi_1$ or the inversion map $\phi_{-1}$.
 \item If $\mc N = (2,2)$ twisting homomorphisms $\phi_{i,j} \colon \SO(2) \to \SO(2) \times \SO(2)$ are indexed by pairs of integers.  There is a compatible holomorphic supercharge if and only if at least one of the integers is equal to 1 or $-1$, and a unique compatible topological supercharge if and only if both integers are equal to 1 or $-1$ (that is, in the four cases $\phi_{1,1}, \phi_{1,-1}, \phi_{-1,1}$ and $\phi_{-1,-1}$).
 \item More generally, for $\mc N = (k,\ell)$ twisting homomorphisms $\phi \colon \SO(2) \to \SO(k) \times \SO(\ell)$ are the same as pairs of coweights $(\lambda,\mu)$ for $\so(k)$ and $\so(\ell)$.  There is a compatible topological supercharge if and only if $\lambda$ and $\mu$ are both dual to weights of the defining representation. 
\end{enumerate}

\textit{$\ZZ$-gradings}:
All non-zero supercharges can be promoted to twisting data for a suitable choice of $\SO(2)$ inside the R-symmetry group.  This twisting data can be chosen compatibly with any twisting homomorphism from $\SO(2)$, i.e. to respect the decomposition into $\SO(2)$-weight spaces.

For $\mc N=(2,2)$ superconformal theories we may provide potentials for affine transformations (Definition \ref{def:affinepotential}).

\begin{prop} \label{2d_superconformal_potential}
Let $Q$ be a topological supercharge in the $\mc N=(2, 2)$ superconformal algebra. Then there exists a potential for affine transofrmations.
\end{prop}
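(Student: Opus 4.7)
The plan is to exhibit the potential $\psi$ explicitly using the superconformal generators, extending the computation already carried out in Lemma \ref{dilation_potential_lemma}. Take the standard topological supercharge $Q = G^-_{-1/2} + \overline{G}^-_{-1/2}$ (the case of the other topological supercharge, with $\pm$ swapped on either side, is entirely analogous, as are the further cases arising from rescaling or from replacing $+$ by $-$ on just one side). The translations are generated by $L_{-1}$ and $\overline{L}_{-1}$, and the rotation generator is $R = L_0 - \overline{L}_0$ (since $L_0 + \overline{L}_0$ is dilation). I propose to set
\begin{align*}
\psi(L_{-1}) &= G^+_{-1/2}, \\
\psi(\overline{L}_{-1}) &= \overline{G}^+_{-1/2}, \\
\psi(R) &= G^+_{1/2} - \overline{G}^+_{1/2}.
\end{align*}

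The first step is to verify the two ``twisting'' identities $[Q,\psi(v)] = v$ and $[Q,\psi(r)] = r + \phi(r)$. Using the brackets $[G^+_i, G^-_j] = L_{i+j} + \tfrac{1}{2}(i-j) J_{i+j} + \text{(central)}$ and the fact that the two chiral copies of the Neveu--Schwarz algebra supercommute, one computes $[Q, G^+_{-1/2}] = L_{-1}$ and $[Q, \overline{G}^+_{-1/2}] = \overline{L}_{-1}$, while $[Q, G^+_{1/2} - \overline{G}^+_{1/2}] = (L_0 - \overline{L}_0) + \tfrac{1}{2}(J_0 - \overline{J}_0)$. This identifies the required twisting homomorphism as $\phi(R) = \tfrac{1}{2}(J_0 - \overline{J}_0)$, and is a close analog of the dilation potential already exhibited in the proof of Lemma \ref{dilation_potential_lemma}.

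The second step is to check that the image is abelian. This is immediate: in the $\mc{N}=2$ superconformal algebra $[G^+_i, G^+_j] = 0$, and elements from different chiral copies supercommute, so the four generators $G^+_{\pm 1/2}, \overline{G}^+_{\pm 1/2}$ pairwise anticommute.

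The third step is to verify the four remaining equivariance and commutation identities, all of which reduce to standard Virasoro and R-symmetry weight computations. Using $[L_n, G^\pm_r] = (\tfrac{n}{2} - r) G^\pm_{n+r}$ and $[J_n, G^\pm_r] = \pm G^\pm_{n+r}$, one reads off $[\psi(R), L_{-1}] = [G^+_{1/2}, L_{-1}] = G^+_{-1/2} = \psi([R, L_{-1}])$, and similarly for $\overline{L}_{-1}$; the identity $[\psi(v), r+\phi(r)] = -\psi([r,v])$ then follows from the two previous identities and a direct weight computation (each term cancels correctly because $G^+_{-1/2}$ is a weight $-1/2$ vector under $L_0$ and a weight $+1$ vector under $J_0$, so the factor $\tfrac{1}{2}$ in $\phi$ is precisely what is needed); the identity $[\psi(v), w] = 0$ follows from $[L_{-1}, G^+_{-1/2}] = 0$; and the final identity $[\psi(r_1), r_2 + \phi(r_2)] = 0$ (for $r_1 = r_2 = R$, since $\so(2)$ is abelian) reduces to $[G^+_{1/2}, L_0 + \tfrac{1}{2}J_0] = 0$, which again follows from the weight computation.

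None of the steps present any real obstacle: the nontrivial content is already in the (2,2) superconformal algebra structure and was essentially assembled in Lemma \ref{dilation_potential_lemma}; what remains is the mild observation that splitting $L_0 + \overline{L}_0$ and $L_0 - \overline{L}_0$ separately yields potentials for both the dilation and the rotation simultaneously, and the routine bookkeeping to confirm the equivariance identities. The only subtle point worth highlighting is the sign and coefficient matching in the identity $[\psi(v), r+\phi(r)] = -\psi([r,v])$, where the precise value of $\phi(R) = \tfrac{1}{2}(J_0 - \overline{J}_0)$ (rather than some other multiple of the R-charge) is exactly what is forced by consistency of the two preceding identities.
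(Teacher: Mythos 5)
Your proposal is correct and essentially identical to the paper's proof: the paper also defines the potential by $\psi(L_{-1})=G^{\alpha}_{-1/2}$, $\psi(\overline L_{-1})=\overline G^{-\beta}_{-1/2}$, $\psi(R)=G^{\alpha}_{1/2}-\overline G^{-\beta}_{1/2}$ and verifies the same three brackets, handling all four sign choices $\alpha,\beta=\pm1$ uniformly where you fix one representative and note the others are analogous. Your explicit verification of the remaining equivariance identities fills in what the paper leaves as ``straightforward to check,'' and the computations are right (only note that the relevant eigenvalue of $G^+_{-1/2}$ under $\operatorname{ad}L_0$ is $+1/2$, not $-1/2$, which is what makes the cancellation with the $\tfrac12 J_0$ term work).
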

\begin{proof}
Let $\alpha, \beta=\pm 1$ be choices of signs. By the classification of twisting homomorphisms, only $\phi_{\alpha, \beta}$ are compatible with a topological supercharge.

Let us denote by $\{G_i^+, G_i^-, L_n, J_n, \textbf{1}, \overline{G}_i^+, \overline{G}_i^-, \overline{L}_n, \overline{J}_n, \overline{\textbf{1}}\}$ the generators of the $\mc N=(2, 2)$ superconformal algebra as in the 2d case of Lemma  \ref{dilation_potential_lemma}. Then:
\begin{itemize}
\item The twisting homomorphism $\phi_{\alpha, \beta}$ sends the generator of rotations to $\frac{1}{2} (\alpha J_0 + \beta \overline{J}_0)$.

\item The compatible topological supercharge is $Q = G^{-\alpha}_{-1/2} + \overline{G}^{\beta}_{-1/2}$.

\item The algebra of translations is spanned by $L_{-1}$ and $\overline{L}_{-1}$.

\item The generator of rotations is $L_0-\overline{L}_0$.
\end{itemize}

We have
\begin{align*}
[Q, G^{\alpha}_{-1/2}] &= L_{-1} \\
[Q, \overline{G}^{-\beta}_{-1/2}] &= \overline{L}_{-1} \\
[Q, G^{\alpha}_{1/2} - \overline{G}^{-\beta}_{1/2}] &= L_0 + \frac{\alpha}{2}J_0 - \overline{L}_0 + \frac{\beta}{2} \overline{J}_0.
\end{align*}

Thus, we obtain potentials for translations and $\phi_{\alpha, \beta}$-twisted rotations. It is straightforward to check that these satisfy the equations for a potential of affine transformations.
\end{proof}

As a consequence of the previous Proposition and Theorem \ref{superconformal_theorem}, observables in topological twists of $\mc N=(2,2)$ superconformal field theories form algebras over the operad of framed little disks.

\begin{example}
An essential early example of twisting occurs for the 2d $\mc N=(2,2)$ supersymmetric sigma model associated to a K\"{a}hler manifold $X$ \cite{EguchiYang,Wittenmirror}. On the classical level the theory admits an $\SO(2)\times \SO(2)$ $R$-symmetry group. Let us denote the diagonal embedding by $\SO(2)_V$ and the antidiagonal embedding by $\SO(2)_A$. If $X$ is not Calabi-Yau, only a discrete subgroup of $\SO(2)_A$ acts on the quantum level.

The space of topological supercharges up to rescaling and Lorentz symmetry consists of four points we denote by $Q_A, Q^\dagger_A, Q_B, Q^\dagger_B$ (see e.g. \cite[Chapter 13]{MirrorSymmetry1}). The supercharges $Q_B, Q^\dagger_B$ are compatible with the twisting homomorphism $\SO(2)\rightarrow \SO(2)_A$ (i.e. $\phi_{1, -1}$) and the supercharges $Q_A, Q^\dagger_A$ are compatible with the twisting homomorphism $\SO(2)\rightarrow \SO(2)_V$ (i.e. $\phi_{1, 1}$). Moreover, the supercharges $Q_B, Q^\dagger_B$ extend to a twisting datum using $\SO(2)_V$ and $Q_A, Q^\dagger_A$ extend to a twisting datum using $\SO(2)_A$.

Thus, we see that if $X$ is not Calabi-Yau, the $B$-twists give rise to a $\ZZ$-graded theory which is merely framed while the $A$-twists give rise to a $\ZZ/2$-graded theory which is oriented.

A calculation of these twists in the language of this paper in terms of $\bb E_2$-algebras appearing as twists of the chiral de Rham complex was done by Ben-Zvi, Heluani and Szczesny \cite{BenZviHeluaniSzczesny}.
\end{example}

\begin{example}
One can also consider twists of $\mc N = (2,2)$ supersymmetric Yang--Mills in two dimensions.  Calculations for this topologically twisted theory on a general oriented surface appear for instance in \cite{MatsuuraMisumiOhta}, who show that the the stress-energy tensor is $Q$-exact (equation 2.9 in loc. cit.).
\end{example}

\subsection{Dimension 3}
The Lorentz group is $\Spin(3) \iso \SU(2)$. The vector representation is $V=\CC^3$ and the spinor representation is $S=\CC^2$. We have an isomorphism of representations
\[\sym^2(S) \iso V.\]

Pick a vector space $W$ equipped with a non-degenerate symmetric bilinear pairing. Then the supertranslation algebra is
\[T = V\oplus \Pi(S\otimes W)\]
where $\dim(W) = \mc N$. The R-symmetry group $G_R$ in this case is $\O(\mc N)$.

\begin{lemma}
A supercharge $Q\in S\otimes W$ is square-zero if and only if the the image of $Q\colon S^*\rightarrow W$ is a totally isotropic subspace $W_Q\subset W$.
\end{lemma}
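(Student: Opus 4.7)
My plan is to expand the bracket $[Q,Q]$ in a convenient basis and reduce the square-zero condition to a condition on the pairing of $W$, using that $\Gamma\colon\sym^2(S)\to V$ is an isomorphism in dimension 3.

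First, I would unpack the supertranslation bracket restricted to $\Pi(S\otimes W)$: on pure tensors it is given by
\[
[s\otimes w,\, s'\otimes w'] \;=\; (w, w')\cdot \Gamma(s, s'),
\]
where $(-,-)$ is the symmetric pairing on $W$ and I abuse notation by writing $\Gamma$ also for the symmetric pairing $S\otimes S\to V$ induced by $\Gamma\colon\sym^2(S)\to V$. This is manifestly symmetric in its two arguments (both $(-,-)$ and $\Gamma$ are symmetric), which is the correct parity for the bracket of two odd elements of $\Pi(S\otimes W)$.

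Next, I would fix a basis $\{e_1, e_2\}$ of $S$ and write $Q = e_1\otimes w_1 + e_2\otimes w_2$; by the very definition of $W_Q$ as the image of the map $S^*\to W$ determined by $Q$, we have $W_Q = \mathrm{span}(w_1, w_2)$. Bilinearity of the bracket then gives
\[
[Q,Q] \;=\; (w_1, w_1)\,\Gamma(e_1, e_1) \;+\; 2(w_1, w_2)\,\Gamma(e_1, e_2) \;+\; (w_2, w_2)\,\Gamma(e_2, e_2).
\]

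The key (and essentially only) input is that in dimension 3 the map $\Gamma\colon\sym^2(S)\to V$ is an isomorphism, so the three elements $\Gamma(e_i, e_j)$ with $i\leq j$ form a basis of $V$. Consequently $[Q,Q] = 0$ if and only if $(w_i, w_j) = 0$ for all $i,j\in\{1,2\}$, which is precisely the condition that $W_Q$ is totally isotropic. Since the proof reduces to elementary linear algebra once the formula for the bracket on $S\otimes W$ is in hand, the only ``hard'' step is unwinding the definition of the spinorial bracket on the tensor product $S\otimes W$.
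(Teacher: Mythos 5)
Your proof is correct and is essentially the paper's argument written in a basis: both reduce the statement to the fact that $\Gamma\colon\sym^2(S)\to V$ is an isomorphism in dimension 3, so that $[Q,Q]$ vanishes exactly when the Gram pairings of $W_Q$ do (the paper phrases this coordinate-freely as the vanishing of the composite $\sym^2(S^*)\to\sym^2(W)\to\CC$ viewed inside $\sym^2(S)$).
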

\begin{proof}
The image of $Q\colon S^*\rightarrow W$ is totally isotropic if and only if the composite
\[\sym^2(S^*)\xrightarrow{\sym^2(Q)} \sym^2(W)\longrightarrow \CC\]
is zero, where the second map is given by applying the pairing on $W$.

Similarly, the supercharge $Q$ is square-zero if and only if the above composite considered as an element of $\sym^2(S)$ gives the zero element under the projection $\sym^2(S)\rightarrow V$, but the latter map is an isomorphism, so the two conditions are equivalent.
\end{proof}

The R-symmetry group acts transitively on the set of totally isotropic subspaces of a given rank while the complexified Lorentz group acts transitively on the set of subspaces in $S^*$. So, the orbits of square-zero supercharges are classified by their rank.

The maximal supersymmetry possible for theories with fields of spin $\leq 1$ is $\mc N=8$.

\paragraph{$\underline{\boldsymbol{\mc N=1}}$:}
There are no square-zero supercharges.

\paragraph{$\underline{\boldsymbol{\mc N=2}}$:} $Q$ has to have rank 1. These supercharges have 2 invariant directions.

\textit{Twisting homomorphisms}:
There are no twisting homomorphisms for the full group $\Spin(3)$.  If we choose a subgroup $\Spin(2) \sub \Spin(3)$ acting on a subspace of codimension 1 then there is a twisting homomorphism for each degree $d \in \ZZ$.  There is a unique holomorphic-topological square-zero supercharge compatible with the two twisting homomorphisms of degree $\pm 1$.

\textit{$\ZZ$-gradings}:
All supercharges can automatically be promoted to twisting data using the inclusion $\SO(2) \to \O(2)$.  Since $\O(2)$ is abelian, the twisting data commutes with twisting homomorphisms.

\begin{example}
One can obtain Chern--Simons theory as a deformation of the holomorphic twist of the pure $\mc N=2$ super Yang--Mills theory in dimension 3, see \cite{ACMV} which also discuss the theory coupled to a chiral multiplet.
\end{example}

\paragraph{$\underline{\boldsymbol{\mc N\geq 3}}$:} $ $

\textit{Square-zero supercharges}:
\begin{enumerate}
\item Rank 1. We are in the situation of the holomorphic-topological $\mc N=2$ twist. These supercharges have 2 invariant directions.

\item Rank 2 which is possible only for $\mc N\geq 4$. These supercharges are topological.
\end{enumerate}

\begin{remark}
If $\mc{N} = 4$, an isotropic embedding $S^*\hookrightarrow W$ is Lagrangian. A Lagrangian embedding defines a volume form on $W$. If we consider the R-symmetry group $\SO(4)\sub \O(4)$, it also defines a volume form on $W$ and we can compare the signs of the two volume forms. In particular, the space of Lagrangians in $W$ has two orbits under the $\SO(4)$-action, so we get two inequivalent supercharges.
\label{rmk:3dN4supercharges}
\end{remark}

\textit{Twisting homomorphisms}:
\begin{enumerate}
 \item $\mc N=3$: There is a unique twisting homomorphism given by the projection $\Spin(3) \to \SO(3)$, but there are no supercharges compatible with this twist.

 \item $\mc N=4$: The inclusions $i_1,i_2 \colon \SU(2) \to \Spin(4) \iso \SU(2)\times \SU(2) \to \SO(4)$ into the first and second factor are twisting homomorphisms. There is a unique topological supercharge of rank 2 compatible with each $i_1$ and $i_2$. These twisting homomorphisms are obtained by restricting an $\mc N=2$ twisting homomorphism in four dimensions down to three dimensions.
 
Note that these twisting homomorphisms are equivalent if we consider the R-symmetry group $\O(4)$ instead of $\SO(4)$ (see also Remark \ref{rmk:3dN4supercharges}). Usually only $\SO(4)$ acts on a supersymmetric field theory, so these twisting homomorphisms will typically yield different twisted theories, as we will see in Example \ref{RW_example} below.

 \item $\mc N>4$: According to the Jacobson--Morozov theorem twisting homomorphisms $\phi \colon \SU(2) \to \Spin(\mc N)$ are determined by a choice of nilpotent element $e$ in the Lie algebra $\so(\mc N)$ (uniquely up to conjugation by the centralizer of $e$).  Up to conjugation such nilpotent elements are determined by partitions of $\mc N$ \footnote{More specifically, nilpotent orbits are in bijection with partitions of $\mc N$ where even numbers occur with even multiplicity, with the complication that partitions consisting \emph{entirely} of even elements correspond to two nilpotent orbits \cite[Chapter 5]{CollingwoodMcGovern}.}, which determine in turn how the defining representation $W$ of $\so(\mc N)$ splits as an $\sl(2)$-representation.  There is a compatible topological supercharge if under this splitting there is at least one 2-dimensional irreducible summand, i.e. if the partition of $\mc N$ includes at least one 2.  For instance for $\mc N=8$ there are four possible twisting homomorphisms with this property up to conjugation (see \cite[Example 5.3.7]{CollingwoodMcGovern}).  
\end{enumerate}

\textit{$\ZZ$-gradings}:
It is always possible to choose a homomorphism $\mr U(1) \to \mr O(\mc N)$ so that a given supercharge has weight one, i.e. all supercharges can be promoted to twisting data.  If $\mc N=4$, then these twisting data can be made compatible with the twisting homomorphisms by extending $i \colon \SU(2) \to \SO(4)$ to $\SU(2) \times \mr U(1) \to \SO(4)$ using a maximal torus in the other copy of $\SU(2)$.  The $\SU(2)$-fundamental summand in $W$ then has $\alpha$-weight one.  

As in dimension 2, in the superconformal situation we can find potentials for the twisted rotation action.

\begin{prop} \label{3d_superconformal_potential}
Let $Q$ be the $\mc N=4$ topological supercharge compatible with the twisting homomorphism $\phi=i_1$. Then there exists a potential for affine transformations in the 3d $\mc N=4$ superconformal algebra.
\end{prop}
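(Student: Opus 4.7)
The strategy mirrors that of Proposition \ref{2d_superconformal_potential}: I will build $\psi$ explicitly out of generators of the 3d $\mc N=4$ superconformal algebra $\osp(4|4)$ and verify the conditions of Definition \ref{def:affinepotential} by direct calculation. First I would unpack the structure of $\osp(4|4)$. Its even part decomposes as $\so(5)\oplus \so(3)_L\oplus \so(3)_R$, with $\so(5)$ further split along its parabolic grading as $\so(3)_{\mr{Lor}}\oplus V\oplus V^*\oplus \CC\cdot D$; its odd part is $\Sigma\oplus \Sigma^*$ with $\Sigma = S\otimes W_L\otimes W_R$. The relevant brackets are $\{\Sigma,\Sigma\}\to V$, factoring through $\Gamma\colon \sym^2(S)\to V$ tensored with the symmetric form on $W=W_L\otimes W_R$; $\{\Sigma^*,\Sigma^*\}\to V^*$ analogously; and $\{\Sigma,\Sigma^*\}\to \so(3)_{\mr{Lor}}\oplus \CC\cdot D\oplus \so(3)_L\oplus \so(3)_R$, factoring through the projections $S\otimes S^*\to \so(3)_{\mr{Lor}}\oplus \CC$ and $W\otimes W^*\to \so(3)_L\oplus \so(3)_R\oplus \CC$. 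The twisting homomorphism $\phi=i_1$ identifies $\so(3)_{\mr{Lor}}$ with $\so(3)_L$, so the (unique up to scale) topological supercharge becomes the twisted-Lorentz scalar $Q = \sum_\alpha s_\alpha\otimes s^\alpha\otimes b_+\in \Sigma$, using the induced identification $W_L\cong S^*$ and a chosen $b_+\in W_R$.

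Next I would construct the two halves of $\psi$ in parallel. For translations I would appeal to Proposition \ref{commuting_potential_prop}: the isotropic subspace $S\otimes W_L\otimes b_-\sub \Sigma$ is four-dimensional and $\{Q,-\}$ maps it surjectively onto $V$, so choosing any complement $\mf{a}$ to its one-dimensional kernel gives an abelian section and hence $\psi(v)\in \mf{a}$. For rotations I would search inside the four-dimensional dual subspace $S^*\otimes W_L^*\otimes b^+\sub \Sigma^*$, explicitly compute $\{Q, S^\beta_{B+}\}$, and extract a three-dimensional subspace mapping isomorphically onto $\{r+\phi(r)\colon r\in\so(3)_{\mr{Lor}}\}$. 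The unwanted dilation and $\so(3)_R$-Cartan contributions that appear in the diagonal components $\{Q,S^+_{++}\}$ and $\{Q,S^-_{-+}\}$ cancel in the antisymmetric combination, producing a clean expression for $\psi$ of the Cartan generator; the off-diagonal generators $\psi(E),\psi(F)$ come directly from the off-diagonal $S^\beta_{B+}$.

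Finally I would verify the remaining conditions of Definition \ref{def:affinepotential}. The bracket equations with $Q$ are built into the construction, the condition $[\psi(v), w]=0$ for $w\in V$ is automatic since $[V,\Sigma]=0$ in the superconformal algebra, and the Lorentz-equivariance conditions follow from the $\so(3)_{\mr{Lor}}$-equivariance of the construction together with the Jacobi identity in $\osp(4|4)$. The main obstacle is the abelian image condition, in particular the vanishing of the cross brackets $\{\psi(v),\psi(r)\}\in \so(3)_{\mr{Lor}}\oplus\CC\cdot D\oplus\gg_R$. Internal abelianness of $\psi(V)\sub S\otimes W_L\otimes b_-$ and of $\psi(\so(3))\sub S^*\otimes W_L^*\otimes b^+$ is immediate from the antisymmetry of the symplectic form on $W_R$ applied to $b_-\otimes b_-$ and $b^+\otimes b^+$ respectively. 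The cross brackets \emph{a priori} acquire nontrivial $\so(3)_R$ contributions coming from the $(b_-\otimes b^+)_{\so(3)_R}$ piece of the $W_R$-pairing, and the hard part will be to correct $\psi(V)$ and $\psi(\so(3))$ by appropriate elements of $\ker\{Q,-\}|_\Sigma$ and $\ker\{Q,-\}|_{\Sigma^*}$ so that these contributions cancel. I expect this to reduce to a finite system of linear equations whose solvability can be organized using the residual $\so(3)_R$-equivariance of the bracket and a dimension count of the available kernel corrections.
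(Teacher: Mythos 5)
Your decomposition of $\osp(4|4)$ and the candidate subspaces you identify match what the paper does: the paper restricts $\Sigma \oplus \Sigma^*$ to the subalgebra $\so(3)_{\mathrm{tw}} \oplus \so(3)_2$, identifies $\Sigma \iso (V \oplus \CC) \otimes S_2$, observes that $Q$ spans a line in $\CC \otimes S_2$, and takes $\mf a = V \otimes s \sub \Sigma$ and $\tilde{\mf a} = V \otimes s \sub \Sigma^*$ for a fixed $s \in S_2$ not parallel to the $S_2$-component of $Q$. Your $W_L \otimes b_-$ and $W_L^* \otimes b^+$ sectors are exactly these, in slightly different coordinates, and you correctly note that each of $\mf a$ and $\tilde{\mf a}$ is internally abelian because the $W_R$-pairing vanishes on $b_-\otimes b_-$ and on $b^+\otimes b^+$.

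The genuine gap is in the final paragraph, which is also what you flag as ``the hard part.'' You anticipate nonvanishing cross brackets $[\psi(v),\psi(r)]$ and propose to fix them by adding elements of $\ker[Q,-]$, reducing to a linear system whose solvability you only conjecture. This is not a proof, and the detour is also unnecessary. The paper handles this step with one observation: under the identification $\Sigma \cong \Sigma^*$ as $\so(3)\oplus\gg_R$-modules, the $\CC\cdot D$-component of the bracket $[\Sigma,\Sigma^*]$ is \emph{antisymmetric}, while the $\so(3)\oplus\gg_R$-component is \emph{symmetric}. Because $\mf a$ and $\tilde{\mf a}$ are taken to be the \emph{same} subspace $V\otimes s$ on each side (same $s$), these (anti)symmetry properties, together with the $\so(3)_{\mathrm{tw}}$-equivariance of the bracket and the absence of a copy of the adjoint or trivial representation in the relevant isotypic pieces, force $[\mf a,\tilde{\mf a}]=0$ outright, with no correction needed. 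Your worry about ``$\so(3)_R$ contributions from $b_-\otimes b^+$'' dissolves once one notices that under the symplectic identification $W_R \cong W_R^*$ your $b^+$ corresponds to $b_-$: you are really computing a diagonal bracket, to which the symmetry argument applies directly. As written, the abelianness of $\mathrm{im}(\psi)$ — the one condition that actually needs an argument — is not established in your proposal, so the proof is incomplete at precisely the crucial point.
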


\begin{proof}
Recall that the odd part of the superconformal algebra can always be written as $\Sigma \oplus \Sigma^*$.  In dimension 3 explicitly, as an $\so(3) \oplus \gg_R$-representation we can write
\[\Sigma \oplus \Sigma^* \iso (S \otimes W) \oplus (S \otimes W).\]
We can restrict the $\so(3) \oplus \gg_R$-action in the $\mc N=4$ case to an action of the subalgebra $\so(3)_{\mr{tw}} \oplus \so(3)_2$, where $\gg_R$ splits as $\so(3)_1 \oplus \so(3)_2$ and $\so(3)_{\mr{tw}}$ is embedded in $\so(3) \oplus \gg_R$ using the twisting homomorphism, i.e. using $(1,i_1)$.  As an $\so(3)_{\mr{tw}} \oplus \so(3)_2$-representation we can decompose $\Sigma \oplus \Sigma^*$ as
\[\Sigma \oplus \Sigma^* \iso ((V \oplus \CC) \otimes S_2) \oplus ((V \oplus \CC) \otimes S_2)\]
where $V$ is the vector representation of $\so(3)_{\mr{tw}}$ and $S_2$ is the spin representation of $\so(3)_2$.

With respect to this decomposition $Q$ is a non-zero element of $\CC \otimes S_2 \sub \Sigma$.  Choose a non-zero spinor $s\in S_R$ which is not parallel to $Q$.  We can then define a potential for the translation action to be the subalgebra $\mf a = V \otimes s \sub \Sigma$, and a potential for the rotation action to be the subalgebra $\wt {\mf a} = V \otimes s \sub \Sigma^*$.  These two subalgebras are abelian, we just need to verify that they satisfy the conditions for a potential for affine transformations as in Definition \ref{def:affinepotential}.  By construction the potentials transform correctly under the action of $\so(3)_{\mr{tw}}$.

Recall that the bracket
\[[\Sigma, \Sigma^*]\sub \CC\cdot D\]
is antisymmetric and the bracket
\[[\Sigma, \Sigma^*]\sub \so(3)\oplus \gg_R\]
is symmetric. Therefore $[\mf a, \wt {\mf a}] = 0$.

It remains to check that $Q$ carries $\mf a$ and $\wt {\mf a}$ isomorphically onto the algebras of translations and rotations respectively.  This is clear for $\mf a$.  Since $Q$ is scalar with respect to $\so(3)_2$, the map
\[[Q, -]\colon \wt {\mf a} \cong \so(3)\longrightarrow \so(3)\oplus \gg_R\]
takes the form $(\id, \alpha \iota_1)$ for some constant $\alpha$. But $[Q, [Q, \wt {\mf a}]] = 0$ from which we deduce that $\alpha = 1$, i.e. $\wt {\mf a}$ is the primitive of $\iota_1$-twisted rotations.
\end{proof}

\begin{example} \label{RW_example}
The 3d $\mathcal{N}=4$ supersymmetric $\sigma$-model with target a hyperK\"ahler manifold admits the $R$-symmetry group $\SO(3)\subset \SO(4)$ obtained by quotienting by $\ZZ/2$ the embedding $i_1\colon \SU(2)\rightarrow \Spin(4)$. Let us explain the classification of topological supercharges in this case. Recall that rank 2 supercharges correspond to Lagrangians $S^*\rightarrow W$. The space of Lagrangians in $W$ is isomorphic to $\bb{CP}^1\sqcup \bb{CP}^1$. The action of $\SO(3)\subset \SO(4)$ on the first $\bb{CP}^1$ is trivial and on the second $\bb{CP}^1\cong S^2$ is given by rotations. Thus, the space of orbits is $\bb{CP}^1\sqcup \pt$.

The $\bb{CP}^1$ of topological twists is compatible with the twisting homomorphism $i_1$ and admits a description in terms of the Rozansky--Witten theory \cite{RozanskyWitten}. Note that it does not extend to a twisting datum, so this theory is only naturally $\ZZ/2$-graded. The extra point corresponds to the ``twisted'' Rozansky--Witten theory which is $\ZZ$-graded, but is incompatible with any twisting homomorphism. This twist in certain cases admits a description in terms of the 3d A-model as defined by Kapustin and Vyas \cite{KapustinVyas}.
\end{example}

\begin{example} 
More generally, one can construct three-dimensional $\mc N=4$ supersymmetric \emph{gauged} sigma models with target a hyperk\"ahler manifold \cite{GaiottoWittenJanus}.  If the target is flat, then this theory is superconformal.  Kapustin and Saulina \cite{KapustinSaulina} compute the twist of this theory with respect to the twisting homomorphism $i_1$.  This theory is defined on general oriented 3-manifolds, and is equivalent to a gauged version of Rozansky--Witten theory.  Boundary conditions for this twist in various examples, along with its 3d mirror -- the other twist via $i_2$ -- are discussed in \cite{Gaiotto3d}.
\end{example}

\begin{example}
We can also discuss twists of $\mc N=4$ supersymmetric Yang--Mills theories in three dimensions.  As in the case of supersymmetric sigma models, there are two mirror possibilities, both defined on arbitrary oriented three-manifolds.  These twists have been studied by Blau and Thompson.  The twist by the twisting homomorphism $i_1$ was studied in \cite{BlauThompson1}.  It arises via dimensional reduction from the Donaldson--Witten twist of $\mc N=2$ gauge theory in dimension 4 (Example \ref{DW_example} below), has $Q$-exact stress-energy tensor and an action functional of BF theory type.  This twist can be used to compute the Casson invariant of a 3-manifold.  The other twist, using $i_2$, has been analysed in \cite{BlauThompson2}, where it was shown that its stress-energy tensor is $Q$-exact (see \cite[equation 4.45]{BlauThompson2}.

Blau and Thompson also discussed a topological twist of $\mc N=8$ three-dimensional super Yang--Mills.  Again the theory is defined on arbitrary oriented three-manifolds and has exact stress-energy tensor.  The twisting homomorphism they consider is one of the two where $W$ splits as four copies of the fundamental representation.
\end{example}

\begin{example}
Bullimore, Dimofte, Gaiotto, Hilburn and Kim \cite{BDGH1, BDGH2} have analysed the 3d mirror symmetry relating the Higgs and Coulomb branches of general 3d $\mc N=4$ theories.  They obtain these branches by forming successive twists, first by a holomorphic supercharge $Q$, then by a further square-zero supercharge $Q'$ so that $Q + Q'$ is topological.  There are two complications in this story \footnote{These subtleties were explained to us by Philsang Yoo.}.
\begin{enumerate}
 \item Twisting by $Q$ then by $Q'$ does not give the same result as twisting by $Q+Q'$ (i.e. the Rozansky-Witten and twisted Rozansky-Witten twists of the 3d $\mc N=4$ theory).  Instead there is a spectral sequence from the iterated twist that converges to the twist by $Q+Q'$.
 \item One can also consider mass deformations of the 3d $\mc N=4$ theory.  After performing such deformations only a central extension of the supersymmetry algebra acts, and in this extension the supercharges $Q$ and $Q'$ no longer commute \cite[section 2.2]{BDGH2}.  One can however take the twist by $Q'$ after considering appropriate invariants for the $\mr U(1)$-action generated by $(Q')^2$ (as in the twisted $\Omega$-background).
\end{enumerate}
\end{example}

\subsection{Dimension 4}
The Lorentz group is $\Spin(4) \iso \SU(2)\times \SU(2)$.  We denote its irreducible representations by pairs $[n, m]$ of non-negative integers.  The dimension of such a representation is $(n+1)(m+1)$.

The vector representation $V=\CC^4$ is the representation $[1, 1]$. The spinor representations are $S_+=\CC^2$ of type $[1, 0]$ and $S_- = \CC^2$ of type $[0, 1]$. We have an isomorphism
\[\Gamma \colon S_+\otimes S_-\stackrel{\sim}\rightarrow V.\]

Pick a vector space $W$. Then the supertranslation algebra is
\[T = V\oplus \Pi(S_+\otimes W\oplus S_-\otimes W^*),\]
where $\dim W = \mc N$. The R-symmetry group $G_R$ in this case is $\GL(\mc N;\CC)$.

\begin{lemma}
A supercharge $Q$ squares to zero if and only if $\langle W_{Q_-}, W_{Q_+}\rangle = 0$, where the subspaces $W_{Q_\pm}$ are as in Definition \ref{rank_definition}, and where $\langle -, -\rangle$ is the natural pairing between $W$ and $W^*$.
\end{lemma}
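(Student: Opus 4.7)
The plan is to unpack $[Q,Q]=0$ directly in terms of the two components $Q_+\in S_+\otimes W$ and $Q_-\in S_-\otimes W^*$, using the explicit form of the bracket on the supertranslation algebra. First I would decompose
\[
[Q,Q] = [Q_+, Q_+] + 2[Q_+, Q_-] + [Q_-, Q_-]
\]
and observe that the two ``diagonal'' terms vanish identically for representation-theoretic reasons. The bracket is an $\so(4)$-equivariant map $\sym^2(\Sigma)\to V$, and as $\so(4)$-representations $V\cong S_+\otimes S_-$ is of type $[1,1]$, while $\sym^2(S_+\otimes W)\cong \sym^2(S_+)\otimes\sym^2(W)\oplus \wedge^2(S_+)\otimes \wedge^2(W)$ decomposes into summands of types $[2,0]$ and $[0,0]$; by Schur's lemma there is no nonzero equivariant map $\sym^2(S_+\otimes W)\to V$, and symmetrically none from $\sym^2(S_-\otimes W^*)\to V$. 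This matches the list of allowed spinor pairings in $\dim V\equiv 4\pmod 8$ from Section \ref{susy_algebra_section}, which only provides the pairing $S_+\otimes S_-\to V$. Hence $[Q,Q]=0$ is equivalent to $[Q_+, Q_-]=0$ in $V$.

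Next I would make the cross bracket explicit. The restriction of the bracket to the summand $(S_+\otimes W)\otimes(S_-\otimes W^*)$ of $\sym^2(\Sigma)$ is forced by $\so(4)\oplus \gl(\mc N)$-equivariance to be the tensor product
\[
\Gamma\otimes\langle -,-\rangle\colon (S_+\otimes W)\otimes(S_-\otimes W^*)\longrightarrow V
\]
of the spinor isomorphism $\Gamma$ with the natural evaluation pairing $\langle -, -\rangle\colon W\otimes W^*\to\CC$. Viewing $Q_+$ and $Q_-$ as linear maps $S_+^*\to W$ and $S_-^*\to W^*$ with images $W_{Q_+}$ and $W_{Q_-}$ respectively, this means that $[Q_+, Q_-]\in V$ corresponds under $\Gamma^{-1}\colon V\stackrel{\sim}{\to} S_+\otimes S_-$ to the bilinear form $S_+^*\otimes S_-^*\to\CC$ sending $(s^*, t^*)$ to $\langle Q_+(s^*), Q_-(t^*)\rangle$.

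Finally I would conclude: since $\Gamma$ is an isomorphism, $[Q_+, Q_-]=0$ if and only if this bilinear form vanishes identically on $S_+^*\otimes S_-^*$, and by definition of $W_{Q_\pm}$ as the images of $Q_\pm$, the latter holds if and only if $\langle W_{Q_-}, W_{Q_+}\rangle=0$. The only substantive input is the first step, where the diagonal brackets vanish for representation-theoretic reasons; after that the argument is simply a matter of tracing through the definitions of the rank and image subspaces, so I do not anticipate any real obstacle.
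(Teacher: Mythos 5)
Your proof is correct and matches the paper's (much terser) argument: the paper simply states that both conditions are equivalent to the vanishing of the composite $S_+^*\otimes S_-^*\xrightarrow{Q_+\otimes Q_-} W\otimes W^*\to\CC$, which is precisely the element $\Gamma^{-1}([Q_+,Q_-])$ you identify. Your extra Schur-lemma argument for why the diagonal brackets vanish is sound but redundant here, since Section \ref{susy_algebra_section} already specifies that in $\dim V\equiv 4\pmod 8$ the only pairing is $\Gamma\colon S_+\otimes S_-\to V$, so $[Q_\pm,Q_\pm]=0$ is immediate from the definition of the supertranslation bracket.
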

\begin{proof}
Indeed, both conditions are equivalent to the condition that the composite
\[S_+^*\otimes S_-^*\xrightarrow{Q_+\otimes Q_-} W\otimes W^*\longrightarrow \CC,\]
where the second map is given by the natural pairing.
\end{proof}

Thus, the data of a square-zero supercharge $Q$ is given by specifying a flag $W_{Q_+}\sub W_{Q_-}^{\perp}\sub W$ together with surjective maps $S_+^*\rightarrow W_{Q_+}$ and $S_-^*\rightarrow W_{Q_-}$. The R-symmetry group $\GL(W)$ acts transitively on the set of flags $W_{Q_+}\sub W_{Q_-}^{\perp}\sub W$. Moreover, the complexified Lorentz group $\SL(2; \CC)\times \SL(2; \CC)$ acts transitively on the set of subspaces of $S_+^*$ and $S_-^*$. So, the orbits of square-zero supercharges are classified by their rank.

For any pairs of automorphisms of $W_{Q_+}$ and $W_{Q_-}$ we can find an element of $\GL(W)$ which stabilizes this flag and acts on $W_{Q_+}$ and $W_{Q_-}$ by the given automorphisms, so the orbits of square-zero supercharges are classified by their rank.

Compactification to $d=3$ corresponds to the isomorphism $S^3\cong S_+^4\cong S_-^4$ of $\so(3)$-representations. The auxiliary spaces are identified as $W^3=W^4\oplus (W^4)^*$ with the standard symmetric nondegenerate pairing.

\paragraph{$\underline{\boldsymbol{\mc N=1}}$:}
A supercharge is square-zero if and only if it has rank (1, 0) or (0, 1). Such a square-zero supercharge is holomorphic.  The stabiliser of such a supercharge is isomorphic to $\SU(2)$.  These supercharges can be extended to twisting data using the entire R-symmetry group, which is isomorphic to $\GL(1;\CC) = \CC^\times$.

\begin{example} \label{Costello_4d_twist}
The holomorphic twist of $\mc N=1$ super Yang--Mills has been studied by Johansen \cite{Johansen} in a physical context, and later by Costello \cite{CostelloYangian} in a mathematical context.  Johansen refers to the twist using different but equivalent terminology to that which we have used in this paper: one can consider super Yang-Mills in an external supergravity background then define the twist by giving a definite non-zero value to the ghost for the gravitino.  The twisted theory is equivalent to holomorphic BF theory and defined on an arbitrary complex surface.  This twisted theory can be deformed by the addition of a Chern--Simons term to the action functional of a holomorphic-topological 4d theory.  Costello proves that this deformed theory admits a perturbative quantization whose algebra of local observables is Koszul dual to the Yangian for the gauge group.

One can additionally couple the holomorphically twisted deformed $\mc N=1$ theory to a chiral multiplet.  This twisted coupled theory is discussed in \cite[Section 5]{ACMV}.
\end{example}

\begin{example}
Witten \cite{WittenYMDonaldson} studied the holomorphic twist of mass deformations of $\mc N=2$ Yang-Mills theories, i.e. $\mc N=1$ Yang-Mills theories with a massive adjoint chiral multiplet.  He shows that these twisted theories are defined on compact K\"ahler surfaces, and uses them to compute the Donaldson invariants of such surfaces.
\end{example}

\paragraph{$\underline{\boldsymbol{\mc N=2}}$:}$ $

\textit{Square-zero supercharges}:
\begin{enumerate}
 \item Rank $(1,0)$ and $(0,1)$: these supercharges are holomorphic.
 \item Rank $(2,0)$ and $(0,2)$: these supercharges are topological.
 \item Rank $(1,1)$: square-zero supercharges of this type have 3 invariant directions.
\end{enumerate}

Now let us consider twisting homomorphisms.  These are homomorphisms $\SU(2) \times \SU(2) \to \GL(2;\CC)$.  All such homomorphisms factor through the projection to one of the factors, say the first factor (the other projection is equivalent under the outer automorphism of $\Spin(4)$, although we should note this does not guarantee that the associated twisted theories are equivalent without a further calculation).  There is a unique such map up to post-composition with an automorphism of $\GL(2;\CC)$.

\textit{Twisting homomorphisms}:
\begin{enumerate}
 \item The projection $\phi = \pi_1 \colon \SU(2) \times \SU(2) \to \GL(2;\CC)$ is a twisting homomorphism.  There is a unique compatible topological supercharge of rank $(2,0)$.
 \item Likewise the projection $\phi = \pi_2 \colon \SU(2) \times \SU(2) \to \GL(2;\CC)$ is a twisting homomorphism.  There is a unique compatible topological supercharge of rank $(0,2)$.
\end{enumerate}

\textit{$\ZZ$-gradings}:
We can associate twisting data in $\GL(2;\CC)$ to any square-zero supercharge $(a,b)$.  This twisting datum can be chosen to factor through $\SL(2;\CC)$ for supercharges of type $(1,0)$ and $(1,1)$, but not those of type $(2,0)$, where we are forced to choose the diagonal $\CC^\times$ in $\GL(2;\CC)$.  Only the diagonal twisting datum is compatible with either twisting homomorphism, so only supersymmetric field theories with an action of the full R-symmetry group $\GL(2;\CC)$ have the potential to twist to a $\spin(4)$-structured $\ZZ$-graded theory.

\begin{example} \label{DW_example}
There is a famous example of a topologically twisted $\mc N=2$ theory in four-dimensions: that of \emph{Donaldson--Witten theory} as a twist of pure $\mc N=2$ super Yang--Mills theory.  This was one of the original motivating examples for the topological twisting procedure as introduced by Witten \cite{WittenTQFT}.  One twists the $\mc N=2$ gauge theory with respect to a supercharge of type $(2,0)$ and the compatible twisting homomorphism and obtains a theory defined on aribtrary oriented 4-manifolds. Its moduli space of classical solutions models the Donaldson theory moduli space of instantons, and Witten used the twisted theory to interpret Donaldson invariants in terms of correlation functions.

One can twist such an $\mc N=2$ theory coupled to a hypermultiplet valued in some representation $R \oplus R^*$ by the same supercharge and twisting homomorphism \cite{AlvarezLabastida, HyunParkPark}.  In particular, for $G = \SU(2)$ and the fundamental hypermultiplet the twisted theory is related to the theory of Seiberg--Witten invariants (this example was also constructed by Labastida and Mari\~no \cite{LabastidaMarino}). These twisted theories are still defined on arbitrary oriented four-manifolds, and the moduli spaces of solutions now model solutions to the generalized monopole equations.
\end{example}

\begin{example}
The holomorphic twist of $\mc N=2$ super Yang--Mills was recently discussed by Cautis and Williams \cite{CautisWilliams}.  Specifically, they discuss its category of line operators, which are modelled by a coherent version of the Satake category for the gauge group $G$ (for instance using the same analysis as we will discuss in Example \ref{Kapustin_Witten_example}).
\end{example}

\begin{example} \label{Kapustin_twist}
Kapustin \cite{KapustinHolo} studied a holomorphic-topological twist of $\mc N=2$ super Yang--Mills theory coupled to a hypermultiplet, defined on the product of two Riemann surfaces.  This is the twist by a supercharge of type $(1,1)$ with respect to a twisting homomorphism from $\mr U(1) \times \mr U(1) \to \GL(2)$.
\end{example}

We could go on to talk about the $\mc N=3$ supersymmetry, but we will not include it.  In fact, every representation of the $\mc N=3$ supersymmetry algebra with spins at most one arises by restriction from a representation of the $\mc N=4$ supersymmetry algebra, so if one is interested in renormalizable supersymmetric quantum field theories in dimension 4, one does not obtain any new physics by considering $\mc N=3$ theories.  We refer to the discussion in \cite[Section 25.4]{WeinbergIII}.

\paragraph{$\underline{\boldsymbol{\mc N=4}}$:} $ $

\textit{Square-zero supercharges}:
\begin{enumerate}
 \item Rank $(1,0)$ and $(0,1)$: these supercharges are holomorphic.
 \item Rank $(2,0)$ and $(0,2)$: these supercharges are topological.
 \item Rank $(1,1)$: square-zero supercharges of this type have 3 invariant directions.
 \item Rank $(2,1)$ and $(1,2)$: square-zero supercharges of this type are topological.
 \item Rank $(2,2)$: square-zero supercharges of this type are topological.
\end{enumerate}

\begin{remark}
This classification of square-zero supercharges also applies for $\mc N > 4$.
\end{remark}

\begin{remark}
A rank $(2, 2)$ supercharge defines embeddings $S_+^*\sub W$ and $S_-^*\sub W^*$. The square-zero condition implies that we have an exact sequence
\[0\rightarrow S_+^*\rightarrow W\rightarrow S_-\rightarrow 0.\]
If we consider the full R-symmetry group $\GL(4; \CC)$, there is a single orbit of rank $(2, 2)$ square-zero supercharges. However, if we consider only $\SL(4; \CC)$, we obtain a volume form on $W$. We also have canonical volume forms on $S_-$ and $S_+$. Therefore, we may compare the volume forms using the above exact sequence and this gives an invariant of such supercharges lying in $\CC^\times$. Therefore, the $\SL(4; \CC)$-orbits of rank (2, 2) square-zero supercharges are parametrized by $\CC^\times$.
\end{remark}

We can also classify the twisting homomorphisms.  There are three twisting homomorphisms from $\Spin(4)$ to $\GL(4;\CC)$ (all of which factor through $\SL(4;\CC)$) which admit compatible supercharges, up to automorphisms of the source and target.  See Lozano \cite{Lozano} for a discussion of these twists.

\textit{Twisting homomorphisms}:
\begin{enumerate}
 \item Kapustin--Witten twist \cite{KapustinWitten}: $\phi_{\mr{KW}} \colon \SU(2) \times \SU(2) \inj \GL(4)$ defined by $(A,B) \mapsto \mr{diag}(A,B)$.
       This twist is compatible with a $\bb{CP}^1$ of topological supercharges.  The poles have rank $(2,0)$ and $(0,2)$ and the remaining $\CC^\times$ worth have type $(2,2)$. This twist was previously studied by Marcus \cite{Marcus}.
 \item Vafa--Witten twist \cite{VafaWitten} (see also \cite[Section 3]{Yamron}): $\phi_{\mr{VW}} \colon \SU(2) \times \SU(2) \to \GL(4)$ defined by $(A,B) \mapsto \mr{diag}(A,A)$.
       This twist is also compatible with a $\bb{CP}^1$ of topological supercharges, all of which have rank $(2,0)$.  The outer automorphism of $\Spin(4)$ turns this into $(A,B) \mapsto \mr{diag}(B,B)$ where now the compatible topological supercharges have rank $(0,2)$.
 \item Half twist \cite[Section 2]{Yamron}: $\phi_{\mr{half}} \colon \SU(2) \times \SU(2) \to \GL(4)$ defined by $(A,B) \mapsto \mr{diag}(A,1,1)$.
       This twist is compatible with a single topological supercharge of type $(2,0)$.  The outer automorphism of $\Spin(4)$ turns this into $(A,B) \mapsto \mr{diag}(B,1,1)$ where now the compatible topological supercharge has rank $(0,2)$.
\end{enumerate}

\begin{remark}
More generally, for $\mc N \ge 4$, twisting homomorphisms from $\Spin(4) \to \SL(\mc N; \CC)$ are given by $\mc N$-dimensional representations of $\SU(2) \times \SU(2)$ with at least one irreducible summand isomorphic to $S_+$ or $S_-$.  To count the twisting homomorphisms one must enumerate the partitions of $\mc N-2$ of the form $\sum_{i=1}^k (m_i+1)(n_i+1)$, where $m_i$ and $n_i$ are non-negative integers.
\end{remark}

\textit{$\ZZ$-gradings}:
We can associate twisting data in $\SL(4;\CC)$ to any square-zero supercharge $Q = a+b$. We obtain maps $S_a\hookrightarrow W$ and $W\twoheadrightarrow S_b^*$. Moreover, the square-zero condition implies that the composite is zero. Choose a $\mr U(1)$-subgroup of the R-symmetry group $\SL(4;\CC)$ where $S_a$ has weight 1 and $S_b^*$ has weight $-1$.  If the supercharge $Q$ is compatible with one of the above twisting homomorphisms $\phi$, then this $\mr U(1)$ automatically commutes with the image of $\phi$.

In the $\mc N=4$ superconformal case, we can find a potential for affine transformations \emph{only} for the Kapustin--Witten twist, and generic compatible topological supercharges of rank $(2,2)$.

\begin{prop} \label{4d_superconformal_potential}
Let $Q$ be an the $\mc N=4$ topological supercharge of rank $(2,2)$ compatible with the twisting homomorphism $\phi=\phi_{\mr{KW}}$.  There is a strict potential for the generator of $\phi$-twisted rotations in the 4d $\mc N=4$ superconformal algebra.
\end{prop}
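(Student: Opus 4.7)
My plan is to follow the pattern of Proposition \ref{3d_superconformal_potential}, now exploiting the representation theory of $\so(4) \cong \sl(2)\oplus \sl(2)$. The odd part of the $4d$ $\mc N=4$ superconformal algebra decomposes as $\Sigma\oplus \Sigma^*$, where $\Sigma = (S_+\otimes W)\oplus (S_-\otimes W^*)$ and $W\cong \CC^4$ is the defining representation of $\gg_R = \sl(4;\CC)$. The twisting homomorphism $\phi_{\mr{KW}}$ splits $W$ as $W_1\oplus W_2$, with $W_1$ transforming as $S_+$ and $W_2$ as $S_-$ under the twisted action $\so(4)_{\mr{tw}}$. I will construct $\psi\colon \so(4)\to \Sigma^*$ satisfying $[Q,\psi(r)] = r + \phi_{\mr{KW}}(r)$.

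Using the $[m,n]$ notation for $\so(4)$-irreducibles, a direct computation shows that $\Sigma^*$ contains exactly one copy each of $[2,0]$ and $[0,2]$ as $\so(4)_{\mr{tw}}$-subrepresentations, sitting in $\sym^2(S_+^*)\sub S_+^*\otimes W_1^*$ and $\sym^2(S_-^*)\sub S_-^*\otimes W_2$ respectively. I will define $\psi$ to be the $\so(4)_{\mr{tw}}$-equivariant embedding of $\so(4) = [2,0]\oplus [0,2]$ onto these two summands. Since the image is confined to these subspaces and the bracket $[\Sigma^*,\Sigma^*]\to V$ must land in the irreducible $V\cong [1,1]$, which does not appear in $\sym^2([2,0])\iso [4,0]\oplus [0,0]$, in $\sym^2([0,2])\iso [0,4]\oplus [0,0]$, or in $[2,0]\otimes [0,2]\iso [2,2]$, the image of $\psi$ is automatically abelian.

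By $\so(4)_{\mr{tw}}$-equivariance of the superconformal bracket and Schur's lemma, for $r\in [2,0]$ the element $[Q,\psi(r)]$ lies in the $[2,0]$-isotypic component of $\so(4)\oplus \gg_R$. This isotypic component has multiplicity two: one copy from $\so(4)$ itself and one from $\phi_{\mr{KW}}(\so(4))\sub \sl(W_1)\oplus \sl(W_2)\sub \gg_R$. Therefore $[Q,\psi(r)] = \alpha\, r + \beta\, \phi_{\mr{KW}}(r)$ for some scalars depending on the normalization of $\psi$. Applying $[Q,-]$ once more and using the super Jacobi identity with $[Q,Q]=0$ gives
\[0 = [Q,[Q,\psi(r)]] = -\alpha(r\cdot Q) - \beta(\phi_{\mr{KW}}(r)\cdot Q) = (\beta - \alpha)(r\cdot Q),\]
where I have used the $\phi$-compatibility relation $\phi_{\mr{KW}}(r)\cdot Q = -r\cdot Q$. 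Since $Q$ does not transform trivially under all rotations, I conclude $\alpha = \beta$, and an identical argument works for $r\in [0,2]$.

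The crux of the argument will be verifying that $\alpha\neq 0$, so that $\psi$ may be rescaled to obtain the desired identity on the nose. This reduces to a concrete computation inside the 4d superconformal algebra: as recalled in Section \ref{superconformal_section}, the projection of $[\Sigma,\Sigma^*]\to \so(V)\oplus \CC\cdot D\oplus \gg_R$ factors through the evaluation pairings on the spinor and $W$ factors separately. The invariant components $Q_+\in \wedge^2(S_+)\sub S_+\otimes W_1$ and $Q_-\in \wedge^2(S_-)\sub S_-\otimes W_2^*$ pair nondegenerately with elements of $\sym^2(S_+^*)$ and $\sym^2(S_-^*)$ under these evaluation maps, which secures $\alpha\neq 0$ and concludes the construction. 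The main technical care will be in tracking signs and identifications between $W_i$ and $S_{\pm}$ (via their symplectic forms) so that the same rescaling works uniformly on both the $[2,0]$ and $[0,2]$ summands, which is necessary to promote the scalar identification $\alpha=\beta$ to the coherent statement $[Q,\psi(r)]=r+\phi_{\mr{KW}}(r)$ for all $r\in\so(4)$.
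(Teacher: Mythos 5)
Your proof follows essentially the same strategy as the paper's: decompose $\Sigma^*$ under $\so(4)_{\mr{tw}}$, identify the copies of $\sym^2(S_\pm)$ as the image of the rotation potential, use Schur's lemma to pin down the form of $[Q,\psi(r)]$, and use $[Q,[Q,\psi(r)]]=0$ (from $Q^2=0$) to match coefficients with the twisting homomorphism $\phi_{\mr{KW}}$. Your version is somewhat more explicit in its representation-theoretic bookkeeping: the abelianness argument via $\sym^2([2,0])$, $\sym^2([0,2])$, $[2,0]\otimes[0,2]$ containing no copy of $V=[1,1]$ is cleaner than the paper's appeal to the structure of the pairing, and the $(\alpha,\beta)$-bookkeeping is essentially the paper's "adjoint-equivariant map $f$" argument done irreducible component by irreducible component. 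Note that you need not worry about making "the same rescaling work uniformly" on the $[2,0]$ and $[0,2]$ summands: since $\psi$ is only required to be a linear $\so(4)_{\mr{tw}}$-equivariant map, you may rescale independently on each irreducible summand of the source without affecting equivariance; the constraint $\alpha=\beta$ from the Jacobi identity applies separately on each factor and independently normalizes each rescaling.

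One difference in scope: the paper's proof, despite the proposition's narrower wording, also constructs the translation potential $\mf a = V \sub \Sigma$ and verifies $[\mf a,\wt{\mf a}]=0$ together with the remaining compatibility conditions of Definition \ref{def:affinepotential}, because a full potential for affine transformations is what the unnumbered Proposition at the end of Section \ref{superconformal_section} actually requires. If you intend the result to be used there, you should also observe that the $V$-summand of $\Sigma$ provides the translation potential, verify its abelianness, and check that $[\mf a, \wt{\mf a}]=0$ (which follows because the bracket $[\Sigma,\Sigma^*]$ valued in $\so(4)\oplus\gg_R$ only pairs like summands $S_\pm\otimes S_\pm$). There is also a small notational slip: $\Sigma^* = S_+\otimes W^* \oplus S_-\otimes W$, so the $\sym^2(S_+)$-summand sits in $S_+\otimes W_1^*$, not $S_+^*\otimes W_1^*$; this is harmless since $S_\pm$ are self-dual, but should be cleaned up.
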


\begin{proof}
The proof proceeds in the same manner as Proposition \ref{3d_superconformal_potential}.  Recall that the odd part of the superconformal algebra can always be written as $\Sigma \oplus \Sigma^*$.  In dimension 4 explicitly, as an $\so(4) \oplus \gg_R$-representation we can write
\[\Sigma \oplus \Sigma^* \iso (S_+ \otimes W \oplus S_- \otimes W^*) \oplus (S_+ \otimes W^* \oplus S_- \otimes W).\]
We can restrict the $\so(4) \oplus \gg_R$-action in the $\mc N=4$ case to an action of the subalgebra $\so(4)_{\mr{tw}}$ defined by the twisting homomorphism $\phi_{\mr{KW}}$.  As an $\so(4)_{\mr{tw}}$-representation we can decompose the summands $\Sigma$ and  $\Sigma^*$ as
\[\Sigma \iso \Sigma^* \iso (\sym^2(S_+) \oplus V \oplus \CC) \oplus (\sym^2(S_-) \oplus V \oplus \CC)\]
where $S_\pm$ are the semispin representations of $\so(4)_{\mr{tw}}$ and $V$ is its vector representation.

With respect to this decomposition $Q$ is a linear combination of elements in the two copies of $\CC$ which is non-zero under projection onto either factor.  We can then define a potential for the translation action to be the subalgebra $\mf a = V\sub \Sigma$ (say, the first copy), and a potential for the rotation action to be the subalgebra $\wt{\mf a} = \sym^2(S_+) \oplus \sym^2(S_-)\sub \Sigma^*$.  These two subalgebras are abelian, we just need to verify that they satisfy the conditions for a potential for affine transformations as in Definition \ref{def:affinepotential}.  By construction the potentials transform correctly under the action of $\so(4)_{\mr{tw}}$.

Just as in the three-dimensional case, consider the bracket $[\Sigma, \Sigma^*]$ as a pairing on $\Sigma$ viewed as an $\so(4)_{\mr{tw}}$-module.  Recall that the pairing
\[[\Sigma, \Sigma^*]\sub \CC\cdot D\]
pairs irreducible summands with themselves and the pairing
\[[\Sigma, \Sigma^*]\sub \so(4)\oplus \gg_R\]
pairs the summands $S_\pm \otimes S_\pm$ and $S_\pm \otimes S_\mp$ with themselves. Therefore $[\mf a, \wt {\mf a}] = 0$. 

It remains to check that $Q$ carries $\mf a$ and $\wt {\mf a}$ isomorphically onto the algebras of translations and rotations respectively.  This is clear for $\mf a$.  Consider the map
\[[Q, -]\colon \wt {\mf a} \cong \so(4)\longrightarrow \so(4)\oplus \gg_R\]
and denote it by $(f,\phi)$. Here $f$ is an adjoint-equivariant map $\so(4) \to \so(4)$.  The two projections $\pi_{\pm} \circ f \colon \so(4) \to \so(3)$ can be seen to be non-zero from the explicit formulas for the brackets (see \cite{Minwalla}) which implies $f = \mr{id}$.  Finally $[Q, [Q, \wt {\mf a}]] = 0$ from which we deduce that $\wt{\mf a}$ is compatible with the twisting homomorphism $\phi$, which means $\phi = \phi_{\mr{KW}}$, i.e. $\wt {\mf a}$ is the primitive of $\phi_{\mr{KW}}$-twisted rotations.
\end{proof}

\begin{remark}
If we fix coordinates, we can write the potential for rotations explicitly.  Choose a basis $Q^+_{ai}, Q^-_{bj}$ for the space $\Sigma$ of supercharges and $S^+_{ai}, S^-_{bj}$ for the space $\Sigma^*$ of special supercharges, where $a=1,\ldots, 4$ ranges over a basis for $W$ and the dual basis for $W^*$ (orthonormal for the metric induced by $\phi_{\mr{KW}}$) and $i=1,2$ over a basis for $S_\pm$.  In such a basis we can write the $\bb{CP}^1$ family of topological supercharges as $Q_{(\lambda:\mu)} = \lambda(Q^+_{11} + Q^+_{22}) + \mu(Q^-_{31} + Q^-_{42})$.  We are choosing a generic point in this family, so for notational simplicity set $\lambda=\mu=1$.  To fix a potential, write
\begin{align*}
U_+ &= \langle S^+_{11} - S^+_{22}, S^+_{12} - S^-_{32}, S^+_{12} - S^-_{41}  \rangle \\
\text{and } U_- &= \langle S^+_{21} - S^-_{32}, S^+_{21} - S^-_{41}, S^-_{31} - S^-_{42}  \rangle.
\end{align*}
Then in these coordinates we have $\wt {\mf a} = U_+ \oplus U_-$.
\end{remark}

\begin{remark}
Topological supercharges of type $(2,0)$ or $(0,2)$ automatically commute with all but 4 special supercharges in the $\mc N=4$ superconformal algebra, so there cannot be potentials for the full algebra $\so(4)$ of rotations. However, one can hope for potentials for the subalgebras $\so(3)\sub \so(4)$ or $\so(2)\oplus\so(2)\sub \so(4)$, so that twisted theories yield $\bb E_4^{\SO(3)}-$ or $\bb E_4^{\SO(2)\times \SO(2)}$-algebras.  For instance for the Kapustin--Witten twist and the compatible topological supercharges of either type $(2,0)$ or $(0,2)$ either the space $U_+$ or the space $U_-$ above will generate a subalgebra of the form $[Q,U_\pm] = \so(3;\CC) \sub \so(4;\CC)$.
\end{remark}

\begin{example} \label{Kapustin_twist_2}
Kapustin's Example \ref{Kapustin_twist} defines a holomorphic-topological twist of $\mc N=4$ super Yang--Mills theory when one considers the example of adjoint matter.  This is the twist by a supercharge of type $(1,1)$ with respect to the Kapustin--Witten twisting homomorphism, defined on a product of two Riemann surfaces.
\end{example}

\begin{example}
The Vafa--Witten twist of $\mc N=4$ super Yang--Mills was analysed in \cite{VafaWitten}.  They analyse the twisted theory on general oriented 4-manifolds.  Vafa and Witten showed that the partition function of the twisted theory on $M^4$ computes the Euler characteristic of the moduli space of instantons on $M$.
\end{example}

\begin{example} \label{Kapustin_Witten_example}
Kapustin and Witten computed a $\bb{CP}^1$ family of topological twists of $\mc N=4$ super Yang--Mills theory using the Kapustin--Witten twisting homomorphism in \cite{KapustinWitten}, defined -- generically -- on arbitrary oriented 4-manifolds.  More generally they also discuss a larger, $\bb{CP}^1 \times \bb{CP}^1$-family of twists which are generally only compatible with a twisting homomorphism from $\Spin(2) \times \Spin(2)$: the relevant supercharges are generically topological of type $(2,2)$, but upon flowing to $0$ or $\infty$ in one or both factors the supercharges degenerate.  There are two $\CC^\times$ strata of type $(1,2)$ and two of type $(2,1)$, and four special points: one of type $(2,0)$, one of type $(0,2)$ and two holomorphic-topological points of type $(1,1)$ as in Kapustin's Example \ref{Kapustin_twist_2} above (these dimensionally reduce to topological theories in dimension 2).

In \cite{ElliottYoo} a $\bb{CP}^1$ family of such twists was analysed from the point of view of derived geometry, keeping track of algebraic structures.  This $\bb{CP}^1$ arises as the family of topological twists extending a fixed holomorphic twist.  Generically these twists used supercharges of type $(2,2)$ with poles of type $(2,0)$ and $(1,2)$.  As such, these theories can be defined generically only on products of Riemann surfaces, not on all oriented 4-manifolds.  The holomorphic twist was shown to recover the derived moduli stack of $G$-Higgs bundles, and the family of further topological twists recovers the derived moduli stack of flat $G$-bundles and the de Rham stack of holomorphic $G$-bundles at the two points 0 and $\infty$ in $\bb{CP}^1$.  These are the expected moduli stacks from the point of view of the geometric Langlands program.
\end{example}

\begin{example}
From our point of view we can explain -- without needing to do any calculations -- the result of Setter \cite{Setter} that the Vafa--Witten twist and the Kapustin--Witten twist at parameter $t = 0 \in \bb{CP}^1$ become equivalent after dimensional reduction on a circle.  Indeed, there is a topological supercharge of type $(2,0)$ compatible with both the Vafa--Witten \emph{and} Kapustin--Witten twisting homomorphisms.  Furthermore precomposing these two twisting homomorphisms with the inclusion $\Spin(3) \to \Spin(4)$ they become equivalent.  Therefore, as $\Spin(3)$-structured 4d theories, or as oriented theories after reduction to three dimensions, these two twists are necessarily equivalent.
\end{example}

\subsection{Dimension 5}
The Lorentz group is $\Spin(5)\cong \mathrm{USp}(4)$. Let $V=\CC^5$ be the vector representation and $S=\CC^4$ the spinor representation. We have an equivariant isomorphism
\[\wedge^2(S) \iso \CC\oplus V,\]
where the summand $\CC\rightarrow \wedge^2(S)$ corresponds to the given symplectic structure on $S$ (that is, it is spanned by the associated bivector).

Let $W$ be a symplectic vector space. The supertranslation algebra is
\[T= V\oplus \Pi(S\otimes W)\]
where $\dim W = 2\mc N$. The R-symmetry group $G_R$ in this case is $\Sp(2\mc N, \CC)$.

\begin{lemma}
A supercharge $Q\in S\otimes W$ is square-zero if and only if the map $Q\colon W^*\rightarrow S$ satisfies $Q_* \pi_{W^*} = \lambda \pi_S$ for some $\lambda\in\CC$, where $\pi_{W^*}$ and $\pi_S$ are the Poisson bivectors on $W^*$ and $S$ respectively.
\end{lemma}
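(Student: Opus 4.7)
The plan is to reduce the square-zero condition to a linear-algebraic statement about the push-forward $Q_*\pi_{W^*}\in\wedge^2 S$ using the decomposition $\wedge^2 S \cong \CC\oplus V$.  First I will identify the summands concretely.  The trivial summand $\CC\sub \wedge^2 S$ is spanned by the Poisson bivector $\pi_S$ (being the unique $\Spin(5)$-invariant element up to scale), and the complementary projection $\wedge^2 S \twoheadrightarrow V$ agrees, up to a nonzero scalar, with the equivariant pairing $\Gamma\colon \wedge^2 S \to V$ defining the supertranslation bracket.  Consequently $\ker\Gamma = \CC\cdot\pi_S$, and the condition $Q_*\pi_{W^*}=\lambda\pi_S$ for some $\lambda\in\CC$ is equivalent to $\Gamma(Q_*\pi_{W^*})=0$.

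The heart of the proof is then the invariant identity
\[[Q, Q] = 2\,\Gamma(Q_*\pi_{W^*})\in V\]
(up to a normalisation-dependent nonzero constant).  To set this up I view $Q\in S\otimes W$ as a linear map $Q\colon W^*\to S$ and observe that the supertranslation bracket on $\Sigma=S\otimes W$ is the composite
\[\sym^2(S\otimes W)\twoheadrightarrow \wedge^2 S \otimes \wedge^2 W \xrightarrow{\Gamma\otimes \omega_W} V,\]
since the antisymmetry of both $\Gamma$ and $\omega_W$ forces the $\sym^2\otimes\sym^2$ summand of the symmetric square to be killed.  Under the elementary identification $\pi_{W^*}=\omega_W$ in $\wedge^2 W^*$ (the Poisson bivector on a symplectic vector space is the symplectic form viewed dually), contracting $\wedge^2 Q\in \wedge^2 S\otimes \wedge^2 W$ against $\omega_W$ in the $\wedge^2 W$-factor is literally the same operation as evaluating $\wedge^2 Q$ on $\pi_{W^*}\in \wedge^2 W^*$.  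Applying $\Gamma$ yields the identity.

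The main obstacle will be purely notational: keeping track of the chain of dualities between $\omega_W, \pi_W, \omega_{W^*}$ and $\pi_{W^*}$ consistently.  The cleanest approach is to pick a Darboux basis and verify the nonzero scalar directly, at which point equivariance removes any need to repeat the calculation.  Once the identity is in hand the lemma is immediate: $[Q,Q]=0$ iff $\Gamma(Q_*\pi_{W^*})=0$ iff $Q_*\pi_{W^*}$ lies in the one-dimensional kernel $\CC\cdot\pi_S$, iff $Q_*\pi_{W^*} = \lambda\pi_S$ for some $\lambda\in\CC$.
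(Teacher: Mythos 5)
Your proof is correct and follows the same route as the paper's: identify the kernel of the equivariant map $\wedge^2 S\to V$ with the line $\CC\cdot\pi_S$, and observe that $[Q,Q]$ is, up to a nonzero scalar, $\Gamma(Q_*\pi_{W^*})$ so that the square-zero condition becomes $Q_*\pi_{W^*}\in\ker\Gamma=\CC\cdot\pi_S$. You spell out more carefully why the bracket on $S\otimes W$ factors through the $\wedge^2 S\otimes\wedge^2 W$ summand via $\Gamma\otimes\omega_W$, which the paper's terse proof leaves implicit, but the underlying argument is the same.
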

\begin{proof}
The square-zero condition is equivalent to the composite
\[\CC\xrightarrow{\pi_{W^*}} \sym^2(W^*)\xrightarrow{\sym^2(Q)} \sym^2(S)\longrightarrow V\]
being zero. In other words, $(Q_*\pi_{W^*})\colon \CC \rightarrow \sym^2(S)$ factors through $\pi_S\colon \CC\rightarrow \sym^2(S)$, i.e. through the kernel of the pairing.
\end{proof}

Since we only consider supercharges up to scale, we may assume $\lambda = 1$.

Compactification to $d=4$ corresponds to the isomorphism $S^5\cong S^4_+\oplus S^4_-$ as $\so(4)$-representations. The auxiliary spaces are identified: $W^4=W^5$.

\paragraph{$\underline{\boldsymbol{\mc N=1}}$:}$ $

\textit{Square-zero supercharges}: $Q$ has to have rank 1 in which case it automatically squares to zero. These supercharges have 3 invariant directions. Such supercharges are determined by specifying a embedded lines $W_Q\sub W$ and $W_Q^*\sub S$ as in Definition \ref{rank_definition}. The symplectic group acts transitively on the space of lines, so there is a single orbit of this type.

\begin{remark}
Suppose $Q$ has rank 2. Then we get an embedding $\CC\cong \wedge^2 W^*\hookrightarrow \wedge^2 S$. Since $\pi_S$ has rank 2, this embedding is never proportional to $\pi_S$ and hence there are no square-zero supercharges of this type.
\end{remark}

\textit{Twisting homomorphisms}:
There are no twisting homomorphisms for the full R-symmetry group.  There is however a unique twisting homomorphism from $\Spin(3) \sub \Spin(5)$ for each subgroup induced from a three-dimensional subspace of $V=\CC^5$.  For each such twisting homomorphism there are two compatible rank 1 (therefore square-zero) holomorphic-topological supercharges which are interchanged by the action of $\Spin(5)$.

\textit{$\ZZ$-gradings}:
Every square-zero supercharge can automatically be promoted to compatible twisting data using a suitable maximal torus $\mr U(1) \sub \Sp(2)$.  These twisting data are not compatible with the twisting homomorphism from $\Spin(3)$ These twisting data are not compatible with the twisting homomorphism from $\Spin(3)$ because there is no embedding $\mr U(1) \inj \Sp(2)$ acting on the defining representation with weight one.

\begin{example}
The holomorphic-topological twist of $\mc N=1$ super Yang--Mills in five dimensions is discussed by K\"all\'en and Zabzine \cite{KallenZabzine} (see in particular Section 3.5 of loc. cit.).  They define the twisted theory on arbitrary contact five-manifolds, and in particular the classical solutions to their equations of motion include 5d contact instantons.  This example fits into a more general construction of holomorphic-topological twists of odd-dimensional gauge theories due to Baulieu, Losev and Nekrasov \cite{BaulieuLosevNekrasov} which we will also mention in examples in seven and nine dimensions.
\end{example}

\paragraph{$\underline{\boldsymbol{\mc N=2}}$:}

Pick $Q\in S\otimes W$.

\textit{Square-zero supercharges}:
\begin{enumerate}
\item Rank 1.  As in the $\mc N=1$ case, these supercharges all square to zero. They have 3 invariant directions.

\item Rank 2. Then $Q$ squares to zero if and only if the image of the induced map $S^*\rightarrow W$ is Lagrangian. We have the following two subcases:
\begin{itemize}
\item If the image of the dual map $W^*\rightarrow S$ is symplectic, then the supercharge is topological.

\item If the image of the dual map $W^*\rightarrow S$ is Lagrangian, the supercharge has 4 invariant directions.
\end{itemize}

Since the symplectic group acts transitively on the space of Lagrangian or symplectic subspaces of a given dimension, there is a single orbit of square-zero supercharges in each case.

\item Rank 3. These supercharges never square to zero.

\item Rank 4. Then $Q$ defines an isomorphism $W^*\rightarrow S$. The symplectic group acts transitively on the space of symplectomorphisms, so there is a single orbit of supercharges of this type. This twist is topological.
\end{enumerate}

\textit{Twisting homomorphisms}:
\begin{enumerate}
\item There is a unique twisting homomorphism given by the embedding of the maximal compact subgroup $\Spin(5)\cong \mathrm{USp}(4) \to \Sp(4;\CC)$.  As mentioned above, the tensor square of the spin representation $S$ splits as $\sym^2 S \oplus V \oplus \CC$, so there is a unique compatible topological twist corresponding to the rank 4 supercharge given above.

\item There are additional interesting partial twisting homomorphisms $\Spin(4) \sub \Spin(5)$.  We can use any of the three 4d $\mc N=4$ twisting homomorphisms since they all land in $\Sp(4;\CC) \sub \GL(4;\CC)$.  Their compatible supercharges are those rank 2 and rank 4 supercharges discussed in the section on 4d $\mc N=4$ superalgebras.
\end{enumerate}

\textit{$\ZZ$-gradings}:
For rank 1 and 2 supercharges one can choose an embedding $\mr U(1) \to \Sp(4; \CC)$ so that they have weight 1, i.e. promote the supercharges to twisting data.  These twisting data cannot be chosen compatibly with the partial twisting homomorphisms from $\Spin(4)$.  Indeed under the three 4d $\mc N=4$ twisting homomorphisms $W$ decomposes as either $S_+ \oplus S_-, S_+ \oplus S_+$ or $S_+ \oplus \CC^2$ as a $\Spin(4)$-module.  In each case the two summands are forced to be symplectic subspaces, but a twisting datum $\alpha \colon \mr U(1) \to \Sp(4)$ acting on a compatible topological supercharge with weight one must act on the first summand with weight one in each case, which is impossible.

For rank 4 supercharges there is no compatible twisting datum even independent of twisting homomorphisms, so these theories are only naturally $\ZZ/2$ graded.

\begin{remark}
This is the last dimension where there exists a non-zero twisting homomorphism from the full spin group.
\end{remark}

\begin{remark}[5d Superconformal Theories] \label{5d_superconformal_rmk}
There is an $\mc N=1$ superconformal algebra in dimension 5, but no $\mc N=2$ superconformal algebra.  We might however try to construct potentials for twisted rotations starting from $\mc N=(2,0)$ superconformal theories in dimension 6 and reducing on a circle.  The only topological supercharge and compatible twisting homomorphism is the rank 4 supercharge discussed above, but even in the 6d $\mc N=(2,0)$ superconformal algebra there is \emph{no} potential for the twisted $\so(5)$-action.  Indeed, under this twisting homomorphism we can decompose $S \otimes W$ as $\sym^2(S) \oplus V \oplus \CC$ with $\sym^2(S)\cong \so(5)$ the adjoint representation. Therefore, a potential for the twisted copy of $\so(5)$ would have to be spanned by this summand inside the space of special supersymmetries.  However, this summand is not abelian: the bracket in the supertranslation algebra does not vanish on $\sym^2(S)$. Indeed, for generic elements $Q, Q'\in S$ the bracket $[Q\otimes Q, Q'\otimes Q'] = \Gamma(Q, Q') \omega(Q, Q')$ is nonzero.

Nevertheless, our calculations in 4d $\mc N=4$ allow one to construct a potential for $\so(4)$ under the Kapustin-Witten twisting homomorphism with suitable compatible rank 4 supercharges, or even a potential for $\so(3)$ under a $\Spin(4)$ twisting homomorphism and a compatible rank 2 supercharge.
\end{remark}

\begin{example}
The behaviour of $\mc N=2$ super Yang--Mills theory under the $\Spin(4)$-twisting homomorphism of Donaldson--Witten type was analysed by Anderson \cite{Anderson}, without taking the cohomology by a square-zero supercharge.
\end{example}

\begin{example}
A holomorphically twisted $\mc N=2$ 5d Super Yang--Mills theory using the twisting homomorphism of Anderson was discussed by Qiu and Zabzine \cite{QiuZabzine}.  As in the $\mc N=1$ theory of K\"all\'en and Zabzine discussed above this theory is defined on contact five-manifolds.  The solutions to the equations of motion are related to the Haydys--Witten equations for connections on a contact five-manifold.  The stress-energy tensor is shown to be $Q$-exact.
\end{example}

\begin{example}
The topological twist with respect to both the full twisting homomorphism and a $\Spin(4)$-twisting homomorphism of Kapustin--Witten type, in both cases with the rank 4 topological supercharge, has been studied by Geyer and M\"{u}lsch \cite{GeyerMuelsch} and Bak and Gustavsson \cite{BakGustavsson1, BakGustavsson2}.  In the case of the full twisting homomorphism they describe the twisted theory as ``5d fermionic Chern--Simons theory''.  In other words the action functional in the twisted theory is of Chern--Simons type but where the fundamental field is not a connection but a spinor-valued 2-form.
\end{example}

\subsection{Dimension 6}
The Lorentz group is $\Spin(6)\cong \SU(4)$. Let $V=\CC^6$ be the vector representation. The spinor representations are given by $S_+=\CC^4$ and $S_-=S_+^*$. We have equivariant isomorphisms
\[\wedge^2(S_+)\cong V \text{ and } \wedge^2(S_-)\cong V.\]

Pick symplectic vector spaces $W_+$ and $W_-$. Then the supertranslation algebra is
\[T= V\oplus \Pi(S_+\otimes W_+\oplus S_-\otimes W_-),\]
where $\dim W_+ = 2\mc{N}_+$ and $\dim W_- = 2\mc{N}_-$. The R-symmetry group $G_R$ is $\Sp(2\mc{N}_+, \CC)\times \Sp(2\mc{N}_-, \CC)$.


Compactification to $d=5$ corresponds to the isomorphisms $S^6_+\cong S^6_-\cong S^5$ as $\so(5)$-representations. The 5d auxiliary space is $W^5=W^6_+\oplus W^6_-$.

\paragraph{$\underline{\boldsymbol{\mc N=(1, 0)}}$:}$ $

\textit{Square-zero supercharges}: $Q$ has to have rank 1 in which case it automatically squares to zero. These supercharges are holomorphic.

\textit{Twisting homomorphisms}:
While there are no longer non-trivial twisting homomorphisms from the full spin group, there are twisting homomorphisms from the subgroup $\Spin(4) \iso \Spin(3) \times \Spin(3)$ given by projection onto either the first or second factor.  In either case there is a unique compatible holomorphic supercharge.

\textit{$\ZZ$-gradings}:
All non-zero supercharges can be promoted to twisting data using a maximal torus inside the R-symmetry group.  These twisting data are not compatible with the twisting homomorphisms above because there is no embedding $\mr U(1) \inj \Sp(2)$ acting on the defining representation with weight one.

\begin{example}
A calculation of a holomorphic twist of $\mc N=(1,0)$ super Yang-Mills theories in six dimensions is performed in \cite{BaulieuBossard}.
\end{example}

\paragraph{$\underline{\boldsymbol{\mc N=(1, 1)}}$:} $ $

\textit{Square-zero supercharges}:
\begin{enumerate}
\item Rank $(1, 0)$ and $(0, 1)$: we are in the setting of $\mc N=(1, 0)$ or $\mc N=(0, 1)$ supercharges. These twists are holomorphic.

\item Rank $(1, 1)$: such supercharges again automatically square to zero. They define lines $W^*_{Q_\pm}\sub S_\pm$ and we have the following orbits:
\begin{itemize}
\item If $\langle W^*_{Q_-}, W^*_{Q_+}\rangle = 0$, the supercharge has 4 invariant directions.

\item If $\langle W^*_{Q_-}, W^*_{Q_+}\rangle \neq 0$, the supercharge is topological.
\end{itemize}

\item Rank $(2, 2)$. In this case we have rank 2 subspaces $W^*_{Q_\pm}\sub S_{\pm}$.

\begin{lemma}
The supercharge $Q=Q_+ + Q_-$ is square-zero if and only if we have an exact sequence
\[0\longrightarrow W^*_{Q_+}\longrightarrow S_+\longrightarrow W_{Q_-}\longrightarrow 0\]
compatible with the volume forms on $W_{Q_\pm}$ and $S_+$.
\end{lemma}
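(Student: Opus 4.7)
The plan is to directly compute $[Q,Q]$ and interpret its vanishing as the exact sequence condition. The starting observation is that in dimension $6$ the pairing $\Gamma\colon \sym^2(\Sigma) \to V$ has no mixed component: it decomposes as $\Gamma_+ \oplus \Gamma_-$, each factor built from the canonical map $\wedge^2(S_\pm) \to V$ tensored with the symplectic form on $W_\pm$. Thus $[Q,Q] = [Q_+,Q_+] + [Q_-,Q_-]$, and for a rank-$(2,2)$ supercharge each summand is individually nonzero, so the square-zero condition is really the cancellation $[Q_+,Q_+] = -[Q_-,Q_-]$ in $V$.

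Next I would identify each $[Q_\pm,Q_\pm]$ as a null vector in $V$. For a rank-$(2,2)$ supercharge one has $\dim W_\pm = 2$, so $Q_\pm\colon W_\pm^* \to S_\pm$ is an isomorphism onto the $2$-plane $W^*_{Q_\pm} \sub S_\pm$, and pulling back the dual symplectic form along $Q_\pm$ gives a canonical volume form $\omega_\pm \in \wedge^2 W^*_{Q_\pm}$ with the property that $v_\pm := [Q_\pm,Q_\pm] = \Gamma_\pm(\omega_\pm)$. Under $\Gamma_\pm\colon \wedge^2(S_\pm) \cong V$, decomposable bivectors correspond exactly to null vectors, so $v_\pm$ is null and $W^*_{Q_\pm}$ is the unique $2$-plane whose Plücker line is $\CC v_\pm$.

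The third step is to translate $v_+ + v_- = 0$ into a geometric relation between $W^*_{Q_+}$ and $W^*_{Q_-}$ via Clifford multiplication. For any null $v \in V$ the operator $\rho(v)\colon S_+ \to S_-$ has rank $2$; by the pure spinor analysis of Section \ref{pure_spinor_section}, its kernel is the unique $2$-plane $L_+ \sub S_+$ with $\Gamma_+(\wedge^2 L_+) = \CC v$, and its image is the unique $2$-plane $L_- \sub S_-$ with $\Gamma_-(\wedge^2 L_-) = \CC v$. Applied to $v = v_+$, this identifies $\ker(\rho(v_+)|_{S_+}) = W^*_{Q_+}$ and the image of $\rho(v_+)$ with $W^*_{Q_-}$. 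Using the isomorphisms $S_+ \cong S_-^*$ and $Q_-\colon W_-^* \cong W^*_{Q_-}$, together with the fact that $W_{Q_-} = W_-$ in this rank setting, the sequence
\[0 \to W^*_{Q_+} \to S_+ \xrightarrow{\rho(v_+)} W^*_{Q_-} \to 0\]
rewrites as the exact sequence $0 \to W^*_{Q_+} \to S_+ \to W_{Q_-} \to 0$ appearing in the lemma, with middle map induced by $Q_-$.

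The main obstacle I expect is to verify that ``compatibility with volume forms'' strengthens the projective identity $\CC v_+ = \CC v_-$ to the required equation $v_+ + v_- = 0$. The spaces $W^*_{Q_\pm}$ carry the canonical volume forms $\omega_\pm$, $W_{Q_-} = W_-$ carries its symplectic volume, and $S_+$ carries its $\SU(4)$-invariant volume; the compatibility condition is exactly the requirement that the induced isomorphism $\wedge^2 W^*_{Q_+} \otimes \wedge^2 W_{Q_-} \cong \wedge^4 S_+$ identifies $\omega_+ \otimes \omega_{W_-}$ with the canonical volume on $S_+$. Tracing through the metric identification $V \cong V^*$ together with the Plücker isomorphisms $\wedge^2 S_\pm \cong V$, one checks that this equality of volume forms encodes precisely the scalar equation $\Gamma_+(\omega_+) + \Gamma_-(\omega_-) = 0$, giving the equivalence with $[Q,Q] = 0$ in both directions.
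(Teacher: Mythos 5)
Your proof is correct and follows essentially the same route as the paper: both reduce $[Q,Q]=0$ to the cancellation of the two null vectors $v_\pm=\Gamma_\pm(\omega_\pm)$ attached to the planes $W^*_{Q_\pm}$, identify exactness of $0\to W^*_{Q_+}\to S_+\to W_{Q_-}\to 0$ at the level of subspaces with the projective equality $\CC v_+=\CC v_-$ (your Clifford-multiplication step with $\rho(v_+)$ is an equivalent packaging of the paper's direct transport of $\wedge^2 W_-^*$ through $\wedge^2 S_+^*\cong\wedge^2 S_+$), and let the volume-form compatibility pin down the remaining scalar so that $v_++v_-=0$. Like the paper, you leave the final sign/normalization check implicit, so nothing is missing relative to the paper's own level of detail.
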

\begin{proof}
As before, $Q_{\pm}^2$ is equal to the image of the Poisson bivector on $W^*_{Q_\pm}$ under the map $\wedge^2 W^*_{Q_{\pm}}\rightarrow \wedge^2 S_{\pm}\cong V$. Therefore, $Q$ is square-zero if and only if the two elements of $V$ are opposite.

Let $W_{Q_-}^\perp\hookrightarrow S_+$ be the kernel of $S_+\cong S_-^*\rightarrow W_{Q_-}$ which inherits a volume form from $S_+$ and $W_{Q_-}$. Then we may identify the image of $\wedge^2 W^*_-\rightarrow \wedge^2 (S_+)^*\cong \wedge^2 S_+$ with the image of $\wedge^2 W_{Q_-}^\perp\rightarrow \wedge^2 S_+$. So, the square-zero condition is equivalent to the equality $W_{Q_-}^\perp=W_{Q_+}^*$ equipped with their natural volume forms.
\end{proof}

This supercharge has 5 invariant directions: the image of $Q$ is given by the orthogonal complement to the isotropic line $\wedge^2 W^*_{Q_+}\sub \wedge^2 S_+\cong V$.
\end{enumerate}

\textit{Twisting homomorphisms}:
Again, there are non-trivial twisting homomorphism from $\Spin(4)\cong \Spin(3) \times \Spin(3)$.  Such twisting homomorphisms are equivalent to homomorphisms $\SU(2) \times \SU(2) \to \SL(2;\CC) \times \SL(2;\CC)$.
\begin{enumerate}
 \item The identity twisting homomorphism admits a $\bb{CP}^1$ family of compatible supercharges.  These supercharges are generically topological, with the exception of the points 0 and $\infty$ which are holomorphic supercharges of rank $(1,0)$ and $(0,1)$ respectively.
 \item The projection onto the first factor admits a unique compatible holomorphic supercharge of rank $(1,0)$.
 \item The projection onto the second factor admits a unique compatible holomorphic supercharge of rank $(0,1)$.
\end{enumerate}
Additionally, any supercharge is compatible with the trivial twisting homomorphism from its stabilizer.  Maximally, one can choose a topological supercharge with stabilizer isomorphic to $\mr U(3) \sub \SU(4)$ to obtain twisted theories which are $\mr U(3)$-structured.

\textit{$\ZZ$-gradings}:
All supercharges except for the one of rank $(2, 2)$ can be promoted to twisting data using the diagonal embedding $\mr U(1) \to \Sp(2; \CC) \times \Sp(2; \CC)$.  This twisting datum is only compatible with trivial twisting homomorphisms.

\begin{example}
There is a topological twist of $\mc N=(1,1)$ super Yang--Mills theory in six dimensions arising via dimensional reduction of an eight-dimensional topological twist (see Example \ref{8d_example} below) \cite{AcharyaOLoughlinSpence, BaulieuLosevNekrasov}.  This theory is defined on arbitrary K\"ahler threefolds.
\end{example}

\paragraph{$\underline{\boldsymbol{\mc N=(2, 0)}}$:}$ $

\textit{Square-zero supercharges}:$ $
\begin{enumerate}
\item Rank $(1,0)$: we get the holomorphic $\mc N=(1, 0)$ twist.

\item Rank $(2, 0)$: $Q$ squares to zero if and only if the image of $S_+^*\rightarrow W_+$ is Lagrangian. These supercharges have 5 invariant directions.
\end{enumerate}

\textit{Twisting homomorphisms}:
\begin{enumerate}
\item There is now a unique twisting homomorphism from $\Spin(5) \sub \Spin(6)$ given by the embedding of the maximal compact subgroup $\mathrm{USp}(4) \to \Sp(4; \CC)$, paralleling the case of 5d $\mc N=2$.  There is, however, no compatible square-zero supercharge.
\item There is a unique injective twisting homomorphism from $\Spin(4) \iso \SU(2) \times \SU(2) \to \Sp(4; \CC)$.  There is a $\bb{CP}^1$ family of compatible supercharges.  Again they are generically of rank 2, except for the points 0 and $\infty$ which have rank 1.

\item For any twisting homomorphism from $\Spin(4)$ that factors through projection onto the first or second factor there is a unique compatible rank 1 holomorphic supercharge.  
\end{enumerate}

\textit{$\ZZ$-gradings}:
Again all supercharges can be promoted to twisting data by suitably embedding $\SO(2)$ into the R-symmetry group.  This twisting datum cannot, however, be chosen compatibly with the $\Spin(4)$ twisting homomorphisms above for the reasons discussed in the example of 5d $\mc N=2$.

\begin{example}
There is a famous $\mc N=(2,0)$ superconformal field theory in six dimensions which does not admit a Lagrangian description, but which dimensionally reduces to $\mc N=2$ supersymmetric gauge theory in five dimensions \cite{WittenStringDynamics}.  Twists of this theory are discussed in the abelian case -- where there is an action functional -- by Anderson--Linander \cite{AndersonLinander} and Gran--Linander--Nilsson \cite{GranLinanderNilsson} in the case of the non-full-rank $\Spin(4)$ twisting homomorphism, and by Bak and Gustavsson \cite{BakGustavsson1} with respect to the $\Spin(5)$ and full rank $\Spin(4)$ twisting homomorphisms. 
\end{example}

\begin{remark}[Superconformal Theories]
In dimension 6, as we saw in Remark \ref{5d_superconformal_rmk} in dimension 5, there is no way of building a potential for a twisted $\so(5)$-action using the superconformal action.  However, as we noted in the remark above one will be able to construct potentials for twisted $\so(3)$ and $\so(4)$ actions with respect to compatible holomorphic-topological supercharges of rank 2.

We might also note that since there are no topological supercharges in the $\mc N=(1,0)$ and $\mc N=(2,0)$ supersymmetry algebras, there are no examples in our range to which we can apply Theorem \ref{superconformal_theorem}.  However, there \emph{are} topological supercharges in the $\mc N=(k,0)$ supersymmetry algebra for $k>2$, so if one considers superconformal theories with fields of spin greater than 1 there will exist topological twists which automatically define $\bb E_6$-algebras.  There do exist $\mc N=(4,0)$ superconformal theories of gravitational type in dimension 6: the linear such theories were constructed by Hull \cite{Hull1, Hull2} and there is some speculation in the literature regarding non-linear versions analogous to the $(2,0)$ theory \cite{ChiodaroliGunaydinRoiban, Borsten}.
\end{remark}

\subsection{Dimension 7}
For the first time the Lorentz group $\Spin(7)$ does not admit an alternative description via an exceptional isomorphism. Let $V=\CC^7$ be the vector representation and let $S=\CC^8$ be the spin representation. Then we can decompose the exterior square of $S$ as
\[\wedge^2(S)\cong V\oplus \mathfrak{so}(7).\]

Pick a symplectic vector space $W$. The supertranslation algebra is
\[T=V\oplus \Pi(S\otimes W),\]
where $\dim W = 2\mc N$. The R-symmetry group $G_R$ is $\Sp(2\mc{N};\CC)$.

Compactification to $d=6$ corresponds to the isomorphism $S^7\cong S^6_+\oplus S^6_-$ of $\so(6)$-representations. The auxiliary space is $W^6_+ = W^6_- = W^7$.

Let us recall the following description of $\Spin(7)$. There is a triple cross product on the octonions which we denote by $x\times y\times z$ for $x,y,z\in\bb O$. Then $\Spin(7)\sub \SO(\bb O)$ can be described as the subgroup preserving a 4-form
\[\Phi(x,y,z,w) = (x, y\times z\times w)\]
on $\bb O$. The morphism $\Spin(7)\rightarrow \SO(\bb O)$ gives the 8-dimensional spin representation $S=\bb O\otimes_\RR \CC$. The 7-dimensional vector representation $V_\RR$ can be identified with the imaginary quaternions and the morphism
\[\wedge^2(\bb O)\rightarrow \mr{Im}(\bb O)\cong V_\RR\]
is the cross product, defined as the imaginary part of the octonionic product.

In what follows we will write $\ol a$ for the octonionic conjugate of a (complexified) octonion $a$, $\tr(a) = a + \ol a$ for the octonionic trace, and $N(a) = a \cdot \ol a$ for the octonionic norm. Note that a complexified octonion $a$ is a zero-divisor if and only if $N(a) = 0$. The spinor pairing $S\otimes S\rightarrow \CC$ is the bilinear pairing associated to the quadratic form $N(-)$.

In dimensions 7 to 10 we will understand the orbits in the space of square-zero supercharges using Igusa's classification of spinors in dimensions up to 12 \cite{Igusa} -- recall that dimension 7 is the minimal dimension where there are non-trivial pure spinor constraints.  In all cases, there is a unique orbit in $S$ (in odd dimensions) or $S_+$ (in even dimensions) consisting of pure spinors, and this orbit has minimal dimension \cite{Chevalley}.  Igusa classifies the other orbits and their stabilizers.

\paragraph{$\underline{\boldsymbol{\mc N=1}}$:}$ $

\textit{Square-zero supercharges}: $ $

\begin{enumerate}
\item Rank 1. Any $Q = s \otimes w \in S\otimes W$ square to zero. According to Igusa \cite[Proposition 4]{Igusa} there are two orbits consisting of pure and impure spinors respectively.
\begin{itemize}
\item Suppose $N(s) = 0$. Then $Q$ satisfies the constraint defining a pure spinor, so it has 4 invariant directions.

\item Suppose $N(s)\neq 0$. Then $Q$ is a topological supercharge.
\end{itemize}

\item Rank 2. Let $Q = s_1\otimes w_1 + s_2\otimes w_2$ where $\{w_1, w_2\}$ is a symplectic basis of $W$. Then $Q^2 = 0$ if and only if $s_1\times s_2 = 0$. In particular, $N(s_1) = N(s_2) = 0$, i.e. both $s_1$ and $s_2$ are pure spinors. Let $L\sub V$ be the nullspace of $s_1$. Then we can identify
\[S \cong (\CC\oplus L\oplus \wedge^2 L\oplus \wedge^3 L)\otimes \det(L)^{-1/2}\]
as representations of $\Spin(V)_L\sub \Spin(V)$, where $\Spin(V)_L\cong \ML(L)\ltimes G_L$ is the stabilizer of $L\sub V$. The vector-valued spinor pairing $\wedge^2(S)\rightarrow V\cong L\oplus L^*\oplus \CC$ has the following components:
\begin{itemize}
\item The $L$-component is given by pairing $L\otimes \det(L)^{-1/2}$ and $\wedge^3 L\otimes \det(L)^{-1/2}$.

\item The $L^*$-component is given by pairing $\CC\otimes \det(L)^{-1/2}$ and
\[\wedge^2 L\otimes \det(L)^{-1/2}\cong L^*\otimes \det(L)^{1/2}.\]

\item The $\CC$-component is given by pairing $\CC\otimes \det(L)^{-1/2}$ and $\wedge^3 L\otimes \det(L)^{-1/2}$.
\end{itemize}

Thus, $s_2\in S$ such that $s_1\times s_2 = 0$ lie in the subspace
\[(L^*\oplus \CC)\otimes \det(L)^{1/2}\cong (\wedge^2 L\oplus \wedge^3 L)\otimes \det(L)^{-1/2}\sub S.\]
Using the $\Sp(W)$-action on $Q$ we can arrange that $s_2\in L^*\otimes \det(L)^{1/2}\sub S$. Since we assume $s_2$ is nonzero, the space of such is permuted by the subgroup $\SL(L)\sub \Spin(V)$ which fixes $s_1$. These supercharges have 5 invariant directions given by the span of $L\oplus \CC\sub V$ and the line in $L^*\sub V$ determined by $s_2\in L^*\otimes \det(L)^{1/2}$.
\end{enumerate}

\textit{Twisting homomorphisms}:
Supercharges are automatically compatible with the zero twisting homomorphism from their stabilizer, which for a topological rank 1 supercharge is isomorphic to $\mr G_2$. There are two twisting homomorphisms from the subgroup $\Spin(4)\sub \Spin(7)$ admitting compatible supercharges, given by the two projections $\Spin(4) \to \SU(2)$.  Each of these supercharges admits compatible rank 2 supercharges.

\textit{$\ZZ$-gradings}:
Square-zero supercharges of rank 1 can automatically be promoted to compatible twisting data using a suitable maximal torus $\mr U(1) \sub \Sp(2; \CC)$.  This twisting datum is automatically compatible with the zero twisting homomorphism. Supercharges of rank 2 cannot be promoted to a twisting datum, so the corresponding twisted theory is merely $\ZZ/2$-graded.

\begin{example}
The $\mr G_2$-structured rank 1 topological twist of 7-dimensional $\mc N=1$ super Yang--Mills studied by Acharya, O'Loughlin and Spence \cite{AcharyaOLoughlinSpence} as a dimensional reduction of an 8-dimensional $\Spin(7)$-structured theory (see Example \ref{8d_example} below). 
\end{example}

\subsection{Dimension 8} \label{dim8_section}
The Lorentz group is $\Spin(8)$. Let $V=\CC^8$ be the vector representation. One also has $S_+=\CC^8$ and $S_-=\CC^8$ the two semi-spin representations. The three representations $(V, S_+, S_-)$ are exchanged under the outer automorphism group (triality). We have a decomposition
\[S_+\otimes S_-\cong V\oplus Z,\]
where $Z$ is the irreducible $56$-dimensional representation generated by the highest weights of $S_+$ and $S_-$.

Pick a vector space $W$. The supertranslation algebra is
\[T=V\oplus \Pi(S_+\otimes W\oplus S_-\otimes W^*),\]
where $\dim W = \mc N$. The R-symmetry group $G_R$ is $\GL(\mc{N}; \CC)$.

Compactification to $d=7$ corresponds to the isomorphisms $S^7\cong S^8_+\cong S^8_-$ of $\so(7)$-representations. The auxiliary space is $W^7 = W^8\oplus (W^8)^*$ with the standard symplectic structure.

Let us recall a description of $\Spin(8)$.  One can identify $\Spin(8)$ as the subgroup of triples $\{(A_1, A_2, A_3)\}\in\SO(\mathbb O)^3$ such that $A_1(a_1)\times A_2(a_2)\times A_3(a_3) = 1$ for all $a_1,a_2,a_3\in\mathbb{O}$ such that $a_1\times a_2\times a_3=1$. The two (real) semi-spin representations and the (real) vector representation can be identified with the three projections $\Spin(8) \to \SO(\mathbb O)$ coming from this description.  The complex semi-spin and vector representations are their complexifications.

The $\Gamma$-pairing $S_+ \otimes S_- \to V$ is identified with the complexification of the octonion multiplication map.

\paragraph{$\underline{\boldsymbol{\mc N=1}}$:} Let $Q = Q_+ + Q_- \in S_+ \oplus S_-$ be a pair of complexified octonions.

\textit{Square-zero supercharges}:

\begin{enumerate}
 \item Rank (1, 0) and (0, 1): such supercharges automatically square to zero.  According to Igusa \cite[Proposition 1]{Igusa} there are two orbits consisting pure and impure spinors respectively.  Concretely:
 \begin{itemize}
   \item $N(Q) = 0$, i.e. it is a pure spinor. Such a supercharge is holomorphic.
   \item $N(Q)\neq 0$. Such a supercharge is topological.
 \end{itemize}
 
 \item Rank (1, 1). Since $Q_+\cdot Q_-=0$, we have $N(Q_+) = N(Q_-) = 0$, i.e. both $Q_+$ and $Q_-$ are pure spinors. Let $L\sub V$ be the nullspace of $Q_+$. Then we can identify
 \begin{align*}
   S_+ &\cong (\CC\oplus \wedge^2 L\oplus \wedge^4 L)\otimes \det(L)^{-1/2} \\
   S_- &\cong (L\oplus \wedge^3 L)\otimes \det(L)^{-1/2}
 \end{align*}
 as representations of $\Spin(V)_L\sub \Spin(V)$ where $\Spin(V)_L\cong \ML(L)\ltimes \wedge^2 L$ is the stabilizer of $L\sub V$. The vector-valued spinor pairing $S_+\otimes S_-\rightarrow V\cong L\oplus L^*$ has the following components:
 \begin{itemize}
 \item The $L$-component is given by pairing $\wedge^4(L)\otimes \det(L)^{-1/2}\sub S_+$ and $L\otimes\det(L)^{-1/2}\sub S_-$.
 \item The $L^*$-component is given by pairing $\CC\otimes \det(L)^{-1/2}\sub S_+$ and $\wedge^3(L)\otimes \det(L)^{-1/2}\sub S_-$.
 \end{itemize}
 Thus, $Q_-\in S_-$ such that $\Gamma(Q_+, Q_-) = 0$ lie in the subspace
 \[\wedge^3(L)\otimes \det(L)^{-1/2}\sub S_-.\]
 Since we assume $Q_-$ is nonzero, the space of such is permuted by the subgroup $\SL(L)\sub \Spin(V)$ which fixes $Q_+$. These supercharges have 5 invariant directions given by the span of $L\sub V$ and the line in $L^*\sub V$ determined by $Q_-\in L^*\otimes \det(L)^{1/2}$.
\end{enumerate}

\textit{Twisting homomorphisms}:
As we mentioned in the previous section, supercharges are automatically compatible with the zero twisting homomorphism from their stabilizer. For instance, in the case of rank (1, 0) or (0, 1) supercharges we have the following. For a topological supercharge it is isomorphic to $\Spin(7)$, and for a holomorphic supercharge (pure spinor) is isomorphic to $\SU(4)$ \cite[Theorem 3.1.10]{Charlton}.

\textit{$\ZZ$-gradings}:
Every rank $(1, 0)$ or rank $(0, 1)$ square-zero supercharge can be promoted to compatible twisting data using the maximal torus $\mr U(1)\sub \GL(1)$. These twisting data are automatically compatible with the zero twisting homomorphism. The rank $(1, 1)$ supercharges cannot be promoted to a compatible twisting datum, so the corresponding twisted theory is merely $\ZZ/2$-graded.

\begin{example} \label{8d_example}
Acharya, O'Loughlin and Spence \cite{AcharyaOLoughlinSpence} studied the topological rank 1 twist of $\mc N=1$ supersymmetric Yang--Mills defined on all 8-manifolds with a $\Spin(7)$-structure, in particular verifying that the gravitational stress-energy tensor is $Q$-exact.  They discuss a connection to the theory of $\Spin(7)$-instantons, and discuss the reduction to 7- and 6-dimensional theories.  These theories were additionally studied by Baulieu, Kanno and Singer \cite{BaulieuKannoSinger}.  It would be interesting to describe concretely the BV complex of this twisted theory, which would allow one to check for the existence of a homotopically trivial dilation action.
\end{example}

To our knowledge, theories twisted by rank (1, 1) supercharges have not been discussed in the literature.

\subsection{Dimension 9}
The Lorentz group is $\Spin(9)$. Let $V=\CC^9$ be the vector representation and $S=\CC^{16}$ the spin representation. Then
\[\sym^2(S)\cong \CC \oplus V\oplus Z,\]
where $Z$ is the irreducible $126$-dimensional representation generated by the highest weight of $\sym^2(S)$.

Pick a vector space $W$ equipped with a symmetric nondegenerate pairing. The supertranslation algebra is
\[T= V\oplus \Pi(S\otimes W),\]
where $\dim W = \mc{N}$. Since we restrict to the case where there are no fields of spins greater than one, $\mc N = 1$ and the R-symmetry group $G_R$ is $\O(1)\cong \ZZ/2$.

Compactification to $d=8$ corresponds to the isomorphism $S^9\cong S^8_+\oplus S^8_-$ of $\so(8)$-representations. The auxiliary space is $W^8 = W^9$.

One can describe the group $\Spin(9)$ explicitly as the subgroup of $\SO(\mathbb O^2)$ that preserves a certain 8-form $\Phi$ described by Berger \cite{Berger}.  Consider a set of elements $J_1, \ldots, J_9$ of $SO(\mathbb O^2)$ defined by
\[J_1 = \begin{pmatrix}0&\mathrm{Id}\\\mathrm{Id}&0\end{pmatrix}, J_i= \begin{pmatrix}0&-R_{e_{i-1}}\\R_{e_{i-1}}&0\end{pmatrix}, J_9 = \begin{pmatrix}\mathrm{Id}&0\\0&-\mathrm{Id}\end{pmatrix}\]
where $i = 2, \ldots, 8$ and $R_x$ denotes right multiplication by the octonion $x$.  For each pair $1 \le i,j \le 9$ define a 2-form by $\omega_{ij} = dx^k \wedge J_i J_j dx^k$.  The 8-form $\Phi$ is the $t^5$ coefficient of the characteristic polynomial of the 9x9 matrix $(\omega_{ij})$ of 2-forms (this construction was given by Parton and Piccinni \cite{PartonPiccinni}).  This map $\Spin(9) \to \SO(16)$ defines the (real) spin representation $S_\RR$ in 9-dimensions, whose complexification is $S$.

\paragraph{$\underline{\boldsymbol{\mc N=1}}$:}

\textit{Square-zero supercharges}:
Fix a supercharge $(Q,Q')$ in $(\bb O \otimes_{\RR} \CC)^{\oplus 2}$.  It squares to zero if it squares to zero viewed as an 8d $\mc N=1$ supercharge, i.e. if $Q\cdot Q' = 0$, plus the additional condition $N(Q) - N(Q') = 0$. The first condition implies $N(Q) = N(Q')$, so in fact $(Q, Q')$ squares to zero in 9d if and only if it squares to zero in 8d. Using the 10d pairing $\Gamma_{10}$ that we will describe in the next section, this implies that every supercharge $(Q, Q')$ is pure. These supercharges have 5 invariant directions.

\textit{Twisting homomorphisms}:
Pure spinors in 9 dimensions are again stabilized by $\SU(4) \sub \Spin(9)$, so they are compatible with the zero twisting homomorphism from $\SU(4)$.

\textit{$\ZZ$-gradings}:
Since the R-symmetry group is discrete in this dimension, twisted theories will only be $\ZZ/2$-graded.

\begin{example}
A 9-dimensional twist of $\mc N=1$ super Yang--Mills theory is described by Baulieu, Losev and Nekrasov \cite{BaulieuLosevNekrasov} on manifolds of the form $X \times S^1$ where $X$ is a $\Spin(7)$ 8-manifold.  They verify that the Lagrangian density of this theory is $Q$-exact.
\end{example}

\subsection{Dimension 10}

The Lorentz group is $\Spin(10)$. Let $V=\CC^{10}$ be the vector representation and $S_+=\CC^{16}$, $S_-=\CC^{16}$ the spin representations. We have
\[\sym^2(S_+)\cong V\oplus Z_+,\quad \sym^2(S_-)\cong V\oplus Z_-,\]
where $Z_+$ and $Z_-$ are the $126$-dimensional representations generated by the highest weights.

Pick a pair of vector spaces $W_+, W_-$ equipped with symmetric nondegenerate pairings. The translation part of the supersymmetry algebra is
\[T= V\oplus \Pi(S_+\otimes W_+\oplus S_-\otimes W_-),\]
where $\dim W_+ = \mc{N}_+$ and $\dim W_- = \mc{N}_-$. Since we restrict to the case where there are no fields of spins greater than one, we only consider the cases $\mc{N} = (1, 0)$ or $\mc{N} = (0, 1)$. In either case the R-symmetry group $G_R$ is $\ZZ/2$.

Compactification to $d = 9$ corresponds to the isomorphism $S^9\cong S^{10}_+\cong S^{10}_-$ of $\so(9)$-representations. The auxiliary space is $W = W_+\oplus W_-$.

In Lorentzian signature the group $\Spin(1,9)$ can be identified as the special linear group $\SL(2;\bb O)$ for the octonions, as explained in \cite[Chapter 6]{DeligneSpinors}.  The Majorana--Weyl spinor representation $S_{\RR+}$ can be identified as the fundamental representation $\bb O^2$, so after complexification the Weyl spinor representation can be identified as $S_+ = (\bb O \otimes_\RR \CC)^2$.  In these terms the vector-valued pairing $\Gamma_{10} \colon S_+ \otimes S_+ \to V$ is written as
\begin{align*}
\Gamma_{10}&((Q_1,Q_1'), (Q_2,Q_2')) \\
&= (Q_1 \cdot Q_2' + Q_1' \cdot Q_2, \tr(Q_1 \cdot \ol{Q_2}) - \tr(Q_1' \cdot \ol{Q_2'}), \tr(Q_1 \cdot \ol{Q_2}) + \tr(Q_1' \cdot \ol{Q_2'}))
\end{align*}
where $Q_1, Q_1', Q_2, Q_2'$ are viewed as complexified octonions and the first term on the right hand side is defined in terms of the complexified octonionic multiplication map.  

\begin{remark}
We obtain the 8d and 9d vector-valued pairings by projecting the pairing $\Gamma_{10}$ onto the first 8 or first 9 components.
\end{remark}

\paragraph{$\underline{\boldsymbol{\mc N=(1,0)}}$:}

\textit{Square-zero supercharges}:$ $

A spinor $(Q,Q')$ squares to zero if $Q\cdot Q'=0, N(Q) = 0$ and $N(Q') =0$. According to our discussion above in the 8-dimensional case, this occurs when $Q = a + ib$ for orthogonal octonions $a$ and $b$ with the same norm and $Q' = \ol Q \cdot x$ for some complexified octonion $x$, possibly zero, or similarly with $Q$ and $Q'$ reversed.

Such square-zero supercharges are always pure, and therefore always holomorphic as we discussed in Section \ref{pure_spinor_section}.  These spinors all lie in the same $\Spin(10)$ orbit.

\textit{Twisting homomorphisms}:
The stabilizer of a pure spinor in 10 dimensions is isomorphic to $\SU(5) \sub \SO(10)$ by \cite[Theorem 3.1.10]{Charlton}, so every holomorphic supercharge is compatible with the zero twisting homomorphism from $\SU(5)$.

\textit{$\ZZ$-gradings}:
Since the R-symmetry group is discrete in this dimension, twisted theories will again only be $\ZZ/2$-graded.

\begin{example}
The holomorphic twist of 10-dimensional maximal super Yang--Mills theory is discussed by Baulieu \cite{Baulieu}.  This twist is automatically $\SU(5)$-structured.  Baulieu argues that this twisted theory is equivalent to a 5-dimensional holomorphic Chern--Simons theory.
\end{example}

\pagestyle{bib}
\printbibliography

\textsc{Institut des Hautes \'Etudes Scientifiques}\\
\textsc{35 Route de Chartres, Bures-sur-Yvette, 91440, France}\\
\texttt{celliott@ihes.fr}\\
\vspace{-5pt}

\textsc{Institut f\"ur Mathematik, Universit\"at Z\"urich}\\
\textsc{Winterthurerstrasse 190, 8057 Z\"urich, Switzerland}\\
\texttt{pavel.safronov@math.uzh.ch}\\
\end{document}